\pgfplotsset{compat=1.16}
\numberwithin{equation}{section}
\theoremstyle{definition}
\newtheorem{definition}{Definition}
\theoremstyle{plain}
\newtheorem{theorem}{Theorem}[section]
\newtheorem{proposition}[theorem]{Proposition}
\newtheorem{lemma}[theorem]{Lemma}
\newtheorem{corollary}[theorem]{Corollary}
\providecommand{\customgenericname}{}
\newcommand{\newcustomtheorem}[2]{%
  \newenvironment{#1}[1]
  {%
   \renewcommand\customgenericname{#2}%
   \renewcommand\theinnercustomgeneric{##1}%
   \innercustomgeneric
  }
  {\endinnercustomgeneric}
}
\theoremstyle{remark}
\newtheorem*{remark}{Remark}
\newcommand{\blue}[1]{{{#1}}}
\newcommand{\R}{\mathbb{R}}
\title{Kasner-like description of spacelike singularities in spherically symmetric spacetimes with scalar matter}
\author{Warren Li}
\affil{\small Princeton University, Department of Mathematics, Fine Hall, Washington Road, Princeton, NJ 08544, USA}
\begin{document}

\maketitle

\begin{abstract}
    We study the properties of spacelike singularities in spherically symmetric spacetimes obeying the Einstein equations, in the presence of matter. \blue{Building upon previous work of An--Zhang \cite{AnZhang}}, we consider matter described by a scalar field, both in the presence of an electromagnetic field and without. We prove that, if a spacelike singularity obeying several reasonable assumptions is formed, then the Hawking mass, the Kretschmann scalar, and the matter fields have inverse polynomial blow-up rates near the singularity that may be described precisely.

    Furthermore, one may view the resulting spacetime in the context of the BKL heuristics regarding spacelike singularities in relativistic cosmology. In particular, near any point $p$ on the singular boundary in our spherically symmetric spacetime, we obtain a leading order BKL-type expansion, including a description of Kasner exponents associated to $p$. \blue{This confirms heuristics of Buonanno--Damour--Veneziano \cite{BuonannoDamourVeneziano}.} % and relate this to the heuristics regarding the stability and instability of spacelike singularities.

    \blue{As a result, we provide a } rigorous description of a detailed, \blue{quantitative} correspondence between Kasner-like singularities most often associated to the cosmological setting, and the singularities observed in (spherically symmetric) gravitational collapse. Moreover, we outline a program concerning the study of the stability and instability of spacelike singularities in the latter picture, both outside of spherical symmetry and within (where the electromagnetic field acts as a proxy for angular momentum). % -- in particular we signify the importance of cosmological phenomena including subcritical regimes and Kasner bounces in the collapse setting.
%    Finally, we also exhibit several explicit examples of 
\end{abstract}

\tableofcontents

%auto-ignore
\section{Introduction} \label{intro}

\subsection{Background} \label{intro.background}

One of the monumental predictions of Einstein's theory of general relativity is that of ``singularities''. Indeed, celebrated theorems of Penrose \cite{penrose} and Hawking \cite{hawking67} \blue{prove} that globally hyperbolic spacetimes $(\mathcal{M}, g)$ satisfying the Einstein equations
\begin{equation} \label{eq:einstein}
    G_{\mu\nu}[g] \coloneqq \mathrm{Ric}_{\mu\nu}[g] - \frac{1}{2} R[g] g_{\mu\nu} = 2 \, T_{\mu\nu},
\end{equation}
as well as several \blue{physically reasonable} conditions regarding the spatial topology and the matter fields, will contain ``singularities'', so long as either of the following are present: (a) a codimension $2$ trapped surface, or (b) a Cauchy hypersurface whose mean curvature is bounded strictly away from $0$.

In both cases, however, the phenomenon captured by these theorems is not actually singularity formation as this term is usually understood -- suggesting some type of blow-up -- but rather geodesic incompleteness of the spacetime $(\mathcal{M}, g)$. 
Indeed, even in the absence of matter, there are known examples of globally hyperbolic spacetimes, for instance the Kerr family or the Taub-NUT spacetimes (see, e.g.~\cite{hawking_ellis_1973}), satisfying the assumptions of the Hawking-Penrose theorems, yet exhibiting no coordinate-independent blowup. 

In fact, \blue{the above examples} can be smoothly and isometrically embedded into a larger spacetime $(\tilde{\mathcal{M}}, \tilde{g})$ in such a way that all previously incomplete geodesics can be extended into $(\tilde{\mathcal{M}}, \tilde{g})$. Hence, from the point of view of observers travelling along such geodesics, there is nothing `singular' about the predictions of these theorems, and their geodesic incompleteness is instead due to the presence of so-called Cauchy horizons. In view of the strong cosmic censorship conjecture, one could ask about the non-genericity of such Cauchy horizons, and whether the presence of singularities is a true prediction of Einstein's equation (\ref{eq:einstein}). 

There are, of course, known solutions to the Einstein equations where genuine singularities are present. The celebrated Schwarzschild family of spacetimes are vacuum solutions to (\ref{eq:einstein}) depending on a single parameter $M > 0$. They each contain a black hole interior region, described by the metric $g_{Schw}$ on $(0, 2M) \times \R \times \mathbb{S}^2$:
\begin{equation} \label{eq:schwarzschild}
    g_{Schw} = - \left( \frac{2M}{r} - 1 \right)^{-1} \, dr^2 + \left( \frac{2M}{r} - 1 \right) \, dt^2 + r^2 \left( d \theta^2 + \sin^2 \theta \, d \varphi^2 \right).
\end{equation}

Treating $r$ as a (past-directed) timelike coordinate, the interior region is bounded to the future by a \textit{spacelike singularity} at $r = 0$. The truly singular nature of this boundary is reflected in the fact that the \textit{Kretschmann scalar} can be computed to be
\begin{equation*}
    \mathrm{Riem}[g_{Schw}]_{\alpha \beta \gamma \delta} \,
    \mathrm{Riem}[g_{Schw}]^{\alpha \beta \gamma \delta} 
    = \frac{48 M^2}{r^6}
\end{equation*}
and thus blows up towards the singularity at $r = 0$. In fact, there does not even exist a $C^0$ extension of \eqref{eq:schwarzschild} across the $ r = 0 $ boundary \cite{SbierskiC0}. %As all future directed incomplete causal geodesics (indeed, all inextendible causal curves) in the Schwarzschild interior reach $r = 0$ in finite affine time, we can interpret $\{r = 0\}$ as a genuine singular boundary. 

Another example of a singular vacuum solution to (\ref{eq:einstein}) is the Kasner solution \cite{Kasner}, defined on the manifold $\mathcal{M} = (0, + \infty) \times \mathbb{T}^3$ and given by the metric $g_{Kas}$:
\begin{equation} \label{eq:kasner}
    g_{Kas} = - dt^2 + \sum_{i = 1}^3 t^{2p_i} (dx^i)^2.
\end{equation}
In order for (\ref{eq:kasner}) to satisfy Einstein's equations (\ref{eq:einstein}) with $T_{\mu\nu} = 0$, the \textit{Kasner exponents} $p_i \in \R$ must be chosen to satisfy the two relations $\sum_{i=1}^3 p_i = 1$, $\sum_{i=1}^3 p_i^2 = 1$. These relations imply that $p_1, p_2, p_3$ are constrained to lie on the so-called \textit{Kasner circle}, parameterized by $u \in (- \infty , + \infty]$ as follows:
\begin{equation} \label{eq:kasnercircle}
    p_1(u) = - \frac{u}{1 + u + u^2}, 
    \quad p_2(u) = \frac{1 + u}{1 + u + u^2},
    \quad p_3(u) = \frac{u(1 + u)}{1 + u + u^2}.
\end{equation}

Like Schwarzschild, the Kasner spacetimes possess a singular boundary at $t = 0$, and the singular nature of the boundary is manifested by the following blow-up of the Kretschmann scalar as $t \to 0^+$:
\begin{equation*}
    \mathrm{Riem}[g_{Kas}]_{\alpha \beta \gamma \delta} \,
    \mathrm{Riem}[g_{Kas}]^{\alpha \beta \gamma \delta} 
    = \frac{4}{t^4} \cdot \left( \sum_{i=1}^3 p_i^2 (p_i - 1)^2 + \sum_{i < j} p_i^2 p_j^2 \right)
    = \frac{16 u^2 (1 + u)^2}{(1 + u + u^2)^3} \cdot \frac{1}{t^4}.
\end{equation*}
Here $u \in \{-1, 0, + \infty\}$, which correspond to $(p_1, p_2, p_3) = (1, 0, 0)$ and permutations thereof, give rise to exceptional Kasner solutions that are flat and thus locally isometric to Minkowski space. These can be locally extended beyond the $t = 0$ boundary, and we exclude these solutions (so-called `special points' of the Kasner circle) from the subsequent discussion.
    
The metrics (\ref{eq:schwarzschild}) and (\ref{eq:kasner}) provide evidence that genuine singularities may arise in solutions to Einstein's equation (\ref{eq:einstein}). But both are highly symmetric solutions, and to argue that singularities are a genuine prediction of general relativity, one must argue that generic, non-symmetric initial data would also produce such singularities in evolution.

In a highly influential paper, Khalatnikov and Lifshitz \cite{kl63} propose an ansatz for the asymptotic form of a metric solving the vacuum Einstein equation near a candidate spacelike
singularity, of which (\ref{eq:schwarzschild}) and (\ref{eq:kasner}) are special examples. However, via a function counting argument, the authors of \cite{kl63} argue that their singular solutions are exceptional among solutions to Einstein's vacuum equations arising from general initial data, and further provide a heuristic instability mechanism; loosely speaking, the instability is sourced by spatial curvature terms. This will be elaborated upon in Section \ref{intro.bkl}.

In this article, we study spherically symmetric solutions to Einstein's equation (\ref{eq:einstein}). By Birkhoff's theorem \cite{BirkhoffRelativityAM}, spherically symmetric solutions to the Einstein \underline{vacuum} equations are non-dynamical and are (locally) isometric to some region of the Schwarzschild solution %\footnote{We allow here also $M = 0$, meaning Minkowski space, as well as negative mass Schwarzschild with $M < 0$.} 
with mass $M \in \R$, where $M= 0$ corresponds to Minkowski space. We thus introduce dynamics into the problem via the addition of matter, namely a (massless and uncharged) scalar field $\phi$, whose energy-momentum tensor is given by
\begin{equation} \label{eq:energymomentum_scalar}
    T_{\mu \nu}[\phi] = \nabla_{\mu} \phi \nabla_{\nu} \phi - \frac{1}{2} g_{\mu \nu} \nabla^{\rho} \phi \nabla_{\rho} \phi.
\end{equation}

The scalar field $\phi$ evolves according to the scalar wave equation on $(\mathcal{M}, g)$:
\begin{equation} \label{eq:scalar_wave}
    \square_g \phi = g^{\mu \nu} \nabla_{\mu} \nabla_{\nu} \phi = 0.
\end{equation}

In studying the coupled system (\ref{eq:einstein}), (\ref{eq:energymomentum_scalar}), (\ref{eq:scalar_wave}) in spherical symmetry, the aforementioned instability mechanism of \cite{kl63} will be suppressed in two different ways. Firstly, we observe (see Section~\ref{intro.bkl}) that spherical symmetry causes the spatial curvature term sourcing the instability to vanish by assumption. Secondly, as observed by Belinskii and Khalatnikov in \cite{bk72}, the presence of scalar field matter allows for a more general class of asymptotic ansatz near the singularity, including so-called \textit{subcritical} regimes where the spatial curvature no longer provides an instability.

For these reasons, it is expected that the formation and asymptotics of general spacelike singularities can be understood in this setting. 
This is the objective of the present article, where we prove that spacelike singularities in the spherically symmetric Einstein-scalar field system can be described precisely, in the sense that in the neighborhood of a point $p$ on the singular boundary, we can describe the leading order blow-up rates of all geometric quantities (including the Kretschmann scalar and the Hawking mass) and matter quantities such as $\phi$ and its derivatives, in terms of the area-radius $r$ (see Section~\ref{intro.model} for a definition).

We further provide a precise correspondence between such spacelike singularities and the heuristics of \cite{bk72} -- in particular we associate to each point $p$ on the singular boundary a triplet of Kasner exponents. This can be viewed as a rigorous version of the correspondence previously found in the physics literature \cite{BuonannoDamourVeneziano}. See Section~\ref{sub:intro_rough_1} for a rough version of our main theorem, or Section~\ref{statement} for the detailed version.

Our work builds upon that of \cite{AnZhang}, where the authors provide upper bound estimates (corresponding exactly to those in Section~\ref{upperbounds}) for various geometric and matter quantities. The improvement in the present paper is that such upper bounds are upgraded to precise leading-order asymptotics, thus allowing us to make the correspondence to Kasner-like singularities. 

Another novel aspect of our work is that in order to provide an avenue within which one can study stable (i.e.~subcritical) and unstable regimes, we allow also an electromagnetic field. This additional matter field is represented by a $2$-form $F \in \Omega^2(\mathcal{M})$, and has the energy-momentum tensor:
\begin{equation} \label{eq:energymomentum_em}
    T_{\mu \nu}[F] = F_{\mu \rho}  F_{\nu}^{\phantom{\nu} \rho} - \frac{1}{4} g_{\mu \nu} F^{\rho \sigma} F_{\rho \sigma}.
\end{equation}
The equations of motion associated to $F$ are given by Maxwell's equations:
\begin{equation} \label{eq:maxwell}
    \nabla^{\nu} F_{\mu \nu} = 0, \qquad \nabla_{[\lambda} F_{\mu \nu]} = 0.
\end{equation}

As we discuss in Section~\ref{intro.em}, an electromagnetic field sources a similar instability to gravitational perturbations outside of symmetry, and the study of spacelike singularity formation becomes heavily linked to the subcritical regimes of \cite{bk72}. It is of great interest to extend the ideas of \cite{BuonannoDamourVeneziano, AnZhang} to the Einstein-Maxwell-scalar field model. In particular, the addition of the electromagnetic field will be a crucial component of future work, to be introduced in Section~\ref{intro.stab}. %in allowing us to explore the stability and instability properties of spacelike singularities in future work, see Section~\ref{intro.stab} for a discussion of this problem.

We now outline the remainder of the introduction. In Section~\ref{intro.model}, we provide a more thorough overview of the spherically symmetric system under investigation, with the aim of being able to state a rough version of our main theorems in Section~\ref{sub:intro_rough_1}. In the following Sections~\ref{intro.bkl} and \ref{intro.em}, we relate our main theorem to the heuristics of BKL \cite{bk72, bk77, bkl71, bkl82, kl63}, where Section \ref{intro.em} emphasizes in particular the role played by the electromagnetic field. In Section~\ref{intro.proof}, we outline several of the important features of the proof, while in the remaining Sections~\ref{intro.stab} and \ref{intro.related} we provide mathematical context and propose several further directions of study, in particular we mention the expected stability properties of our singular spacetimes in Section~\ref{intro.stab}.

%Nonetheless, we first bring back our discussion to the case without electromagnetism, where in a series of works \cite{Christodoulou_formation, Christodoulou_BV, Christodoulou_wcc}, Christodoulou studied this system as a toy model for understanding the weak and strong cosmic censorship conjectures of Penrose \cite{penrose_cc} (or see \cite{Christodoulou_cc} for a more modern treatment). We introduce the model in question, and then detail Christodoulou's conclusions in the following section.

\subsection{Einstein-scalar field in spherical symmetry} \label{intro.model}

We now introduce the spherically symmetric Einstein-scalar field model in question. In a series of works \cite{Christodoulou_formation, Christodoulou_BV, Christodoulou_wcc}, Christodoulou studied this system (without electromagnetism) as a toy model for understanding the weak and strong cosmic censorship conjectures of Penrose \cite{penrose_cc} (or see \cite{Christodoulou_cc} for a more modern treatment). After defining the key concepts associated to spherically symmetric spacetimes, we give a brief overview of Christodoulou's conclusions. 

A spherically symmetric spacetime $(\mathcal{M}, g)$ is such that there exists a group of isometries acting on $(\mathcal{M}, g)$ isomorphic to $SO(3)$, whose orbits are each copies of $\mathbb{S}^2$ smoothly embedded in $\mathcal{M}$, outside of some orbits given by single points whose union we call the \textit{center} $\Gamma$. 

Defining $\mathcal{Q} = \mathcal{M} / SO(3)$ to be the quotient manifold, we can write $\mathcal{M} = \mathcal{Q} \times \mathbb{S}^2$ where $\mathcal{Q}$ is a $1+1$-dimensional manifold with (possibly empty) boundary $\Gamma$, and $g$ may be written as the warped product:
\begin{equation*}
    g = g_{\mathcal{Q}} + r^2 d \sigma_{\mathbb{S}^2} = g_{\mathcal{Q}} + r^2 ( d \theta^2 + \sin^2 \theta \, d \blue{\varphi^2}).
\end{equation*}
Here $r \geq 0$ is a function invariant under the $SO(3)$ action known as the \textit{area-radius}, vanishing precisely at $\Gamma$, while $g_{\mathcal{Q}}$ is a $\blue{(1+1)}$-dimensional Lorentzian metric on $\mathcal{Q}$.

Such a Lorentzian manifold $\mathcal{Q}$ has local coordinate charts given by two null coordinates $(u, v)$, with respect to which the four dimensional metric $g$ may be written as 
\begin{equation} \label{eq:metric}
    g = - \Omega^2(u, v) \, du dv + r^2(u, v) \, d \sigma_{\mathbb{S}^2}.
\end{equation}
We assume (as is the case for asymptotically flat one- or two-ended spacetimes) that this coordinate system on $\mathcal{Q}$ is global. Hence, the geometry is described by two functions of $u$ and $v$, the area-radius $r(u, v)$ and the \textit{null lapse} $\Omega^2(u, v)$.

There is gauge freedom regarding the choice of the null coordinates $u$ and $v$; given smooth and monotonically increasing $U = U(u)$ and $V = V(v)$, we may alternatively write
\begin{equation*}
    g = - \tilde{\Omega}^2(U, V) \, dU dV + r^2(U, V) \, d \sigma_{\mathbb{S}^2},
\end{equation*}
so long as $\tilde{\Omega}^2(U(u), V(v)) \cdot \frac{d U}{du} \frac{d V}{dv} = \Omega^2(u, v)$. It will often be useful to consider quantities that are \textit{gauge-invariant}, such as the area-radius $r(u, v)$ or the \textit{Hawking mass} $m(u, v)$, given by
\begin{equation} \label{eq:hawkingmass}
    1 - \frac{2 m}{r} \coloneqq g(\nabla r, \nabla r) = - 4 \Omega^{-2} \partial_u r \, \partial_v r.
\end{equation}

We study the coupled system (\ref{eq:einstein}) and (\ref{eq:energymomentum_scalar}) for spherically symmetric metrics of the form (\ref{eq:metric}). The system of evolution equations for $r(u,v)$, $\Omega^2(u,v)$ and the scalar field $\phi(u, v)$ with respect to these coordinates will be described in Section \ref{setup.emsf}. 

In the work \cite{Christodoulou_formation, Christodoulou_BV, Christodoulou_wcc} of Christodoulou, it is established that the maximal globally hyperbolic future development of \textit{generic} one-ended initial data possesses a future-complete null infinity $\mathcal{I}^+$, and that the spacetime either disperses as in Figure \ref{fig:christodoulou_dispersive}, or possesses a black hole region bounded to the future by a spacelike boundary, at which $r$ tends to $0$ and the Kretschmann scalar blows up, see Figure~\ref{fig:christodoulou_collapse}.

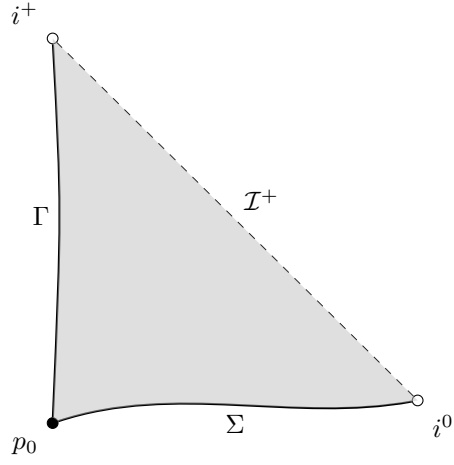
\begin{figure}[ht] 
    \centering
    \begin{tikzpicture}[scale=0.6]
        \node (s) at (0,-0.5) [circle, draw, inner sep=0.5mm, fill=black] {};
        \node [below left=0.2mm of s] {$p_0$};
        \node (i0) at (8, 0) [circle, draw, inner sep=0.5mm] {};
        \node [below right=0.2mm of i0] {$i^0$};
        \node (i+) at (0, 8) [circle, inner sep=0.5mm, draw] {};
        \node [above left=0.2mm of i+] {$i^+$};

        \draw [thick] (s) .. controls (2.7, 0.4) and (5.3, -0.5) ..  (i0)
            node (sigma) [midway, below] {$\Sigma$};
        \draw [dashed] (i0) -- (i+) node (nullinf) [midway, above right] {$\mathcal{I}^+$};
        \draw [thick] (i+) .. controls (0.2, 4.2) .. (s) node [midway, left] {$\Gamma$};

        \path[fill=lightgray, opacity=0.5] (s) .. controls (0.2, 4.2) .. (0, 8) -- (8, 0) 
            .. controls (5.3, -0.5) and (2.7, 0.4) ..  (s);
    \end{tikzpicture}
    \captionsetup{justification = centering}
    \caption{Penrose diagram representation of a dispersive solution to the Einstein-scalar field equations. This is a future causal geodesically complete spacetime with a complete future null infinity $\mathcal{I}^+$.}
    \label{fig:christodoulou_dispersive}
\end{figure}

In the latter case, \cite{Christodoulou_formation, Christodoulou_BV} applies soft monotonicity arguments based on the fact that $\partial_u \partial_v r^2 < 0$ (see already \eqref{eq:wave_r_u} in the case $Q \equiv 0$) in order to establish the spacelike character of the singular $\{r = 0\}$ boundary, as well as the blow-up of the Hawking mass and Kretschmann scalar. In order to make a comparison to the BKL ansatz for the spacelike singularity proposed in \cite{kl63, bkl71}, however, we shall require a more precise description of how such geometric quantities and the scalar field $\phi$ blow up.

\begin{figure}[ht] 
    \centering
    \begin{tikzpicture}[scale=0.8]
        \node (s) at (0,-0.5) [circle, draw, inner sep=0.5mm, fill=black] {};
        \node [below left=0.2mm of s] {$p_0$};
        \node (i0) at (8, 0) [circle, draw, inner sep=0.5mm] {};
        \node [below right=0.2mm of i0] {$i^0$};
        \node (i+) at (3, 5) [circle, inner sep=0.5mm, draw] {};
        \node [above right=0.2mm of i+] {$i^+$};
        \node (ss) at (0, 4.5) [circle, inner sep=0.5mm, draw] {};
        \node [above left=0.2mm of ss] {$b_0$};

        \path[fill=lightgray, opacity=0.5] (s) .. controls (2.7, 0.4) and (5.3, -0.5) .. (8, 0)
            -- (3, 5) .. controls (2, 5.5) and (1, 4.0) .. (0, 4.5) 
            .. controls (0.1, 2) .. (s);
        \path[fill=lightgray] (i+) .. controls (1.5, 4) and (0.5, 3.8)
            .. (0, 4.5)
            .. controls (1, 4.0) and (2, 5.5) .. (i+);

        \draw [thick] (s) .. controls (2.7, 0.4) and (5.3, -0.5) ..  (i0)
            node (sigma) [midway, below] {$\Sigma$};
        \draw [dashed] (i0) -- (i+) node (nullinf) [midway, above right] {$\mathcal{I}^+$};
        \draw [thick] (ss) .. controls (0.1, 2) .. (s) node [midway, left] {$\Gamma$};
        \draw (i+) -- (0.1, 2.1) node [midway, below right] {$\mathcal{H}^+$};
        \draw (i+) .. controls (1.5, 4) and (0.5, 3.8) .. (0, 4.5)
            node [midway, below] {$\mathcal{A}$};
        \draw [dashed] (i+) .. controls (2, 5.5) and (1, 4.0) .. (ss)
            node [midway, above] {$\mathcal{S}$};
    \end{tikzpicture}

    \captionsetup{justification = centering}
    \caption{Penrose diagram representation of a gravitational collapse solution to the Einstein-scalar field equations. The solution possesses a complete future null infinity $\mathcal{I}^+$ as well as a black hole region bounded to the past by the \textit{event horizon} $\mathcal{H}^+$. The black hole interior contains an \textit{apparent horizon} $\mathcal{A}$ at which $\partial_v r = 0$, whose future is a \textit{trapped region} $\mathcal{T}$ in which $\partial_v r < 0$ and which culminates at a spacelike boundary $\mathcal{S} = \{ r = 0 \}$ connecting $i^+$ with the first singularity at the center, denoted $b_0$.}
    \label{fig:christodoulou_collapse}
\end{figure}
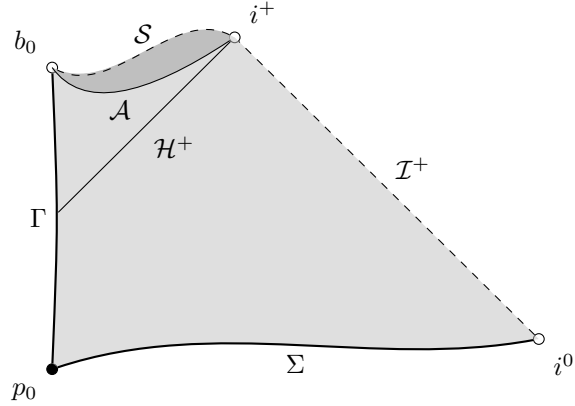

Progress to this effect is achieved in \blue{\cite{AnZhang, AnGajic, AnChenHe}}, where the authors derive \textit{upper bounds} for the quantities $r^2 \partial_u \phi$ and $r^2 \partial_v \phi$ near $\{ r = 0 \}$. Subsequently, these upper bounds on the scalar field are used to show inverse polynomial upper bounds for $\Omega^2$ and the Hawking mass $m$. Note that \cite{AnZhang} produces such estimates in a local neighborhood of a small portion of the spacelike singularity, while \cite{AnGajic} uses decay rates along the event horizon $\mathcal{H}^+$, originating from the heuristics of Price \cite{Price} and rigorously proved in \cite{dr_priceslaw}, to prove improved bounds near future spacelike infinity $i^+$, in line with the expectation that the singularity ``approaches Schwarzschild'' near $i^+$. \blue{Finally, \cite{AnChenHe} extends \cite{AnZhang, AnGajic} to problems involving a positive cosmological constant.}

In the present article, we build upon the upper bound estimates of \cite{AnZhang}, showing that in a regular double null gauge, the quantities $r^2 \partial_u \phi$ and $r^2 \partial_v \phi$ can be extended in a H\"older continuous fashion all the way to the singular boundary at $r = 0$. These asymptotics will in turn allow us to deduce exact rates for the polynomial blow-up of quantities such as $\Omega^2$, $m$, and the Kretschmann scalar. For example, we identify a quantity $\Psi(u, v)$, bounded and H\"older continuous up to the $r = 0$ boundary, such that
\begin{equation} \label{eq:omega_example}
    \Omega^2(u, v) \cdot r^{1 - \Psi^2(u, v)}(u, v) = C(u) + O(r).
\end{equation}

Here $C(u)$ should be thought of as a function `at the $r = 0$ singularity'. Asymptotics such as (\ref{eq:omega_example}) are then used to compare such spacelike singularities in the spherically symmetric Einstein-scalar field system to those predicted in the heuristics of BKL in \cite{kl63, bk72} -- to be \blue{introduced} in greater depth in Section \ref{intro.bkl}. %Note that this connection has previously been explored in the physics literature \cite{BuonannoDamourVeneziano}.

Another new feature of the present article is the introduction of electromagnetism to the study of spacelike singularities in spherical symmetry. As mentioned previously, an electromagnetic field is represented by its Maxwell tensor, the $2$-form $F_{\mu\nu}$, and an energy-momentum tensor given by \eqref{eq:energymomentum_em}.
One then couples electromagnetism to gravity and scalar matter by studying the system \eqref{eq:einstein} with $T_{\mu \nu} = T_{\mu\nu}[\phi] + T_{\mu\nu}[F]$, alongside the matter evolution equations \eqref{eq:scalar_wave} and \eqref{eq:maxwell}.

The spherically symmetric Einstein-Maxwell-scalar field model has received much attention in the mathematical community, see \cite{dafermos05, LukOh1, LukOh2, kommemi}. These results, however, are largely concerned with the existence and properties of the Cauchy horizon in the interior of black holes. Indeed, the groundbreaking result of \cite{LukOh1, LukOh2} proves the $C^2$ formulation of the strong cosmic censorship conjecture in the context of the spherically symmetric Einstein-Maxwell-scalar field model. It is expected that similar results will also hold in the vacuum setting, see for instance \cite{KerrStab}.

On the other hand, the objective of this present article is to study the behaviour of the scalar field and geometry when the singularity is no longer \textit{null}, as in the case of the Cauchy horizon, but instead \textit{spacelike}. This is more subtle than the model without electromagnetism, for reasons to be explained in Section \ref{intro.em}. %We'll see that the heuristics of BKL in \cite{bkl71} can be extended to Einstein-Maxwell in a nontrivial manner as in \cite{bk77}, explaining the more subtle phenomenon that arise.

\subsection{Precise asymptotics -- rough version of our main theorems} \label{sub:intro_rough_1}

We now present a rough version of our main theorems. The setup is as follows: we pose characteristic initial data for the spherically symmetric Einstein-Maxwell-scalar field equations \eqref{eq:einstein}, \eqref{eq:scalar_wave}, \eqref{eq:maxwell} on the bifurcate null hypersurface $C_0 \cup \underline{C}_0$, where $C_0$ and $\underline{C}_0$ are respectively outgoing and ingoing null cones. 

By standard theory, we obtain a unique solution $(r, \Omega^2, \phi, F)$ to the above system of equations on a \textit{maximal future domain of development} $\mathcal{D} \subset \R^2$, attaining \blue{the} prescribed data on $C_0 \cup \underline{C}_0$, and such that $r, r^{-1}, \log \Omega^2, \phi$ and $F$ remain bounded and regular in compact subregions of $\mathcal{D}$. The domain $\mathcal{D}$ is \textit{globally hyperbolic} in the sense that any inextendible timelike curve in $\mathcal{D}$ has a past endpoint on the initial data hypersurface $C_0 \cup \underline{C}_0$. See already the Penrose diagram in Figure~\ref{fig:char_ivp0}.

Since we are interested in the \underline{local} properties of spacelike singularities, we assume already that $C_0$ and $\underline{C}_0$ are foliated by trapped $2$-spheres i.e.~that $\partial_u r < 0$ and $\partial_v r < 0$ on $C_0 \cup \underline{C}_0$. By Raychaudhuri's equations, this means that the maximal future development $\mathcal{D}$ will be entirely foliated by trapped $2$-spheres.%\footnote{By Raychaudhuri's equations, this will in fact hold for any matter model obeying the null energy condition.}. %In particular, in the context of Figure~\ref{fig:christodoulou_collapse} our analysis will only apply to subregions of the trapped region $\mathcal{T}$.

For now, we suppose that $C_0$ and $\underline{C}_0$ are \textit{regular} and \textit{compact}; \blue{in particular, the area-radius $r(u, v)$ restricted to $C_0 \cup \underline{C}_0$ remains bounded away from $0$.}
We will, however, assume the initial data on $C_0 \cup \underline{C}_0$ to be such that the maximal future development \blue{$\mathcal{D}$} will possess a future boundary which is an $\{ r = 0 \}$ singularity. \blue{That is,} the entire future boundary will consist of the union of two regular null hypersurfaces (emanating from the future endpoints of $C_0$ and $\underline{C}_0$) and a spacelike singularity $\mathcal{S}$ at which $r$ extends continuously to $0$.
We call such a maximal future development a \textit{strongly singular spacetime}. This is shown pictorially in Figure~\ref{fig:char_ivp0}, and will be described in greater detail in Section~\ref{setup.sing}.

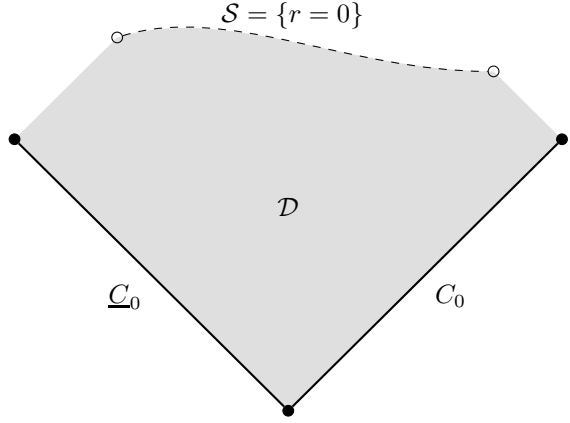
\begin{figure}[ht]
    \centering
    \begin{tikzpicture}[scale=0.9]
        \path[fill=lightgray, opacity=0.5] (0, -4) -- (-4, 0) -- (-2.5, 1.5)
            .. controls (-0.9, 2) and (0.9, 1) .. (3, 1)
            -- (4, 0) -- (0, -4);

        \node (p) at (0, -4) [circle, draw, inner sep=0.5mm, fill=black] {};
        \node (r) at (4, 0) [circle, draw, inner sep=0.5mm, fill=black] {};
        \node (l) at (-4, 0) [circle, draw, inner sep=0.5mm, fill=black] {};
        \node (rs) at (3, 1) [circle, draw, inner sep=0.5mm] {};
        \node (ls) at (-2.5, 1.5) [circle, draw, inner sep=0.5mm] {};

        \node at (0, -1) {$\mathcal{D}$};

        \draw [thick] (p) -- (r)
            node [midway, below right] {$C_0$};
        \draw [thick] (p) -- (l)
            node [midway, below left] {$\underline{C}_0$};
        %\draw (r) -- (rs);
        %\draw (l) -- (ls);
        \draw [dashed] (ls) .. controls (-0.9, 2) and (0.9, 1) .. (rs)
            node [midway, above=0.5mm] {$\mathcal{S} = \{ r = 0 \}$};
        %\draw [dotted] (-3, 1) .. controls (-1.5, 1) and (1.5, 0) .. (3.5, 0.5);
        %\draw [dotted] (-4, 0) .. controls (-1.7, 0) and (1.7, -1.0) .. (4, 0)
        %    node [midway, above] {\footnotesize $ r = \text{const}$};
        %\draw [dotted] (-3, -1) .. controls (1, -1.3) .. (3, -1);
    \end{tikzpicture}

    \captionsetup{justification = centering}
%    \caption{The characteristic initial value problem considered in this article. Regular initial data is prescribed at $C_0 \cup \underline{C}_0$ such that $C_0 \cup \underline{C}_0$ is trapped, yielding a maximal future hyperbolic development $\mathcal{D}$ with a singular boundary $\mathcal{S}$ at which $r = 0$.}
    \caption{A strongly singular spacetime to which Theorem~\ref{roughthm:asymp} applies. The region $\mathcal{D}$ is foliated by trapped $2$-spheres, and possesses a future boundary $\mathcal{S}$ at which $r$ extends continuously to $0$. Examples of such strongly singular spacetimes exist both in the case $F_{\mu\nu} = 0$, and in the case that $F_{\mu \nu} \neq 0$ where one must additionally assume \eqref{eq:QTS} and \eqref{eq:SKE}.}
    \label{fig:char_ivp0}
\end{figure}

In the presence of a non-trivial electromagnetic field, we \blue{will have to} make the following \blue{two} additional assumptions on the strongly singular maximal development $\mathcal{D}$, where $R_0 > 0$ and $\alpha \in (0, 1)$ are fixed constants.
\begin{equation} \label{eq:QTS} \tag{QTS}
    \inf_{(u, v) \in \mathcal{D}} \min \{ - r \partial_v r(u, v), - r \partial_u r(u, v) \} \geq R_0,
\end{equation}
\begin{equation} \label{eq:SKE} \tag{SKE}
    \inf_{(u, v) \in \mathcal{D}} \min \left \{ \left| r \frac{\partial_v  \phi}{- \partial_v r}\right| (u, v), \left| r \frac{\partial_u \phi}{- \partial_u r}\right|(u, v) \right \} \geq \sqrt{1 + \alpha}.
\end{equation}
\eqref{eq:QTS} and \eqref{eq:SKE} are acronyms for \textit{quantitatively trapped spheres} and \textit{subcritical Kasner exponents} respectively. \blue{Here,} \eqref{eq:QTS}, which \blue{always holds} when $F_{\mu\nu} \equiv 0$ (since $\partial_u \partial_v r^2 < 0$ in this case), provides a mechanism to reach an $r = 0$ singularity in the first place, while we defer the discussion of the second assumption \eqref{eq:SKE} to Sections~\ref{intro.bkl} and \ref{intro.em}.

We briefly mention that the strongly singular assumption is not vacuous. In the $F_{\mu\nu} = 0$ case, there is a simple algebraic condition on the data at $C_0 \cup \underline{C}_0$ that guarantees that $\mathcal{D}$ is strongly singular, see already Lemma~\ref{lem:setup_esfss} and the subsequent remark. Alternatively, it is straightforward to find strongly singular subregions of the trapped region $\mathcal{T}$ in gravitational collapse spacetimes, such as that depicted by Figure~\ref{fig:christodoulou_collapse}. 

For $F_{\mu\nu} \neq 0$, the existence of strongly singular spacetimes is a little more subtle, but we can for example refer the reader to Section~\ref{examples.just} to justify that there exists at least a one-parameter family of strongly singular examples. \blue{We admit, however, that understanding conditions, purely on the initial data, which guarantee \eqref{eq:QTS} and \eqref{eq:SKE}, is a highly nontrivial problem, see the discussion in Section~\ref{intro.stab}. One way to resolve this}, anticipating already the stability result Theorem~\ref{roughthm:stab}, \blue{is by considering} small \blue{but \emph{charged}} \emph{perturbations} of any strongly singular spacetime obeying \eqref{eq:SKE}; \blue{note the \emph{unperturbed spacetime} may have $F_{\mu\nu} = 0$.}

We now state Theorem~\ref{roughthm:asymp}, which is a rough version of our main Theorems~\ref{thm:esfss} and \ref{thm:emsfss}.

\begin{customthm}{I}[Asymptotics -- rough version] \label{roughthm:asymp}
    Consider spherically symmetric characteristic initial data for the Einstein-Maxwell-scalar field system \eqref{eq:einstein}, \eqref{eq:scalar_wave}, \eqref{eq:maxwell} on the bifurcate null hypersurface $C_0 \cup \underline{C}_0$, and suppose further that the maximal future development \blue{$\mathcal{D}$} is strongly singular, as described above.

    If either of the following holds:
    \begin{enumerate}[(A)]
        \itemsep -0.2em
        \item
            the Maxwell field vanishes, i.e.~$F_{\mu\nu} = 0$, or
        \item
            the Maxwell field is non-trivial, i.e.~$F_{\mu\nu} \neq 0$, but we assume a priori that \eqref{eq:QTS} and \eqref{eq:SKE} hold,
    \end{enumerate}
    then the dynamical variables $r(u, v)$, $\Omega^2(u, v)$ and $\phi(u, v)$ obey the following asymptotics in $\mathcal{D}$, for some $0 < \beta < 1$.

    \begin{enumerate}[(I)]
        \item \label{roughI}
            \underline{Asymptotics for $r$:} 
            $r^2(u, v)$ is a $C^{1, \beta}$ function up to and including the singular boundary $\mathcal{S}$. Its first derivatives $-r \partial_u r(u, v), - r \partial_v r(u, v)$ are strictly positive $C^{0, \beta}$ functions up to $\mathcal{S} = \{ (u, v): r(u, v) = 0 \}$, which is a $C^{1, \beta}$ curve in the $(u, v)$--plane.
        \item \label{roughII}
            \underline{Asymptotics for $\phi$:}
            There exist functions $\Psi(u, v)$ and $\Xi(u, v)$, bounded and H\"older continuous up to and including the singular boundary $\mathcal{S}$, such that $\phi(u, v)$ can be written as
            \begin{equation*}
                \phi(u, v) = \Psi(u, v) \cdot \log \left(\frac{r_0}{r(u, v)}\right) + \Xi(u, v).
            \end{equation*}
            (Here $r_0$ is chosen to be the value of $r(u, v)$ at the initial data bifurcation sphere $C_0 \cap \underline{C}_0$ -- however we note that in reality, $r_0$ is simply a scaling parameter and can be chosen to be any fixed constant.) 
        \item \label{roughIII}
            \underline{Asymptotics for $\Omega^2$:} 
            There exists a function $\mathfrak{M}(u, v)$, with $\log \mathfrak{M}$ bounded and H\"older continuous up to and including the singular boundary $\mathcal{S}$, such that 
            \begin{equation*}
                4 \Omega^{-2} \partial_u r \partial_v r (u, v) = \mathfrak{M} \cdot \left( \frac{r}{r_0} \right)^{- (\Psi^2 + 1)}.
            \end{equation*}
        \item \label{roughIV}
            \ul{Estimates localized at a TIP\footnote{%
                    TIP stands for \textit{terminal indecomposable past}, see Chapter 6 of \cite{hawking_ellis_1973}. Strictly speaking, our use of the term TIP is an abuse of terminology as every $p \in \mathcal{S}$ actually represents a $2$-sphere of TIPs.
            }:}
            Let $p = (u_p, v_p) \in \mathcal{S}$. By (\ref{roughI})--(\ref{roughIII}), the quantities $- r \partial_u r$, $- r \partial_v r$, $\Psi$, $\Xi$ and $\mathfrak{M}$ may all be defined at $p$, and we denote their values at $p$ as $C_u$, $C_v$, $\Psi_{\infty}$, $\Xi_{\infty}$ and $\mathfrak{M}_{\infty}$ respectively.

            For such $p$, consider its causal past $J^-(p) = \{ (u, v) \in \mathcal{D}: u \leq u_p, v \leq v_p \}$. Then for $(u, v) \in J^-(p)$, one has, for $\alpha > 0$ (coinciding with the $\alpha$ appearing in \eqref{eq:SKE} when $F_{\mu\nu} \not\equiv 0$, and $\alpha = 1$ otherwise):
            \begin{gather*}
                - r \partial_u r (u, v) = C_u + O(r^{\alpha}), \quad - r \partial_v r (u, v) = C_v + O(r^{\alpha}), \\[0.5em]
                r^2 \partial_u \phi (u, v) = C_u \cdot \Psi_{\infty} + O(r^{\alpha}), \quad r^2 \partial_v \phi (u, v) = C_v \cdot \Psi_{\infty} + O(r^{\alpha}), \\[0.5em]
                \phi(u, v) = \Psi_{\infty} \cdot \log \left( \frac{r_0}{r} \right) + \Xi_{\infty} + O(r^{\alpha} \log r^{-1}), \\[0.5em]
                \Omega^2(u, v) = \frac{4 C_u C_v}{r_0^2} \cdot \mathfrak{M}_{\infty}^{-1} \cdot \left( \frac{r}{r_0} \right)^{\Psi_{\infty}^2 - 1} + O(r^{\Psi_{\infty}^2 - 1 + \alpha} \log r^{-1}).
            \end{gather*}
        \item \label{roughV}
            \underline{Precise blow-up estimates:}
            For $p\in \mathcal{S}$ and $J^-(p)$ as in (\ref{roughIV}), the Hawking mass, the Kretschmann scalar and the scalar field blow up at the following rates:
            \begin{gather*}
                2m(u, v) = \mathfrak{M}_{\infty} r_0 \cdot \left( \frac{r_0}{r} \right)^{\Psi_{\infty}^2} + O(r^{- \Psi^2 + \alpha} \log r^{-1}), \\[0.5em]
                \mathrm{Riem}_{\alpha \beta \gamma \delta} \mathrm{Riem}^{\alpha \beta \gamma \delta} = \mathfrak{M}_{\infty}^2 \cdot \left( \frac{r_0}{r} \right)^{2 (\Psi_{\infty}^2 + 3)} \cdot \frac{ 4 ( 3 - 2 \Psi_{\infty}^2 + 2 \Psi_{\infty}^4)}{r_0^4} + O(r^{-2 \Psi_{\infty}^2 - 6 + \alpha} \log r^{-1} ), \\[0.5em]
                \nabla_{\alpha} \phi \nabla^{\alpha} \phi = - \frac{\mathfrak{M}_{\infty} \Psi_{\infty}^2}{r_0^2} \cdot \left( \frac{r_0}{r} \right)^{\Psi_{\infty}^2 + 3} + O(r^{- \Psi_{\infty}^2 - 3 + \alpha} \log r^{-1}).
            \end{gather*}
    \end{enumerate}
\end{customthm}

We refer \blue{the reader} to Theorem~\ref{thm:esfss} for the precise version of Theorem~\ref{roughthm:asymp} when $F_{\mu\nu} \equiv 0$, and Theorem~\ref{thm:emsfss} for the precise statement when $F_{\mu\nu} \neq 0$. The various H\"older norms involved will be defined in Section~\ref{statement.norms}. We now make several brief comments about Theorem~\ref{roughthm:asymp}.

\begin{enumerate}
    \item
        Theorem~\ref{roughthm:asymp} gives the leading order term of an asymptotic expansion for $\phi$, $\Omega^2$, and other related quantities in terms of $r$. It would be interesting to compute the next order terms in such expansions, as in the heuristic work \cite{BuonannoDamourVeneziano}, though we do not pursue that here.%  the non-sharp $O(r^{\alpha})$ or $O(r^{\alpha} \log r^{-1})$ errors.
    \item In light of (\ref{roughIII}), it is tempting to introduce the following (past-directed) time function:
        \begin{equation} \label{eq:proper_time}
            \tau = \mathfrak{M}^{-1/2} \cdot \frac{2 r_0}{3 + \Psi^2} \cdot \left( \frac{r}{r_0} \right)^{\frac{3 + \Psi^2}{3}}.
        \end{equation}
        \blue{This} $\tau$ is chosen such that $g(\nabla \tau, \nabla \tau) = - 1 + O(r^{\alpha} \log r^{-1})$, and \blue{is our} candidate time function for an asymptotically Gaussian or asymptotically CMC \blue{foliation} of our spacetime, see Corollary~\ref{cor:bkl}.
    \item
        \blue{Using a foliation by constant $\tau$ slices}, we associate the following \blue{three} Kasner exponents to each $p \in \mathcal{S}$:
        \begin{equation} \label{eq:Psikasner}
            p_1(p) = \frac{\blue{\Psi^2(p)} - 1}{\blue{\Psi^2(p)} + 3}, \qquad p_2(p) = p_3(p) = \frac{2}{\blue{\Psi^2(p)} + 3}.
        \end{equation}
        %It is notable that (\ref{roughV}) implies that the Kretschmann scalar and the scalar field blow up at the same rate as that of \eqref{eq:kasner}, with respect to the time function \eqref{eq:proper_time} and the Kasner exponents \eqref{eq:Psikasner}. 
        The fact that the exponents are permitted to vary along $\mathcal{S}$ will be justified in Section~\ref{intro.bkl}.

\end{enumerate}

\blue{
\noindent
Stemming from the latter two comments, we state a corollary claiming that the spacetimes of Theorem~\ref{roughthm:asymp} are \emph{Kasner-like}, see \cite{RingstromSilentGeometry} for further discussion of the expansion-normalized Weingarten map $\mathcal{K}_i^{\phantom{i}j}$ and its relevance to Kasner exponents, as well as Corollary~\ref{cor:bkl} for the precise version of the following.
\begin{customcorollary}{II}[Kasner-like behaviour, rough version] 
    Let $(\mathcal{M}, g)$ be a strongly singular, spherically symmetric spacetime as described by Theorem~\ref{roughthm:asymp}. Then, for $r$ sufficiently small, $\tau$ defined in \eqref{eq:proper_time} is timelike, and one may define $k_{ij}$ to be the second fundamental form of the foliation by constant $\tau$-hypersurfaces.

    Then, for $(u, v) \in J^-(p)$ and $r(u, v)$ sufficiently small, the expansion-normalized Weingarten map $\mathcal{K}_i^{\phantom{i}j} \coloneqq (\tr k)^{-1} \, k^{i}_{\phantom{i}j}$  is such that the eigenvalues of $\mathcal{K}^i_{\phantom{i}j}$ are given by the $p_i(p)$ in \eqref{eq:Psikasner}, plus an error of order $O(r^{\alpha} \log r^{-1})$.
\end{customcorollary}
}

\subsection{The BKL ansatz} \label{intro.bkl}

We now briefly describe the ansatz for the near-singularity solutions of Einstein's equations, proposed by Khalatnikov and Lifshitz in \cite{kl63} and developed in the celebrated work of BKL \cite{bkl71, bkl82}. Starting with vacuum, \cite{kl63} considers the following $1+3$-decomposition of spacetime:
\begin{equation} \label{eq:adm}
    \mathcal{M} = (0, T) \times \Sigma, \qquad g = - dt^2 + {}^{(3)}h(t).
\end{equation}
Here ${}^{(3)}h(t)$ is a time-dependent smooth Riemannian metric on the $3$-manifold $\Sigma$.

Influenced by the Kasner metric (\ref{eq:kasner}), \cite{kl63} proposes the following ansatz for $g$ in the neighborhood of a singularity at $\{ t = 0 \}$: for $(t, x) \in \mathcal{M}$ they write
\begin{equation} \label{eq:bkl}
    g(t, x) \approx - dt^2 + \sum_{I = 1}^3 t^{2 p_I(x)} \omega^{I}(x) \otimes \omega^{I}(x).
\end{equation}
In \eqref{eq:bkl}, $\omega^I(x)$ are smooth $1$-forms on $\Sigma$ spanning the cotangent space $T_x^*\Sigma$, while the exponents $p_I(x)$ are smooth functions on $\Sigma$ satisfying the two \emph{Kasner relations}:
\begin{gather} \label{eq:kasnerrelation1}
    \sum_{I = 1}^3 p_I(x) = 1, \qquad
    %\label{eq:kasner2}
    \sum_{I = 1}^3 p_I^2(x) = 1.
\end{gather}
The time coordinate $t$ is chosen such that the singularity is synchronized at $t = 0$.

Assume for now that $p_1(x) < p_2(x) < p_3(x)$. We define $e_I(x)$ to be the vector fields dual to $\omega^I(x)$, and use the following notational convention: for any $(p, q)$-tensor field $T$ defined on $\Sigma$ we use lower case indices $T_{i_1 \cdots i_p}^{\phantom{i_1 \cdots i_p}j_1 \cdots j_q}$ to denote the tensor field evaluated using the coordinate vector fields $\frac{\partial}{\partial x^i}$ and their respective coordinate $1$-forms $dx^i$, while upper case indices $T_{I_1 \cdots I_p}^{\phantom{I_1 \cdots I_p} J_1 \cdots J_q}$ denote the tensor field evaluated using the frame fields $e_I(x), \omega^I(x)$. Indices are raised and lowered with respect to the Riemannian metric ${}^{(3)}h = {}^{(3)}h(t)$.

In the gauge (\ref{eq:adm}), the evolutionary components of Einstein's equation (\ref{eq:einstein}) yield the following:
\begin{gather}
    \frac{\partial}{\partial t} {}^{(3)}h_{ij} = - 2 k_{ij}, \label{eq:adm1}\\[1em]
    \frac{1}{\sqrt{\det ({}^{(3)}h_{pq})}} \frac{\partial}{\partial t} \left( \sqrt{\det({}^{(3)} h_{pq})} k_i^{\phantom{i}j}\right)  = {}^{(3)} \mathrm{Ric}[{}^{(3)} h]_i^{\phantom{i}j} - 2 T_i^{\phantom{i}j} + \delta_i^{\phantom{i}j} \left( T_0^{\phantom{0}0} + T_k^{\phantom{k}k} \right). \label{eq:adm2}
\end{gather}
The metric on the right hand side of (\ref{eq:bkl}) solves (\ref{eq:adm1}) and (\ref{eq:adm2}), if one assumes the right hand side of (\ref{eq:adm2}) to be absent. Hence a necessary condition for (\ref{eq:bkl}) to be self-consistent in vacuum (with $T_{\mu\nu} = 0$) is that the expression ${}^{(3)} \mathrm{Ric} [{}^{(3)}h]$ is suitably small. In \cite{kl63}, the authors argue that the required smallness is that for some $\epsilon > 0$, 
\begin{equation} \label{eq:bkl_smallness}
    |{}^{(3)}\mathrm{Ric}[{}^{(3)} h]_{IJ}| = O(t^{p_I + p_J - 2 + \epsilon}).
\end{equation}

On the other hand, a computation of spatial Ricci curvature with respect to the (approximate) orthonormal frame $t^{- p_I(x)} e_I(x)$ allows one to obtain the following asymptotics for ${}^{(3)}\mathrm{Ric}[{}^{(3)} h]_{IJ}$, where $\lambda_{IJ}^K$ are the structure coefficients of the frame, $d \omega^K = \lambda_{IJ}^K \, \omega^I \wedge \omega^J, \;\lambda^{K}_{IJ} = - \lambda^K_{JI}$.
\begin{equation} \label{eq:bkl_ricci}
    \left| t^{- p_I - p_J + 2}\cdot {}^{(3)}\mathrm{Ric}[{}^{(3)} h]_{IJ} \right| = 
    (\lambda^1_{23})^2 \cdot t^{2(1 + p_1 - p_2 - p_3)} \cdot \delta_{IJ} + O(t^{\epsilon})
\end{equation}

Since $1 + p_1 - p_2 - p_3 = 2 p_1 < 0$ (see for instance \eqref{eq:kasnercircle} with $u \geq 1$), ${}^{(3)}\mathrm{Ric}[{}^{(3)}h]$ obeys (\ref{eq:bkl_smallness}) if and only if the structure coefficient $\lambda^1_{23}(x)$ vanishes identically on $\Sigma$. \cite{kl63} then argues that this prohibits the ansatz \eqref{eq:bkl} from having the correct number of physical degrees of freedom. Note that $\lambda^1_{23}(x) = 0$ holds if and only if the one-form $\omega^1$ is hypersurface orthogonal, and in particular holds in spacetimes with certain symmetries, including spherical symmetry.

In \cite{bkl71}, BKL further provide heuristics arguing that if the term (\ref{eq:bkl_ricci}) is present, then there exists some critical time $t_c$, such that the metric will dynamically ``bounce'' from the approximate form (\ref{eq:bkl}) for times $t \gg t_c$, to another approximate form (\ref{eq:bkl}) for times $t \ll t_c$, but with modified Kasner exponents
\begin{equation} \label{eq:bounce}
    \tilde{p}_1(x) = - \frac{p_1(x)}{1 + 2 p_1(x)}, \quad \tilde{p}_2(x) = \frac{p_2(x) + 2 p_1(x)}{1 + 2 p_1(x)}, \quad \tilde{p}_3(x) = \frac{p_3(x) + 2 p_1(x)}{1 + 2 p_1(x)}.
\end{equation}

The new exponents $\tilde{p}_I$ still obey \eqref{eq:kasnerrelation1}, hence this new Kasner regime is subject to a similar inconsistency. \cite{bkl71, bkl82} then propose that the generic eventual approach to a spacelike singularity involves an infinite cascade of such \textit{Kasner bounces}, and is known in the literature as the \textit{BKL oscillatory approach to singularity}.

Rigorous work concerning the proposed oscillatory dynamics will, in practise, involve controlling wildly \blue{behaved} error terms. It is thus desirable to find regimes, possibly involving matter, where the instability mechanism can be suppressed. One such regime appears in \cite{bk72}, where the authors find that matter described by a massless scalar field indeed has this property.

In \cite{bk72}, the authors propose the same near-singularity ansatz (\ref{eq:bkl}) for the metric, along with the following ansatz for the scalar field $\phi$:
\begin{equation} \label{eq:bkl_scalar}
    \phi(t, x) \approx A(x) \log t + B(x).
\end{equation}
In order for (\ref{eq:bkl}) and (\ref{eq:bkl_scalar}) to solve the Einstein-scalar field equations asymptotically, the Kasner relations (\ref{eq:kasnerrelation1}) are modified to
\begin{equation} \label{eq:kasner2_scalar}
    \sum_{I=1}^3 p_I(x) = 1, \qquad \sum_{I= 1}^3 p_I(x)^2 + 2 A(x)^2 = 1.
\end{equation}

The key observation made in \cite{bk72} is that the modified Kasner relations (\ref{eq:kasner2_scalar}) allow for exponents with $0 < p_1(x) < p_2(x) < p_3(x)$. In such regimes, $1 + p_1 - p_2 - p_3 = 2 p_1$ is now positive, and the previously problematic term (\ref{eq:bkl_ricci}) can be treated as an error.

It can also be shown that the energy-momentum tensor terms $T_i^{\phantom{i}j}$ and $T_0^{\phantom{0}0}$ appearing on the right hand side of (\ref{eq:adm2}) combine to produce an error which can similarly be treated as negligible. The conclusion is that (\ref{eq:bkl}) and (\ref{eq:bkl_scalar}) produce self-consistent asymptotics near a $t = 0$ singularity, so long as the exponents lie in the so-called \textit{subcritical regime} where $\min p_I(x) > 0$. In the context of Theorem~\ref{roughthm:asymp} and \eqref{eq:Psikasner}, this corresponds to $\inf \Psi^2(u, v) > 1$.

As remarked earlier, within spherical symmetry, the subcriticality assumption is not necessary with or without a scalar field. In order to observe such instabilities, and the necessity of studying such subcritical regimes, we add another matter source, namely an electromagnetic field.

\subsection{The role of electromagnetism} \label{intro.em}

%Since the heuristic instability mechanism proposed in \cite{kl63} does not apply to spherically symmetric spacetimes, analysis of the spherically symmetric Einstein-scalar field system does not exhibit many of the dynamical features of singularities outside of symmetry. This motivates us to add a new matter model to our system to mimic the effects of perturbations outside of symmetry, namely an electromagnetic field. 

%Given a spacetime $(\mathcal{M}, g)$, an electromagnetic field is represented by a one-form $A_{\mu}$ defined on $\mathcal{M}$, and its corresponding Maxwell tensor is the $2$-form $F_{\mu \nu} = (d A)_{\mu \nu} = \partial_{\mu} A_{\nu} - \partial_{\nu} A_{\mu}$. The energy-momentum tensor corresponding to electromagnetism is
%\begin{equation} \label{eq:energymomentum_em}
%    T_{\mu \nu}[A] = F_{\mu \rho}  F_{\nu}^{\phantom{\nu} \rho} - \frac{1}{4} g_{\mu \nu} F^{\rho \sigma} F_{\rho \sigma},
%\end{equation}
%and its equations of motion are given by Maxwell's equations:
%\begin{equation} \label{eq:maxwell}
%    \nabla^{\nu} F_{\mu \nu} = 0, \qquad \nabla_{[\lambda} F_{\mu \nu]} = 0.
%\end{equation}
Recall once again that an electromagnetic field is described by a $1$-form $A_{\mu}$ and its corresponding Maxwell field $F = dA$. %It evolves according to Maxwell's equation \eqref{eq:maxwell} and is coupled to gravity via its energy momentum tensor \eqref{eq:energymomentum_em}. 
Being a $2$-form, the Maxwell field $F_{\mu\nu}$ can be considered as an anisotropic matter source -- unlike the scalar field $\phi$ -- and could potentially mimic the effects of gravitational perturbations outside of symmetry. So one could ask: how exactly does this arise in our setting?

In the context of the BKL ansatz (\ref{eq:bkl}), electromagnetic matter was studied by Belinski and Khalatnikov \cite{bk77} (see also Chapter 4 of \cite{BelinskiHenneaux}), where alongside the ansatz 
(\ref{eq:bkl}) for the metric, the authors suggest the following asymptotics for the electromagnetic field:
\begin{equation} \label{eq:bkl_em}
    E^I(x, t) \approx \frac{1}{t} \Phi_{E}^{I} (x), \qquad B^I(x, t) \approx \frac{1}{t} \Phi_{B}^I (x).
\end{equation}
Note that here we define
$E_I(x, t) \coloneqq F \left( {\partial_t}, e_I \right)$ and $B_I(x, t) \coloneqq *F \left( \partial_t , e_I \right)$
to be the usual $1+3$ electric-magnetic decomposition of $F_{\mu\nu}$, evaluated with respect to the spatial frame $\{ e_I \}$.

However, even \eqref{eq:bkl_em} turns out to be inconsistent outside the subcritical regime $\min \{p_I(x)\} > 0$. Assuming $p_1(x) < p_2(x) < p_3(x)$, the $e_1(x)$ projections of Maxwell's equations \eqref{eq:maxwell} in the BKL ansatz yield:
\begin{equation} \label{eq:bkl_e1b1}
    \frac{\partial}{\partial t} (t E^1(x)) \approx - (\lambda^1_{23})^2 t^{2 p_1- 1} \cdot t  B^1(x), \quad
    \frac{\partial}{\partial t} (t B^1(x)) \approx + (\lambda^1_{23})^2 t^{2 p_1 - 1}\cdot t E^1(x).
\end{equation}
Hence if $p_1 < 0$ the right-hand side would appear to be non-integrable.

Nonetheless, the system \eqref{eq:bkl_e1b1} does suggest that the expression
\begin{equation} \label{eq:bkl_omega}
    \omega^2(t, x) \coloneqq t^2 [ E^1(x)^2 + B^1(x)^2 ] 
\end{equation}
is (approximately) constant in time, and, abusing notation slightly, we set $\omega^2(x) = \omega^2(t, x)$ in the following discussion. 

As well as ${}^{(3)}\mathrm{Ric}[{}^{(3)}h]$, we must now also estimate the matter term on the right hand side of (\ref{eq:adm2}). Writing $T_{\mu\nu}^{EM}$ in terms of $E^i$ and $B^i$, one finds the following expression:
\begin{equation*}
    - 2 T_{ij} + {}^{(3)}h_{ij} (T_{0}^{\phantom{0}0} + T_k^{\phantom{k}k}) = 2 (E_i E_j + B_i B_j) - {}^{(3)}h_{ij} (E_k E^k + B_k B^k),
\end{equation*}
which in light of (\ref{eq:bkl_omega}) -- the contributions of $E_I$ and $B_I$ for $I = 2, 3$ are part of the error -- yields
\begin{equation} \label{eq:bkl_emem}
    \left|t^{- p_I - p_J + 2}\cdot [- 2 T_{IJ} + {}^{(3)}h_{IJ} (T_0^{\phantom{0}0} + T_k^{\phantom{k}k} )] \right| = 
    \omega^2 \cdot t^{2 p_1} \delta_{IJ} + O(t^{\epsilon}).
\end{equation}

The term \eqref{eq:bkl_emem} with $\omega \neq 0$ sources a similar instability to that of \eqref{eq:bkl_ricci} with $\lambda^1_{23} \neq 0$. Perhaps more remarkably, the analysis of \cite{bk77} suggests that as long as either $\omega$ or $\lambda^1_{23}$ fails to vanish, we obtain a Kasner bounce with the transition map \eqref{eq:bounce}. This concludes our brief explanation of why the electromagnetic field acts as a proxy for spatial curvature in the BKL setting.

To see how these ideas are manifested in the spherically symmetric setting, we note that in a regular double null gauge $(u, v)$, one of the components of Einstein's equation (see already (\ref{eq:wave_r_u})) is
\begin{equation} \label{eq:wave_r_u_rough}
    \partial_u ( - r \partial_v r ) \approx -\frac{Q^2 \Omega^2}{4 r^2},
\end{equation}
at least in the vicinity of $\{ r = 0 \}$, where $Q$ is a constant quantifying the strength of the Maxwell field. 

From (\ref{roughI}) of Theorem~\ref{roughthm:asymp}, we expect that $- r \partial_u r$ and $- r \partial_v r$ are positive and bounded away from $0$. Further, recalling (\ref{eq:omega_example}), we integrate (\ref{eq:wave_r_u_rough}) towards $r = 0$ and change variables, finding
\begin{align*}
    - r \partial_v r 
    &\approx \text{ data } - \int \frac{Q^2 \Omega^2}{4 r^2} \, du, \\[0.5em]
    &\approx \text{ data } - \int Q^2 \cdot C(u) \cdot r^{\Psi^2 - 3} \, du \\[0.5em]
    &\approx \text{ data } - \int \frac{Q^2 \cdot C(u)}{ - r \partial_u r} \cdot r^{\Psi^2 - 2} \, dr.
\end{align*}

Now, the expression preceding $r^{\Psi^2 - 2}$ in the integrand is bounded and tends to a positive constant as $r \to 0$, so it suffices to consider the integrability of $r^{\Psi^2 - 2}$. The key observation is that while $r^{\Psi^2 - 2}$ is indeed integrable near $r = 0$ for $\Psi^2 > 1$, this fails to be true when instead $\Psi^2 \leq 1$. Furthermore, this computation would suggest that in the latter case, $- r \partial_v r$ would eventually become positive, \blue{which contradicts} the fact that we lie in a trapped region!

Therefore, when $Q \neq 0$, spacelike singularity formation is only expected when the quantity $\Psi$ obeys $\Psi^2 > 1$. By \eqref{eq:Psikasner}, this indeed corresponds to the subcritical regime of Kasner exponents as expected. In the present article, we impose $\Psi^2 > 1$ whenever $Q \neq 0$ via the additional assumption \eqref{eq:SKE}.  

\subsection{Ideas of the proof} \label{intro.proof}

We now briefly describe several of the key steps in the proof of Theorem~\ref{roughthm:asymp}. For simplicity, we focus on a single point $p \in \mathcal{S}$ and its causal past $J^-(p)$. The essential claims will be that for any $(u, v) \in J^-(p)$, there exists some $\alpha > 0$ such that:
\begin{enumerate}[(I)]
    \item \label{roughproofI}
        There exist positive constants $C_u, C_v > 0$ such that in $J^-(p)$, one has
        \begin{equation*}
            - r \partial_u r = C_u + O(r^{\alpha}), \qquad - r \partial_v r = C_v + O(r^{\alpha}).
        \end{equation*}
    \item \label{roughproofII}
        There exist constants $\Psi_{\infty}, \Xi_{\infty} \in \R$ such that in $J^-(p)$, one has
        \begin{equation*}
            r^2 \partial_u \phi = C_u \cdot \Psi_{\infty} + O(r^\alpha), \qquad 
            r^2 \partial_v \phi = C_v \cdot \Psi_{\infty} + O(r^\alpha),
        \end{equation*}
        \begin{equation*}
            \phi = \Psi_{\infty} \log r^{-1} + \Xi_{\infty} + O(r^{\alpha}).
        \end{equation*}
    \item \label{roughproofIII}
        There exists some $C_{\Omega} > 0$ such that in $J^-(p)$, one has
        \begin{equation*}
            \Omega^2 = C_{\Omega} \cdot r^{\Psi_{\infty}^2 - 1} + O(r^{\Psi_{\infty}^2 - 1 + \alpha}).
        \end{equation*}
\end{enumerate}

In the full proof, each of these must be shown simultaneously, but for the purpose of illustrating the ideas we simply sketch a proof of each of (\ref{roughproofI})--(\ref{roughproofIII}) assuming the others. 

We start with (\ref{roughproofIII}), which is proven in Section~\ref{bklasymp.geometry}, particularly Proposition~\ref{prop:bklasymp_lapse}. The idea is to rewrite the Raychaudhuri equation (see already \eqref{eq:raych_v}) as:
\begin{equation*}
    \partial_v \log \left( \frac{\Omega^2}{- \partial_v r} \right) = \frac{\partial_v r}{r} \cdot \left( \frac{r^2 \partial_v \phi}{- r \partial_v r} \right)^2.
\end{equation*}
Assuming (\ref{roughproofI}) and (\ref{roughproofII}), the rightmost expression can be written as $\Psi_{\infty}^2 + O(r^{\alpha})$. Therefore, subtracting $\Psi_{\infty}^2 \cdot \partial_v (\log r) = \Psi_{\infty}^2 \cdot \frac{\partial_v r}{r}$ from both sides, one finds
\begin{equation*}
    \partial_v \log \left(\frac{\Omega^2}{- \partial_v r} \cdot r^{- \Psi_{\infty}^2 }\right) = \partial_v r \cdot O(r^{\alpha-1}).
\end{equation*}
The right hand side is now integrable towards $r = 0$, hence $\frac{\Omega^2}{ - \partial_v r}\cdot r^{- \Psi_{\infty}^2}$ tends to some positive constant as $r \to 0$. Combining again with (\ref{roughproofI}), and performing similar integrations in the $u$-direction, we indeed find (\ref{roughproofIII}).

Next, we move onto (\ref{roughproofI}), see already \eqref{eq:rurll} in Section~\ref{bklasymp.proof} (note though that in the course of our actual proof we largely circumvent the need for (\ref{roughproofI}) by using instead the gauge-invariant derivatives \eqref{eq:llbar}). For this, we use the following wave-like evolution equation for $r^2(u, v)$:
\begin{equation*}
    \partial_u ( - r \partial_v r) = \partial_v ( - r \partial_u r) = \frac{\Omega^2}{4} \left( 1 - \frac{Q^2}{r^2} \right).
\end{equation*}
But by (\ref{roughproofIII}), we may bound $\Omega^2$ by a power of $r$, giving
\begin{equation*}
    \partial_u ( - r \partial_v r) = \begin{cases}
        O(r^{\Psi_{\infty}^2 - 1}) & \text{ if } Q \equiv 0,\\
        O(r^{\Psi_{\infty}^2 - 3}) & \text{ if } Q \not \equiv 0.
    \end{cases}
\end{equation*}
To integrate this, we change variables from $u$ to $r$. This is where we use \eqref{eq:QTS}, which tells us that $du = O(r) dr$. Furthermore, when $Q \neq 0$, we require $\Psi_{\infty}^2 > 1$ in order for the integral of $O(r^{\Psi^2 - 3})$ in $u$ not to diverge as $r \to 0$ -- justifying why we need the assumption \eqref{eq:SKE}. If either $Q \equiv 0$, or $Q \neq 0$ but \eqref{eq:QTS} and \eqref{eq:SKE} hold, then we get (\ref{roughproofI}) as claimed.

We remark here that the $O(r^{\alpha})$ error term in (\ref{roughproofI}) is not optimal, and that the true first-order correction will be $O(r^\gamma)$ with $\gamma$ depending on $\Psi_{\infty}^2$. However, for our goal of deriving the leading order asymptotics, our $O(r^{\alpha})$ errors are sufficient.

Finally, we derive (\ref{roughproofII}), giving asymptotics for the main dynamical quantity $\phi(u, v)$. This is the main objective of Section~\ref{scalarfield}, particularly Proposition~\ref{prop:scalarfield_asymptotics}. There, we use the wave equation in the following form:
\begin{equation*}
    \partial_u ( r^2 \partial_v \phi ) = - r \partial_v r \cdot \partial_u \phi + r \partial_u r \cdot \partial_v \phi = \tilde{X} \phi,
\end{equation*}
where $\tilde{X}$ is defined as $\tilde{X} = - r \partial_v r \partial_u + r \partial_u r \partial_v$. 
The procedure will be to first estimate $|\tilde{X} \phi|$ and then integrate this equation towards $r = 0$. 

The fact $\tilde{X} \phi$ appears is crucial. If the vector field\footnote{The vector field $\tilde{X}$ is also a rescaled version of the Kodama vector field \cite{Kodama}, most often used to find energy estimates in the exterior region, where $r$ is instead a spacelike variable and $\tilde{X}$ is timelike.} $\tilde{X}$ were to be replaced by, say, the null vector field $\partial_v$, then anticipating (\ref{roughproofI}) the modified right-hand side would obey $|\partial_v \phi| = O(r^{-2})$.
However, $r^{-2}$ is \underline{not} integrable in $u$ towards $r = 0$, recalling that $du = O(r) dr$ but one still sees the logarithmic divergence $\int r^{-2} du = \int O(r^{-1}) dr = O(\log r^{-1})$. Fortunately, $\tilde{X}$ will be a ``better'' derivative than $\partial_v$. 

The reason for this is that since $\tilde{X} r = 0$, $\tilde{X}$ is a ``spatial'' derivative tangent to the spacelike hypersurfaces of constant $r$. In line with the \textit{asymptotically velocity dominated} expectation in relativistic cosmology that spatial derivatives better than timelike derivatives near $r = 0$, it is unsurprising that $\tilde{X} \phi$ is better behaved than, say, $\partial_v \phi$ or $\partial_u \phi$.

Indeed, we shall obtain $|\tilde{X} \phi| \lesssim r^{-2 + \alpha}$, the proof of which is the main focus of Section~\ref{scalarfield.xphi}, particularly Proposition~\ref{prop:scalarfield_xphi}. Using this, we now integrate the above equation and find that $r^2 \partial_v \phi = C_v \cdot \Psi_{\infty} + O(r^{\alpha})$, for some $\Psi_{\infty} \in \R$. The remaining claims of (\ref{roughproofII}) are then straightforward -- noting that we use $|\tilde{X} \phi| \lesssim r^{-2 + \alpha}$ again to show that the $\Psi_{\infty}$ arising in the leading order terms of $r^2\partial_v \phi$ and $r^2 \partial_u \phi$ coincide.

This concludes our sketch of the proof of Theorem~\ref{roughthm:asymp}. In the remainder of the introduction, we discuss potential extensions of our Theorem~\ref{roughthm:asymp}, and consider further its place in the existing mathematical literature.

\subsection{Stability and instability of \texorpdfstring{$\mathcal{S}$}{S}} \label{intro.stab}

Theorem~\ref{roughthm:asymp} provides leading order asymptotics for all geometric and matter field quantities near a spacelike singularity $\mathcal{S} = \{ r = 0 \}$ \underline{provided we already know it exists}. %\footnote{In the $Q = 0$ case, the existence was known already by the works \cite{Christodoulou_formation, Christodoulou_BV, Christodoulou_wcc}}.
Returning to our initial discussion regarding the generality of such singularities, it is prudent to ask to what extent our results model singularities arising from a general class of initial data.

For instance, one may consider stability properties of the strongly singular spacetimes described in Theorem~\ref{roughthm:asymp}. Even if we only consider perturbations in spherical symmetry, this is a subtle problem, since the whereabouts of the eventual singular boundary is only determined a posteriori, and one must track the location of the new singularity $\hat{\mathcal{S}}$ arising from perturbed data, in any choice of coordinate gauge.

In particular, to make quantitative comparisons between the perturbed spacetime and the background spacetime, we require a further gauge transformation $(u, v) \mapsto (\hat{u}(u), \hat{v}(v))$ such that the position of $\hat{\mathcal{S}}$ in the $(\hat{u}, \hat{v})$-plane coincides with that of $\mathcal{S}$ in the $(u, v)$-plane. We state the following theorem, to be proved in the follow-up article \cite{me_kasner_stab}.

\begin{customthm}{III}[Stability -- rough version] \label{roughthm:stab}
    Let $(r, \Omega^2, \phi, Q)$ be a spherically symmetric and strongly singular solution to the Einstein-Maxwell-scalar field system \eqref{eq:einstein}, \eqref{eq:scalar_wave}, \eqref{eq:maxwell}, as described in Theorem~\ref{roughthm:asymp}, with characteristic initial data $(r_i, \Omega^2_i, \phi_i, Q_i)$ given on $C_0 \cup \underline{C}_0$. The domain of existence $\mathcal{D}$ and spacelike singularity $\mathcal{S}$ are given by the following subsets in the $(u, v)$-plane:
    \begin{gather*}
        \mathcal{D} = \{ (u, v) \in [u_0, u_1] \times [v_0, v_1] \subset \R^2 : v < v_*(u) \text{ when } v_*^{-1}(v_1) \leq u \leq u_1\}, \\[0.5em]
        \mathcal{S} = \{ (u, v_*(u)): v_*^{-1}(v_1) \leq u \leq u_1 \}.
    \end{gather*}
    By Theorem~\ref{roughthm:asymp}, $v_*(u)$ is a $C^{1, \beta}$ curve in the $(u, v)$-plane, see Section~\ref{setup.data} and Figure~\ref{fig:char_ivp1}.

    Consider perturbed characteristic initial data on $C_0 \cup \underline{C}_0$ given by $(\hat{r}_i, \hat{\Omega}^2_i, \hat{\phi}_i, \hat{Q}_i)$, with the perturbation being of size $\epsilon$ in a suitable norm, and suppose further that one of the following holds:
    \begin{enumerate}[(A)]
        \item $Q_i \equiv \hat{Q}_i \equiv 0$, or
        \item the subcriticality assumption (\ref{eq:SKE}) holds\footnote{By a Cauchy stability argument, it is only necessary for \eqref{eq:SKE} to hold near $\mathcal{S}$, which translates to $\inf_{p \in \mathcal{S}} \Psi_{\infty}^2(p) > 1 + \alpha$.} for the background solution $(r, \Omega^2, \phi, Q)$.
    \end{enumerate}
    Then for $\epsilon$ sufficiently small, the maximal future development $(\hat{r}, \hat{\Omega}^2, \hat{\phi}, \hat{Q})$ of the perturbed data, defined in a new domain of development $\hat{\mathcal{D}}$, is also strongly singular with spacelike singularity $\hat{\mathcal{S}}$ and satisfies the assumptions of Theorem~\ref{roughthm:asymp}. 

    Furthermore, after a gauge transformation $u \mapsto \hat{u}(u), v \mapsto \hat{v}(v)$, with $\| \hat{u}(u) - u \|_{C^{1, \beta}}, \| \hat{v}(v) - v \|_{C^{1, \beta}} \lesssim \epsilon$, the locations of $\hat{\mathcal{D}}$ and $\hat{\mathcal{S}}$ in the $(\hat{u}, \hat{v})$-plane coincide with that of $\mathcal{D}$ and $\mathcal{S}$ in the $(u, v)$-plane. 
    Abusing notation somewhat, for $(u, v) \in \mathcal{D}$ as above we define $(f - \hat{f})\, (u, v)$ to be $f(u, v) - \hat{f}(\hat{u}^{-1}(u), \hat{v}^{-1}(v))$. If $\hat{\Psi}$, $\hat{\Xi}$ and $\hat{\mathfrak{M}}$ are the quantities arising in (\ref{roughI})--(\ref{roughIII}) of Theorem~\ref{roughthm:asymp} corresponding to the perturbed spacetime, one finally has the quantitative stability estimate
    \begin{equation*}
        \| r^2 - \hat{r}^2 \|_{C^{1, \beta}} + \| \Psi - \hat{\Psi} \|_{C^0} 
        + \| \Xi - \hat{\Xi} \|_{C^0} + \| \mathfrak{M} - \hat{\mathfrak{M}} \|_{C^0} \lesssim \epsilon.
    \end{equation*}
\end{customthm}

This stability theorem is consistent with the expectations of BKL, since either of (A) or (B) suggests that the right hand side of \eqref{eq:adm2} is suitably small, see again Sections~\ref{intro.bkl} and \ref{intro.em}.
It is more interesting to consider perturbations that violate both (A) and (B). Concretely, this means taking a strongly singular solution with $Q \equiv 0$ and having $\Psi^2 < 1$ somewhere near $\mathcal{S}$, but then perturbing with $\hat{Q} \not\equiv 0$.

The subtle nature of this problem is illustrated by the following scenario (see Figure~\ref{fig:schwarzschild_reissnernordstrom}): take the unperturbed spacetime to be a strongly singular portion $\mathcal{D}$ of the Schwarzschild interior, then choose the perturbation to be such that $Q \equiv \epsilon \neq 0$ but $\phi$ still vanishes. This perturbed spacetime is therefore a portion of the Reissner-Nordstr\"om interior (see Chapter 5.5 of \cite{hawking_ellis_1973} for the explicit Reissner-Nordstr\"om metric). However, the Reissner-Nordstr\"om interior contains no singularities; it instead possesses a bifurcate null Cauchy horizon. Hence the perturbed maximal development $\hat{\mathcal{D}}$ arising from compact data contains no singularities, and is smoothly extendible in all directions, no matter the smallness of $\epsilon$. 

\begin{figure}[ht]
    \centering
    \begin{minipage}{.4\textwidth}
        \scalebox{0.7}{
    \begin{tikzpicture}[scale=0.7]
        \node (s) at (0, -6) [circle, draw, inner sep=0.5mm, fill=black] {};
        \node (il) at (-6, 0) [circle, draw, inner sep=0.5mm] {};
        \node (ir) at (+6, 0) [circle, draw, inner sep=0.5mm] {};
        \node (e) at (0, 6) {};
        \node (e+) at (0, 7) {};
        \node (r0) at (0, -4) [circle, draw, inner sep=0.5mm, fill=black] {};
        \node (cl) at (-2.5, -1.5) [circle, draw, inner sep=0.2mm, fill=black] {};
        \node (cr) at (+2.5, -1.5) [circle, draw, inner sep=0.2mm, fill=black] {};

        \path[fill=lightgray, opacity=0.5] (s) -- (-6, 0)
            .. controls (-2, 1) and (2, -1) .. (+6, 0) -- (s);
        \path[fill=lightgray] (r0) -- (-2.5, -1.5) -- (-0.9, 0.1)
            -- (1.15, -0.15) -- (2.5, -1.5) -- (r0);

        \node at (r0) [below] {\scriptsize $(u_0, v_0)$};
        \node at (cl) [above left] {\scriptsize $(u_1, v_0)$};
        \node at (cr) [above right] {\scriptsize $(u_0, v_1)$};
        \node at (0, -1.7) {\small $\mathcal{D}$};

        \draw [thick] (s) -- (il) node [midway, below left] {$\mathcal{H}^+$};
        \draw [thick] (s) -- (ir) node [midway, below right] {$\mathcal{H}^+$};
        \draw (r0) -- (cl) node [midway, below left=-1mm] {$C_0$};
        \draw (r0) -- (cr) node [midway, below right=-1mm] {$\underline{C}_0$};
        \draw [dashed] (il) -- (-7, -1) node [midway, above left] {$\mathcal{I}^+$};
        \draw [dashed] (ir) -- (7, -1) node [midway, above right] {$\mathcal{I}^+$};
        \draw [dashed] (il) .. controls (-2, 1) and (2, -1) .. (ir)
            node [midway, above=0.5mm] {$\mathcal{S} = \{ r = 0 \}$};
    \end{tikzpicture}}
    \end{minipage}
    \hspace{40pt}%
    \begin{minipage}{.4\textwidth}
        \scalebox{0.7}{
    \begin{tikzpicture}[scale=0.7]
        \node (s) at (0, -6) [circle, draw, inner sep=0.5mm, fill=black] {};
        \node (il) at (-6, 0) [circle, draw, inner sep=0.5mm] {};
        \node (ir) at (+6, 0) [circle, draw, inner sep=0.5mm] {};
        \node (e) at (0, 6) {};
        \node (e+) at (0, 7) {};
        \node (r0) at (0, -4) [circle, draw, inner sep=0.5mm, fill=black] {};
        \node (cl) at (-2.5, -1.5) [circle, draw, inner sep=0.2mm, fill=black] {};
        \node (cr) at (+2.5, -1.5) [circle, draw, inner sep=0.2mm, fill=black] {};

        \path[fill=lightgray, opacity=0.5] (s) -- (-6, 0)
            -- (0, 6) -- (6, 0) -- (s);
        \path[fill=lightgray] (r0) -- (-2.5, -1.5) -- (0, 1)  -- (2.5, -1.5) -- (r0);

        \node at (r0) [below] {\scriptsize $(u_0, v_0)$};
        \node at (cl) [above left] {\scriptsize $(u_1, v_0)$};
        \node at (cr) [above right] {\scriptsize $(u_0, v_1)$};
        \node at (0, 1) [above] {\scriptsize $(u_1, v_1)$};
        \node at (0, -1.7) {\small $\hat{\mathcal{D}}$};

        \draw [thick] (s) -- (il) node [midway, below left] {$\mathcal{H}^+$};
        \draw [thick] (s) -- (ir) node [midway, below right] {$\mathcal{H}^+$};
        \draw [dashed] (e) -- (il) node [midway, above left] {$\mathcal{CH}^+$};
        \draw [dashed] (e) -- (ir) node [midway, above right] {$\mathcal{CH}^+$};
        \draw (r0) -- (cl) node [midway, below left=-1mm] {$C_0$};
        \draw (r0) -- (cr) node [midway, below right=-1mm] {$\underline{C}_0$};
        \draw [dashed] (il) -- (-7, -1) node [midway, above left] {$\mathcal{I}^+$};
        \draw [dashed] (ir) -- (7, -1) node [midway, above right] {$\mathcal{I}^+$};
\end{tikzpicture}}
\end{minipage}
\captionsetup{justification = centering}
\caption{Penrose diagrams representing the Schwarzschild interior (left) and the Reissner-Nordstr\"om interior (right). The darker shaded region of the Schwarzschild interior represents a strongly singular spacetime in the sense of Theorem~\ref{roughthm:asymp}, whose singular boundary $\mathcal{S}$ is destroyed upon perturbations adding a non-trivial charge $Q \neq 0$. This effect is due to the presence of a (global) Cauchy horizon $\mathcal{CH}^+$ in the Reissner-Nordstr\"om interior.}
\label{fig:schwarzschild_reissnernordstrom}
\end{figure}
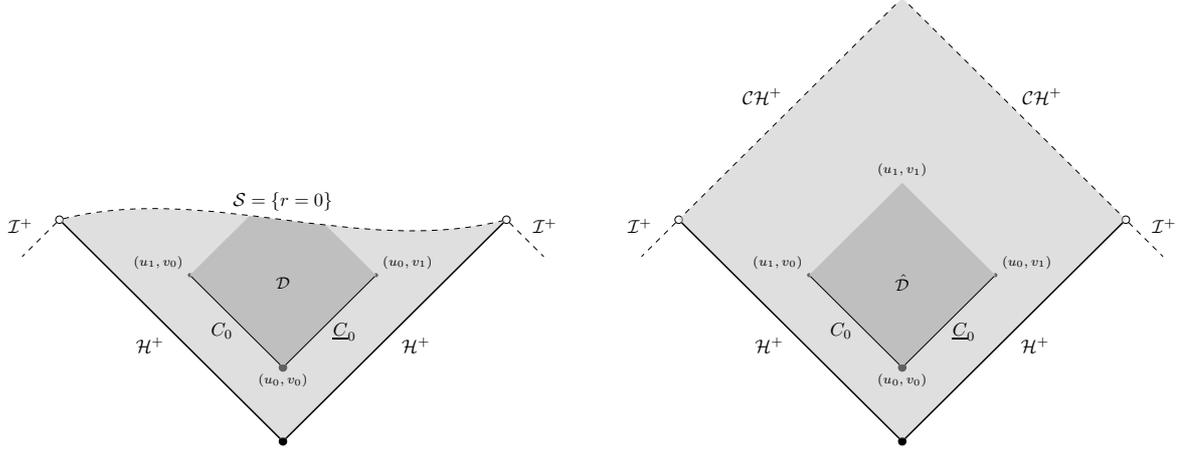

\blue{Note that, since $\phi \equiv 0$ in the Reissner-Nordstr\"om spacetime, the condition \eqref{eq:QTS} does not hold anywhere.} This example illustrates \blue{not only that verifying (A) and (B) in the $Q \neq 0$ setting is extremely subtle, but also} that \blue{\emph{charged}} perturbations of \blue{$Q = 0$ spacetimes} which violate (A) and (B) can, in fact, destroy the very presence of the spacelike singularity, both locally and globally. This can be viewed as one manifestation of the instability discussed in Sections~\ref{intro.bkl} and \ref{intro.em}.

In fact, using the methods of this paper, it is possible to qualitatively characterize the instability in our context. We argue that a charged perturbation of a strongly singular spacetime violating (\ref{eq:SKE}) will either fail to be singular, or contain a Kasner bounce such that the new singularity instead obeys the subcriticality assumption \eqref{eq:SKE}, so long as the double-null gauge remains reasonable in the sense that \eqref{eq:QTS} still holds. The proof of the following corollary will be deferred to \cite{me_kasner_stab}.

\begin{customcorollary}{IV}[Instability] \label{roughcor:instab}
    Let $(r, \Omega^2, \phi, Q)$ be a spherically symmetric and strongly singular solution to the spherically symmetric Einstein-Maxwell scalar field equations \eqref{eq:einstein}, \eqref{eq:scalar_wave}, \eqref{eq:maxwell}, such that $Q \equiv 0$. Suppose furthermore that (\ref{eq:SKE}) is violated near the singular boundary $\mathcal{S}$ -- in fact impose the stronger assumption that $\sup_{p \in \mathcal{S}} \Psi_{\infty}^2(p) < 1$.

    Given that the above spacetime arises from characteristic initial data $(r_i, \Omega_i^2, \phi_i, Q_i = 0)$ on $C_0 \cup \underline{C}_0$, consider now perturbed initial data $(\hat{r}_i, \hat{\Omega}^2_i, \hat{\phi}_i, \hat{Q}_i)$ with $\hat{Q}_i \neq 0$. Upon solving (\ref{eq:einstein}), (\ref{eq:scalar_wave}), (\ref{eq:maxwell}), the maximal future development of the perturbed data, given by functions $(\hat{r}, \hat{\Omega}^2, \hat{\phi}, \hat{Q})$ defined on a new domain of existence $\hat{\mathcal{D}}$, is such that one of the following holds:

    \begin{enumerate}[1.]
        \item \label{instab1} (Disappearance of singularity.)\, 
            The domain of existence is the entire causal rectangle
            $\hat{\mathcal{D}} = [u_0, u_1] \times [v_0, v_1],$
            so that $\inf_{\hat{\mathcal{D}}} r(u, v) > 0$ and each of $(\hat{r}, \hat{\Omega}^2, \hat{\phi}, \hat{Q})$ are bounded and regular in $\hat{\mathcal{D}}$. The singular boundary $\mathcal{S}$ disappears.
        \item \label{instab3} (Gauge degeneration.)\,
            The perturbed spacetime remains singular in the sense that $\inf_{\hat{\mathcal{D}}} r(u, v) = 0$, but the quantitative trapped sphere condition (\ref{eq:QTS}) now fails to hold, i.e.\ 
            $$\inf_{(u, v) \in \hat{\mathcal{D}}} \min \{ - r \partial_u r (u, v), - r \partial_v r (u, v) \} = 0.$$
        \item \label{instab2} (Bounce to subcritical regime.)\,
            The perturbed spacetime remains strongly singular and contains a new $\hat{r} = 0$ singular boundary $\hat{\mathcal{S}}$. The quantitative trapped sphere condition (\ref{eq:QTS}) still holds, however the new function $\hat{\Psi}$ (which is defined as in \eqref{eq:Psi} but is no longer necessarily continuous up to $\mathcal{S}$) departs quantitatively from that of the unperturbed spacetime, in the sense that
            $$\adjustlimits\liminf_{r_* \to 0} \sup_{r(p) \leq r_*} \hat{\Psi}^2(p) \geq 1.$$
            Furthermore, if $\hat{\Psi}$ can be shown to be continuously extendible to the boundary, and $\hat{\Psi}_{\infty}$ is its restriction to $\mathcal{S}$, then in fact $\inf_{p \in \mathcal{S}} \hat{\Psi}_{\infty}(p) > 1$. In particular, in this case (\ref{eq:SKE}) holds in the perturbed spacetime, once one restricts to a small neighborhood of the new singular boundary $\hat{\mathcal{S}}$.
    \end{enumerate}
    In each case of these three cases, some quantity will differ in a large sense from its value in the unperturbed spacetime, no matter how small $\epsilon \neq 0$ may be.
\end{customcorollary}

%We defer the proof of Corollary~\ref{roughcor:instab} to \cite{}. 
We now discuss circumstances in which each of the possibilities 1--3 may hold. An example of possibility~\ref{instab1} is the Schwarzschild to Reissner-Nordstr\"om perturbation described earlier. However, one could argue that this scenario is fine-tuned in the sense that $\phi$ is chosen to vanish identically in both the background spacetime and the perturbed spacetime, and therefore one cannot access the subcritical regime discussed in Section~\ref{intro.bkl}. If we chose a background spacetime other than exact Schwarzschild, or chose instead a perturbation with nontrivial $\phi$, it may be reasonable to expect that $\phi$ plays a significant role in the instability mechanism, and hence possibility~\ref{instab1} is conjecturally non-generic.

We turn next to possibility~\ref{instab3} in Corollary~\ref{roughcor:instab}. This is somewhat undesirable, as we cannot use our methods to describe the Kasner exponents at any point $p \in \hat{\mathcal{S}}$ at which (\ref{eq:QTS}) fails. Note for instance that the failure of (\ref{eq:QTS}) allows for portions of the $\hat{r} = 0$ singular boundary which are \textit{null} rather than spacelike. Unfortunately, it is difficult to rule out the failure of \eqref{eq:QTS}, and one expects to be able to find constructions where possibility~\ref{instab3} occurs. Nevertheless, one could hope that such instances are again non-generic, particularly in the context of Corollary~\ref{roughcor:instab} where the size $\epsilon$ of the perturbation is sufficiently small. %We hope to return to this problem in future work.

This leaves possibility~\ref{instab2}, which suggests (at least in the case where $\hat{\Psi}$ remains bounded and continuously extendible to $\mathcal{S}$) the presence of Kasner bounces as described in Section~\ref{intro.bkl}. In the context of the Einstein-Maxwell scalar field system, this is perhaps the most interesting of the possibilities in Corollary~\ref{roughcor:instab} (and one could conjecture that this is the generic outcome when $\epsilon$ is small). Were possibility \ref{instab2} to occur, it must be due to the nonlinearities present in the Einstein equations, and is expected to be difficult to treat in any generality. 

We do mention, however, that the author and Van de Moortel \cite{MeVdM} have provided examples exhibiting a Kasner bounce. \cite{MeVdM} constructs spherically symmetric, spatially homogeneous solutions to the Einstein-Maxwell-\textit{charged} scalar field system, referred to as the interior of hairy black holes\footnote{Note, however, that there may not be any corresponding ``hairy black hole exterior'', and the term ``hairy'' is simply used to mean that there is no decay for the scalar field along the event horizon.}.
These spacetimes contain an $r = 0$ singular boundary $\mathcal{S}$, and the near-singularity region of the spacetimes features, in some cases, a ``Kasner inversion'' where a quantity morally equivalent to $\Psi$ ``inverts'' from $\Psi \approx \alpha \in (0, 1)$ (in a non-subcritical Kasner-like regime $\mathcal{K}_1$), to $\Psi \approx \alpha^{-1} > 1$ in the final subcritical Kasner-like regime $\mathcal{K}_2$ containing the singularity $\mathcal{S}$. This is depicted in Figure~\ref{fig:mevdm}.

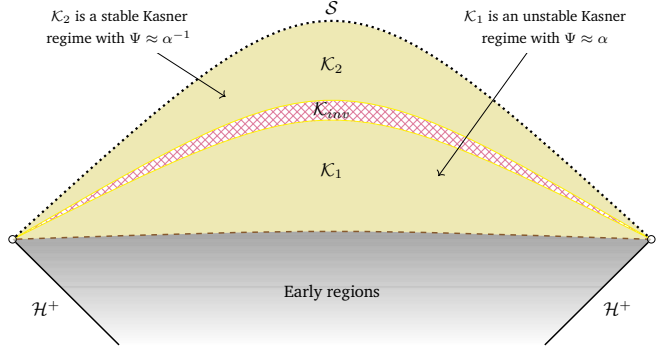
\begin{figure}[ht]
    \centering
    \scalebox{0.7}{
    \begin{tikzpicture}
        \path[shade, top color=gray, opacity=0.6] (-6, 0) .. controls (0, 0.2) .. (6, 0)
            -- (4, -2) -- (-4, -2) -- (-6, 0);
        \path[fill=yellow!80!black, opacity=0.4] (6, 0)
            .. controls (0, 0.2) .. (-6, 0)
            .. controls (0, 3) .. (6, 0);
        \path[pattern=crosshatch, pattern color=purple!50!white] (6, 0)
            .. controls (0, 3) .. (-6, 0)
            .. controls (0, 3.5) .. (6, 0);
        \path[fill=yellow!80!black, opacity=0.4] (6, 0)
            .. controls (0, 5.5) .. (-6, 0)
            .. controls (0, 3.5) .. (6, 0);

        \node (r) at (6, 0) [circle, draw, inner sep=0.5mm] {};
        \node (l) at (-6, 0) [circle, draw, inner sep=0.5mm] {};

        \draw [thick] (4, -2) -- (r)
            node [midway, below right] {$\mathcal{H}^+$};
        \draw [thick] (-4, -2) -- (l)
            node [midway, below left] {$\mathcal{H}^+$};
        \draw [brown!70!black, dashed, thick] (l) ..controls (0, +0.2) .. (r)
            node [midway, above=-0.5mm, black] {}; %{\tiny $r \sim \epsilon^2$};
        \draw [yellow] (l) ..controls (0, 3) .. (r)
            node [midway, below, black] {};% {\tiny $r = r_{in}$};
        \draw [yellow] (l) ..controls (0, 3.5) .. (r)
            node [midway, above=-0.5mm, black] {};% {\tiny $r = r_{out}$};
        \draw [very thick, dotted] (l) .. controls (0, 5.5) .. (r)
            node [midway, above] {$\mathcal{S}$}; % {\tiny $r=0$};

        \node at (0, -0.2) {}; %{$\mathcal{PK}$};
        \node at (0, +1.3) {$\mathcal{K}_1$};
        \node at (0, +2.45) {$\mathcal{K}_{inv}$};
        \node at (0, +3.3) {$\mathcal{K}_2$};
        \node at (0, -1) {\small Early regions};
        \node at (0, -2.5) {};

        \draw [->] (4, 4) -- (2, 1.2);
        \node [align=center, fill=white] at (4, 4) {\footnotesize $\mathcal{K}_1$ is an unstable Kasner \\ \footnotesize regime with $\Psi \approx \alpha$};
        \draw [->] (-4, 4) -- (-2, 2.7);
        \node [align=center, fill=white] at (-4, 4) {\footnotesize $\mathcal{K}_2$ is a stable Kasner \\ \footnotesize regime with $\Psi \approx \alpha^{-1}$};
\end{tikzpicture}}
\captionsetup{justification = centering}
\caption{The near-singularity region of a spacetime constructed in \cite{MeVdM}. The region contains two Kasner-like regimes: $\mathcal{K}_1$ fails to satisfy an inequality of the form \eqref{eq:SKE} and is thus unstable. Before reaching the singularity $\mathcal{S}$, there is an ``inversion'' to the subcritical Kasner-like region $\mathcal{K}_2$. The (nonlinear) inversion procedure occurs in the transition region $\mathcal{K}_{inv}$.}
\label{fig:mevdm}
\end{figure}

The spatial homogeneity assumed in \cite{MeVdM} reduces the dynamics to a finite-dimensional ODE system. The inversion mechanism is then described via deriving a one-dimensional dynamical system for $\Psi$:
$$\frac{d\Psi}{dR} \approx - \Psi ( \Psi - \alpha ) (\Psi - \alpha^{-1}).$$
Here $R \coloneqq \log r^{-1}$ is such that $R \to + \infty$ as $r \to 0$, and the inversion from $\Psi \approx \alpha$ to $\Psi \approx \alpha^{-1}$ is exactly the progression of $\Psi$ away from the unstable fixed point $\alpha$ and towards the stable fixed point $\alpha^{-1}$. We note, finally, that via the correspondence (\ref{eq:Psikasner}) between $\Psi$ and the Kasner exponents $p_1, p_2, p_3$, the inversion $\alpha \mapsto \alpha^{-1}$ corresponds exactly to the Kasner map (\ref{eq:bounce}).

Though the spacetimes of \cite{MeVdM} are exactly spatially homogeneous, the Kasner bounces observed here can be considered in a wider context, due to the following observation: consider small perturbations of the inversion spacetime in Figure~\ref{fig:mevdm} in the region $\mathcal{K}_1$; then using a Cauchy stability argument which allows us to apply the stability result of Theorem~\ref{roughthm:stab} in the subcritical region $\mathcal{K}_2$, this shows that we can find examples of spacetimes possessing Kasner bounces outside of exact spatial homogeneity.

However more precise and quantitative control of the instability possibility~\ref{instab2}, particularly in the case that neither the background nor the perturbation are spatially homogeneous
-- even if one stays in spherical symmetry -- remains interesting and open. One major difficulty here is the presence of spatial derivatives, forbidding one from staying in the realm of ODEs. \blue{We also refer the reader to a recent preprint \cite{MeSurfaceSymPaper} of the author, where we find a large class of spacetimes solving the Einstein--Maxwell--scalar field system -- albeit written in a different gauge -- exhibiting such bounces outside spatial homogeneity; furthermore, a subclass of these can be checked to dynamically verify the conditions \eqref{eq:QTS} and \eqref{eq:SKE} upon transforming back to double null gauge. See also the related \cite{MeGowdyPaper} regarding the Gowdy symmetric Einstein vacuum equations.}

So far, we have discussed stability and instability within the class of spherically symmetric (but possibly charged) perturbations. Another direction would be to consider instead perturbations outside of spherical symmetry, where one may expect stability and instability results corresponding to Theorem~\ref{roughthm:stab} and Corollary~\ref{roughcor:instab} respectively, in light of Sections~\ref{intro.bkl} and \ref{intro.em}. This will be elaborated upon in Section~\ref{intro.related}.

\subsection{Related works and future directions} \label{intro.related}

%In this section, we briefly discuss how the results of this paper relate to the existing mathematical literature, as well as some potential avenues to explore in the future.

\subsubsection{Applications to spherically symmetric collapse}

Perhaps the most familiar setting under which the system (\ref{eq:einstein}), (\ref{eq:scalar_wave}), (\ref{eq:maxwell}) has been studied is that of one- or two-ended spherically gravitational collapse. We discuss first the $F_{\mu\nu} \equiv 0$ case, where Christodoulou \cite{Christodoulou_formation, Christodoulou_BV, Christodoulou_wcc} has shown that the generic Penrose diagram representing collapse of one-ended asymptotically flat data to a black hole is that of Figure~\ref{fig:christodoulou_collapse}.

Theorem~\ref{roughthm:asymp} shows that for such spacetimes, we can associate to any trapped $2$-sphere $(u, v) \in \mathcal{T}: \{ (u, v): \partial_u r (u, v) < 0, \partial_v r (u, v) < 0 \}$, the functions $\Psi$, $\Xi$ and $\mathfrak{M}$, which each extend in a continuous manner to functions $\Psi_{\infty}$, $\Xi_{\infty}$ and $\mathfrak{M}_{\infty}$ defined along $\mathcal{S}$. We focus on $\Psi_{\infty}$, which determines the Kasner exponents associated to any $p \in \mathcal{S}$, via (\ref{eq:Psikasner}).

There are two regions of the Penrose diagram which are of particular interest -- the region near spacelike infinity $i^+$, and the region near the first singularity at the center $b_0$; each of these is depicted in Figure~\ref{fig:i+b0_zoom}.

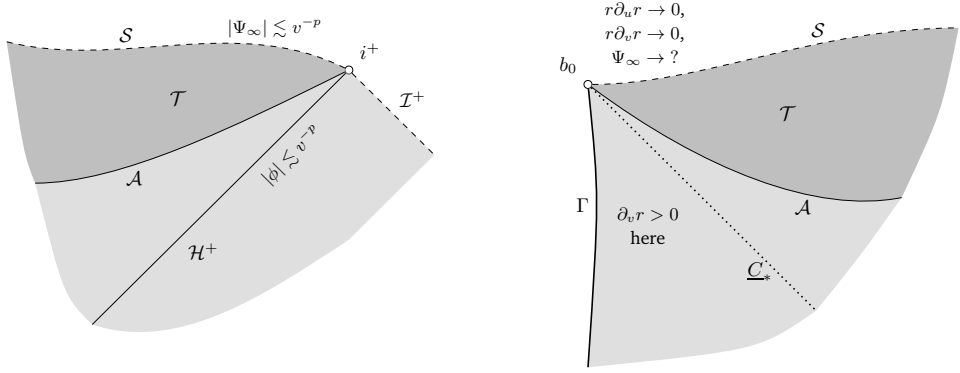
\begin{figure}[ht] 
    \centering
\begin{minipage}{.4\textwidth}
    \scalebox{0.75}{
    \begin{tikzpicture}
        \node (i+) at (6, 6) [circle, draw, inner sep=0.5mm] {};

        \path[fill=lightgray] (i+) .. controls (4, 7) and (2, 6) .. (0, 6.5)
            .. controls (0.3, 4.5) .. (0.5, 4) 
            .. controls (2, 4) and (4, 5) .. (i+);
        \path[fill=lightgray, opacity=0.5] (i+) .. controls (4, 5) and (2, 4) .. (0.5, 4)
            .. controls (1, 2) .. (1.5, 1.5) 
            .. controls (3, 1) and (4.5, 2) .. (6, 3) -- (7.5, 4.5) -- (i+);

        \node at (3, 5.5) {$\mathcal{T}$};
        \node [above right=0.2mm of i+] {$i^+$};
        \node [rotate=45] at (5, 4.5) {\small $|\phi| \lesssim v^{-p}$};
        \node at (4.7, 6.75) {\small $|\Psi_{\infty}| \lesssim v^{-p}$};
        
        \draw (i+) -- (1.5, 1.5) node [pos=0.65, below right] {$\mathcal{H}^+$};
        \draw [dashed] (i+) -- (7.5, 4.5) node [midway, above right] {$\mathcal{I}^+$};
        \draw [dashed] (i+) .. controls (4, 7) and (2, 6) .. (0, 6.5) node [pos=0.65, above] {$\mathcal{S}$};
        \draw (i+) .. controls (4, 5) and (2, 4) .. (0.5, 4) node [pos=0.65, below] {$\mathcal{A}$};
\end{tikzpicture}}
\end{minipage} \hspace{10pt}
\begin{minipage}{.4\textwidth}
    \scalebox{0.75}{
    \begin{tikzpicture}
        \node (b0) at (0, 5) [circle, draw, inner sep=0.5mm] {};

        \path[fill=lightgray] (b0) .. controls (2, 5) and (4, 6) .. (6.5, 6)
            .. controls (6.2, 4.5) .. (5.5, 3)
            .. controls (4, 2.7) and (2, 3.5) .. (b0);
        \path[fill=lightgray, opacity=0.5] (b0) .. controls (2, 3.5) and (4, 2.7) .. (5.5, 3)
            .. controls (4.8, 2) .. (4, 1)
            .. controls (3, 0.3) .. (0, 0)
            .. controls (0.2, 3) .. (b0);

        \node [above left=0.2mm of b0] {$b_0$};
        \node at (3.5, 4.5) {$\mathcal{T}$};
        \node [align=center, fill=white] at (1, 5.9) {\small $r \partial_u r \to 0$, \\ \small $ r \partial_v r \to 0$, \\ \small $\Psi_{\infty} \to \;?$};
        \node [align=center] at (1.05, 2.5) {\small $\partial_v r > 0$ \\ \small here};

        \draw [dashed] (b0) .. controls (2, 5) and (4, 6) .. (6.5, 6) node [pos=0.65, above] {$\mathcal{S}$};
        \draw (b0) .. controls (2, 3.5) and (4, 2.7) .. (5.5, 3) node [pos=0.65, below] {$\mathcal{A}$};
        \draw [thick, dotted] (b0) -- (4, 1) node [pos=0.75, below] {$\underline{C}_*$};
        \draw [thick] (b0) .. controls (0.2, 3) .. (0, 0) node[left, midway] {$\Gamma$};
\end{tikzpicture}}
\end{minipage}
\captionsetup{justification = centering}
\caption{Zooming in on the portions of the singular boundary $\mathcal{S}$ near $i^+$ (left) and $b_0$ (right), indicating the relevance of Theorem~\ref{roughthm:asymp} in these regions}
\label{fig:i+b0_zoom}
\end{figure}

For the purpose of studying the near-$i^+$ region, \cite{AnGajic} uses Price's law decay of the scalar field $\phi$ along the event horizon $\mathcal{H}^+$, namely that for an outgoing Eddington-Finkelstein type coordinate $v$ and some $p > 1$,
\begin{equation} \label{eq:priceslaw}
    |\phi|_{\mathcal{H}^+}(v)| \lesssim v^{-p}.
\end{equation}
Decay of the form \eqref{eq:priceslaw} was first proven in the spherically symmetric setting in \cite{dr_priceslaw} -- in fact \cite{dr_priceslaw} also allows for $F_{\mu\nu} \neq 0$ so long as the black hole is sub-extremal in the limit. 

Returning to $F_{\mu\nu} = 0$, \cite{AnGajic} propagates the decay \eqref{eq:priceslaw} towards $\mathcal{S}$, showing that for the same coordinate $v$, one has
%\begin{equation*}
    $|\Psi_{\infty}(v)| \lesssim v^{-p}$
%\end{equation*}
for large $v$. Via Theorem~\ref{roughthm:asymp} it is thus reasonable to say that the singularity is ``almost-Schwarzschildian'' near $i^+$. Note, in particular, that (\ref{eq:SKE}) is necessarily violated in this region, and we expect instability should we turn on either electromagnetic or non-symmetric perturbations.

We turn now to the region near $b_0$. Though Theorem~\ref{roughthm:asymp} attributes a value for $\Psi_{\infty}$, and hence the Kasner exponents, for all $p \in \mathcal{S} \setminus \{ b_0 \}$, it is highly unclear what happens at the point $b_0$ itself. For $b_0$, one cannot apply the analysis presented in this article, since $b_0$ is not preceded by a trapped region -- in fact $J^-(b_0) \cap \mathcal{T} = \emptyset$. Hence new techniques are needed to understand, for instance, if $\Psi_{\infty}(p)$ has a limit as $p \to b_0$. This may be related to the subtle issue of naked singularities, see also \cite{Christodoulou_examples, Christodoulou_cc, Christodoulou_wcc}.

%There are, however, examples where the BKL-like behaviour of the point $b_0$ can be well understood, namely the FLRW spacetimes. See Section~\ref{examples.flrw} for details, including a construction exemplifying its significance even in the context of gravitational collapse.

Moving on, we introduce a non-trivial Maxwell field $F_{\mu\nu} \neq 0$. The Einstein-Maxwell-scalar field model considered in this article forces us to consider only two-ended asymptotically flat black hole spacetimes. For this model, \cite{dafermos03, dafermos05} and later \cite{LukOh1, LukOh2} show that the possible Penrose diagrams are as in Figure~\ref{fig:twoended}, and the black hole contains a non-empty Cauchy horizon $\mathcal{CH}^+$ across which the metric is $C^0$-extendible.
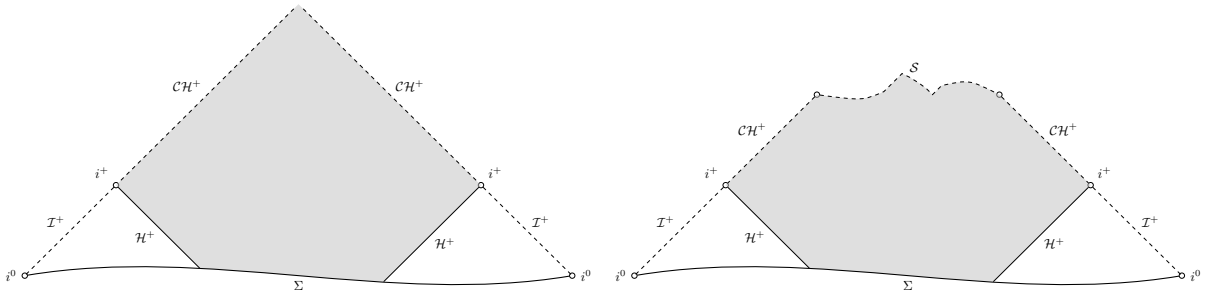
\begin{figure}[ht]
    \centering
    \begin{minipage}{.45\textwidth}
        \scalebox{0.5}{
    \begin{tikzpicture}[scale=0.8]
        %\node (s) at (0, -6) [circle, draw, inner sep=0.5mm, fill=black] {};
        \node (il) at (-6, 0) [circle, draw, inner sep=0.5mm] {};
        \node (ir) at (+6, 0) [circle, draw, inner sep=0.5mm] {};
        \node (e) at (0, 6) {};
        \node (i0l) at (-9, -3) [circle, draw, inner sep=0.5mm] {};
        \node (i0r) at (+9, -3) [circle, draw, inner sep=0.5mm] {};

        \node [above right=0.2mm of ir] {$i^+$}; 
        \node [above left=0.2mm of il] {$i^+$};
        \node [right=0.2mm of i0r] {$i^0$}; 
        \node [left=0.2mm of i0l] {$i^0$};
        \path[fill=lightgray, opacity=0.5] (0, 6) -- (-6, 0) -- (-3.25, -2.75)
            .. controls (0, -3) .. (2.8, -3.2) -- (6, 0) --  (0, 6);

        \draw (-3.25, -2.75) -- (il) node [midway, below left=-0.5mm] {$\mathcal{H}^+$};
        \draw (2.8, -3.2) -- (ir) node [midway, below right=-0.5mm] {$\mathcal{H}^+$};
        \draw [dashed] (e) -- (il) node [midway, above left] {$\mathcal{CH}^+$};
        \draw [dashed] (e) -- (ir) node [midway, above right] {$\mathcal{CH}^+$};
        \draw [dashed] (il) -- (-9, -3) node [midway, above left] {$\mathcal{I}^+$};
        \draw [dashed] (ir) -- (+9, -3) node [midway, above right] {$\mathcal{I}^+$};
        \draw [thick] (i0l) .. controls (-3, -2) and (3, -4)  .. (i0r)
            node [midway, below] {$\Sigma$};
    \end{tikzpicture}}
\end{minipage} \hspace{10pt}
\begin{minipage}{.45\textwidth}
    \scalebox{0.5}{
    \begin{tikzpicture}[scale=0.8]
        %\node (s) at (0, -6) [circle, draw, inner sep=0.5mm, fill=black] {};
        \node (il) at (-6, 0) [circle, draw, inner sep=0.5mm] {};
        \node (ir) at (+6, 0) [circle, draw, inner sep=0.5mm] {};
        \node (e) at (0, 6) {};
        \node (i0l) at (-9, -3) [circle, draw, inner sep=0.5mm] {};
        \node (i0r) at (+9, -3) [circle, draw, inner sep=0.5mm] {};
        \node (sl) at (-3, 3) [circle, draw, inner sep=0.5mm] {};
        \node (sr) at (+3, 3) [circle, draw, inner sep=0.5mm] {};

        \node [above right=0.2mm of ir] {$i^+$}; 
        \node [above left=0.2mm of il] {$i^+$};
        \node [right=0.2mm of i0r] {$i^0$}; 
        \node [left=0.2mm of i0l] {$i^0$};
        \path[fill=lightgray, opacity=0.5] (-6, 0) -- (-3, 3)
            .. controls (-1.5,2.8) .. (-0.8, 3.1) -- (-0.2, 3.7)
            .. controls (0.2, 3.5) .. (0.6, 3.2)
            -- (0.8, 3.0) -- (1.1, 3.3)
            .. controls (2, 3.5) .. (+3, +3) -- (6, 0) -- (2.8, -3.2)
            .. controls (0, -3) .. (-3.25, -2.75) -- (-6, 0);

        \draw (-3.25, -2.75) -- (il) node [midway, below left=-0.5mm] {$\mathcal{H}^+$};
        \draw (2.8, -3.2) -- (ir) node [midway, below right=-0.5mm] {$\mathcal{H}^+$};
        \draw [dashed] (sl) -- (il) node [midway, above left] {$\mathcal{CH}^+$};
        \draw [dashed] (sr) -- (ir) node [midway, above right] {$\mathcal{CH}^+$};
        \draw [dashed] (il) -- (-9, -3) node [midway, above left] {$\mathcal{I}^+$};
        \draw [dashed] (ir) -- (+9, -3) node [midway, above right] {$\mathcal{I}^+$};
        \draw [dashed] (sl) .. controls (-1.5,2.8) .. (-0.8, 3.1) -- (-0.2, 3.7)
            .. controls (0.2, 3.5) .. (0.6, 3.2)
            node [midway, above=0.8mm] {$\mathcal{S}$}
            -- (0.8, 3.0) -- (1.1, 3.3)
            .. controls (2, 3.5) .. (sr);
        \draw [thick] (i0l) .. controls (-3, -2) and (3, -4)  .. (i0r)
            node [midway, below] {$\Sigma$};
    \end{tikzpicture}}
\end{minipage}
\captionsetup{justification=centering}
\caption{Penrose diagrams representing two-ended spherically symmetric gravitational collapse spacetime. The shaded regions correspond to the black hole interiors ($\mathcal{A}$ and $\mathcal{T}$ are no longer drawn), which always possess a non-empty Cauchy-horizon $\mathcal{CH}^+$ and in some cases an $r = 0$ spacelike singularity $\mathcal{S}$. If $\mathcal{S} \neq \emptyset$ as in the rightmost picture, it is unknown whether $\mathcal{S}$ could contain null pieces.}
\label{fig:twoended}
\end{figure}

Since the Cauchy horizon is null and $r|_{\mathcal{CH}^+} \neq 0$, Theorem~\ref{roughthm:asymp} does not apply there\footnote{Even so, $\mathcal{CH}^+$ is generically singular in a $C^2$ sense, see \cite{dafermos05, LukOh1}. This is historically significant in showing that singularities in General Relativity are not always spacelike, in contrast to the original predictions of BKL.}. 
In fact, there are spacetimes e.g.~perturbations of Reissner-Nordstr\"om \cite{dafermos03} for which the Penrose diagram is as in the left of Figure~\ref{fig:twoended}. Nevertheless, for large data an $r = 0$ singularity $\mathcal{S}$ could arise, as in the right of Figure~\ref{fig:twoended}.

In such cases, Theorem~\ref{roughthm:asymp} \textit{could} apply, though we must be careful as (\ref{eq:QTS}) may not necessarily hold. In particular it is not clear a priori whether $\mathcal{S}$ could contain null segments. It is worth mentioning, however, that one could refer to the charged examples of Section~\ref{examples.just} and apply a straightforward gluing argument\footnote{See \cite{ChristophRyan} for a more detailed discussion of gluing for Einstein's equation in spherical symmetry.} to guarantee that at least part of $\mathcal{S}$ obeys the assumptions of Theorem~\ref{roughthm:asymp}.% even though the description of the remainder of $\mathcal{S}$ remains unclear. 

One particularly worthwhile problem would be to consider the asymptotics at the future endpoint of the Cauchy horizon; from Section~\ref{intro.stab} it is expected that either we have positive Kasner exponents here, or the gauge otherwise breaks down. We hope to return to this problem in future work.

Finally, we mention the spherically symmetric Einstein-Maxwell-\textit{charged} scalar field model. This matter model allows for one-ended gravitational collapse, while retaining the possibility of a non-empty Cauchy horizon. Therefore all of the issues mentioned here could arise. We refer the reader to \cite{kommemi, Moortel18, MoortelC2, MoortelChristoph, MoortelBreakdown} for further discussions of the black hole interior in this model.

\subsubsection{Beyond spherical symmetry}

Outside of the spherically symmetric setting, there are many more works exhibiting examples of spacelike singularity formation (largely in the \textit{asymptotically velocity dominated} regime where spatial derivatives are dominated by timelike derivatives), as well as works studying the stability properties of such solutions. While by no means a complete list, we discuss several of these here:
\begin{itemize}
    \item \textbf{Gowdy symmetry: }
        In \cite{SCC_PolarizedGowdy}, the authors study polarized Gowdy\footnote{Gowdy symmetric spacetimes are such that there exist spacelike Killing vector fields $\mathbf{Y}^{\mu}$ and $\mathbf{Z}^{\mu}$, each with $\mathbb{S}^1$ orbits, such that their \textit{twist constants} $\epsilon_{\alpha \beta \gamma \delta} \mathbf{Y}^{\alpha} \mathbf{Z}^{\beta} \nabla^{\gamma} \mathbf{Y}^{\delta}$ and $\epsilon_{\alpha \beta \gamma \delta} \mathbf{Y}^{\alpha} \mathbf{Z}^{\beta} \nabla^{\gamma} \mathbf{Z}^{\delta}$ vanish, where $\epsilon_{\alpha \beta \gamma \delta}$ is the spacetime volume form. If $\mathbf{Y}^{\mu}$ and $\mathbf{Z}^{\mu}$ are moreover orthogonal, then the spacetime is polarized Gowdy.} solutions
        to the Einstein vacuum equations. They prove strong cosmic censorship in this setting, meaning that for some open and dense set of polarized Gowdy initial data, the maximal globally hyperbolic development of this data is inextendible, and with curvature blow-up in one of the directions.

        Ringstr\"om then generalized these results in \cite{SCC_T3Gowdy} to the $\mathbb{T}^3$-symmetric Gowdy setting, without polarization. In contrast to \cite{SCC_PolarizedGowdy}, singularities in unpolarized Gowdy spacetimes exhibit a phenomenon known as ``spikes'', where spatial derivatives become large. In \cite{SCC_T3Gowdy} it is shown that an open and dense set of initial data exhibits a Kasner-like singularity with curvature blow-up, except at a finite number of spikes. 

    \item \textbf{Fuchsian constructions: }
        \cite{AnderssonRendall} uses Fuchsian methods (introduced into the study of \eqref{eq:einstein} in \cite{KichenassamyRendall}) to construct asymptotically velocity dominated solutions to the Einstein-scalar field and Einstein-stiff fluid systems obeying the asymptotics of \eqref{eq:bkl}. These spacetimes are real analytic, and subcritical in the sense of (\ref{eq:bkl_smallness}). This was extended to more general regimes, including the Einstein-Maxwell-scalar field system as well as higher-dimensional vacuum, in \cite{DHRW}.

        Recently, \cite{FournodavlosLuk} constructed singular Kasner-like solutions to the Einstein-vacuum equations in $1+3$-dimensions based on the heuristics of \cite{kl63}. Being vacuum solutions, the asymptotic data prescribed at the singular boundary prescribed must obey the further functional constraint $\lambda^1_{23} \equiv 0$, see (\ref{eq:bkl_ricci}). %Nonetheless, it is expected that their methods would extend to the matter considered in the present article, including its subcritical regimes.

    \item \textbf{Stable big bang results: }
        It is important to determine whether solutions exhibiting (subcritical) Kasner-like behaviour are stable to small perturbations; in particular are properties such as curvature blow-up stable? Such stability results have been shown for \underline{explicit} background solutions such as FLRW and exact Kasner spacetimes, with the state-of-the-art result due to Fournodavlos, Rodnianski and Speck \cite{FournodavlosRodnianskiSpeck}, which proves the (past) stability of exact generalized Kasner spacetimes solving the Einstein-scalar field equations with (constant) Kasner exponents in the subcritical range. We mention also \cite{RodnianskiSpeck1, RodnianskiSpeck2, SpeckS3, FajmanUrban, BeyerOliynyk}.

    \item \textbf{Perturbations of Schwarzschild: }
        The Schwarzschild singularity, having Kasner exponents of $p_1 = - \frac{1}{3}$ and $p_2 = p_3 = \frac{2}{3}$, is expected to be unstable in the general setting. However, as Theorem~\ref{roughthm:stab} implies in the spherically symmetric (but electromagnetism-free) setting, there are certain families of perturbations under which the Schwarzschild singularity will be stable. We mention \cite{AlexakisFournodavlos}, where the authors show stability of the Schwarzschild singularity with respect to $U(1)$ polarized axi-symmetric\footnote{This leads to spacetimes possessing a hypersurface orthogonal and spacelike Killing vector field $\mathbf{K}^{\mu}$ with $\mathbb{S}^1$ orbits.} perturbations in vacuum. Another important contribution is \cite{FournodavlosBackwards}, which yields the construction of a large class of spacetimes converging to a singular Schwarzschildian $2$-sphere.

    \item \textbf{Spatially homogeneous spacetimes: } %Historically, the study of cosmology adheres to the \textit{cosmological principle}, stating that at large scales the universe is spatially homogeneous and isotropic. Relaxing the isotropy assumption, it is informative to study spatially homogeneous solutions to Einstein's equation, where the dynamics reduces to a system of finite-dimensional ODEs.%, partly because BKL \cite{bkl71} predicts this to model the near-singularity dynamics, once spatial derivatives are assumed to be negligible.
        It remains hugely important to study singularities that exhibit chaotic and oscillatory behaviour such as that predicted by \cite{bkl71}. To date, this has only been done in the exactly spatially homogeneous context, where the dynamics reduce to a system of finite-dimensional ODEs. Using this reduction, there are several results \cite{ringstrom_bianchi, Weaver_bianchi} giving a detailed description of the possible near-singularity states of suitable spatially homogeneous spacetimes, including spacetimes which exhibit infinitely many Kasner bounces. 
        (We mention also \cite{MoortelViolent, MeVdM}, where certain spherically symmetric and spatially homogeneous solutions\footnote{Such spacetimes are Kantowski-Sachs cosmologies, with spatial topology $\mathbb{S}^2 \times \mathbb{R}$ or $\mathbb{S}^2 \times \mathbb{S}^1$.} to the Einstein-Maxwell-scalar field are analysed, but the scalar field is either charged or massive. In both cases, the Kasner-like dynamics near the $r = 0$ singularity can be described in detail, including Kasner inversions in the case of \cite{MeVdM} -- see Section~\ref{intro.stab}.)
\end{itemize}

We may consider our strongly singular spacetimes described in Theorem~\ref{roughthm:asymp} in the context of each of these topics. One problem of particular interest is the nonlinear stability of the spacetimes of Theorem~\ref{roughthm:asymp} with respect to perturbations outside of symmetry, at least for those satisfying the subcriticality property \eqref{eq:SKE}.

This introduces the following difficulties outside of those already present in, say, \cite{FournodavlosRodnianskiSpeck}. Firstly, while the background Kasner spacetime in \cite{FournodavlosRodnianskiSpeck} is explicit and moreover written in the Gaussian or CMC gauge often applied in the analysis of such singular spacetimes, the double-null gauge used in Theorem~\ref{roughthm:asymp} is perhaps inappropriate outside of spherical symmetry. %ic setting where $\mathcal{S}$ exactly corresponds to $r(u, v) = 0$. 
One thus expects to have to rewrite the spacetimes of Theorem~\ref{roughthm:asymp} in a more appropriate gauge; the hope being that Corollary~\ref{cor:bkl} would be one step in such a process.

Secondly, in Theorem~\ref{roughthm:asymp}, $\Psi_{\infty}(p)$, and therefore the Kasner exponents, are allowed to vary along $\mathcal{S}$, which would introduce additional error terms into the analysis upon commuting with spatial derivatives. In particular one may need to use Theorem~\ref{roughthm:asymp} to find improved estimates on such spatial derivatives.
%
%Note that this already arises in the spacetimes constructed in \cite{AnderssonRendall} -- and perhaps a suitable intermediate step would be the stability of these spacetimes, where the analyticity of the background means that spatial derivatives applied to the background become less worrisome.

Finally, \cite{FournodavlosRodnianskiSpeck} is reliant on elliptic estimates on $\mathbb{T}^3$; in our case one expects to need to localize the stability analysis to suitable causal subdomains near $\mathcal{S}$, thus introducing additional boundary terms in the process. Note that \cite{BeyerOliynyk} suggests one approach to a localized stable Big Bang.

A full resolution of the nonlinear stability of the singular boundary $\mathcal{S}$ in Theorem~\ref{roughthm:asymp} would require dealing with all the above issues; nonetheless, we hope to return to this problem in the near future.

\subsubsection{The scattering problem for spacelike singularities}

As in \cite{AnderssonRendall, FournodavlosBackwards, FournodavlosLuk}, one could take the approach of starting with asymptotic data on the singularity, and ask whether one can construct a singular solution to Einstein's equation \eqref{eq:einstein} realizing such asymptotics. In light of Theorem~\ref{roughthm:asymp}, the appropriate asymptotic data should be the functions $\Psi_{\infty}(p)$, $\Xi_{\infty}(p)$ and $\mathfrak{M}_{\infty}(p)$.

Unfortunately, there is one major shortcoming of Theorem~\ref{roughthm:asymp} -- namely that to have any hope of solving (\ref{eq:einstein}) backwards, one needs to obey the following \textit{asymptotic momentum constraint} at $\mathcal{S}$:
\begin{equation} \label{eq:asymp_momentum}
    \tilde{\nabla} \log \mathfrak{M}_{\infty} (p) = 2 \Psi_{\infty}(p) \cdot \tilde{\nabla}\, \Xi_{\infty}(p),
\end{equation}
where $\tilde{\nabla}$ is a suitable spatial derivative on $\mathcal{S}$. See Section~5 of \cite{BuonannoDamourVeneziano} for heuristics deriving \eqref{eq:asymp_momentum} from the usual momentum constraint. However, Theorem~\ref{roughthm:asymp} gives only H\"older regularity for $\log \mathfrak{M}_{\infty}$ and $\Xi_{\infty}$, hence (\ref{eq:asymp_momentum}) cannot even be interpreted meaningfully. Finding a suitable gauge, and notion of spatial derivative, such that (\ref{eq:asymp_momentum}) can be verified remains an open problem in the present setting.

Nonetheless, one could begin with smooth quantities $\Psi_{\infty}(p)$, $\Xi_{\infty}(p)$ and $\log \mathfrak{M}_{\infty}(p)$, satisfying (\ref{eq:asymp_momentum}). Then one expects, using methods from \cite{AnderssonRendall, FournodavlosLuk}, that one could reconstruct the strongly singular spacetime solving (\ref{eq:einstein}) and realizing Theorem~\ref{roughthm:asymp} with $\Psi_{\infty}$, $\Xi_{\infty}$ and $\mathfrak{M}_{\infty}$ as given. We hope to return to this problem in the future. However, due to the above regularity issues, we are still far from a satisfactory scattering theory linking asymptotic data at the singularity to data on any spacelike or bifurcate null hypersurface.

\subsection{Outline of the paper} \label{intro.outline}

In Section~\ref{setup}, we explain the analytic setup of the problem, including a description of the initial data and the a priori assumptions regarding the strongly singular maximal future development $\mathcal{D}$. In Section~\ref{statement}, we state a precise version of Theorem~\ref{roughthm:asymp}, which will involve one theorem when $F_{\mu\nu} \equiv 0$ and another when $F_{\mu\nu} \not\equiv 0$. We also state Corollary~\ref{cor:bkl}, which provides a leading order BKL-like expansion in asymptotically CMC coordinates. 

Sections~\ref{upperbounds} to \ref{bklasymp} will be dedicated to the proof of these theorems, with Section~\ref{upperbounds} first providing (non-sharp) upper bounds for $r^2$, $\Omega^2$, $\phi$ and their derivatives, then Section~\ref{scalarfield} and Section~\ref{bklasymp} providing precise asympotics for the scalar field $\phi$ and the remaining geometric quantities respectively. We also include the proof of Corollary~\ref{cor:bkl} in Section~\ref{bklasymp}. Finally, Section~\ref{examples} will include various explicit examples of spherically symmetric spacetimes possessing a spacelike singularity, in order to illustrate how Theorem~\ref{roughthm:asymp} may be applied.

\subsection*{Acknowledgements}

The author would like to thank his advisor Mihalis Dafermos for enlightening discussions from which this project originated, and for his useful comments on the manuscript. The author would also like to express gratitude to Maxime Van de Moortel, Xinliang An and Grigorios Fournodavlos for numerous helpful discussions and proposals regarding future directions of study.

\vspace{1pt}
\noindent
Data sharing not applicable to this article as no datasets were generated or analysed during the current study.

%auto-ignore
\section{Set-up of the problem} \label{setup}

\subsection{The Einstein-Maxwell-scalar field model} \label{setup.emsf}
We study the Einstein-Maxwell-scalar field system on a spacetime $(\mathcal{M}, g)$, with massless scalar field $\phi: \mathcal{M} \to \R$ and Maxwell field given by the $2$-form $F \in \Omega^2(\mathcal{M})$. The dynamical quantities $(g, \phi, F)$ evolve as follows:
\begin{gather} \label{eq:einstein_msf}
    \mbox{Ric}_{\mu \nu} [g] - \frac{1}{2} R[g] g_{\mu\nu} = 2 \,T_{\mu \nu} \coloneqq 2 ( T_{\mu \nu}^{SF} [\phi] + T_{\mu \nu}^{EM} [F]) \\[0.5em]
    \label{eq:energy_momentum_sf2}
    T_{\mu \nu}^{SF}[\phi] = \nabla_{\mu} \phi \nabla_{\nu} \phi - \frac{1}{2} g_{\mu\nu} \nabla_{\rho} \phi \nabla^{\rho} \phi, \\[0.5em]
    \label{eq:energy_momentum_em2}
    T_{\mu \nu}[F] = F_{\mu \rho}  F_{\nu}^{\phantom{\nu} \rho} - \frac{1}{4} g_{\mu \nu} F^{\rho \sigma} F_{\rho \sigma}, \\[0.5em]
    \label{eq:eom_scalar_field}
    \square_g \phi = \nabla^{\rho} \nabla_{\rho} \phi = 0, \\[0.5em]
    \label{eq:eom_maxwell}
    \nabla_{[\lambda} F_{\mu \nu]} = 0, \qquad \nabla^{\nu} F_{\mu \nu} = 0.
\end{gather}
One particular solution of these equations is given by the Schwarzschild metric (\ref{eq:schwarzschild}), with vanishing matter fields $\phi \equiv 0$ and $F \equiv 0$. Another exact solution is Reissner-Nordstr\"om, where $F \neq 0$ is non-trivial but $\phi$ still vanishes. We refer the reader to Section~\ref{examples} for examples of explicit solutions with non-vanishing scalar field.

\subsection{Spherically symmetric solutions} \label{setup.ss}

We investigate solutions of (\ref{eq:einstein_msf}), (\ref{eq:eom_scalar_field}), (\ref{eq:eom_maxwell}) which are spherically symmetric as defined in Section \ref{intro.model}. As explained there, we write the metric in a (global) double-null coordinate gauge:
\begin{equation} \label{eq:doublenull}
    g = - \Omega^2(u, v) \, du \, dv + r^2(u, v) \blue{(d \theta^2 + \sin^2 \theta \, d \varphi^2)} .
\end{equation}

The scalar field $\phi = \phi(u, v)$ now depends only on $(u, v)$, while the Maxwell field $F$ can now be written as
\begin{equation} \label{eq:em_doublenull}
    F = \frac{Q(u, v)}{2r^2(u, v)} \cdot \Omega^2(u, v) \, du \wedge dv,
\end{equation}
where, for now, we allow $Q$ to depend on $(u, v)$.

In this choice of gauge, we use the system (\ref{eq:einstein_msf}), (\ref{eq:eom_scalar_field}), (\ref{eq:eom_maxwell}) to write the following equations for the dynamical quantities $r(u, v)$, $\Omega^2(u, v)$, $\phi(u, v)$ and $Q(u, v)$.  The $uu-$ and $vv-$components of (\ref{eq:einstein_msf}) give:
\begin{equation} \label{eq:raych_u}
    \partial_u ( - \Omega^{-2} \partial_u r ) = \Omega^{-2} r (\partial_u \phi)^2,
\end{equation}
\begin{equation} \label{eq:raych_v}
    \partial_v ( - \Omega^{-2} \partial_v r ) = \Omega^{-2} r (\partial_v \phi)^2.
\end{equation}
These are known as the \textit{Raychaudhuri equations}, and may be interpreted as constraint equations for $\Omega^2$ along null hypersurfaces of constant $u$ or constant $v$.

The remaining components of (\ref{eq:einstein_msf}) imply the following propagation equations for $r^2$ and $\Omega^2$:
\begin{equation} \label{eq:wave_r}
    \partial_u \partial_v r = - \frac{\Omega^2}{4r} - \frac{\partial_u r \partial_v r}{r} + \frac{\Omega^2}{4r^3} Q^2,
\end{equation}
\begin{equation} \label{eq:wave_omega}
    \partial_u \partial_v \log(\Omega^2) = \frac{\Omega^2}{2r^2} + \frac{2 \partial_u r \partial_v r}{r^2} - \frac{\Omega^2}{r^4} Q^2 - 2 \partial_u \phi \partial_v  \phi.
\end{equation}
We shall often find it advantageous to rewrite (\ref{eq:wave_r}) in the following forms:
\begin{equation} \label{eq:wave_r_u}
    \partial_u (- r \partial_v r) = \frac{\Omega^2}{4} \left( 1 - \frac{Q^2}{r^2} \right),
\end{equation}
\begin{equation} \label{eq:wave_r_v}
    \partial_v (- r \partial_u r) = \frac{\Omega^2}{4} \left( 1 - \frac{Q^2}{r^2} \right).
\end{equation}

The wave equation (\ref{eq:eom_scalar_field}) becomes the following equation for $\phi(u, v)$:
\begin{equation} \label{eq:wave_phi}
    \partial_u \partial_v \phi = - \frac{\partial_u r \cdot \partial_v \phi}{r} - \frac{\partial_v r \cdot \partial_u \phi}{r},
\end{equation}
which we also rewrite in the following useful forms:
\begin{equation} \label{eq:wave_phi_u}
    \partial_u ( r \partial_v \phi) = - \partial_u r \cdot \partial_v \phi,
\end{equation}
\begin{equation} \label{eq:wave_phi_v}
    \partial_v ( r \partial_u \phi) = - \partial_v r \cdot \partial_u \phi.
\end{equation}

Finally, the $u-$ and $v-$ components of the second equation in (\ref{eq:eom_maxwell}) give
\begin{equation} \label{eq:q_uv}
    \partial_u Q = \partial_v Q = 0.
\end{equation}
In particular, for the matter model considered here, the Maxwell charge $Q$ is not dynamical, and we simply treat it as a constant of our spacetime.%\footnote{If instead, one considers a \textit{charged scalar field} as in e.g.\ \cite{kommemi}, then $Q(u, v)$ is itself a dynamical. Although we do not include such a model here, we remark that it should be reasonably straightforward to extend our results to this case.}.

We call the system (\ref{eq:raych_u})--(\ref{eq:wave_phi_v}) the \underline{E}instein-\underline{M}axwell \underline{s}calar \underline{f}ield system in \underline{s}pherical \underline{s}ymmetry (or EMSFSS for short). In the special case that $Q \equiv 0$, we instead denote the system simply as the \underline{E}instein-\underline{s}calar \underline{f}ield in \underline{s}pherical \underline{s}ymmetry (or ESFSS for short).

%As is often seen in the literature, the Schwarzschild solution (\ref{eq:schwarzschild}) can be written in the double-null gauge. One can refer to Section \ref{examples} to see how the Schwarzschild singularity, as a special case of a spatially homogeneous Kantowski-Sachs cosmology, can be viewed in the context of Theorem~\ref{roughthm:asymp}.

\begin{remark}
    \eqref{eq:q_uv} places topological restrictions on the spacetime $(M, g)$ when the Maxwell charge $Q$ is nontrivial. In particular, having $Q \not\equiv 0$ prohibits one-ended asymptotically flat data with a regular center $\Gamma$.

    To overcome this, one could generalize the model to the Einstein-Maxwell-\textit{charged} scalar field system, where the Maxwell field is nonlinearly coupled to the scalar field $\phi$. See for instance \cite{kommemi, Moortel18} -- note that these articles allow for the scalar field $\phi$ to be \textit{massive} as well as charged.

    Since the charge and mass of the scalar field are expected to play no role near the spacelike singularity $\mathcal{S}$, we expect that many of our results should generalize straightforwardly to the Einstein-Maxwell-charged scalar field model, though in order to simplify the exposition we do not pursue that here.
\end{remark}

\subsection{Initial data in trapped regions} \label{setup.data}

We pose characteristic initial data for $(r, \Omega^2, \phi)$ for the EMSFSS system in a \textit{trapped region}. More precisely, initial data is prescribed on two null hypersurfaces, an outgoing null hypersurface $C_0$ and an ingoing null hypersurface $\underline{C}_0$ defined as follows:
\begin{gather*}
    C_0 \coloneqq \{ (u_0, v): v_0 \leq v \leq v_1 \}, \qquad
    \underline{C}_0 \coloneqq \{ (u, v_0): u_0 \leq u \leq u_1 \}.
\end{gather*}

It is well-known that the system \eqref{eq:raych_u}--\eqref{eq:q_uv}, with sufficiently regular initial data for $(r, \Omega^2, \phi)$ on $C_0 \cup \underline{C}_0$ satisfying suitable constraints, admit a well-posed characteristic initial value problem, thereby producing a \textit{maximal future hyperbolic development} in a domain
\begin{equation*}
    \mathcal{D} \coloneqq \{ (u, v) \in [u_0, u_1] \times [ v_0, v_1] \subset \R^2: r(u, v), \Omega^2(u, v), \phi(u,v) \text{ bounded and regular} \}.
\end{equation*}

Though $\mathcal{D}$ is not necessarily the entire characteristic rectangle $[u_0, u_1] \times [v_0, v_1]$, the \textit{null condition} inherent in the EMSFSS system ensures that $\mathcal{D}$ contains at least an open neighborhood of the initial data surface $C_0 \cup \underline{C}_0$. Moreover, the results of \cite{kommemi} yield that any point on the boundary of $\mathcal{D}$ lying in the interior of this characteristic rectangle corresponds to an $r = 0$ singularity\footnote{Indeed, this holds for any \textit{strongly tame} matter system, as defined in \cite{kommemi}.}. See Figure \ref{fig:char_ivp1} for details.

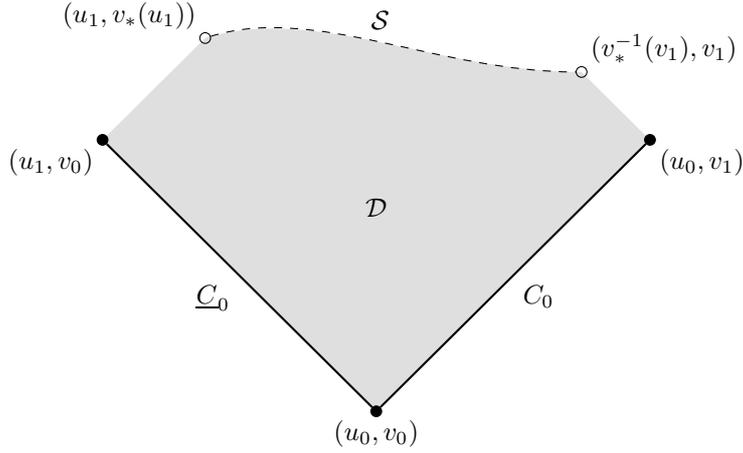
\begin{figure}[h]
    \centering
    \begin{tikzpicture}[scale=0.9]
        \path[fill=lightgray, opacity=0.5] (0, -4) -- (-4, 0) -- (-2.5, 1.5)
            .. controls (-0.9, 2) and (0.9, 1) .. (3, 1)
            -- (4, 0) -- (0, -4);

        \node (p) at (0, -4) [circle, draw, inner sep=0.5mm, fill=black] {};
        \node (r) at (4, 0) [circle, draw, inner sep=0.5mm, fill=black] {};
        \node (l) at (-4, 0) [circle, draw, inner sep=0.5mm, fill=black] {};
        \node (rs) at (3, 1) [circle, draw, inner sep=0.5mm] {};
        \node (ls) at (-2.5, 1.5) [circle, draw, inner sep=0.5mm] {};

        \node at (p) [below] {$(u_0, v_0)$};
        \node at (r) [below right] {$(u_0, v_1)$};
        \node at (l) [below left] {$(u_1, v_0)$};
        \node at (rs) [above right] {$(v_*^{-1}(v_1), v_1)$};
        \node at (ls) [above left] {$(u_1, v_*(u_1))$};
        \node at (0, -1) {$\mathcal{D}$};

        \draw [thick] (p) -- (r)
            node [midway, below right] {$C_0$};
        \draw [thick] (p) -- (l)
            node [midway, below left] {$\underline{C}_0$};
        %\draw (r) -- (rs);
        %\draw (l) -- (ls);
        \draw [dashed] (ls) .. controls (-0.9, 2) and (0.9, 1) .. (rs)
            node [midway, above=0.5mm] {$\mathcal{S}$};
        %\draw [dotted] (-3, 1) .. controls (-1.5, 1) and (1.5, 0) .. (3.5, 0.5);
        %\draw [dotted] (-4, 0) .. controls (-1.7, 0) and (1.7, -1.0) .. (4, 0)
        %    node [midway, above] {\footnotesize $ r = \text{const}$};
        %\draw [dotted] (-3, -1) .. controls (1, -1.3) .. (3, -1);
    \end{tikzpicture}

    \captionsetup{justification = centering}
    \caption{The characteristic initial value problem considered in this article. Regular initial data is prescribed at $C_0 \cup \underline{C}_0$ such that $C_0 \cup \underline{C}_0$ is trapped, yielding a maximal future hyperbolic development $\mathcal{D}$ with a singular boundary $\mathcal{S}$ at which $r = 0$.}
    \label{fig:char_ivp1}
\end{figure}

Our analysis will be confined to the trapped region. As in \cite{Christodoulou_BV}, we take this to mean the region $\mathcal{T} \subset \mathcal{Q}$ with $\mathcal{T} \coloneqq \{ (u, v): \partial_u r(u, v) <0, \partial_v r (u,v) < 0 \}$. Due to the monotonicity of $\Omega^{-2} \partial_u r$ and $\Omega^{-2} \partial_v r$ from the Raychaudhuri equations (\ref{eq:raych_u}) and (\ref{eq:raych_v}), it suffices to have initial data in the trapped region, since
\begin{equation*}
    C_0 \cup \underline{C}_0 \subset \mathcal{T} \implies \mathcal{D} \subset \mathcal{T}.
\end{equation*}

We thus already assume that $C_0 \cup \underline{C}_0 \subset \mathcal{T}$, and we may fix the remaining gauge-freedom in our choice of double null coordinates by prescribing
\begin{equation} \tag{$uv-$gauge} \label{eq:gauge1}
    \partial_v r = - 1 \text{ on } C_0, \qquad \partial_u r = - 1 \text{ on } \underline{C}_0.
\end{equation}
Imposing (\ref{eq:gauge1}), the free data on $C_0 \cup \underline{C}_0$ will then be $\Omega^2 (u_0, v_0) \in \R^+$, $r (u_0, v_0) \in \R^+$, and the characteristic data for the scalar field, namely $\partial_v \phi$ on $C_0$ and $\partial_u \phi$ on $\underline{C}_0$ as well as $\phi(u_0, v_0)$. For our purposes, it will mostly suffice to take $\partial_v \phi|_{C_0}$ and $\partial_u \phi|_{\underline{C}_0}$ to be $C^1$.

It is often advantageous to consider quantities which are \textit{gauge-independent} with regard to the choice of $u$ and $v$. We may, for instance, consider the Hawking mass $m$, defined in (\ref{eq:hawkingmass}), in place of $\Omega^2$. We also make frequent use of the following gauge-independent vector fields:
\begin{equation} \label{eq:llbar}
    L = \frac{1}{- r \partial_v r} \partial_v, \qquad \underline{L} = \frac{1}{- r \partial_u r} \partial_u,
\end{equation}
which are normalized such that $g(L, \underline{L}) = - 2 (2 m r - r^2)^{-1}$. Observe that in (\ref{eq:gauge1}), one has
\begin{equation*}
    r L \phi = \partial_v \phi \text{ on } C_0, \qquad r \underline{L} \phi = \partial_u \phi \text{ on } \underline{C}_0.
\end{equation*}

\subsection{Strongly singular maximal future developments} \label{setup.sing}

We are interested in solutions of the characteristic initial value problem containing singularities, or more specifically spacelike singularities. By \cite{Christodoulou_formation, Christodoulou_BV, kommemi}, in particular the \textit{generalized extension principle} of \cite{kommemi}, such singularities in the ESFSS or EMSFSS system, at least those away from the center of symmetry $\Gamma$, are characterized %\footnote{We are focusing here on singularities away from the center of symmetry $\Gamma$.}
 by the area-radius $r(u, v)$ tending to $0$. 

For this purpose, we focus on maximal globally hyperbolic future developments $\mathcal{D}$ possessing a non-empty singular boundary $\mathcal{S}$ at which $r(u, v) = 0$, which may be parameterized as $(u, v_*(u))$. %We briefly discuss conditions on the data and the development $\mathcal{D}$ related to the formation of such a singular boundary.

In the ESFSS system (where $Q \equiv 0$), there is additional monotonicity in the system that simplifies the issue of finding $\mathcal{S}$. Note that (\ref{eq:wave_r_u}) and (\ref{eq:wave_r_v}) give:
\begin{equation*}
    \partial_u ( - r \partial_v r) = \partial_v (- r \partial_u r) = \frac{\Omega^2}{4} > 0.
\end{equation*}
Hence $- r \partial_u r$ and $- r \partial_v r$ are bounded below by their minimum value on $C_0 \cup \underline{C}_0$. Thus $r^2(u, v)$ is monotonically decreasing at a rate which is bounded below, in both the $u$ and $v$ directions, must therefore reach $0$ so long as the intervals $[u_0, u_1]$ and $[v_0, v_1]$ are large enough. We formalize this observation in the following lemma.

\begin{lemma} \label{lem:setup_esfss}
    Consider the ESFSS system with characteristic initial data prescribed on $C_0 \cup \underline{C}_0$ as in Section \ref{setup.data}. Then there exist constants $R_0, R_1 > 0$, depending only on data, such that for $(u, v) \in \mathcal{D}$,
    \begin{equation} \label{eq:esfss_cond}
        R_0 \leq - r \partial_u r (u, v) \leq R_1, \qquad R_0 \leq - r \partial_v r (u, v) \leq R_1.
    \end{equation}
    For the EMSFSS system with $Q \neq 0$, the upper bound of \eqref{eq:esfss_cond} remains true.

    If \eqref{eq:esfss_cond} holds and one supposes, moreover, that there exists $v_*(u_1) < v_1$ such that $(u_1, v) \in \mathcal{D}$ for $v_0 \leq v < v_*(u_1)$, but $r(u_0, v) \to 0$ as $v \to v_*(u_1)^-$. Then the region $\mathcal{D}$ has a singular boundary $\mathcal{S}$, parameterised by a continuous curve $v = v_*(u)$ for $v_*^{-1}(v_1) \leq u \leq u_1$, such that $r(u, v)$ extends continuously to $0$ on $\mathcal{S}$. The boundary $\mathcal{S}$ is \textit{spacelike} in the sense that $v_*(u)$ is strictly decreasing.
\end{lemma}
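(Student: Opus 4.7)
The lemma splits into three claims: (a) the lower bounds on $-r\partial_u r$ and $-r\partial_v r$ in ESFSS; (b) the upper bounds (both in ESFSS and in EMSFSS); and (c) the existence and spacelike character of the singular boundary $\mathcal{S}$. I would handle them in turn.

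For (a), the key identities are \eqref{eq:wave_r_u}--\eqref{eq:wave_r_v} with $Q \equiv 0$, which collapse to $\partial_u(-r\partial_v r) = \partial_v(-r\partial_u r) = \Omega^2/4 \geq 0$. Thus $-r\partial_u r(u, v)$ is non-decreasing in $v$, and using the gauge \eqref{eq:gauge1} together with monotonicity of $r$ in $u$ along $\underline{C}_0$,
\[ (-r\partial_u r)(u, v) \geq (-r\partial_u r)(u, v_0) = r(u, v_0) \geq r(u_1, v_0) > 0. \]
A symmetric argument in $u$ gives $(-r\partial_v r)(u, v) \geq r(u_0, v_1) > 0$, and I would take $R_0 := \min\{r(u_1, v_0), r(u_0, v_1)\}$, which depends only on initial data.

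For (b), the EMSFSS case with $Q \neq 0$ proceeds via the opposite monotonicity: once $r \leq |Q|$ (which holds in the strongly trapped region approaching $\mathcal{S}$), \eqref{eq:wave_r_u} gives $\partial_u(-r\partial_v r) \leq 0$, so $-r\partial_v r$ is non-increasing in $u$ and $(-r\partial_v r)(u, v) \leq r(u_0, v) \leq r_0$; the analogous bound holds for $-r\partial_u r$. For the ESFSS case the monotonicity is in the wrong direction, and the proof will require a bootstrap combining the integrated identity $\int_{v_0}^v \Omega^2/4\, dv' = (-r\partial_u r)(u, v) - r(u, v_0)$ with the Raychaudhuri equations \eqref{eq:raych_u}, \eqref{eq:raych_v} and the wave equation \eqref{eq:wave_omega} for $\Omega^2$, controlling $\Omega^2$ and $-r\partial_u r$ simultaneously. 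This is the main technical obstacle of the lemma.

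For (c), given \eqref{eq:esfss_cond} together with the hypothesis that $r(u_1, v) \to 0$ as $v \to v_*(u_1)^-$ with $v_*(u_1) < v_1$, I would define $v_*(u) := \sup\{v : (u, v) \in \mathcal{D}\}$. Upper semi-continuity of $v_*$ is immediate from the openness of $\mathcal{D}$. Strict decrease follows by integrating the lower bound: at fixed $v$, $r^2(u'', v) \leq r^2(u', v) - 2R_0(u'' - u')$ for $u' \leq u''$, so if $v_*(u') \leq v_*(u'')$ then taking $v \to v_*(u')^-$ along $\{u = u'\}$ forces $r^2(u'', v) < 0$, a contradiction. Continuity of $v_*$ then combines upper semi-continuity, strict decrease (which together rule out left-discontinuities), and local well-posedness plus maximality of $\mathcal{D}$ (a right-discontinuity would leave a point $(u_0, v) \in \mathcal{D}$ with $r > 0$ that could be extended to $(u_0 + \epsilon, v) \in \mathcal{D}$). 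Finally, $r$ extends continuously to $0$ on $\mathcal{S}$ by monotone convergence of $r^2(u, v)$ as $v \to v_*(u)^-$ (using $-r\partial_v r \geq R_0$), together with the generalized extension principle of \cite{kommemi}, which forbids the limit from being strictly positive since $\mathcal{D}$ is the maximal future development.
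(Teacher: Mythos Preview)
Your lower bound argument (a) is fine and matches the paper. Part (c) is more detailed than the paper (which just says the conclusion ``follows straightforwardly from \eqref{eq:esfss_cond}''), and your argument there is essentially sound.

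The real gap is in (b). Your EMSFSS argument assumes $r \leq |Q|$ along the entire null ray from $(u_0,v)$ to $(u,v)$, but nothing in the hypotheses guarantees this: near the initial hypersurface $C_0$ one may well have $r > |Q|$, in which case $\partial_u(-r\partial_v r) > 0$ and your monotonicity fails. So at best you bound $-r\partial_v r$ in a neighborhood of $\mathcal{S}$, not on all of $\mathcal{D}$. For ESFSS you correctly note the monotonicity goes the wrong way, but then propose a bootstrap involving \eqref{eq:wave_omega}; this is overcomplicated and misses the actual mechanism.

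The paper's argument for the upper bound is short and works uniformly in $Q$. Integrate \eqref{eq:wave_r_u} in $u$ and use $1 - Q^2/r^2 \leq 1$ to get
\[
-r\partial_v r(u,v) \leq -r\partial_v r(u_0,v) + \int_{u_0}^{u} \frac{\Omega^2}{4}(\tilde u, v)\, d\tilde u
= -r\partial_v r(u_0,v) + \int_{r(u,v)}^{r(u_0,v)} \frac{\Omega^2}{-4\partial_u r}(\tilde u, v)\, dr.
\]
Now the key monotonicity you are missing: Raychaudhuri \eqref{eq:raych_u} says $\partial_u(-\Omega^{-2}\partial_u r) \geq 0$, i.e.\ $\Omega^2/(-\partial_u r)$ is \emph{decreasing} in $u$, hence bounded by its value on $C_0$. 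The integral is then at most $r(u_0,v_0) \cdot \max_{C_0} \frac{\Omega^2}{-4\partial_u r}$, which depends only on data. No bootstrap, no restriction on the sign of $1 - Q^2/r^2$.
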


\begin{proof}
    When $Q \equiv 0$, the monotonicity of (\ref{eq:wave_r_u}) and (\ref{eq:wave_r_v}) mentioned above immediately implies the lower bound in (\ref{eq:esfss_cond}) for $R_0 = \min \{ \min_{(u_0, v) \in C_0} - r \partial_v r (u_0, v), \min_{(u, v_0) \in \underline{C}_0} - r \partial_u r (u, v_0) \} > 0$. Regardless of the value of $Q$, the upper bound is found by integrating (\ref{eq:wave_r_u}), (\ref{eq:wave_r_v}) and then applying (\ref{eq:raych_u}), (\ref{eq:raych_v}). For instance,
    \begin{align*}
        -r \partial_v r(u, v) 
        &\leq - r \partial_v r(u_0, v) + \int_{u_0}^u \frac{\Omega^2}{4}(\tilde{u}, v) \, d \tilde{u},
        \\[0.5em]
        &= - r \partial_v r(u_0, v) + \int_{r(u, v)}^{r(u_0, v)} \frac{\Omega^2}{- 4 \partial_u r} (\tilde{u}, v)\, d r (\tilde{u}, v),
        \\[0.5em]
        &\leq - r \partial_v r(u_0, v) + \int_{r(u, v)}^{r(u_0, v)} \frac{\Omega^2}{- 4 \partial_u r} (u_0, v) \, dr (\tilde{u}, v),
        \\[0.5em]
        &\leq \max_{(u_0, \tilde{v}) \in C_0} (- r \partial_v r (u_0, \tilde{v})) + r(u_0, v_0) \cdot \max_{(u_0, \tilde{v}) \in C_0} \frac{\Omega^2}{- 4 \partial_u r} (u_0, \tilde{v}).
    \end{align*}
    The inequality on the third line follows from the Raychaudhuri equation (\ref{eq:raych_u}), which implies that $\frac{\Omega^2}{- \partial_u r}$ is decreasing in $u$. A similar computation for $- r \partial_u r(u, v)$ then implies the upper bound in (\ref{eq:esfss_cond}).

    Assuming the existence of $v_*(u_1)$ as in the lemma, the parameterization of the singular boundary $\mathcal{S}$ as $(u, v_*(u))$ follows straightforwardly from \eqref{eq:esfss_cond}.%consider any $v \in (v_*(u_1), v_1]$. Since $\partial_u \partial_v r^2 < 0$, one may integrate in the characteristic rectangle $[u_0, u] \times [v_*(u_1), v]$ to get:
%    \begin{equation*}
%        r^2(u_0, v) - r^2(u, v) > r^2(u_0, v_*(u_1)) - r^2(u, v_*(v_1)).
%    \end{equation*}
%    As $u$ increases towards $u_1$, the right hand side increases towards $r(u_0, v_*(u_1))^2$, which would violate the inequality since $r(u_0, v) < r(u_0, v_*(u_1))$. Hence there exists $u_*(v) < u_1$ such that $r(u, v) \to 0$ as $u \to u_*(v)^-$. Furthermore, a simple consequence of (\ref{eq:esfss_cond}) is that $u_*(v)$ is continuous and (strictly) decreasing in $v$. Then $v_*(u)$ is simply defined to be its inverse $v_* = u_*^{-1}$.
\end{proof}

\begin{remark}
    By the monotonicity $\partial_u \partial_v r^2 < 0$ in the case $Q \equiv 0$, a sufficient condition on data to guarantee the existence of such a $v_*(u_1) < v_1$ is the following inequality:% on the initial data hypersurface $C_0 \cup \underline{C}_0$:
    \begin{equation} \label{eq:esfss_cond_ineq}
        r(u_0, v_0)^2 \geq r(u_1, v_0)^2 + r(u_0, v_1)^2.
    \end{equation}
    However this is not necessary, and need not restrict ourselves to scenarios where (\ref{eq:esfss_cond_ineq}) holds.
\end{remark}

We now consider the EMSFSS system with $Q \neq 0$. In this case, there is no monotonicity akin to $\partial_u \partial_v r^2 < 0$, and in order to proceed we instead make the \textit{a priori assumption} %that within the future development $\mathcal{D}$, one has
\begin{equation} \tag{QTS} \label{eq:cond_r}
    \inf_{(u, v) \in \mathcal{D}} \min \{ - r \partial_v r (u, v), - r \partial_u r (u, v) \} \geq R_0.
\end{equation}
Lemma~\ref{lem:setup_esfss} implies that we still have access to the full inequality \eqref{eq:esfss_cond} so \blue{long} as \eqref{eq:cond_r} holds.%, however, that the upper bound of (\ref{eq:esfss_cond}) still holds without making any such assumptions, and can be proved exactly as in Lemma \ref{lem:setup_esfss}.

%Following the discussion in Sections \ref{intro.bkl} and \ref{intro.em}, it will become clear that it is important to 
When $Q \neq 0$, we also need to make a second a priori assumption: for some $\alpha \in (0, 1)$, we assume
\begin{equation} \tag{SKE} \label{eq:cond_phi}
    \inf_{(u, v) \in \mathcal{D}} \min \left \{ |r^2 L \phi (u, v)|^2, |r^2 \underline{L} \phi (u, v)|^2 \right \} \geq 1 + \alpha.
\end{equation}
Here the null vector fields $L$ and $\underline{L}$ are defined in (\ref{eq:llbar}). As discussed in Section~\ref{sub:intro_rough_1} the assumptions \eqref{eq:cond_r} and \eqref{eq:cond_phi} do not follow easily from any conditions on data, but nonetheless there is a non-empty set of spacetimes to which they apply. We are now ready to make the following definition.

\begin{definition}
    A \textit{strongly singular} future development $\mathcal{D}$ of (trapped) characteristic initial data on $C_0 \cup \underline{C}_0$ is a maximal future development $\mathcal{D}$ \blue{of characteristic initial data for the Einstein--Maxwell--scalar field system}, as described in Section \ref{setup.data}, with the two additional properties:
    \begin{enumerate}[(i)]
        \itemsep -0.2em
        \item
            There exists some continuous, strictly decreasing function $v_*(u)$, defined on $v_*^{-1}(v_1) \leq u \leq u_1$ for some $v_*^{-1}(v_1) \in (u_0, u_1)$, such that $\mathcal{S} = \{ (u, v_*(u)): v_*^{-1}(v_1) \leq u \leq u_1 \}$ is a future boundary of $\mathcal{D}$ such that the area-radius $r(u, v)$ extends continuously to $0$ at $\mathcal{S}$.
        \item
            Either $Q$ is identically zero, or the two \blue{additional conditions (\ref{eq:cond_r}) and (\ref{eq:cond_phi}) are satisfied within $\mathcal{D}$.}
    \end{enumerate}
\end{definition}

%By the earlier discussion, strongly singular developments satisfy (\ref{eq:esfss_cond}). In Section \ref{statement}, we state precisely the main theorems presented in this article, regarding the quantitative structure and stability of such strongly singular spacetimes.

%auto-ignore
\section{Statements of the main Theorems} \label{statement}

\subsection{The H\"older regularity norms} \label{statement.norms}

We are now in a position to state the two main theorems. Before doing so, we recall what it means for a function to be $C^{k, \beta}$, and define our notion of logarithmically modified H\"older regularity, $C^{k, \beta, \log}$.

\begin{definition}
    For $\mathcal{D}$ bounded in $\R^2$ and $\beta \in (0, 1)$, we define the $C^{0,\beta}$ norm of a function $f: \mathcal{D} \to \R$ to be\footnote{As $\mathcal{D}$ is bounded in $\R^2$, it is straightforward to see that this is equivalent to the usual $C^{0,\beta}$ norm.}:
    \begin{equation} \label{eq:holder}
            \| f \|_{C^{0,\beta}} \coloneqq \sup_{(u, v) \in \mathcal{D}} |f(u, v)| 
            + \sup_{ \substack{
                \{u\} \times [\tilde{v}_0, \tilde{v}_1] \\ \subset \mathcal{D}
            }}\frac{|f(u, \tilde{v}_1) - f(u, \tilde{v}_0)|}{|\tilde{v}_1 -\tilde{v}_0|^{\beta}}
            + \sup_{ \substack{
                    [\tilde{u}_0, \tilde{u}_1] \times \{v\} \\ \subset \mathcal{D}
            }}\frac{|f(\tilde{u}_1, v) - f(\tilde{u}_0, v)|}{|\tilde{u}_1 -\tilde{u}_0|^{\beta}}.
    \end{equation}
    If $\|f\|_{C^{0, \beta}} < + \infty$ we say that $f$ is \textit{H\"older continuous} in $\mathcal{D}$, and can be defined up to and including the boundary $\partial D$ (in a unique H\"older continuous fashion).

    Similarly, define the logarithmically modified $C^{0, \beta, \log}$ norm of $f: \mathcal{D} \to \R$ to be
    \begin{multline} \label{eq:logholder}
        \| f \|_{C^{0,\beta, \log}} \coloneqq \sup_{(u, v) \in \mathcal{D}} |f(u, v)| 
        \\ + \sup_{ \substack{
            \{u\} \times [\tilde{v}_0, \tilde{v}_1] \\ \subset \mathcal{D}
        }}\frac{|f(u, \tilde{v}_1) - f(u, \tilde{v}_0)|}{|\tilde{v}_1 -\tilde{v}_0|^{\beta} (1 + |\log (\tilde{v}_1 - \tilde{v}_0)| )}
        + \sup_{ \substack{
                [\tilde{u}_0, \tilde{u}_1] \times \{v\} \\ \subset \mathcal{D}
        }}\frac{|f(\tilde{u}_1, v) - f(\tilde{u}_0, v)|}{|\tilde{u}_1 -\tilde{u}_0|^{\beta} (1 + |\log(\tilde{u}_1 -\tilde{u}_0 ) |)}.
    \end{multline}
    If $\|f\|_{C^{0, \beta, \log}} < + \infty$ we say that $f$ is \textit{log-H\"older continuous}. %Note that for $0 < \beta' < \beta < 1$, there exists $C(\beta, \beta', \mathcal{D})$ such that $\| f \|_{C^{0, \beta'}} \leq C \| f \|_{C^{0, \beta, \log}}$.

    Finally, we define $f$ to be $C^{k, \beta}$ (respectively $C^{k, \beta, \log}$) for some $k \in \mathbb{N}$ if $f$ is $C^k$ and its $k$th derivatives $\partial_u^{k_1} \partial_v^{k_2} f$ with $k_1 + k_2 = k$ are each $C^{0, \beta}$ (respectively $C^{0, \beta, \log})$.
\end{definition}

\begin{remark}
    Since $1 + \log x^{-1} \lesssim x^{-\gamma}$ for $x$ small, it holds that for $0 < \beta' < \beta < 1$, there exists $C(\beta, \beta', \mathcal{D})$ such that $\| f \|_{C^{0, \beta'}} \leq C \| f \|_{C^{0, \beta, \log}}$. Hence log-H\"older continuity implies H\"older continuity for any smaller exponent.
\end{remark}

%We also prove a short lemma allowing us to relate these norms to the area-radius $r$, assuming the upper and lower bounds \eqref{eq:esfss_cond}.
%
%\begin{lemma} \label{lem:holder_r}
%    Assume the bounds \eqref{eq:esfss_cond} in a convex, bounded region $\mathcal{D} \subset \R^2$. Then the $C^{0, \beta}$ and $C^{0, \beta, \log}$ norms are uniformly equivalent to the following:
%    \begin{equation*} \label{eq:holder}
%            \| f \|_{C^{0,\beta}} = \sup_{(u, v) \in \mathcal{D}} |f(u, v)| 
%            + \sup_{ \substack{
%                \{u\} \times [\tilde{v}_0, \tilde{v}_1] \\ \subset \mathcal{D}
%        }}\frac{|f(u, \tilde{v}_1) - f(u, \tilde{v}_0)|}{|r(u, \tilde{v}_1) -r(u, \tilde{v}_0)|^{2\beta}}
%            + \sup_{ \substack{
%                    [\tilde{u}_0, \tilde{u}_1] \times \{v\} \\ \subset \mathcal{D}
%            }}\frac{|f(\tilde{u}_1, v) - f(\tilde{u}_0, v)|}{|\tilde{u}_1 -\tilde{u}_0|^{\beta}}.
%    \end{equation*}
%\end{lemma}
%
%\begin{proof}
%\end{proof}

\subsection{Theorems~\ref{thm:esfss} and \ref{thm:emsfss}: Asymptotics at singularity in double null coordinates} \label{statement.doublenull}

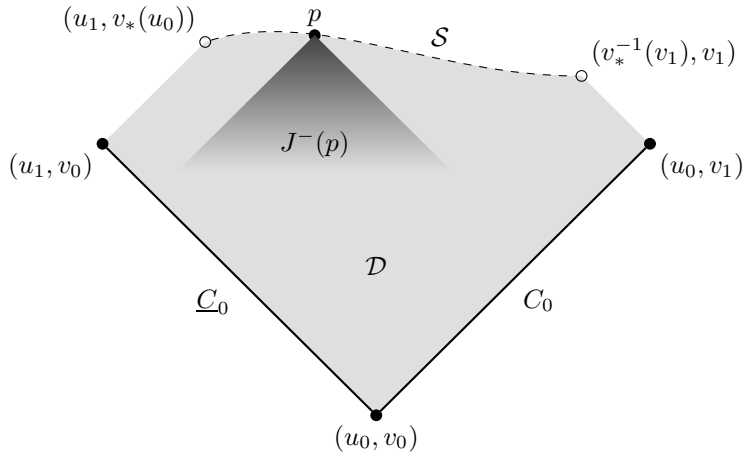
\begin{figure}[h]
    \centering
    \begin{tikzpicture}[scale=0.9]
        \path[fill=lightgray, opacity=0.5] (0, -4) -- (-4, 0) -- (-2.5, 1.5)
            .. controls (-0.9, 2) and (0.9, 1) .. (3, 1)
            -- (4, 0) -- (0, -4);

        \node (p) at (0, -4) [circle, draw, inner sep=0.5mm, fill=black] {};
        \node (r) at (4, 0) [circle, draw, inner sep=0.5mm, fill=black] {};
        \node (l) at (-4, 0) [circle, draw, inner sep=0.5mm, fill=black] {};
        \node (rs) at (3, 1) [circle, draw, inner sep=0.5mm] {};
        \node (ls) at (-2.5, 1.5) [circle, draw, inner sep=0.5mm] {};

        \node (sing) at (-0.9, 1.6) [circle, draw, inner sep=0.5mm, fill=black] {};

        \node at (p) [below] {$(u_0, v_0)$};
        \node at (r) [below right] {$(u_0, v_1)$};
        \node at (l) [below left] {$(u_1, v_0)$};
        \node at (rs) [above right] {$(v_*^{-1}(v_1), v_1)$};
        \node at (ls) [above left] {$(u_1, v_*(u_0))$};
        \node at (0, -1.8) {$\mathcal{D}$};
        \node at (sing) [above] {$p$};

        \shade [top color=darkgray, bottom color=lightgray!50!white] (sing) -- ++(-2, -2) -- ++(4, 0) -- ++(-2, 2);
        \node at (-0.9, 0.0) {$J^-(p)$};

        \draw [thick] (p) -- (r)
            node [midway, below right] {$C_0$};
        \draw [thick] (p) -- (l)
            node [midway, below left] {$\underline{C}_0$};
        %\draw (r) -- (rs);
        %\draw (l) -- (ls);
        \draw [dashed] (ls) .. controls (-0.9, 2) and (0.9, 1) .. (rs)
            node [pos=0.66, above=0.5mm] {$\mathcal{S}$};
        %\draw [dotted] (-3, 1) .. controls (-1.5, 1) and (1.5, 0) .. (3.5, 0.5);
        %\draw [dotted] (-4, 0) .. controls (-1.7, 0) and (1.7, -1.0) .. (4, 0)
        %    node [midway, above] {\footnotesize $ r = \text{const}$};
        %\draw [dotted] (-3, -1) .. controls (1, -1.3) .. (3, -1);
    \end{tikzpicture}

    \captionsetup{justification = centering}
    %\caption{The characteristic initial value problem considered in this article. Regular initial data is prescribed at $C_0 \cup \underline{C}_0$ such that $C_0 \cup \underline{C}_0$ is trapped, yielding a maximal future hyperbolic development $\mathcal{D}$ with a singular boundary $\mathcal{S}$ at which $r = 0$.}
    \caption{Theorems \ref{thm:esfss} and \ref{thm:emsfss} give precise asymptotics for the dynamical quantities $r(u, v)$, $\Omega^2(u, v)$ and $\phi(u, v)$ inside the strongly singular maximal future development $\mathcal{D}$. Localizing to the TIP $J^-(p)$, one may use the values of $\Psi$, $\Xi$ and $\mathfrak{M}$ (defined in Theorems \ref{thm:esfss} and \ref{thm:emsfss}) at the singular point $p \in \mathcal{S}$ to describe the asymptotics in $J^-(p)$.}
    \label{fig:char_ivp_esfss}
\end{figure}

We now state Theorems~\ref{thm:esfss} and \ref{thm:emsfss}, which are the precise versions of Theorem~\ref{roughthm:asymp} for the ESFSS and the EMSFSS system respectively. We refer to Figure~\ref{fig:char_ivp_esfss} for the Penrose diagram representing both results.

\begin{theorem} \label{thm:esfss}
    Consider characteristic initial data for the ESFSS system on the bifurcate null hypersurface $C_0 \cup \underline{C}_0$, obeying the gauge condition (\ref{eq:gauge1}). Suppose that the maximal development $\mathcal{D}$ is strongly singular, as described in Section \ref{setup.sing}. Then:
    \begin{enumerate}[(I)]
        \item \label{item:esfss_thm_i}
            \underline{Asymptotics for $r$:} 
            The squared area-radius $r^2(u, v)$ is a $C^{1, 1/2}$ function in $\mathcal{D}$. There exist constants $R_0, R_1 > 0$, such that for $(u, v) \in \mathcal{D}$,
            \begin{equation} \label{eq:cond_pre_r}
                R_0 \leq - r \partial_u r (u, v) \leq R_1, \qquad R_0 \leq -r \partial_v r (u, v) \leq R_1.
            \end{equation}
            Consequently, we identify the $r = 0$ singularity as a $C^{1, 1/2}$ curve, denoted $\mathcal{S}$, which we parameterize as $\mathcal{S} = \{ (u, v_*(u)): u_0 \leq u \leq v_*^{-1}(v_1) \}$. We define $\bar{\mathcal{D}} = \mathcal{D} \cup \mathcal{S}$, and by $C_u(p), C_v(p)$ the continuous limits of $-r \partial_u r, - r \partial_v r$ respectively at $p \in \mathcal{S}$.
        \item \label{item:esfss_thm_ii}
            \underline{Asymptotics for $\phi$:}
           There exists a $C^{0, 1/2}$ function $\Psi(u, v)$ and a $C^{0, 1/2, \log}$ function $\Xi(u, v)$ such that for $(u, v) \in \mathcal{D}$,
            \begin{equation} \label{eq:esfss_thm_phi}
                \phi(u, v) = \Psi(u, v) \cdot \log \left( \frac{r_0}{r} \right) + \Xi(u, v).
            \end{equation}
            We may extend $\Psi$ and $\Xi$ continuously to the singular boundary $\mathcal{S}$, and denote by $\Psi_{\infty}(p)$ and $\Xi_{\infty}(p)$ their values at $p \in \mathcal{S}$.
        \item \label{item:esfss_thm_iii}
            \underline{Asymptotics for $\Omega^2$:} There exists a $C^{0, 1/2, \log}$ function $\mathfrak{M}(u, v)$, continuously extendible as a positive function on $\bar{\mathcal{D}}$, such that we have the following asymptotics for the gauge-invariant quantity $4 \Omega^{-2} \partial_u r \partial_v r$:
            \begin{equation} \label{eq:esfss_thm_lapse}
                4 \Omega^{-2} \partial_u r \partial_v r (u, v) = \mathfrak{M}(u, v) \cdot \left( \frac{r}{r_0} \right)^{-(\Psi^2 + 1)}.
            \end{equation}
            We denote by $\mathfrak{M}_{\infty}(p)$ the value of $\mathfrak{M}$ at $p \in \mathcal{S}$.
            
%            \textit{If} $\mathfrak{M}_{\infty}$, $\Psi_{\infty}$ and $\Xi_{\infty}$ can be realized as Lipschitz functions on $\mathcal{S}$, then they obey the following \textit{asymptotic momentum constraint} at $\mathcal{S}$:
%            \begin{equation} \label{eq:esfss_thm_momentum}
%                \frac{d}{du} \log \mathfrak{M}_{\infty}(u) = 2 \Psi_{\infty}(u) \cdot \frac{d \Xi_{\infty}}{du} (u).
%            \end{equation}
        \item \label{item:esfss_thm_iv}
            \underline{Estimates localized at a TIP}:
            Let $p = (u_p, v_*(u_p)) \in \mathcal{S}$, and consider its causal past $J^-(p) = \{ (u, v) \in \mathcal{D} : u \leq u_p, v \leq v_*(u_p) \}$. Then there exists a constant $D > 0$ such that for $(u, v) \in J^-(p)$:
            \begin{gather} \label{eq:esfss_thm_tip_r_u}
                |- r \partial_u r (u, v) - C_u(p) | \leq D \cdot r, 
                \\[0.5em] \label{eq:esfss_thm_tip_r_v}
                |- r \partial_v r (u, v) - C_v (p) | \leq D \cdot r,
                \\[0.5em] \label{eq:esfss_thm_tip_phi_u}
                |r^2 \partial_u \phi (u, v) - C_u(p) \cdot \Psi_{\infty} (p) | \leq D \cdot r,
                \\[0.5em] \label{eq:esfss_thm_tip_phi_v}
                |r^2 \partial_v \phi (u, v) - C_v(p) \cdot \Psi_{\infty} (p) | \leq D \cdot r,
                \\[0.5em] \label{eq:esfss_thm_tip_phi}
                \left| \phi(u, v) - \Psi_{\infty}(p) \cdot \log \left( \frac{r_0}{r} \right) - \Xi_{\infty} (p)\right| \leq D \cdot r \log r^{-1},
                \\[0.5em] \label{eq:esfss_thm_tip_lapse}
                \left| \Omega^2 - \frac{4 C_u(p) C_v(p)}{r_0^2} \cdot \mathfrak{M}_{\infty}^{-1}(p) \cdot \left (\frac{r}{r_0}\right)^{\Psi_{\infty}(p)^2 - 1} \right| \leq D \cdot r^{\Psi_{\infty}(p)^2} \log r^{-1}.
            \end{gather}
        \item \label{item:esfss_thm_v}
            \underline{Precise blow-up estimates:}
            For $p \in \mathcal{S}$ and $J^-(p)$ as in (\ref{item:esfss_thm_iv}), we have the following blow-up rates for the Hawking mass and the Kretschmann scalar:
            \begin{gather} \label{eq:esfss_thm_tip_hawkingmass}
                \left| \, 2m(u, v) - \mathfrak{M}_{\infty}(p) r_0 \cdot \left( \frac{r_0}{r} \right)^{\Psi_{\infty}^2} \right| \leq D \cdot r^{- \Psi_{\infty}(p)^2 + 1} \log r^{-1},
                \\[0.5em] \label{eq:esfss_thm_tip_kretschmann}
                \left| \, \mathrm{Riem}_{\alpha \beta \gamma \delta} \mathrm{Riem}^{\alpha \beta \gamma \delta} - \mathfrak{M}_{\infty}(p)^2 \cdot \left( \frac{r_0}{r} \right)^{2 (\Psi_{\infty}(p)^2 + 3)} \cdot \frac{4 ( 3 - 2 \Psi_{\infty}(p)^2 + 2 \Psi_{\infty}(p)^4) }{r_0^4} \right| \hspace{20em} \nonumber \\[-0.1em] \hspace{25em}\leq D \cdot r^{- 2 \Psi_{\infty}(p)^2 - 5} \log r^{-1}.
            \end{gather}
            For the scalar field, we have the following gauge-invariant blow-up rate:
            \begin{equation} \label{eq:esfss_thm_tip_phi_blow}
                \left| \nabla_{\alpha} \phi \nabla^{\alpha} \phi + \frac{\mathfrak{M}_{\infty} (p) \Psi_{\infty}(p)^2 }{r_0^2} \cdot \left( \frac{r_0}{r} \right)^{\Psi_{\infty}(p)^2 + 3}\right| \leq D \cdot r^{- \Psi_{\infty}(p)^2 - 2} \log r^{-1}.
            \end{equation}
    \end{enumerate}
\end{theorem}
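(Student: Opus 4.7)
The ESFSS case ($Q \equiv 0$) has the pleasant feature that the a priori bounds \eqref{eq:cond_pre_r} are immediate from Lemma~\ref{lem:setup_esfss}, since \eqref{eq:wave_r_u}--\eqref{eq:wave_r_v} reduce to $\partial_u(-r\partial_v r) = \partial_v(-r\partial_u r) = \Omega^2/4 > 0$ and thus yield a monotonicity that is absent for $Q \neq 0$. In particular $r^2$ is Lipschitz in $(u,v)$ and $\mathcal{S}$ is at worst a Lipschitz graph $v = v_*(u)$. As a first step I would collect non-sharp a priori estimates of $L^\infty$ type for $\phi$, $r^2 L\phi$, $r^2 \underline L\phi$ and $\Omega^2 \cdot r^{1-\tilde\Psi^2}$ (with some approximate exponent $\tilde\Psi^2$), following the upper bound arguments of \cite{AnZhang} that will be set up in Section~\ref{upperbounds}; these serve as the input for a bootstrap improvement to the sharp asymptotics (I)--(III).

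The crux of the argument, to be carried out in Section~\ref{scalarfield}, is control of the Kodama-like derivative $\tilde X \phi \coloneqq -r\partial_v r\,\partial_u \phi + r\partial_u r \, \partial_v \phi$. Since $\tilde X r \equiv 0$, $\tilde X$ is tangent to level sets of $r$ and, consistently with the asymptotically velocity-dominated heuristic that spatial derivatives fare better than timelike ones near $\mathcal{S}$, one expects $|\tilde X \phi| \lesssim r^{-2+\alpha}$ for some $\alpha \in (0,1)$. I would establish this via a commuted equation for $\tilde X \phi$, using the first-step a priori bounds and the change of variables $du = O(r)\, dr$ from \eqref{eq:cond_pre_r} to ensure integrability. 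Given this decay, the wave equation in the form $\partial_u(r^2 \partial_v \phi) = \tilde X \phi$ (and its $\partial_v$ analogue) is integrable in $u$ toward $\mathcal{S}$, producing $r^2 \partial_v \phi = C_v(p)\,\Psi_\infty(p) + O(r^\alpha)$ and $r^2 \partial_u \phi = C_u(p)\,\Psi_\infty(p) + O(r^\alpha)$ in $J^-(p)$, with the same limit function $\Psi_\infty$. One then defines $\Psi$ and $\Xi$ on $\mathcal{D}$ so as to match \eqref{eq:esfss_thm_phi} (for instance by taking $\Psi$ to be an interior extension of $\Psi_\infty$ and setting $\Xi \coloneqq \phi - \Psi\log(r_0/r)$), and the asserted H\"older regularity follows from the error rates above.

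For (III) and the upgrade of (I), to be treated in Section~\ref{bklasymp}, I would rewrite \eqref{eq:raych_v} as
\[
\partial_v \log\left( \frac{\Omega^2}{-\partial_v r} \right) = \left( \frac{r^2 \partial_v\phi}{- r\partial_v r} \right)^{2} \cdot \frac{\partial_v r}{r}.
\]
By (II) the first factor equals $\Psi_\infty^2 + O(r^\alpha)$, so subtracting $\Psi_\infty^2\, \partial_v(\log r)$ from both sides yields an $O(r^{\alpha-1})dr$ right-hand side after changing variables, which is integrable in $r$ toward $\mathcal{S}$. Integration gives $\Omega^2 \cdot r^{1-\Psi^2}/(-\partial_v r)$ converging to a positive, H\"older continuous limit; combined with the analogous integration in $u$ and with $-r\partial_v r \to C_v$, $-r\partial_u r \to C_u$, this yields \eqref{eq:esfss_thm_lapse} with $\mathfrak{M}$ of the claimed regularity $C^{0,1/2,\log}$. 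Substituting back into \eqref{eq:wave_r_u} gives $\partial_u(-r\partial_v r) = \Omega^2/4 = O(r^{\Psi^2 - 1})$; since $Q \equiv 0$ the integrand in $u$ is then $O(r^{\Psi^2})dr$ (using $du = O(r)dr$), which is integrable in $r$ regardless of the sign of $\Psi^2$, yielding the $O(r)$ errors in \eqref{eq:esfss_thm_tip_r_u}--\eqref{eq:esfss_thm_tip_r_v} and upgrading $r^2$ to $C^{1,1/2}$.

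The main difficulty is the coupled nature of (I)--(III): sharp $\Omega^2$ asymptotics require sharp $\phi$ asymptotics, whose proof requires the decay of $\tilde X \phi$, which in turn requires non-sharp control of $\Omega^2$ and $r\partial r$; I would close this via a bootstrap anchored on the unconditional bound $-r\partial_u r, -r\partial_v r \in [R_0, R_1]$ available when $Q \equiv 0$. Once (I)--(III) are in hand, the TIP estimates (IV) are obtained by integrating the already-established error terms along $J^-(p)$ from $p \in \mathcal{S}$ backwards and using continuity of $C_u, C_v, \Psi_\infty, \Xi_\infty, \mathfrak{M}_\infty$ on $\mathcal{S}$. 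Finally (V) is a direct algebraic computation: \eqref{eq:esfss_thm_tip_hawkingmass} follows from $2m/r = 1 + 4\Omega^{-2}\partial_u r\, \partial_v r$ together with \eqref{eq:esfss_thm_lapse}; the Kretschmann scalar formula in double-null coordinates (as a polynomial in $r, m, \Omega^2$ and their derivatives) combined with (I)--(III) and (IV) yields \eqref{eq:esfss_thm_tip_kretschmann}; and $\nabla^\alpha \phi \nabla_\alpha \phi = -4\Omega^{-2}\partial_u\phi\, \partial_v\phi$ is read off from \eqref{eq:esfss_thm_tip_phi_u}--\eqref{eq:esfss_thm_tip_lapse}.
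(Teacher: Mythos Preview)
Your proposal is correct and follows the paper's strategy closely: non-sharp upper bounds (Section~\ref{upperbounds}), then control of the spatial derivative $X\phi$ tangent to constant-$r$ hypersurfaces (Section~\ref{scalarfield}; the paper uses the gauge-invariant $X = L - \underline{L}$ in place of your $\tilde X$, but these differ only by bounded factors), then integration of Raychaudhuri for $\Omega^2$ and of \eqref{eq:wave_r_u} for $r^2$ (Section~\ref{bklasymp}), and finally backward integration from $p\in\mathcal S$ for (IV) and direct algebra for (V). The one point where the paper is more concrete than your outline is the definition of $\Psi$: rather than extending $\Psi_\infty$ inward or running a bootstrap, the paper sets $\Psi(u,v) \coloneqq \tfrac{1}{2} r^2 (L + \underline{L})\phi$ on all of $\mathcal D$ and proves $|L\Psi| + |\underline L \Psi| \lesssim r^{-2+2\beta}$ directly (Proposition~\ref{prop:scalarfield_asymptotics}), which yields the $C^{0,1/2}$ regularity and the boundary values $\Psi_\infty$ in one step and keeps the argument strictly sequential.
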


\begin{theorem} \label{thm:emsfss}
    Consider characteristic initial data for the ESFSS system on the bifurcate null hypersurface $C_0 \cup \underline{C}_0$, obeying the gauge condition (\ref{eq:gauge1}). Suppose a priori that the maximal development $\mathcal{D}$ produced by this data terminates at an $r = 0$ spacelike singularity, and moreover that (\ref{eq:cond_r}) and (\ref{eq:cond_phi}) hold in $\mathcal{D}$. Then for $\beta = \alpha / 2$, %we find that:
    \begin{enumerate}[(I)]
        \item \label{item:emsfss_thm_i}
            \underline{Asymptotics for $r$:} 
            The squared area-radius $r^2(u, v)$ is a $C^{1, \beta}$ function in $\mathcal{D}$. Building upon (\ref{eq:cond_r}), we find an additional upper bound: for $(u, v) \in \mathcal{D}$, there exists some $R_1 > R_0$ such that:
            \begin{equation} \label{eq:cond_pre_r_}
                R_0 \leq - r \partial_u r (u, v) \leq R_1, \qquad R_0 \leq -r \partial_v r (u, v) \leq R_1.
            \end{equation}
            Consequently, we can identify the $r = 0$ singularity as a $C^{1, \beta}$ curve, denoted $\mathcal{S}$, which we parameterize by $\mathcal{S} = \{ (u, v_*(u)): u_0 \leq u \leq v_*^{-1}(v_1) \}$. We then define $\bar{\mathcal{D}}$, $C_u$ and $C_v$ as in (\ref{item:emsfss_thm_i}) of Theorem \ref{thm:esfss}.
        \item \label{item:emsfss_thm_ii}
            \underline{Asymptotics for $\phi$:}
            There exists a $C^{0, \beta}$ function $\Psi(u, v)$ and a $C^{0, \beta, \log}$ function $\Xi(u, v)$ such that for $(u, v) \in \mathcal{D}$, \blue{$\Psi^2(u, v) \geq 1 + \alpha$ and}
            \begin{equation} \label{eq:emsfss_thm_phi}
                \phi(u, v) = \Psi(u, v) \cdot \log \left( \frac{r_0}{r} \right) + \Xi(u, v).
            \end{equation}
            We may extend $\Psi$ and $\Xi$ continuously to the singular boundary $\mathcal{S}$, and denote by $\Psi_{\infty}(p)$ and $\Xi_{\infty}(p)$ their values at $p \in \mathcal{S}$.
        \item \label{item:emsfss_thm_iii}
            \underline{Asymptotics for $\Omega^2$:} There exists a $C^{0, \beta, \log}$ function $\mathfrak{M}(u, v)$, continuously extendible as a positive function on $\bar{\mathcal{D}}$, such that we have the following asymptotics for the gauge-invariant quantity $4 \Omega^{-2} \partial_u r \partial_v r$:
            \begin{equation} \label{eq:emsfss_thm_lapse}
                4 \Omega^{-2} \partial_u r \partial_v r (u, v) = \mathfrak{M}(u, v) \cdot \left( \frac{r}{r_0} \right)^{-(\Psi^2 + 1)}.
            \end{equation}
            We denote by $\mathfrak{M}_{\infty}(p)$ the value of $\mathfrak{M}$ at $p \in \mathcal{S}$.
            
%            \textit{If} $\mathfrak{M}_{\infty}$, $\Psi_{\infty}$ and $\Xi_{\infty}$ can be realized as Lipschitz functions on $\mathcal{S}$, then they obey the following \textit{asymptotic momentum constraint} at $\mathcal{S}$:
%            \begin{equation} \label{eq:esmfss_thm_momentum}
%                \frac{d}{du} \log \mathfrak{M}_{\infty}(u) = 2 \Psi_{\infty}(u) \cdot \frac{d \Xi_{\infty}}{du} (u).
%            \end{equation}
        \item \label{item:emsfss_thm_iv}
            \underline{Estimates localized at a TIP}:
            Let $p = (u_p, v_*(u_p)) \in \mathcal{S}$, and consider its causal past $J^-(p) = \{ (u, v) \in \mathcal{D} : u \leq u_p, v \leq v_*(u_p) \}$. Then there exists a constant $D > 0$ such that for $(u, v) \in J^-(p)$:
            \begin{gather} \label{eq:emsfss_thm_tip_r_u}
                |- r \partial_u r (u, v) - C_u(p) | \leq D \cdot r^{\alpha}, 
                \\[0.5em] \label{eq:emsfss_thm_tip_r_v}
                |- r \partial_v r (u, v) - C_v (p) | \leq D \cdot r^{\alpha},
                \\[0.5em] \label{eq:emsfss_thm_tip_phi_u}
                |r^2 \partial_u \phi (u, v) - C_u(p) \cdot \Psi_{\infty} (p) | \leq D \cdot r^{\alpha},
                \\[0.5em] \label{eq:emsfss_thm_tip_phi_v}
                |r^2 \partial_v \phi (u, v) - C_v(p) \cdot \Psi_{\infty} (p) | \leq D \cdot r^{\alpha},
                \\[0.5em] \label{eq:emsfss_thm_tip_phi}
                \left| \phi(u, v) - \Psi_{\infty}(p) \cdot \log \left( \frac{r_0}{r} \right) - \Xi_{\infty} (p)\right| \leq D \cdot r^{\alpha} \log r^{-1},
                \\[0.5em] \label{eq:emsfss_thm_tip_lapse}
                \left| \Omega^2 - \frac{4 C_u(p) C_v(p)}{r_0^2} \cdot \mathfrak{M}_{\infty}^{-1}(p) \cdot \left (\frac{r}{r_0}\right)^{\Psi_{\infty}(p)^2 - 1} \right| \leq D \cdot r^{\Psi_{\infty}(p)^2 - 1 + \alpha} \log r^{-1}.
            \end{gather}
        \item \label{item:emsfss_thm_v}
            \underline{Precise blow-up estimates:}
            For $p \in \mathcal{S}$ and $J^-(p)$ as in (\ref{item:esfss_thm_iv}), we have the following blow-up rates for the Hawking mass and the Kretschmann scalar:
            \begin{gather} \label{eq:emsfss_thm_tip_hawkingmass}
                \left| \, 2m(u, v) - \frac{\mathfrak{M}_{\infty}(p) r_0}{2} \cdot \left( \frac{r_0}{r} \right)^{\Psi_{\infty}^2} \right| \leq D \cdot r^{- \Psi_{\infty}(p)^2 + \alpha} \log r^{-1},
                \\[0.5em] \label{eq:emsfss_thm_tip_kretschmann}
                \left| \, \mathrm{Riem}_{\alpha \beta \gamma \delta} \mathrm{Riem}^{\alpha \beta \gamma \delta} - \mathfrak{M}_{\infty}(p)^2 \cdot \left( \frac{r_0}{r} \right)^{2 (\Psi_{\infty}(p)^2 + 3)} \cdot \frac{4 ( 3 - 2 \Psi_{\infty}(p)^2 + 2 \Psi_{\infty}(p)^4) }{r_0^4} \right| \hspace{20em} \nonumber \\[-0.1em] \hspace{25em}\leq D \cdot r^{- 2 \Psi_{\infty}(p)^2 - 6 + \alpha} \log r^{-1}.
            \end{gather}
            We treat also the matter fields. For the scalar field $\phi$, we have the following gauge-invariant blow-up rate:
            \begin{equation} \label{eq:emsfss_thm_tip_phi_blow}
                \left| \nabla_{\alpha} \phi \nabla^{\alpha} \phi + \frac{\mathfrak{M}_{\infty} (p) \Psi_{\infty}(p)^2 }{r_0^2} \cdot \left( \frac{r_0}{r} \right)^{\Psi_{\infty}(p)^2 + 3}\right| \leq D \cdot r^{- \Psi_{\infty}(p)^2 - 3 + \alpha} \log r^{-1},
            \end{equation}
            while the electromagnetic $2$-form $F$ satisfies:
            \begin{equation} \label{eq:emsfss_thm_tip_f}
                F_{\mu \nu} F^{\mu \nu} = - \frac{2 Q^2}{r^4}.
            \end{equation}
    \end{enumerate}
\end{theorem}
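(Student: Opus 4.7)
The plan is to establish (\ref{item:emsfss_thm_i})--(\ref{item:emsfss_thm_v}) via a coupled bootstrap in which the a priori assumptions \eqref{eq:cond_r} and \eqref{eq:cond_phi} play two distinct quantitative roles: \eqref{eq:cond_r} supplies the relation $du \sim r\,dr$ (and symmetrically $dv \sim r\,dr$) along level sets of $v$ (resp.\ $u$), which lets us convert $u$- and $v$-integrations into $r$-integrations up to $\mathcal{S}$; whereas \eqref{eq:cond_phi}, which forces $\Psi^2 \geq 1+\alpha$, makes the charge-sourced integrals on the right-hand side of \eqref{eq:wave_r_u}--\eqref{eq:wave_r_v} convergent as $r\to 0$. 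First I would prove non-sharp upper bounds for $\Omega^2$, $\phi$, and the gauge-invariant derivatives $rL\phi$, $r\underline{L}\phi$ in Section~\ref{upperbounds} (in the spirit of \cite{AnZhang}), by integrating \eqref{eq:raych_u}--\eqref{eq:raych_v}, \eqref{eq:wave_r_u}--\eqref{eq:wave_r_v}, and \eqref{eq:wave_phi_u}--\eqref{eq:wave_phi_v}, using \eqref{eq:cond_r} and \eqref{eq:cond_phi} precisely to absorb the dangerous $Q^2/r^2$ term.

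Next I would extract the $\phi$-asymptotics of (\ref{item:emsfss_thm_ii}) and (\ref{item:emsfss_thm_iv}) from the identity
\begin{equation*}
    \partial_u(r^2\partial_v\phi) \;=\; -r\partial_v r\cdot\partial_u\phi \,+\, r\partial_u r\cdot\partial_v\phi \;=\; \tilde X\phi,\qquad \tilde X := -r\partial_v r\,\partial_u + r\partial_u r\,\partial_v,
\end{equation*}
where $\tilde X$ is tangent to the level sets of $r$ since $\tilde X r \equiv 0$. The crucial quantitative estimate, to be obtained in Section~\ref{scalarfield}, is $|\tilde X\phi|\lesssim r^{-2+\alpha}$; combined with \eqref{eq:cond_r} this makes $\tilde X\phi\, du$ integrable up to $\mathcal{S}$, so that integrating from $p\in\mathcal S$ down into $J^-(p)$ yields $r^2\partial_v\phi = C_v(p)\cdot\Psi_\infty(p) + O(r^\alpha)$, and symmetrically for $r^2\partial_u\phi$, with the consistency that both directions produce the \emph{same} $\Psi_\infty$. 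A further integration then gives \eqref{eq:emsfss_thm_phi} and \eqref{eq:emsfss_thm_tip_phi}. For (\ref{item:emsfss_thm_iii}) I would rewrite the Raychaudhuri equation \eqref{eq:raych_v} as
\begin{equation*}
    \partial_v\log\!\left(\frac{\Omega^2}{-\partial_v r}\right) \;=\; \frac{\partial_v r}{r}\left(\frac{r^2\partial_v\phi}{-r\partial_v r}\right)^{\!2} \;=\; \Psi_\infty^2\,\partial_v\log r \,+\, O(r^{\alpha-1})\,\partial_v r,
\end{equation*}
so that subtracting $\Psi_\infty^2\,\partial_v\log r$ from both sides makes the right-hand side integrable towards $\mathcal{S}$; performing the analogous step in $u$ and checking compatibility identifies $\mathfrak M_\infty$ and delivers \eqref{eq:emsfss_thm_lapse} and \eqref{eq:emsfss_thm_tip_lapse}.

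With (II)--(III) in hand I would close (\ref{item:emsfss_thm_i}) in Section~\ref{bklasymp}: inserting the asymptotics of $\Omega^2$ into \eqref{eq:wave_r_u} gives $\partial_u(-r\partial_v r) = O(r^{\Psi_\infty^2-1}) + O(r^{\Psi_\infty^2-3})$, the second term coming from $-Q^2\Omega^2/(4r^2)$. Changing variables via \eqref{eq:cond_r}, the worst integrand becomes $O(r^{\Psi_\infty^2-2})\,dr$, which is integrable precisely because \eqref{eq:cond_phi} ensures $\Psi_\infty^2>1$; this produces \eqref{eq:emsfss_thm_tip_r_u}--\eqref{eq:emsfss_thm_tip_r_v} and, together with \eqref{eq:cond_r}, upgrades $r^2$ to $C^{1,\beta}$. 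Finally (\ref{item:emsfss_thm_v}) follows by substitution: the Hawking mass formula \eqref{eq:hawkingmass} and (III) yield \eqref{eq:emsfss_thm_tip_hawkingmass}; the Kretschmann scalar in spherical symmetry admits an explicit expression in terms of $r$, $\Omega^2$ and their derivatives whose leading part under (I)--(III) matches \eqref{eq:emsfss_thm_tip_kretschmann}; the scalar field identity $\nabla_\alpha\phi\nabla^\alpha\phi = -4\Omega^{-2}\partial_u\phi\partial_v\phi$ combined with \eqref{eq:emsfss_thm_tip_phi_u}--\eqref{eq:emsfss_thm_tip_phi_v} gives \eqref{eq:emsfss_thm_tip_phi_blow}; and \eqref{eq:emsfss_thm_tip_f} is immediate from \eqref{eq:em_doublenull} and \eqref{eq:q_uv}.

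The main obstacle is the tight circularity: the $\phi$-asymptotics need $\Omega^2$-control, the $\Omega^2$-asymptotics need $\Psi_\infty$ from the $\phi$-asymptotics, and the $r$-asymptotics in turn need both, so all three levels must be propagated simultaneously. Within this bootstrap the most delicate analytic step will be the bound $|\tilde X\phi|\lesssim r^{-2+\alpha}$. Naively, from $|\partial_u\phi|,|\partial_v\phi|\sim r^{-2}$ one would only control $\int r^{-2}\,du \sim \int r^{-1}\,dr$, which diverges logarithmically; the gain comes from a cancellation in $\tilde X\phi$ reflecting the \emph{spatial} (rather than null) character of $\tilde X$ near $\mathcal{S}$, in line with the asymptotically velocity dominated heuristic. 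Propagating this bound in a manner symmetric between the $u$- and $v$-directions --- so that the two extractions of $\Psi_\infty$ agree, and the log-H\"older regularity of Section~\ref{statement.norms} extends cleanly to $\mathcal S$ --- is where the bulk of the work in Sections~\ref{scalarfield}--\ref{bklasymp} will lie.
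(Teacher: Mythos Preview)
Your outline is correct and matches the paper's approach closely: the vector field $\tilde X$ tangent to constant-$r$ hypersurfaces, the key improved estimate $|\tilde X\phi|\lesssim r^{-2+\alpha}$, the Raychaudhuri subtraction $\partial_v\log(\Omega^2/(-\partial_v r)) - \Psi^2\,\partial_v\log r$, and the substitution arguments for (\ref{item:emsfss_thm_v}) are exactly what the paper does (the paper works with the gauge-invariant $X = L - \underline{L}$, but this differs from your $\tilde X$ only by the bounded factor $(-r\partial_u r)(-r\partial_v r)$).

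One point is worth flagging: your concern about ``tight circularity'' requiring all three levels to be propagated simultaneously is overstated. The loop is broken cleanly at the very first step. Lemma~\ref{lem:upperbounds_lapse}(\ref{lapse.maxwell}) obtains the crucial upper bound $\Omega^2/(-\partial_v r)\lesssim r^{1+\alpha}$ \emph{directly} from the Raychaudhuri equation and the a~priori assumption \eqref{eq:cond_phi}, without any reference to the precise $\phi$- or $\Omega^2$-asymptotics; this already suffices to absorb the $Q^2/r^2$ term in all subsequent integrations. With that bound in hand, the argument is strictly sequential: upper bounds $\Rightarrow$ $X\phi$ estimate $\Rightarrow$ $\Psi,\Xi$ $\Rightarrow$ $\mathfrak M$ $\Rightarrow$ part~(\ref{item:emsfss_thm_i}). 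In particular, (\ref{item:emsfss_thm_i}) does not even require the sharp $\Omega^2$-asymptotics you invoke --- the paper obtains \eqref{eq:rurll} straight from Proposition~\ref{prop:upperbounds_higher_geometry}, i.e.\ from the Section~\ref{upperbounds} upper bounds alone. So no bootstrap is needed; the structure is a linear chain of implications.
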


\subsection{Corollary~\ref{cor:bkl}: A BKL-like expansion} \label{statement.bkl}

We now state Corollary~\ref{cor:bkl}, which interprets the strongly singular spacetimes of Theorem~\ref{thm:esfss} and Theorem~\ref{thm:emsfss} in the context of the BKL ansatz. In particular, we define a foliation using the level sets of a time function $\tau$, given by \eqref{eq:tau}, and show that this foliation obeys several BKL-like properties. We refer the reader to \cite{RingstromSilentGeometry} to justify how Parts~\ref{item:bkl_1} to \ref{item:bkl_5} indeed detail a correspondence between our foliation and the BKL picture.

Note that throughout Corollary~\ref{cor:bkl}, we make reference to error terms of the form $O(r^{\alpha} \log r^{-1})$. To see that these are genuine lower order terms, note that via \eqref{eq:tau}, one has $O(r^{\alpha} \log r^{-1} ) = O( \tau^{\frac{2 \alpha}{\Psi^2 + 3}} \log \tau^{-1} )$. Hence these terms may indeed be interpreted as error terms in the BKL picture.

\begin{corollary} \label{cor:bkl}
    Let $(\mathcal{M}, g)$ be a smooth, spherically symmetric, strongly singular spacetime solving the EMSFSS system, and described by either Theorem~\ref{thm:esfss} or Theorem~\ref{thm:emsfss}. Let $\Psi$, $\Xi$ and $\mathfrak{M}$ be described by (\ref{item:esfss_thm_ii}) and (\ref{item:esfss_thm_iii}) in these theorems. Now define the function $\tau: \mathcal{D} \to \R$ by
    \begin{equation} \label{eq:tau}
        \tau \coloneqq \frac{2 r_0}{\Psi^2 + 3} \cdot \mathfrak{M}^{-1/2} \cdot \left( \frac{r}{r_0} \right)^{\frac{\Psi^2 + 3}{2}}.
    \end{equation}
    Now, foliate the spacetime $(\mathcal{M}, g)$ by hypersurfaces of constant $\tau$, denoted $S_{\tau}$. \blue{Let $e_0 = n^{-1} \nabla \tau$ be the (future-directed) normal to the foliation, where $n = (-g(\nabla \tau, \nabla \tau))^{1/2}$ is the ADM lapse, and let $k_{ij}$ be the associated second fundamental form, with $k(X, Y) = g(\nabla_X e_0, Y)$.} Then, there exists some $\tau_0$ such that:
    \begin{enumerate}[1.]
        \item \label{item:bkl_1}
            The spacetime $\mathcal{M} \cap \{ \tau \leq \tau_0 \}$ is globally hyperbolic and foliated by the spacelike hypersurfaces $S_{\tau}$. Furthermore, the time-function $\tau$ is \underline{asymptotically normalized} in the sense that
            \begin{equation} \label{eq:propertime}
                \left| g( \nabla \tau, \nabla \tau ) + 1 \right| \leq D \, r^{\alpha} \log r^{-1}.
            \end{equation}

        \item \label{item:bkl_2}
            The foliation is also \underline{asymptotically CMC (constant mean curvature)} in the sense that the mean curvature \blue{$\tr k$ of the leaves $S_{\tau}$ of} the foliation obeys
            \begin{equation} \label{eq:cmc}
                \left| \tau \, \blue{\tr k \, +}\, 1 \right| \leq D \, r^{\alpha} \log r^{-1}.
            \end{equation}

        \item \label{item:bkl_3}
            Consider an orthonormal frame as follows: let $e_0 = n^{-1} \nabla \tau$ \blue{be as above}, let $e_2 = \frac{1}{r} \partial_{\theta}$ and $e_3 = \frac{1}{r \sin \theta}\partial_{\blue{\varphi}}$ be tangent to the $2$-spheres of symmetry, and finally choose $e_1$ to complete the orthonormal basis with direction chosen such that $g(e_1, \partial_v )< 0$. 

            Then the frame is approximately parallel transported in \blue{the sense that there exist functions $c_i: \mathcal{D} \to \R$, with $\tau c_i$ vanishing at the rate $|\tau c_i| \leq D \, r^{\alpha} \log r^{-1}$, such that}
            \begin{equation} \label{eq:parallel}
                \nabla_{e_0} e_i = c_i e_0. % \quad \text{ where }c_i\text{ obeys } |\tau c_i| \leq D \, r^{\alpha} \log r^{-1}.
            \end{equation}

        \item \label{item:bkl_4}
            For $k_{ij}$ the second fundamental form \blue{as defined above}, let \blue{$\mathcal{K}_i^{\phantom{i}j} \coloneqq (\tr k)^{-1} k_i^{\phantom{i}j} $} denote the respective expansion-normalized Weingarten map. Then the eigenvectors of $\mathcal{K}_i^{\phantom{i}j}$ are exactly the frame vectors $e_1$, $e_2$, $e_3$, and their respective eigenvalues $p_1$, $p_2$ and $p_3$ satisfy
            \begin{equation} \label{eq:weingarten}
                p_1 = \frac{\Psi^2 - 1}{\Psi^2 + 3} + O(r^{\alpha}\log r^{-1}), \quad p_2 = p_3 = \frac{2}{\Psi^2 + 3} + O(r^{\alpha} \log r^{-1}).
            \end{equation}

        \item \label{item:bkl_5}
            Finally, the scalar field has the following logarithmic behavior:
            \begin{equation} \label{eq:pphi}
                \phi = \frac{2 \Psi}{\Psi^2 + 3} \cdot \log \tau^{-1} + O(1), \qquad - \tau e_0 \phi = \frac{2 \Psi}{\Psi^2 + 3} + O(r^{\alpha} \log r^{-1}).
            \end{equation}
    \end{enumerate}
\end{corollary}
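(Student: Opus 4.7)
The plan is to localize the analysis to the causal past $J^-(p)$ of a singular point $p \in \mathcal{S}$, where part~(IV) of Theorems~\ref{thm:esfss} and~\ref{thm:emsfss} furnishes pointwise leading-order asymptotics with explicit polynomial remainders. Within $J^-(p)$ I would first replace the H\"older-continuous quantities $\Psi(u,v)$ and $\mathfrak{M}(u,v)$ appearing in \eqref{eq:tau} by their boundary values $\Psi_\infty(p)$ and $\mathfrak{M}_\infty(p)$, at the cost of an $O(r^\alpha)$ error, by the continuity statements of parts (II) and (III). Denoting the resulting smoothed time function by $\tilde\tau_p$, all five claims of Corollary~\ref{cor:bkl} can be interpreted as pointwise asymptotic statements for derivatives of $\tilde\tau_p$, with the errors from the substitution absorbed into the stated $O(r^\alpha \log r^{-1})$ remainder.

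With this setup in place, the normalization claim \eqref{eq:propertime} reduces to the computation
\[
    g(\nabla\tilde\tau_p, \nabla\tilde\tau_p) = -\frac{4}{\Omega^2}\, \partial_u \tilde\tau_p \, \partial_v \tilde\tau_p,
\]
using that the dominant contribution to $\partial_u \tilde\tau_p$ and $\partial_v \tilde\tau_p$ comes from differentiating the factor $(r/r_0)^{(\Psi_\infty^2+3)/2}$, which yields $\tilde\tau_p \cdot \tfrac{\Psi_\infty^2+3}{2} \cdot \partial_u r/r$ (respectively with $\partial_v r$). Substituting \eqref{eq:emsfss_thm_tip_r_u}--\eqref{eq:emsfss_thm_tip_lapse} produces a telescopic cancellation: the exponent $(\Psi^2+3)/2$ in the definition of $\tau$ is chosen precisely so that the $r$-dependence of $\Omega^{-2} \tilde\tau_p^{\,2}\, (\partial_u r)(\partial_v r)\, r^{-2}$ cancels at leading order, and one verifies that the resulting constant is $1/4$, giving $g(\nabla\tilde\tau_p, \nabla\tilde\tau_p) = -1 + O(r^\alpha \log r^{-1})$.

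For the foliation properties \eqref{eq:cmc}, \eqref{eq:parallel} and \eqref{eq:weingarten}, I would introduce the orthonormal frame directly in the double-null chart. The vectors $e_0 = n^{-1}\nabla\tau$ and $e_1$ (determined by orthogonality, unit length, and the stated orientation) both lie in the $(u,v)$-plane with explicit components in terms of $\Omega^{-2}\partial_u\tau$, $\Omega^{-2}\partial_v\tau$, while $e_2, e_3$ are the standard unit tangent vectors to the spheres of symmetry. The second fundamental form $k_{ij} = g(\nabla_{e_i}e_0, e_j)$ splits into an angular block $k_{22} = k_{33} = r^{-1} e_0(r)$, computed directly from $e_0(r)$, and a radial component $k_{11}$, computed via the connection coefficients in the $(u,v)$-plane. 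Inserting the leading-order values of $-r\partial_u r, -r\partial_v r, \Omega^2$ yields $\tau k_{22} = \tau k_{33} \to \tfrac{2}{\Psi_\infty^2+3}$ and $\tau k_{11} \to \tfrac{\Psi_\infty^2-1}{\Psi_\infty^2+3}$, proving \eqref{eq:weingarten} (after dividing by $k = k_{11} + 2k_{22}$) and \eqref{eq:cmc} (after summing). For \eqref{eq:parallel}, I would use $c_i = g(e_i, \nabla_{e_0} e_0)$ and note that the acceleration $\nabla_{e_0} e_0 = -\nabla_\perp \log n$ is negligible because $n = 1 + O(r^\alpha \log r^{-1})$ by \eqref{eq:propertime}, once one rescales coordinate to orthonormal derivatives. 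Finally, \eqref{eq:pphi} is immediate: inverting \eqref{eq:tau} gives $\log(r/r_0) = \tfrac{2}{\Psi_\infty^2+3}\log(\tau/\tau_*(p)) + O(1)$, and substituting into \eqref{eq:esfss_thm_phi} gives the first half; the second half follows from $e_0\phi = n^{-1}\Omega^{-2}(\partial_v\tau \cdot \partial_u\phi + \partial_u\tau \cdot \partial_v\phi)$ combined with \eqref{eq:emsfss_thm_tip_phi_u}--\eqref{eq:emsfss_thm_tip_phi_v}.

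\textbf{The main obstacle} throughout is the bookkeeping: one must accumulate the errors arising from (i) substituting $\Psi \mapsto \Psi_\infty(p)$ and $\mathfrak{M} \mapsto \mathfrak{M}_\infty(p)$ globally on $J^-(p)$, (ii) using the leading-order estimates of part~(IV), and (iii) computing derivatives in the orthonormal frame, all while retaining uniform $O(r^\alpha \log r^{-1})$ control. A secondary technical issue is that the level sets $\{S_\tau\}$ must be verified to be genuinely spacelike and to foliate $\mathcal{D} \cap \{\tau \leq \tau_0\}$ globally, not merely pointwise within a single TIP; this requires choosing $\tau_0$ small enough that \eqref{eq:propertime} holds uniformly throughout this region.
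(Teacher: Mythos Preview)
Your outline is reasonable for Part~\ref{item:bkl_1} and the first half of Part~\ref{item:bkl_5}: these require only first derivatives of $\tau$, and the substitution $\Psi \mapsto \Psi_\infty(p)$, $\mathfrak{M} \mapsto \mathfrak{M}_\infty(p)$ is controlled by the bounds $|L\Psi|,|\underline{L}\Psi| \lesssim r^{-2+\alpha}$ and $|L\log\mathfrak{M}|,|\underline{L}\log\mathfrak{M}| \lesssim r^{-2+\alpha}\log r^{-1}$ implicit in the proofs of Propositions~\ref{prop:scalarfield_asymptotics} and~\ref{prop:bklasymp_lapse}.

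The gap is in Parts~\ref{item:bkl_2}--\ref{item:bkl_4}. Since $e_0$ is built from $\nabla\tau$, the quantities $k_{11} = g(\nabla_{e_1}e_0,e_1)$ and $c_1 = g(\nabla_{e_0}e_0,e_1)$ involve \emph{second} derivatives of $\tau$: writing $e_0,e_1$ in terms of $L/(L\tau)$ and $\underline{L}/(\underline{L}\tau)$, one must evaluate expressions such as $L\log|L\tau|$ and $\underline{L}\log|L\tau|$. These in turn require bounds on $LL\Psi$, $L\underline{L}\Psi$, $LL\log\mathfrak{M}$, etc., which are \emph{not} contained in Theorems~\ref{thm:esfss} or~\ref{thm:emsfss}. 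The paper supplies them separately in Proposition~\ref{prop:phi3} (by commuting the wave equation for $X\phi$ once more and invoking third-order bounds on $\phi$ and $r^2$), and then packages the result as the key estimate $\bigl|\tilde{L}_2\log|\tilde{L}_1\tau| + \tfrac{\Psi^2-1}{2}r^{-2}\bigr| \lesssim r^{-2+\alpha}\log r^{-1}$ in Corollary~\ref{cor:tau}. Without this input you cannot close the computation of $k_{11}$.

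Your freezing strategy does not circumvent this. The level sets of $\tilde\tau_p$ are simply level sets of $r$, which are genuinely different hypersurfaces from $S_\tau$; to argue that their second fundamental forms agree to the stated order you must control second derivatives of $\tau - \tilde\tau_p$, which is exactly the missing higher-order information on $\Psi$ and $\mathfrak{M}$. Likewise, your argument for Part~\ref{item:bkl_3}---that $\nabla_{e_0}e_0 = -\nabla_\perp\log n$ is negligible because $n = 1 + O(r^\alpha\log r^{-1})$---is a non-sequitur: smallness of $n-1$ gives no control on $e_1(n)$ without a derivative estimate on the error. In the paper's computation the smallness of $c_1$ comes from a cancellation among four connection coefficients, each individually of size $\tau^{-1}$, and verifying that cancellation again rests on Corollary~\ref{cor:tau}.
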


%\subsection{Stability theorems} \label{statement.stability}

%auto-ignore
\section{Upper bounds} \label{upperbounds}

In this section, we recover the upper bounds of \cite{AnZhang}, and also generalize their results to the EMSFSS model with $Q \neq 0$. Throughout Section~\ref{upperbounds}, $D$ will represent a constant depending on the initial data at $C_0 \cup \underline{C}_0$, as well as the constants $R_0, \alpha > 0$ in (\ref{eq:cond_r}), (\ref{eq:cond_phi}) in the case $Q \neq 0$.

\subsection{Initial bounds on \texorpdfstring{$\Omega^2$}{Ω2} using Raychaudhuri} \label{upperbounds.lapse}

We begin with an upper bound for $\Omega^2$ that will be crucial for integrability purposes. Note that at this point, we must already assume (\ref{eq:cond_r}) and (\ref{eq:cond_phi}) when $Q \neq 0$. One may compare this to Section 3 of \cite{AnZhang}.

\begin{lemma} \label{lem:upperbounds_lapse}
    For $(u, v) \in \mathcal{D}$, one has the following upper bounds for $\Omega^2$ and related quantities:
    \begin{enumerate}[(A)]
        \item
            For the ESFSS system, one has
            \begin{equation} \label{eq:upperbound_lapse_uv}
                \max \left \{ \frac{\Omega^2}{- \partial_u r}(u, v), \frac{\Omega^2}{- \partial_v r}(u, v) \right \} \leq D
            \end{equation}
            as well as
            \begin{equation} \label{eq:upperbound_lapse}
                \Omega^2(u, v) \leq \frac{D}{r(u, v)}.
            \end{equation}
        \item \label{lapse.maxwell}
            Now consider the EMSFSS system with $Q \neq 0$. Assuming (\ref{eq:cond_phi}), one has the stronger bound
            \begin{equation} \label{eq:upperbound_lapse_q_uv}
                \max \left \{ \frac{\Omega^2}{- \partial_u r}(u, v), \frac{\Omega^2}{- \partial_v r}(u, v) \right \} \leq D \, r(u, v)^{1 + \alpha}.
            \end{equation}
            Assuming also (\ref{eq:cond_r}), one further finds, with $r_0 = r(u_0, v_0)$:
            \begin{equation} \label{eq:upperbound_lapse_q}
                \Omega^2 (u, v) \leq D \, r(u, v)^{\alpha} \leq D \, r_0^{\alpha}.
            \end{equation}
    \end{enumerate}
\end{lemma}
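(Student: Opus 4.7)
The plan is to exploit the monotonicity inherent in the Raychaudhuri equations (\ref{eq:raych_u})--(\ref{eq:raych_v}), combined with the two-sided trapped-region bounds on $-r\partial_u r$ and $-r\partial_v r$ from Lemma \ref{lem:setup_esfss}. Since we work throughout in the trapped region, $-\partial_v r > 0$ and I may rewrite (\ref{eq:raych_v}) in the logarithmic form
\begin{equation*}
    \partial_v \log \left( \frac{\Omega^2}{- \partial_v r} \right) = - \frac{r (\partial_v \phi)^2}{- \partial_v r} \leq 0,
\end{equation*}
which displays $\frac{\Omega^2}{-\partial_v r}$ as monotone decreasing in $v$. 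A symmetric identity, obtained from (\ref{eq:raych_u}), gives that $\frac{\Omega^2}{-\partial_u r}$ is monotone decreasing in $u$.

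For part (A), the gauge (\ref{eq:gauge1}) gives $\frac{\Omega^2}{-\partial_u r} = \Omega^2$ on $\underline{C}_0$ and $\frac{\Omega^2}{-\partial_v r} = \Omega^2$ on $C_0$, while the trapped assumption on $C_0 \cup \underline{C}_0$ guarantees that $-\partial_v r$ and $-\partial_u r$ are bounded below by a positive data-constant on the respective initial cones. Hence both quantities in (\ref{eq:upperbound_lapse_uv}) are bounded by data on the initial hypersurfaces, and the monotonicity above propagates these bounds to all of $\mathcal{D}$. The pointwise estimate (\ref{eq:upperbound_lapse}) then follows by writing $\Omega^2 = \frac{\Omega^2}{-\partial_v r} \cdot (-\partial_v r)$ and applying the upper bound $-r\partial_v r \leq R_1$ provided by Lemma \ref{lem:setup_esfss}.

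For part (B), I would upgrade the above monotonicity argument by feeding in the subcriticality hypothesis (\ref{eq:cond_phi}). The key algebraic identity is
\begin{equation*}
    \frac{r (\partial_v \phi)^2}{- \partial_v r} = (r^2 L \phi)^2 \cdot \frac{- \partial_v r}{r},
\end{equation*}
which, together with $(r^2 L \phi)^2 \geq 1 + \alpha$, transforms the logarithmic Raychaudhuri identity into
\begin{equation*}
    \partial_v \log \left( \frac{\Omega^2}{(- \partial_v r) \cdot r^{1 + \alpha}} \right) \leq 0.
\end{equation*}
Integrating from $v_0$ to $v$, and noting that $r|_{\underline{C}_0}$ is bounded below by a positive data-constant, yields (\ref{eq:upperbound_lapse_q_uv}); the analogous computation in $u$ handles $\frac{\Omega^2}{-\partial_u r}$. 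Finally, (\ref{eq:upperbound_lapse_q}) follows from (\ref{eq:upperbound_lapse_q_uv}) by multiplying through by $-\partial_v r$ and invoking $-r\partial_v r \leq R_1$ (available under (\ref{eq:cond_r}) via Lemma \ref{lem:setup_esfss}), together with the observation that $r$ is monotone decreasing in both $u$ and $v$, so $r(u,v) \leq r_0$.

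I do not anticipate a substantive obstacle here: the argument is essentially algebraic manipulation of Raychaudhuri once the correct logarithmic combination is found. The one observation worth emphasizing is that the exponent $1+\alpha$ appearing in the improved bound is precisely the one for which $(r^2 L \phi)^2 \geq 1+\alpha$ makes the $v$-derivative of $\log(\Omega^2 / ((-\partial_v r) r^{1+\alpha}))$ non-positive, so the gain obtained in part (B) tracks the subcriticality gap optimally.
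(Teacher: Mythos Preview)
Your argument is correct and follows essentially the same route as the paper: both use the logarithmic form of the Raychaudhuri equations to obtain monotonicity of $\frac{\Omega^2}{-\partial_v r}$ (respectively $\frac{\Omega^2}{-\partial_u r}$), and in part~(B) both observe that the subcriticality hypothesis (\ref{eq:cond_phi}) upgrades this to monotonicity of $\frac{\Omega^2}{(-\partial_v r)\, r^{1+\alpha}}$, after which the bounds on $-r\partial_v r$ from Lemma~\ref{lem:setup_esfss} finish the job. One small remark: when you integrate $\partial_v \log\bigl(\frac{\Omega^2}{(-\partial_v r) r^{1+\alpha}}\bigr) \leq 0$ from $v_0$, the resulting bound lives on $\underline{C}_0$, where the gauge fixes $-\partial_u r = 1$ rather than $-\partial_v r = 1$; your claim that the initial value is a data quantity is still correct (since $-\partial_v r$ on $\underline{C}_0$ is determined from data via (\ref{eq:wave_r_u})), but the gauge normalization you cite is for the other null derivative.
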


\begin{proof}
    (\ref{eq:upperbound_lapse_uv}) follows trivially from the Raychaudhuri equations (\ref{eq:raych_u}), (\ref{eq:raych_v}). Indeed, these imply that $- \Omega^{-2} \partial_u r$ and $- \Omega^{-2} \partial_v r$ are nondecreasing in $u$ and $v$ respectively, so the quantities on the left hand side of (\ref{eq:upperbound_lapse_uv}) are bounded by their respective maxima on $C_0 \cup \underline{C}_0$. To get (\ref{eq:upperbound_lapse}), simply combine (\ref{eq:upperbound_lapse_uv}) with the lower bounds on $-r \partial_u r, -r \partial_v r$ from Lemma~\ref{lem:setup_esfss}.

    Moving on to the case (\ref{lapse.maxwell}) with $Q \neq 0$, we recast the Raychaudhuri equations as transport equations for the following logarithmic quantities:
    \begin{equation} \label{eq:raych_u_log}
        \partial_u \log \left( \frac{\Omega^2}{ - \partial_u r} \right) = \frac{\partial_u r}{r} \cdot \left( \frac{r \partial_u \phi}{\partial_u r} \right)^2 = \frac{\partial_u r}{r} \cdot (r^2 \underline{L} \phi)^2.
    \end{equation}
    \begin{equation} \label{eq:raych_v_log}
        \partial_v \log \left( \frac{\Omega^2}{ - \partial_v r} \right) = \frac{\partial_v r}{r} \cdot \left( \frac{r \partial_v \phi}{\partial_v r} \right)^2 = \frac{\partial_v r}{r} \cdot (r^2 L \phi)^2.
    \end{equation}
    
    Now assume the lower bound (\ref{eq:cond_phi}). Since $\partial_u r, \partial_v r < 0$, (\ref{eq:cond_phi}) applied to (\ref{eq:raych_u_log}) and (\ref{eq:raych_v_log}) yields
    \begin{equation*}
        \partial_u \log \left( \frac{\Omega^2}{ - \partial_u r} \right) \leq \partial_u \log r \cdot (1 + \alpha),
    \end{equation*}
    \begin{equation*}
        \partial_v \log \left( \frac{\Omega^2}{ - \partial_v r} \right) \leq \partial_v \log r \cdot (1 + \alpha).
    \end{equation*}

    From this, the expression $\frac{\Omega^2}{- \partial_u r} \cdot r^{-(1 + \alpha)}$ is also monotonically decreasing in $u$. This observation, alongside the analogous observation with $v$ in place of $u$, asserts that
    \begin{equation*}
        \max \left \{ \frac{\Omega^2}{- \partial_u r}(u, v) \cdot r(u, v)^{- (1+ \alpha)} , \frac{\Omega^2}{- \partial_v r}(u, v) \cdot r(u, v)^{-(1 + \alpha)} \right \} \leq D.
    \end{equation*}
    This is exactly (\ref{eq:upperbound_lapse_q_uv}), and the final estimate (\ref{eq:upperbound_lapse_q}) follows from combining (\ref{eq:upperbound_lapse_q_uv}) and the lower bound (\ref{eq:cond_r}) on $-r \partial_u r$ and $ - r \partial_v r$.
\end{proof}

\subsection{Upper bounds for \texorpdfstring{$r^2 L \phi$}{r2L𝞍} and \texorpdfstring{$r^2 {\protect \underline{L}} \phi$}{r2L͟𝞍}} \label{upperbounds.phi}

We now apply the upper bounds on $\Omega^2$ found in Section \ref{upperbounds.lapse} to bound appropriately weighted derivatives of the scalar field $\phi$. We use a similar Gr\"onwall argument to that appearing in Section 6 of \cite{AnZhang}, though we instead opt to estimate the gauge-independent quantities $L \phi$ and $\underline{L} \phi$.

As in \cite{AnZhang}, we foliate our spacetime $\mathcal{D}$ using spacelike constant-$r$ hypersurfaces. Define, for $\tilde{r} \leq r_0 = r(u_0, v_0)$, the hypersurface
\begin{equation*}
    \Sigma_{\tilde{r}} \coloneqq \{ (u, v) \in \mathcal{D}: r(u,v) = \tilde{r} \}.
\end{equation*}

\begin{figure}[h]
    \centering
    \begin{tikzpicture}[scale=0.9]
        \path[fill=lightgray, opacity=0.5] (0, -4) -- (-4, 0) -- (-2.5, 1.5)
            .. controls (-0.9, 2) and (0.9, 1) .. (3, 1)
            -- (4, 0) -- (0, -4);

        \node (p) at (0, -4) [circle, draw, inner sep=0.5mm, fill=black] {};
        \node (r) at (4, 0) [circle, draw, inner sep=0.5mm, fill=black] {};
        \node (l) at (-4, 0) [circle, draw, inner sep=0.5mm, fill=black] {};
        \node (rs) at (3, 1) [circle, draw, inner sep=0.5mm] {};
        \node (ls) at (-2.5, 1.5) [circle, draw, inner sep=0.5mm] {};

        \node at (p) [below] {$(u_0, v_0)$};
        \node at (r) [below right] {$(u_0, v_1)$};
        \node at (l) [below left] {$(u_1, v_0)$};
        %\node at (rs) [above right] {$(v_*^{-1}(v_1), v_1)$};
        %\node at (ls) [above left] {$(u_1, v_*(u_0))$};

        \draw [thick] (p) -- (r)
            node [midway, below right] {$C_0$};
        \draw [thick] (p) -- (l)
            node [midway, below left] {$\underline{C}_0$};
        %\draw (r) -- (rs);
        %\draw (l) -- (ls);
        \draw [dashed] (ls) .. controls (-0.9, 2) and (0.9, 1) .. (rs)
            node [midway, above=0.5mm] {$r = 0$};
        \draw [thick, dotted] (-3, 1) .. controls (-1.5, 1) and (1.5, 0) .. (3.5, 0.5);
        \draw [thick, dotted] (-4, 0) .. controls (-1.7, 0) and (1.7, -1.0) .. (4, 0)
            node [midway, below] {\footnotesize $\Sigma_{\tilde{r}}$};
        \draw [thick, dotted] (-3, -1) .. controls (1, -1.3) .. (3, -1);
        \draw [thick, dotted] (-2, -2) .. controls (0.5, -2.2) .. (2, -2);
        \draw [thick, dotted] (-1, -3) .. controls (0.2, -3.1) .. (1, -3);
    \end{tikzpicture}

    \captionsetup{justification = centering}
    \caption{Foliation of $\mathcal{D}$ by constant-$r$ hypersurfaces $\Sigma_r$, used in a Gr\"onwall argument}
    \label{fig:char_ivp_foliation}
\end{figure}
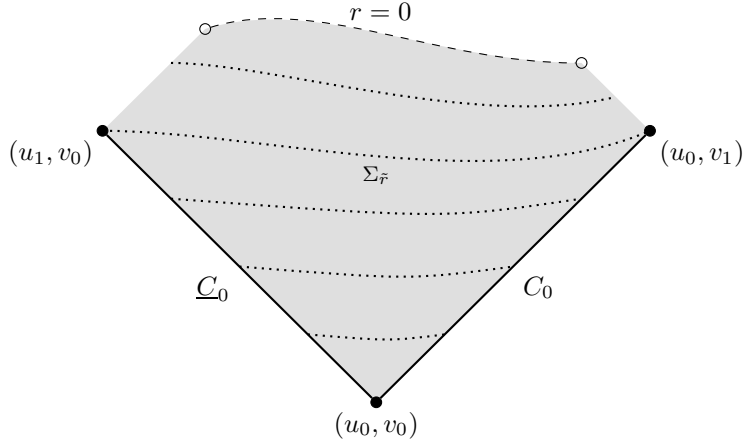

We next use the EMSFSS system to determine null transport equations for the weighted quantities $r L \phi$ and $r \underline{L} \phi$.
From the equations (\ref{eq:wave_phi_u}) and (\ref{eq:wave_r_u}), one finds the following propagation equations:
\begin{equation} \label{eq:wave_l_u}
    \partial_u ( r L \phi ) = \frac{ \partial_u \phi} { r } - \frac{\Omega^2}{-4 r \partial_v r} \left( 1 - \frac{Q^2}{r^2} \right) \cdot r L \phi,
\end{equation}
%The objective is to use this and the analogous equation for $r \underline{L} \phi$,
\begin{equation} \label{eq:wave_l_v}
    \partial_v ( r \underline{L} \phi ) = \frac{ \partial_v \phi} { r } - \frac{\Omega^2}{-4 r \partial_u r} \left( 1 - \frac{Q^2}{r^2} \right) \cdot r \underline{L} \phi.
\end{equation}
The following proposition uses these to produce the required upper bounds for $r^2 L \phi$ and $r^2 \underline{L} \phi$.

\begin{proposition} \label{prop:upperbounds_phi}
    Let $(u, v) \in \mathcal{D}$ for either (A) the ESFSS system with general initial data, or (B) the EMSFSS system supplemented with (\ref{eq:cond_r}) and(\ref{eq:cond_phi}), one has
    \begin{equation} \label{eq:upperbounds_phi}
        |r^2 L \phi (u, v)| \leq D, \qquad |r^2 \underline{L} \phi (u, v)| \leq D.
    \end{equation}
\end{proposition}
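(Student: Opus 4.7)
The plan is to derive a coupled Gr\"onwall inequality for the pair $(|rL\phi|, |r\underline{L}\phi|)$ on the constant-$r$ foliation $\Sigma_{\tilde r}$, yielding a bound $\Phi(\tilde r) := \sup_{\Sigma_{\tilde r}} \max\{|rL\phi|, |r\underline{L}\phi|\} \lesssim r_0/\tilde r$, from which (\ref{eq:upperbounds_phi}) follows upon multiplying by $r = \tilde r$.

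I first split the coefficient appearing in (\ref{eq:wave_l_u}) as $\frac{\Omega^2(1 - Q^2/r^2)}{-4r\partial_v r} = A_1 - A_2$, with $A_1 = \frac{\Omega^2}{-4r\partial_v r} \geq 0$ and $A_2 = \frac{\Omega^2 Q^2}{-4r^3\partial_v r} \geq 0$ in the trapped region ($A_2 \equiv 0$ in Case~A), and move $A_2 \cdot rL\phi$ to the source side. The integrating factor $I = \exp(\int_{u_0}^u A_1\, d\tilde u)$ is $\geq 1$ and non-decreasing in $u$, so $I(\tilde u)/I(u) \leq 1$ for $\tilde u \leq u$; integration along constant-$v$ from $(u_0, v)$ to $(u, v)$ then yields
\begin{equation*}
    |rL\phi|(u, v) \leq |rL\phi|(u_0, v) + \int_{u_0}^u \left( \frac{|\partial_u \phi|}{r} + A_2 \cdot |rL\phi| \right) d\tilde u,
\end{equation*}
with the initial term bounded by data via $\partial_v r|_{C_0} = -1$. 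The analogous integration of (\ref{eq:wave_l_v}) in $v$ gives the symmetric bound for $|r\underline{L}\phi|(u, v)$.

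Next, I change variables from $\tilde u$ to $r$ using $\partial_u r\, d\tilde u = dr$ together with the lower bound $-\partial_u r \geq R_0/r$ from Lemma~\ref{lem:setup_esfss} (which holds under (\ref{eq:cond_r}) in Case~B), converting the first source term to $\int_{\tilde r}^{r_0'} |r\underline{L}\phi|/r\, dr$ via the identity $|\partial_u \phi|/r = |r\underline{L}\phi| \cdot |\partial_u r|/r$. For the $A_2$ term in Case~B, Lemma~\ref{lem:upperbounds_lapse}(\ref{lapse.maxwell}) gives $A_2 \leq DQ^2 r^{\alpha - 2}$, so $\int A_2 |rL\phi|\, d\tilde u \leq D \int_{\tilde r}^{r_0'} r^{\alpha - 1} |rL\phi|\, dr$ after the change of variables. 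Taking suprema on $\Sigma_{\tilde r}$, I arrive at
\begin{equation*}
    \Phi(\tilde r) \leq D + \int_{\tilde r}^{r_0} \left( \frac{1}{r} + D r^{\alpha - 1} \right) \Phi(r)\, dr,
\end{equation*}
where the second kernel term is absent in Case~A. Since $\int_{\tilde r}^{r_0}(1/r + D r^{\alpha-1})\, dr \leq \log(r_0/\tilde r) + D r_0^\alpha/\alpha$, Gr\"onwall's inequality gives $\Phi(\tilde r) \leq D r_0/\tilde r$, and multiplying by $\tilde r$ yields $|r^2 L\phi|, |r^2\underline{L}\phi| \leq D r_0$.

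The main subtlety is that the Gr\"onwall bound for $\Phi(\tilde r)$ itself diverges like $\tilde r^{-1}$, and the boundedness of $|r^2 L\phi|$ is recovered only because the natural quantity carries one extra factor of $r$ that exactly cancels this divergence. This sharpness forces the splitting $A = A_1 - A_2$: the good sign of $A_1$ is essential to keep the coefficient in front of $\int \Phi/r\,dr$ exactly equal to $1$ (rather than a larger constant, which would yield $\Phi \lesssim \tilde r^{-c}$ with $c > 1$, incompatible with $|r^2 L\phi|$ bounded). Lemma~\ref{lem:upperbounds_lapse}(\ref{lapse.maxwell}) in turn controls the charge-induced $A_2$ as an integrable weight $r^{\alpha-1}$ in Case~B --- this is precisely where the subcriticality assumption (\ref{eq:cond_phi}) enters.
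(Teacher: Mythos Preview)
Your proof is correct and follows essentially the same Gr\"onwall argument on the constant-$r$ foliation as the paper. The one difference is that you absorb the $A_1$ part of the coefficient via an integrating factor, whereas the paper simply bounds the full coefficient $\frac{\Omega^2}{-4r\partial_u r\partial_v r}\left|1 - \frac{Q^2}{r^2}\right|$ in absolute value using Lemma~\ref{lem:upperbounds_lapse} (this gives $\lesssim 1$ in Case~A and $\lesssim r^{\alpha-1}$ in Case~B after changing variables to $r$, hence is integrable); your final-paragraph claim that the sign of $A_1$ is \emph{essential} is therefore overstated, though your treatment is of course still valid.
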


\begin{proof}
    Define the function $A(r)$ to be 
    \begin{equation*}
        A(\tilde{r}) = \sup_{\tilde{r} \leq \hat{r} \leq r_0} \max \, \left \{ \max_{(u, v) \in \Sigma_{\hat{r}}} |r L \phi(u, v) |, \max_{(u, v) \in \Sigma_{\hat{r}}} |r \underline{L} \phi (u, v)| \, \right \}.
    \end{equation*}
    The objective is to show that $A(r) \leq D / r$ for some constant $D$ depending only on data, $R_0$ and $\alpha$.

    We integrate (\ref{eq:wave_l_u}) in the interval $[u_0, u]$; one finds after a change of variables $\tilde{u} \mapsto r(\tilde{u}, v)$ that
    \begin{align*}
        r L \phi (u, v) 
        &= \int_{u_0}^{u} \left [ \frac{ \partial_u \phi} { r } (\tilde{u}, v) - \frac{\Omega^2}{-4 r \partial_v r} \left( 1 - \frac{Q^2}{r^2} \right) \cdot r L \phi (\tilde{u}, v) \right] \, d\tilde{u} + r L \phi (u_0, v), \\[1em]
        &= \int_{r(u_0, v)}^{r(u, v)} \left [ - \frac{1}{r} \cdot r \underline{L} \phi - \frac{\Omega^2}{- 4 r \partial_u r \partial_v r} \left( 1 - \frac{Q^2}{r^2} \right) \cdot r L \phi \right ] \, d r(\tilde{u}, v) + r L \phi (u_0, v).
    \end{align*}
    One deduces from this the integral inequality
    \begin{equation*}
        |r L \phi(u, v)| \leq \int^{r_0}_{r(u, v)} \left [ \frac{1}{\hat{r}} + \sup_{(\tilde{u}, \tilde{v}) \in \Sigma_{\hat{r}}} \frac{\Omega^2}{- 4 r \partial_u r \partial_v r} \cdot \left| \, 1 - \frac{Q^2}{r^2} \, \right| (\tilde{u}, \tilde{v}) \right] \cdot A(\hat{r}) \, d \hat{r} + \sup_{\tilde{v} \in [v_0, v_1]} |r L \phi (u_0, \tilde{v})|.
    \end{equation*}
    
    A similar integral inequality can be found for $|r \underline{L} \phi(u, v)|$. Taking the maximum over $(u, v) \in \Sigma_{\tilde{r}}$, we find
    \begin{equation} \label{eq:a_gronwall}
        A(\tilde{r}) \leq \int_{\tilde{r}}^{r_0} \left [ \frac{1}{\hat{r}} +  \sup_{(\tilde{u}, \tilde{v}) \in \Sigma_{\hat{r}}} \frac{\Omega^2}{- 4 r \partial_u r \partial_v r} \cdot \left| \, 1 - \frac{Q^2}{r^2} \, \right| (\tilde{u}, \tilde{v}) \right] \cdot A(\hat{r}) \, d \hat{r} + \sup_{C_0 \cup \underline{C}_0} \max \{ |r L \phi|, |r \underline{L} \phi| \}.
    \end{equation}
    The last term is controlled by data, and we now apply Gr\"onwall's inequality. The $1/\hat{r}$ appearing in this inequality is responsible for the $1/\tilde{r}$ growth of $A(\tilde{r})$, while the term involving $\Omega^2$ will be controlled by Lemma \ref{lem:upperbounds_lapse}:
    \begin{equation} \label{eq:upperbounds_lapse_gronwall}
        \frac{\Omega^2}{- 4 r \partial_u r \partial_v r} \cdot \left| \, 1 - \frac{Q^2}{r^2} \, \right| \lesssim
        \begin{cases}
            1 & \text{ if } Q = 0, \\
            r^{\alpha - 1} & \text{ if } Q \neq 0 \text{ but (\ref{eq:cond_r}) holds}.
        \end{cases}
    \end{equation}

    Since $\alpha > 0$, this expression is integrable in $r$ as $r \to 0$, and there exists some $\tilde{D}$, such that
    \begin{equation} \label{eq:a_gronwall2}
        \int^{r_0}_{\tilde{r}} \left [ \frac{1}{\hat{r}} +  \sup_{(\tilde{u}, \tilde{v}) \in \Sigma_{\hat{r}}} \frac{\Omega^2}{- 4 r \partial_u r \partial_v r} \cdot \left| \, 1 - \frac{Q^2}{r^2} \, \right| (\tilde{u}, \tilde{v}) \right] \, d\hat{r} \leq \log \left( \frac{r_0}{\tilde{r}} \right) + \tilde{D}.
    \end{equation}

    Hence applying Gr\"onwall to (\ref{eq:a_gronwall}) and using this estimate (\ref{eq:a_gronwall2}), we find that
    \begin{equation}
        A(\tilde{r}) \leq e^{\tilde{D}} \cdot \sup_{C_0 \cup \underline{C}_0} \max \{ |r L \phi|, |r \underline{L} \phi| \} \cdot \frac{r_0}{\tilde{r}}.
    \end{equation}
    Multiplying both sides by $\tilde{r}$, we have a uniform bound on $|r^2 L \phi(u, v)|$ and $|r^2 \underline{L} \phi(u, v)|$ as required.
\end{proof}

\begin{remark}
    Proposition \ref{prop:upperbounds_phi} provides upper bounds for the scalar field $\phi$ appearing in the ESFSS or EMSFSS systems. However, since only the wave equation \eqref{eq:wave_phi} is used in the Gr\"onwall argument, it follows from the proof that a similar estimate would \blue{hold} if $\phi$ were any spherically symmetric scalar function \blue{solving the wave equation \eqref{eq:eom_scalar_field}} in the \textit{fixed background} $(\mathcal{M}, g)$. 

    Note that the $r^{-2}$ growth of $L \phi, \underline{L} \phi$, and the associated logarithmic growth of $\phi$ itself, is ubiquitous in the study of scalar waves in backgrounds involving a Kasner-like spacelike singularity, see for instance \cite{FournodavlosSbierski, AlhoFournodavlosFranzen}.
    \blue{In the sequel, it will be useful} to control not only solutions of the scalar wave equation but also solutions of \textit{inhomogeneous} wave equations in the background $(\mathcal{M}, g)$, where the inhomogeneity has a prescribed blow-up rate. This will be the aim of the following Section \ref{upperbounds.wave}.
\end{remark}

\subsection{Upper bound estimates for general inhomogeneous wave equations} \label{upperbounds.wave}

We prove a result, analogous to Proposition \ref{prop:upperbounds_phi}, producing sharp upper bounds for scalar quantities $\psi$ satisfying a (possibly inhomogeneous) wave equation in the region $\mathcal{D}$. This will be useful when commuting the wave equation with various vector fields, as will be necessary in Section~\ref{upperbound.higher} and Section~\ref{scalarfield}.

\begin{proposition} \label{prop:upperbounds_wave}
    Let $\psi$ be a function on $\mathcal{D} \subset \R^2$ obeying the following inhomogeneous wave equation:
    \begin{equation} \label{eq:wave_psi}
        \partial_u \partial_v \psi = - \frac{\partial_u r \cdot \partial_v \psi}{r} - \frac{\partial_v r \cdot \partial_u \psi}{r} + F.
    \end{equation}
    Here $F$ is a function of $(u, v) \in \mathcal{D}$, such that for some $\gamma \in \R$ and $F_0 > 0$ fixed, $F$ satisfies:
    \begin{equation} \label{eq:wave_psi_inhomog}
        \left| \frac{F(u, v)}{ \partial_u r \, \partial_v r} \right| \leq F_0 \, r^{\gamma}(u, v).
    \end{equation}

    Then there exists a constant $C$, depending only upon the initial data for $(\Omega^2, r, \phi, Q)$ at $C_0 \cup \underline{C}_0$, the exponent $\gamma$, and the quantities $R_0$ and $\alpha$ in (\ref{eq:cond_r}) and (\ref{eq:cond_phi}) when $Q \neq 0$, such that we have the following estimates for the first derivatives $L\psi$ and $\underline{L} \psi$:
    \begin{equation} \label{eq:upperbounds_wave_l}
        |r^2 L \psi(u, v)| + |r^2 \underline{L} \psi (u, v)| \leq C \sup_{C_0 \cup \underline{C}_0} \max \{|r L \psi|, |r \underline{L} \psi| \} + {C F_0}\cdot 
        \begin{cases}
            1 & \text{ if } \gamma > - 2, \\
            \log \left( \frac{r_0}{r(u, v)} \right) & \text{ if } \gamma = - 2, \\
            r(u, v)^{\gamma + 2} & \text{ if } \gamma < - 2.
        \end{cases}
    \end{equation}

    Furthermore, the quantity $\psi$ itself obeys the following estimate:
    \begin{equation} \label{eq:upperbounds_wave_qty}
        |\psi (u, v)| \leq |\psi (u_0, v_0)| +  C \sup_{C_0 \cup \underline{C}_0} \max \{|r L \psi|, |r \underline{L} \psi| \} \log \left( \frac{r_0}{r(u, v)} \right) + {C F_0}\cdot 
        \begin{cases}
            \log \left( \frac{r_0}{r(u, v)} \right) & \text{ if } \gamma > - 2, \\
            \log^2 \left( \frac{r_0}{r(u, v)} \right) & \text{ if } \gamma = - 2, \\
            r(u, v)^{\gamma + 2} & \text{ if } \gamma < - 2.
        \end{cases}
    \end{equation}
\end{proposition}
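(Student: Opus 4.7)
The plan is to parallel the proof of Proposition~\ref{prop:upperbounds_phi}, carefully tracking the additional contribution from $F$. Substituting the inhomogeneous wave equation (\ref{eq:wave_psi}) into the derivation of (\ref{eq:wave_l_u}) produces
\begin{equation*}
    \partial_u (r L \psi) = \frac{\partial_u \psi}{r} - \frac{\Omega^2}{-4r \partial_v r} \left(1 - \frac{Q^2}{r^2}\right) \cdot rL\psi + \frac{F}{-\partial_v r},
\end{equation*}
and the analogous equation with $u$ and $v$ swapped for $r\underline L \psi$. I then integrate along constant-$v$ (resp.\ constant-$u$) null rays and change variables via $\tilde u \mapsto r(\tilde u, v)$. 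Defining $A(\tilde r)$ exactly as in the proof of Proposition~\ref{prop:upperbounds_phi}, the new forcing contribution becomes
\begin{equation*}
    \left| \int_{u_0}^u \frac{F}{-\partial_v r}\, d\tilde u \right| = \left| \int_{r(u,v)}^{r(u_0, v)} \frac{F}{\partial_u r\, \partial_v r}\, dr \right| \leq F_0 \int_{r(u,v)}^{r_0} \hat r^\gamma\, d\hat r,
\end{equation*}
by the hypothesis (\ref{eq:wave_psi_inhomog}). The resulting Gr\"onwall-type inequality reads $A(\tilde r) \leq B(\tilde r) + \int_{\tilde r}^{r_0} k(\hat r)\, A(\hat r)\, d\hat r$, with $\tilde r$-dependent inhomogeneity $B(\tilde r) := C_{\mathrm{data}} + F_0 \int_{\tilde r}^{r_0} \hat r^\gamma\, d\hat r$ and the same kernel $k(\hat r) = 1/\hat r + \Omega^2(1 - Q^2/r^2)/(-4r\partial_u r\partial_v r)$ as before.

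The main step is a careful Gr\"onwall estimate. By Lemma~\ref{lem:upperbounds_lapse} one still has $\int_{\tilde r}^{r_0} k \lesssim \log(r_0/\tilde r)$ and $\hat r\, k(\hat r) = O(1)$; applying the refined Gr\"onwall for non-constant inhomogeneity then gives
\begin{equation*}
    A(\tilde r) \lesssim B(\tilde r) + \frac{1}{\tilde r}\int_{\tilde r}^{r_0} B(\hat r)\, d\hat r.
\end{equation*}
Multiplying by $\tilde r$ and evaluating the iterated integral $\int_{\tilde r}^{r_0}\int_{\hat r}^{r_0} \hat s^\gamma\, d\hat s\, d\hat r$ case-by-case in $\gamma$ yields $\tilde r\, A(\tilde r) \lesssim C_{\mathrm{data}} + F_0 \cdot I_2(\tilde r)$, with $I_2(\tilde r) = O(1)$, $O(\log(r_0/\tilde r))$, $O(\tilde r^{\gamma+2})$ in the regimes $\gamma > -2$, $\gamma = -2$, $\gamma < -2$ respectively. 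Since $|r^2 L\psi|,\, |r^2 \underline L \psi| \leq r\, A(r)$, this gives (\ref{eq:upperbounds_wave_l}). The main obstacle is precisely this Gr\"onwall bookkeeping: a naive application using a constant majorant for $B$ would produce the wrong transition at $\gamma = -1$ rather than $\gamma = -2$; it is essential that the Gr\"onwall factor $r_0/\tilde r$ and the additional integration of $B$ combine to save one extra power of $\tilde r$.

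For the pointwise bound on $\psi$, I integrate $\partial_u \psi = -\partial_u r \cdot r\underline L \psi$ along a null ray from $(u_0, v_0)$ and change variables to obtain $|\psi(u,v) - \psi(u_0, v_0)| \lesssim \int_{r(u,v)}^{r_0} A(\hat r)\, d\hat r$. Using $A(\hat r) \lesssim C_{\mathrm{data}}/\hat r + F_0\, I_2(\hat r)/\hat r$, the data contribution gives the logarithm $C_{\mathrm{data}} \log(r_0/r(u,v))$, while the forcing contribution $F_0 \int I_2(\hat r)/\hat r\, d\hat r$ is $O(\log(r_0/r))$ for $\gamma > -2$, $O(\log^2(r_0/r))$ for $\gamma = -2$ (from integrating $\log(r_0/\hat r)/\hat r$), and $O(r^{\gamma+2})$ for $\gamma < -2$, recovering (\ref{eq:upperbounds_wave_qty}).
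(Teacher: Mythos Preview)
Your proposal is correct and follows essentially the same strategy as the paper. The only difference is packaging: where you invoke the standard Gr\"onwall formula for non-constant inhomogeneity and then simplify using $k(\hat r)\exp\bigl(\int_{\tilde r}^{\hat r} k\bigr) \lesssim 1/\tilde r$, the paper instead re-derives this by passing to a differential inequality for $B_\psi(\tilde r)=\int_{\tilde r}^{r_0}\beta A_\psi$ with integrating factor $\gamma(\tilde r)=\exp\bigl(\int_{\tilde r}^{r_0}\beta\bigr)$; the two computations are equivalent, and your observation that a naive constant-majorant Gr\"onwall would misplace the threshold at $\gamma=-1$ rather than $\gamma=-2$ is exactly the point the paper makes.
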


\begin{proof}
    The proof follows that of Proposition \ref{prop:upperbounds_phi}. We first rewrite (\ref{eq:wave_psi}) as propagation equations for the quantities $r L \psi$ and $r \underline{L} \psi$. One finds
    \begin{equation} \label{eq:wave_psi_l_u}
        \partial_u (r L \psi) = + \frac{\partial_u \psi}{r} - \frac{\Omega^2}{- 4 \partial_v r} \left( 1 - \frac{Q^2}{r^2} \right) \cdot r L \psi + \frac{F}{- \partial_v r},
    \end{equation}
    \begin{equation} \label{eq:wave_psi_l_v}
        \partial_v (r \underline{L} \psi) = + \frac{\partial_v \psi}{r} - \frac{\Omega^2}{- 4 \partial_u r} \left( 1 - \frac{Q^2}{r^2} \right) \cdot r \underline{L} \psi + \frac{F}{- \partial_u r}.
    \end{equation}

    One now integrates \eqref{eq:wave_psi_l_u} and \eqref{eq:wave_psi_l_v} from $C_0$ and $\underline{C}_0$ respectively. Integrating (\ref{eq:wave_psi_l_u}) and changing variables to $r (\tilde{u}, v)$ as before yields:
    \begin{multline} \label{eq:psi_integrated}
        r L \psi (u, v) 
        = \int_{r(u_0, v)}^{r(u, v)} \left [ - \frac{1}{r} \cdot r \underline{L} \psi - \frac{\Omega^2}{- 4 r \partial_u r \partial_v r} \left( 1 - \frac{Q^2}{r^2} \right) \cdot r L \psi \right ] \, d r(\tilde{u}, v) \\ + r L \psi (u_0, v)
        + \int_{r(u_0, v)}^{r(u, v)} \frac{F }{ - \partial_u r \partial_v r} \, dr(\tilde{u}, v).
    \end{multline}

    We wish to find an integral inequality similar to (\ref{eq:a_gronwall}). To simplify notation, we define 
    \begin{equation} \label{eq:beta_r}
        \beta(\tilde{r}) = \frac{1}{\tilde{r}} + \sup_{(\tilde{u}, \tilde{v}) \in \Sigma_{\tilde{r}}} \frac{\Omega^2}{-4 r \partial_u r \partial_v r} \cdot \left| 1 - \frac{Q^2}{r^2} \right| (\tilde{u}, \tilde{v}).
    \end{equation}
    Due to (\ref{eq:upperbounds_lapse_gronwall}), there exist constants $D_0 < D_1$ depending only on the data $(r, \Omega^2, \phi, Q)$ at $C_0 \cup \underline{C}_0$ as before, such that we have the following bounds:
    \begin{equation} \label{eq:beta_bounds}
        \frac{D_0}{\tilde{r}} \leq \beta(r) \leq \frac{D_1}{\tilde{r}},
    \end{equation}
    \begin{equation} \label{eq:beta_integral_bounds}
        \frac{D_0 r_0}{\tilde{r}} \leq \gamma(\tilde{r}) \coloneqq \exp(\int^{r_0}_{\tilde{r}} \beta(\hat{r}) \, d \hat{r}) \leq \frac{D_1 r_0}{\tilde{r}}.
    \end{equation}

    We also use (\ref{eq:wave_psi_inhomog}) to bound the rightmost term appearing in (\ref{eq:psi_integrated}). To avoid having to distinguish cases further based on whether $\gamma \geq -1$, we write
    \begin{equation} \label{eq:F_bounds}
        \int_{r(u_0, v)}^{r(u, v)} \frac{F }{ - \partial_u r \partial_v r} \, dr(\tilde{u}, v) \leq C F_0 \, r(u, v)^{\min\{ \gamma + 1, - 1 / 2 \}}.
    \end{equation}

    We can now write down the inequality analogous to (\ref{eq:a_gronwall}) in Proposition~\ref{prop:upperbounds_phi}. Define
    \begin{equation*}
        A_{\psi}(\tilde{r}) = \sup_{\tilde{r} \leq \hat{r} \leq r_0} \max \, \left \{ \max_{(u, v) \in \Sigma_{\hat{r}}} |r L \psi(u, v) |, \max_{(u, v) \in \Sigma_{\hat{r}}} |r \underline{L} \psi (u, v)| \, \right \}.
    \end{equation*}
    Then using (\ref{eq:psi_integrated}) and the corresponding equation for $r \underline{L} \psi$, as well as (\ref{eq:F_bounds}), one finds
    \begin{equation} \label{eq:a_psi_gronwall}
        A_{\psi}(\tilde{r}) \leq \int^{r_0}_{\tilde{r}} \beta(\hat{r}) A_{\psi}(\hat{r}) \, d \hat{r} + P_0 + C F_0 \, \tilde{r}^{\min\{\gamma + 1, - 1/2 \}}.
    \end{equation}
    Here $P_0 = \sup_{C_0 \cup \underline{C}_0} \max \{ |r L \psi|, |r \underline{L} \psi| \}$ is the term arising from the initial data for $\psi$.

    Since the usual integral Gr\"onwall inequality applied to (\ref{eq:a_psi_gronwall}) would produce a weaker estimate than the required (\ref{eq:upperbounds_wave_l}), we instead use (\ref{eq:a_psi_gronwall}) to produce a differential inequality for the quantity
    \begin{equation*}
        B_{\psi}(\tilde{r}) = \int^{r_0}_{\tilde{r}} \beta(\hat{r}) A_{\psi}(\hat{r}) \, d \hat{r}.
    \end{equation*}
    From (\ref{eq:a_psi_gronwall}), we find the following inequality for $B_{\psi}(\tilde{r})$:
    \begin{equation*}
        \frac{d}{d\tilde{r}} B_{\psi}(\tilde{r}) = - \beta(\tilde{r}) A_{\psi}(\tilde{r}) \geq - \beta (\tilde{r}) B_{\psi}(\tilde{r}) - P_0 \, \blue{\beta(\tilde{r})} - C F_0 \, \beta(\tilde{r}) \, \tilde{r}^{\min\{\gamma + 1, - 1/2\}}.
    \end{equation*}

    Defining $\gamma(\tilde{r})$ as in (\ref{eq:beta_integral_bounds}), one thus finds
    \begin{equation} \label{eq:b_gamma}
        \frac{d}{d\tilde{r}} (\gamma^{-1}(\tilde{r}) B_{\psi}(\tilde{r})) \geq - P_0 \gamma^{-1}(\tilde{r}) \blue{\beta(\tilde{r})} - C F_0 \gamma^{-1}(\tilde{r}) \beta(\tilde{r}) \, \tilde{r}^{\min \{\gamma + 1, - 1/2 \}}.
    \end{equation}
    \blue{Next, we integrate \eqref{eq:b_gamma} backwards in $\tilde{r}$, and use that, from (\ref{eq:beta_bounds}) and (\ref{eq:beta_integral_bounds}), we have $\gamma^{-1}(\tilde{r}) \sim \tilde{r}$ and $\beta(\tilde{r}) \sim \tilde{r}^{-1}$, to get
    \[
        \tilde{r} B_{\psi}(\tilde{r}) \leq C \left( P_0 r_0 + \int^{r_0}_r F_0 \tilde{r}^{\min \{\gamma + 1, - 1/2\} } \, d \tilde{r} \right).
    \]
Thus, distinguishing cases based on the integrability of $r^{\min \{ \gamma+1, -1/2\}}$, we find that}
    \begin{equation} \label{eq:b_gronwall_estimate}
        \tilde{r} B_{\psi}(\tilde{r}) \leq 
        C P_0 \blue{r_0} + {C F_0}\cdot 
        \begin{cases}
            \blue{r_0^{\min\{\gamma+2, 1/2\}}} & \text{ if } \gamma > - 2, \\
            \log \left( \frac{r_0}{\tilde{r}} \right) & \text{ if } \gamma = - 2, \\
            \tilde{r}^{\gamma + 2} & \text{ if } \gamma < - 2.
        \end{cases}
    \end{equation}

    Inserting (\ref{eq:b_gronwall_estimate}) \blue{into} (\ref{eq:a_psi_gronwall}), \blue{it is evident that the} $\tilde{r} B_{\psi}(\tilde{r})$ in the left hand side of (\ref{eq:b_gronwall_estimate}) can be replaced by $\tilde{r} A_{\psi}(\tilde{r})$, at the cost of increasing the constant $C$. \blue{Upon also absorbing any powers of $r_0$ into this constant $C$}, this estimate is exactly the desired (\ref{eq:upperbounds_wave_l}).

    To find the final estimate (\ref{eq:upperbounds_wave_qty}) for $\psi$ itself, note that the vector fields $r L$ and $r \underline{L}$ are morally \blue{$\frac{d}{dr}$}-derivatives, taken in null directions. To make this concrete, given $G(r)$ any real function of $r$ and $\zeta: \mathcal{D} \subset \R^2$ a $C^1$ function satisfying $|r L \zeta (u, v)|, |r \underline{L}\zeta (u, v) | \leq G(r(u, v))$, then
    \begin{align}
        |\zeta(u, v)|
        &\leq |\zeta(u_0, v_0)| + |\zeta(u_0, v) - \zeta(u_0, v_0)| + |\zeta(u, v) - \zeta(u_0, v)|, \nonumber \\[0.5em]
        &\leq |\zeta(u_0, v_0)| + \int^v_{v_0} |\partial_v \zeta (u_0, \tilde{v})| \, d \tilde{v} + \int^u_{u_0} |\partial_u \zeta (\tilde{u}, v) | \, d \tilde{u}, \nonumber \\[0.5em]
        &= |\zeta(u_0, v_0)| + \int^{r(u_0, v)}_{r_0} |r L \zeta (u_0, \tilde{v})| \, dr (u_0,\tilde{v}) + \int^{r(u, v)}_{r(u_0, v)} |r \underline{L} \zeta (\tilde{u}, v)| \, dr(\tilde{u}, v), \nonumber \\[0.5em]
        &\leq |\zeta(u_0, v_0)| + \int^{r_0}_{r(u, v)} G(\tilde{r}) \, d \tilde{r}. \label{eq:zll}
    \end{align}
    Using (\ref{eq:upperbounds_wave_l}) and applying (\ref{eq:zll}) with $\zeta = \psi$, we deduce (\ref{eq:upperbounds_wave_qty}) as claimed.
\end{proof}

\subsection{Higher order estimates} \label{upperbound.higher}

As in Section 7 of \cite{AnZhang}, it will be important in later sections to control higher derivatives of the quantities $(r, \Omega^2, \phi)$. We start with the geometric quantities $r$ and $\Omega^2$:

\begin{proposition} \label{prop:upperbounds_higher_geometry}
%    For $\Omega^2(u, v)$ one finds the following estimates.
    We have the following estimates for derivatives of $\Omega^2(u, v)$:
    \begin{enumerate}[(A)]
        \item
            For the ESFSS system, there exists a constant $D$ such that
            \begin{equation} \label{eq:upperbounds_lapse_der}
                |\partial_u \Omega^2| \leq \frac{D}{r^3(u, v)}, \qquad
                |\partial_v \Omega^2| \leq \frac{D}{r^3(u, v)}.
            \end{equation}
        \item
            For the EMSFSS system supplemented with \ref{eq:cond_r}) and (\ref{eq:cond_phi}), we have instead that:%$D$ depending on $R_0, \alpha$ as well as the data such that we have
            \begin{equation} \label{eq:upperbounds_lapse_der_em}
                |\partial_u \Omega^2| \leq \frac{D}{r^{4 - \alpha} (u, v)}, \qquad
                |\partial_v \Omega^2| \leq \frac{D}{r^{4 - \alpha} (u, v)}.
            \end{equation}
    \end{enumerate}

    Moving onto estimates for derivatives of $r^2(u, v)$, we have
    \begin{enumerate}[(A)]
        \item
            For the ESFSS system, with $D$ as above, we have
            \begin{equation} \label{eq:upperbounds_rr_der}
                |\partial_u ( - r \partial_v r)| = |\partial_v ( - r \partial_u r)| \leq \frac{D}{r(u, v)},
            \end{equation}
            \begin{equation} \label{eq:upperbounds_rr_der2}
                |\partial_u ( - r \partial_u r)| \leq \frac{D}{r(u, v)}, \qquad
                |\partial_v ( - r \partial_v r)| \leq \frac{D}{r(u, v)}.
            \end{equation}
        \item
            For the EMSFSS system, with $D$ as above, we have
            \begin{equation} \label{eq:upperbounds_rr_der_em}
                |\partial_u ( - r \partial_v r)| = |\partial_v ( - r \partial_u r)| \leq \frac{D}{r^{2 - \alpha}(u, v)},
            \end{equation}
            \begin{equation} \label{eq:upperbounds_rr_der2_em}
                |\partial_u ( - r \partial_u r)| \leq \frac{D}{r^{2 - \alpha}(u, v)}, \qquad
                |\partial_v ( - r \partial_v r)| \leq \frac{D}{r^{2 - \alpha}(u, v)}.
            \end{equation}
        \item
            In both cases, one also has
            \begin{equation} \label{eq:upperbounds_r_der}
                |\partial_u \partial_u r| \leq \frac{D}{r^3(u, v)}, \qquad
                |\partial_u \partial_v r| \leq \frac{D}{r^3(u, v)}, \qquad
                |\partial_v \partial_v r| \leq \frac{D}{r^3(u, v)}.
            \end{equation}
    \end{enumerate}
\end{proposition}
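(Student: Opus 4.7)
The plan is to derive the estimates in a specific order, so that each feeds into the next via commuted Einstein equations.

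\textbf{Step 1 (direct estimates).} The bound $|\partial_u\partial_v r| \leq D/r^3$ from \eqref{eq:upperbounds_r_der}, and the bounds on $\partial_u(-r\partial_v r)$ and $\partial_v(-r\partial_u r)$ in \eqref{eq:upperbounds_rr_der}, \eqref{eq:upperbounds_rr_der_em}, are read off directly from \eqref{eq:wave_r}, \eqref{eq:wave_r_u}, \eqref{eq:wave_r_v}, by substituting the upper bounds on $\Omega^2$ from Lemma~\ref{lem:upperbounds_lapse} together with the pointwise bounds $R_0 \leq -r\partial_u r, -r\partial_v r \leq R_1$ coming from \eqref{eq:cond_pre_r}, \eqref{eq:cond_pre_r_}.

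\textbf{Step 2 (first derivatives of $\Omega^2$).} I will bound $\partial_v\Omega^2$ (and, by a symmetric argument, $\partial_u\Omega^2$) by writing $\partial_v\Omega^2 = \Omega^2 \cdot \partial_v\log\Omega^2$ and integrating \eqref{eq:wave_omega} in $u$ from the initial surface $C_0$:
\begin{equation*}
    \partial_v\log\Omega^2(u,v) = \partial_v\log\Omega^2(u_0,v) + \int_{u_0}^{u}\left[\frac{\Omega^2}{2r^2} + \frac{2\partial_u r\,\partial_v r}{r^2} - \frac{\Omega^2 Q^2}{r^4} - 2\partial_u\phi\,\partial_v\phi\right](\tilde{u},v)\, d\tilde{u}.
\end{equation*}
The boundary term on $C_0$ is controlled using \eqref{eq:raych_v_log} together with Proposition~\ref{prop:upperbounds_phi}. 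Each bulk term is bounded using the upper bound on $\Omega^2$, the pointwise bounds on $-r\partial_u r, -r\partial_v r$, and the consequence $|\partial_u\phi|, |\partial_v\phi| \leq D/r^2$ of Proposition~\ref{prop:upperbounds_phi}. The crucial point is that the change of variables $|d\tilde{u}| \leq (r/R_0)|dr|$, coming from the lower bound $|\partial_u r| \geq R_0/r$, supplies an extra factor of $r$ that makes the borderline terms integrable as $r \to 0$. The dominant contributions yield $|\partial_v\log\Omega^2| \leq D/r^2$; multiplying by $\Omega^2$ then gives \eqref{eq:upperbounds_lapse_der} in ESFSS and in fact the sharper estimate $|\partial_v\Omega^2| \leq D\,r^{\alpha-2}$ in EMSFSS, which is stronger than the recorded $D/r^{4-\alpha}$ but will be needed in the next step.

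\textbf{Step 3 (remaining $r^2$ and $r$ derivatives).} To bound $\partial_u(-r\partial_u r)$, I differentiate \eqref{eq:wave_r_v} in $u$ to obtain the commuted identity
\begin{equation*}
    \partial_v\partial_u(-r\partial_u r) = \frac{1}{4}\,\partial_u\Omega^2\cdot\Big(1 - \frac{Q^2}{r^2}\Big) + \frac{\Omega^2}{2}\cdot\frac{Q^2\,\partial_u r}{r^3},
\end{equation*}
and integrate in $v$ from $\underline{C}_0$. The initial data term $\partial_u(-r\partial_u r)(u,v_0) = 1$ is immediate from \eqref{eq:gauge1}. Substituting the bound on $\partial_u\Omega^2$ from Step 2 and again using $|dv| \leq (r/R_0)|dr|$ yields \eqref{eq:upperbounds_rr_der2}, \eqref{eq:upperbounds_rr_der2_em}. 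An analogous argument handles $\partial_v(-r\partial_v r)$. Finally, the remaining second derivatives $\partial_u^2 r$ and $\partial_v^2 r$ in \eqref{eq:upperbounds_r_der} follow algebraically from the identity $\partial_u(-r\partial_u r) = -(\partial_u r)^2 - r\partial_u^2 r$, using $|\partial_u r| \leq R_1/r$ to deduce $|\partial_u^2 r| \leq D/r^3$ uniformly in both models.

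The main obstacle is Step 2: the integrand in \eqref{eq:wave_omega} contains several terms whose pointwise size is not integrable along a null direction, and integrability is achieved only after passing to the area-radius $r$ via the change of variables above, itself depending on the lower bounds on $-r\partial_u r$ and $-r\partial_v r$ from Lemma~\ref{lem:setup_esfss} or the standing assumption \eqref{eq:cond_r}. The somewhat delicate point in the EMSFSS case is that one must extract the sharper-than-stated bound on $\partial_u\Omega^2$, since only the recorded $D/r^{4-\alpha}$ would fail to close Step 3 by a factor of $r^2$.
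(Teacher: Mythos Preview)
Your proof is correct and follows essentially the same route as the paper: bound $\partial_u\partial_v\log\Omega^2$ via \eqref{eq:wave_omega}, integrate in a null direction with the change of variables $du \sim r\,dr$ to get $|\partial_*\log\Omega^2|\lesssim r^{-2}$, multiply by $\Omega^2$ using Lemma~\ref{lem:upperbounds_lapse}, then differentiate \eqref{eq:wave_r_v} and integrate once more for $\partial_u(-r\partial_u r)$, with the algebraic step for $\partial_*^2 r$ at the end. Your observation in the final paragraph is astute and worth making explicit: in the EMSFSS case the argument actually yields $|\partial_*\Omega^2|\lesssim r^{\alpha-2}$, which is sharper than the stated $D/r^{4-\alpha}$, and it is this sharper bound that is needed to close \eqref{eq:upperbounds_rr_der2_em}; the paper's proof uses it implicitly without recording it.
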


\begin{remark}
    Regarding (\ref{eq:upperbounds_lapse_der})--(\ref{eq:upperbounds_r_der}), due to \eqref{eq:esfss_cond} it is straightforward to derive similar estimates except with the derivatives $\partial_u, \partial_v$ replaced by the gauge-invariant derivatives $\underline{L}, L$ respectively.
\end{remark}

\begin{proof}
    Consider the evolution equation (\ref{eq:wave_omega}) for $\log \Omega^2$. Using Lemma~\ref{lem:upperbounds_lapse} and Proposition~\ref{prop:upperbounds_phi}, we find from this equation the following estimate:
    \begin{equation} \label{eq:upperbounds_wave_omega}
        |\partial_u \partial_v \log \Omega^2| \lesssim \frac{1}{r^4}.
    \end{equation}
    We now integrate \eqref{eq:upperbounds_wave_omega} in the $u$ direction; since $- r \partial_u r \geq R_0 \gtrsim 1$ we have
    \begin{align*}
        |\partial_v \log \Omega^2|(u, v)
        &\lesssim |\partial_v \log \Omega^2|(u_0, v) + \int_{u_0}^{u} \frac{1}{r^4} (\tilde{u}, v) \, d \tilde{u}, \\[0.5em]
        &\lesssim |\partial_v \log \Omega^2|(u_0, v) + \int_{r(u, v)}^{r(u_0, v)} \frac{1}{r^3} (\tilde{u}, v) \, d r(\tilde{u}, v) \lesssim \frac{1}{r^2(u, v)},
    \end{align*}
    noting that the first term on the right hand side arises from data and can be ignored. 

    Writing $|\partial_v \Omega^2| = \Omega^2 \cdot |\partial_v \log \Omega^2|$ and using Lemma~\ref{lem:upperbounds_lapse} to bound $\Omega^2$, one deduces the $\partial_v \Omega^2$ estimates in (\ref{eq:upperbounds_lapse_der}) and (\ref{eq:upperbounds_lapse_der_em}). The $\partial_u \Omega^2$ estimates follow analogously.

    We turn to the second derivative estimates for $r^2(u, v)$. (\ref{eq:upperbounds_rr_der}) and (\ref{eq:upperbounds_rr_der_em}) follow easily from the evolution equations (\ref{eq:wave_r_u}) and (\ref{eq:wave_r_v}), where Lemma~\ref{lem:upperbounds_lapse} is used to bound the $\Omega^2$ arising in these equations. On the other hand, to deal with $\partial_u ( - r \partial_u r )$ for instance, we differentiate (\ref{eq:wave_r_v}) in $u$ to find:
    \begin{equation}
        \partial_v \partial_u ( - r \partial_u r ) = \frac{\partial_u \Omega^2}{4} \left( 1 - \frac{Q^2}{r^2} \right) + \frac{\Omega^2}{4} \frac{Q^2 \partial_u r}{r^3}.
    \end{equation}

    Using (\ref{eq:upperbounds_lapse_der}) or (\ref{eq:upperbounds_lapse_der_em}) alongside Lemma~\ref{lem:upperbounds_lapse}, one therefore finds:
    \begin{equation} \label{eq:vuur}
        |\partial_v \partial_u ( - r \partial_u r )| \leq
        \begin{cases}
            D \, r(u, v)^{-3} & \text{ for ESFSS}, \\
            D \, r(u, v)^{-4 + \alpha} & \text{ for EMSFSS, given (\ref{eq:cond_r}), (\ref{eq:cond_phi}).}
        \end{cases}
    \end{equation}

    Integrating \eqref{eq:vuur} and changing the integration variable from $v$ to $r$ in a familiar fashion, one gets
    \begin{equation}
        |\partial_u ( - r \partial_u r )| \leq
        \begin{cases}
            D \, r(u, v)^{-1} & \text{ for ESFSS}, \\
            D \, r(u, v)^{-2 + \alpha} & \text{ for EMSFSS, given (\ref{eq:cond_r}), (\ref{eq:cond_phi}).}
        \end{cases}
    \end{equation}
    An analogous estimate holds for $\partial_v ( -r \partial_v r)$, yielding exactly (\ref{eq:upperbounds_rr_der2}) and (\ref{eq:upperbounds_rr_der2_em}). The final estimate (\ref{eq:upperbounds_r_der}) is straightforward from combining all previous estimates, e.g.\
    \begin{equation*}
        |\partial_u \partial_u r| = \frac{1}{r} \left| (\partial_u r)^2 + \partial_u ( - r \partial_u r) \right| \lesssim \frac{1}{r^3}.
    \end{equation*}
    This completes the proof of Proposition~\ref{prop:upperbounds_higher_geometry}.
\end{proof}

We now use Propositions~\ref{prop:upperbounds_higher_geometry} and \ref{prop:upperbounds_wave} to derive estimates for second order derivatives of $\phi(u, v)$. 

\begin{proposition} \label{prop:upperbounds_higher_phi}
    For either the ESFSS system or EMSFSS supplemented with (\ref{eq:cond_r}), (\ref{eq:cond_phi}), there exists a constant $D$ such that the second order derivatives of $\phi$ satisfy
    \begin{equation} \label{eq:upperbounds_phi_der}
        |\partial_u \partial_u \phi| \leq \frac{D}{r^4(u, v)}, \qquad
        |\partial_u \partial_v \phi| \leq \frac{D}{r^4(u, v)}, \qquad
        |\partial_v \partial_v \phi| \leq \frac{D}{r^4(u, v)}.
    \end{equation}
    As in Proposition \ref{prop:upperbounds_higher_geometry}, we may find a similar estimate to (\ref{eq:upperbounds_r_der}) with $\partial_u, \partial_v$ replaced by $\underline{L}, L$ respectively.
\end{proposition}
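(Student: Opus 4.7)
The plan is to deduce the three second-order bounds by first using the wave equation directly for $\partial_u\partial_v\phi$, then commuting the wave equation once with $\partial_v$ (resp.\ $\partial_u$) to obtain an inhomogeneous wave equation for $\psi := \partial_v\phi$ (resp.\ $\psi := \partial_u\phi$) and feeding this into Proposition~\ref{prop:upperbounds_wave}.

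First, note that the a priori first-derivative estimates are already in hand: Proposition~\ref{prop:upperbounds_phi} gives $|r^2 L\phi|,|r^2\underline{L}\phi|\leq D$, while the identities $\partial_v\phi=-r\partial_v r\cdot L\phi$ and $\partial_u\phi=-r\partial_u r\cdot\underline{L}\phi$, together with the two-sided bound \eqref{eq:esfss_cond} (available also in the EMSFSS case under \eqref{eq:cond_r} by Lemma~\ref{lem:setup_esfss}), yield
\[
|\partial_u\phi|+|\partial_v\phi|\leq Dr^{-2},\qquad |\partial_u r|+|\partial_v r|\leq R_1 r^{-1}.
\]
Plugging these into the wave equation \eqref{eq:wave_phi} immediately gives the mixed-derivative bound $|\partial_u\partial_v\phi|\leq Dr^{-4}$.

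To treat $\partial_v^2\phi$, set $\psi:=\partial_v\phi$ and apply $\partial_v$ to \eqref{eq:wave_phi}. A direct computation yields
\[
\partial_u\partial_v\psi=-\frac{\partial_u r\,\partial_v\psi+\partial_v r\,\partial_u\psi}{r}+F,
\]
where
\[
F=-\frac{\partial_u\partial_v r\cdot\partial_v\phi+\partial_v^2 r\cdot\partial_u\phi}{r}+\frac{\partial_v r\bigl(\partial_u r\,\partial_v\phi+\partial_v r\,\partial_u\phi\bigr)}{r^2}.
\]
Using Proposition~\ref{prop:upperbounds_higher_geometry} (in particular $|\partial_u\partial_v r|,|\partial_v^2 r|\leq Dr^{-3}$) together with the first-derivative bounds above, every summand of $F$ is bounded by $Dr^{-6}$; since $|\partial_u r\,\partial_v r|\geq R_0^2 r^{-2}$ by \eqref{eq:esfss_cond}, this translates into the hypothesis $|F/(\partial_u r\,\partial_v r)|\leq Dr^{-4}$ required by Proposition~\ref{prop:upperbounds_wave}, with $\gamma=-4$.

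It remains to verify that the initial-data seminorm $P_0=\sup_{C_0\cup\underline{C}_0}\max\{|rL\psi|,|r\underline{L}\psi|\}$ is finite. On $C_0$, the $C^1$-regularity of $\partial_v\phi|_{C_0}$ controls $\partial_v^2\phi|_{C_0}$, while on $\underline{C}_0$ the wave equation \eqref{eq:wave_phi} expresses $\partial_v^2\phi$ (which equals $\partial_v\psi$) in terms of first derivatives of $\phi$ and $\partial_u\psi=\partial_u\partial_v\phi$, both of which are bounded in a neighborhood of $\underline{C}_0$ by the previous step; in both cases $r$ is bounded away from $0$. With $\gamma=-4<-2$, Proposition~\ref{prop:upperbounds_wave} then gives
\[
|r^2 L\psi|+|r^2\underline{L}\psi|\leq Dr^{-2},
\]
and unpacking $L\psi=\partial_v^2\phi/(-r\partial_v r)$ and $\underline{L}\psi=\partial_u\partial_v\phi/(-r\partial_u r)$, combined with the two-sided bound on $-r\partial_u r,-r\partial_v r$, yields $|\partial_v^2\phi|\leq Dr^{-4}$ (and re-derives $|\partial_u\partial_v\phi|\leq Dr^{-4}$). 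The symmetric choice $\psi=\partial_u\phi$ gives $|\partial_u^2\phi|\leq Dr^{-4}$, completing \eqref{eq:upperbounds_phi_der}. The gauge-invariant restatement in terms of $L,\underline{L}$ is immediate since $-r\partial_u r,-r\partial_v r\sim 1$. The only nontrivial step is the algebraic identification and termwise bookkeeping of the inhomogeneity $F$ in Step~2, which is mechanical once Propositions~\ref{prop:upperbounds_phi} and \ref{prop:upperbounds_higher_geometry} are invoked.
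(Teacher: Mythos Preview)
Your proof is correct and follows essentially the same approach as the paper: commute the wave equation \eqref{eq:wave_phi} with a coordinate derivative, estimate the resulting inhomogeneity using Propositions~\ref{prop:upperbounds_phi} and \ref{prop:upperbounds_higher_geometry} to get $\gamma=-4$, and feed into Proposition~\ref{prop:upperbounds_wave}. The paper commutes with $\partial_u$ whereas you commute with $\partial_v$, and your preliminary direct bound on $\partial_u\partial_v\phi$ is redundant (it falls out of the application of Proposition~\ref{prop:upperbounds_wave}), but these are cosmetic differences. One small slip: your justification that $P_0<\infty$ on $\underline{C}_0$ is garbled---the wave equation does not express $\partial_v^2\phi$ in terms of lower-order quantities---but the conclusion holds simply because the solution is regular on the compact set $C_0\cup\underline{C}_0$ where $r$ is bounded away from $0$.
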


\begin{proof}
    Commute the wave equation (\ref{eq:wave_phi}) for $\phi$ with the coordinate derivative $\partial_u$. One finds:
    \begin{equation} \label{eq:wave_phi_commute_u}
        \partial_u \partial_v (\partial_u \phi) = - \frac{\partial_u r \cdot \partial_v (\partial_u \phi)}{r} - \frac{\partial_v r \cdot \partial_u ( \partial_u \phi )}{r} + F_u.
    \end{equation}
    Here the inhomogeneity $F$ is given by:
    \begin{equation} \label{eq:wave_phi_commute_u_inhomog}
        F_u = \left( \frac{( \partial_u r)^2}{r^2} - \frac{\partial_u \partial_u r}{r} \right) \partial_v \phi + \left( \frac{\partial_u r \partial_v r}{r^2} - \frac{\partial_u \partial_v r}{r} \right) \partial_u \phi.
    \end{equation}

    Using Proposition~\ref{prop:upperbounds_higher_geometry}, particularly (\ref{eq:upperbounds_r_der}), and Proposition~\ref{prop:upperbounds_phi}, one estimates
    \begin{equation*}
        \left| \frac{F_u}{ \partial_u r \partial_v r} \right| \lesssim \frac{1}{r^4}.
    \end{equation*}
    In light of (\ref{eq:wave_phi_commute_u}), one now applies Proposition~\ref{prop:upperbounds_wave} with $\psi = \partial_u \phi$ and $\gamma = - 4$ to find that
    \begin{equation*}
        |r^2 L \partial_u \phi| + |r^2 \underline{L} \partial_u \phi| \lesssim \frac{1}{r^2}.
    \end{equation*}

    Since $L$ and $\underline{L}$ are uniformly equivalent to $\partial_v$ and $\partial_u$ respectively, one therefore deduces the $\partial_u \partial_u \phi$ and $\partial_u \partial_v \phi$ estimate in (\ref{eq:upperbounds_phi_der}). The $\partial_v \partial_v \phi$ estimate follows by commuting with $\partial_v$ instead of $\partial_u$.
\end{proof}

%auto-ignore
\section{Precise asymptotics for the scalar field} \label{scalarfield}

The aim of the present section will be to prove the parts of Theorems~\ref{thm:esfss} and \ref{thm:emsfss} concerning the scalar field $\phi$. The key step will be to show that the weighted derivatives $r^2 L \phi$, $r^2 \underline{L} \phi$ can be extended in a $C^{0, \beta}$ fashion to the $r = 0$ boundary. 

As discussed in Section~\ref{intro.proof}, the key ingredient will be the following: treating instead $r^2 \partial_v \phi = - r \partial_v r \cdot r^2 L \phi$ in this preliminary discussion, one has from \eqref{eq:wave_phi_u} the equation
\begin{equation*}
    \partial_u (r^2 \partial_v \phi) = - r \partial_v r \cdot \partial_u \phi + r \partial_u r \cdot \partial_v \phi.
\end{equation*}
The results of Section \ref{upperbounds}, particularly Proposition \ref{prop:upperbounds_phi}, imply that the right-hand side is $O(r^{-2})$, which is borderline non-integrable in the variable $u$ as $r \to 0$. %recall that changing variables from $u$ to $r$ contributes an extra $r$-weight, but $O(r^{-1})$ is still non-integrable in $r$. Hence in order to continuously extend $r^2 \partial_v \phi$ to $r = 0$ we need an improved estimate on the right-hand side.

Thankfully, the operator $\tilde{X} = - r \partial_v r \cdot \partial_v + r \partial_u r \cdot \partial_u$ turns out to be a better derivative than the standard coordinate derivatives $\partial_u$, $\partial_v$. Since $\tilde{X} r = 0$, the vector field $\tilde{X}$ is adapted to the constant-$r$ spacelike hypersurfaces $\Sigma_{\tilde{r}}$, see Figure \ref{fig:char_ivp_foliation}. Hence one expects that such a ``spatial'' derivative behaves better. 

Indeed, we shall see from Proposition \ref{prop:scalarfield_xphi} that $\tilde{X} \phi$ is integrable up to the $r = 0$ singularity, allowing us to extend $r^2 \partial_u \phi$ and $r^2 \partial_v \phi$ to the singularity in a continuous manner. As in Section \ref{upperbounds}, the argument is streamlined upon replacing $\partial_u$ and $\partial_v$ by their gauge-independent counterparts $\underline{L}$ and $L$. In particular, instead of treating $\tilde{X}$ as above we will study the spatial vector field $X = L - \underline{L}$.

\subsection{The vector field \texorpdfstring{$X= L - {\protect \underline{L}}$}{X=L-L͟}} \label{scalarfield.x}

Define the vector field $X$ by
\begin{equation} \label{eq:scalarfield_x}
    X = L - \underline{L} = \frac{1}{- r \partial_v r} \partial_v - \frac{1}{- r \partial_u r} \partial_u.
\end{equation}
From \eqref{eq:scalarfield_x}, it immediately follows that $X r = 0$, hence $X$ is tangent to the constant-$r$ hypersurfaces $\Sigma_{\tilde{r}}$. In particular, $X$ is spacelike, and we can further compute
\begin{equation} \label{eq:scalarfield_x_length}
    g(X, X) = - 2 g (L, \underline{L}) = \frac{\Omega^2}{r^2 \partial_u r \, \partial_v r} > 0.
\end{equation}

We wish to show that $X \phi$ obeys better estimates than those of $L \phi$, $\underline{L} \phi$ found in Proposition \ref{prop:upperbounds_phi}. For this purpose, we commute the wave equation (\ref{eq:wave_phi}) with $X$, and apply Proposition \ref{prop:upperbounds_wave}. We therefore first produce some preliminary estimates regarding the various commutator terms that arise.

\begin{lemma} \label{lem:scalarfield_x_commutator_1}
    For $X$ as defined in (\ref{eq:scalarfield_x}), decompose the commutator vector field $[\partial_u, X]$ with respect to the coordinate basis $\partial_u, \partial_v$ in the usual way:
    \begin{equation*}
        [\partial_u, X] = [\partial_u, X]^u \partial_u + [\partial_u, X]^v \partial_v,
    \end{equation*}
    and similarly for $[\partial_v, X]$ and $[L, X] = - [L, \underline{L}] = [\underline{L}, X]$. Then there exists a constant $D > 0$, depending on the data at $C_0 \cup \underline{C}_0$, such that
    \begin{multline} \label{eq:x_commutator_1}
        |[\partial_v, X]^u| + |[\partial_u, X]^v| + |[\partial_v, X]^u| + |[\partial_v, X]^v| + |[L, X]^u| + |[L, X]^v|
        \\ \leq
        \begin{cases}
            D \, r(u, v)^{-1} & \text{ for ESFSS}, \\
            D \, r(u, v)^{-2 + \alpha} & \text{ for EMSFSS, given (\ref{eq:cond_r}), (\ref{eq:cond_phi}).}
        \end{cases}
    \end{multline}

    Alternatively, one may decompose the commutator vector fields with respect to the basis $L, \underline{L}$:
    \begin{equation*}
        [\partial_u, X] = [\partial_u, X]^{\underline{L}} \underline{L} + [\partial_v, X]^{L} L,
    \end{equation*}
    and similarly for $[\partial_v, X]$ and $[L, X]$. Then (\ref{eq:x_commutator_1}) holds with the $u$, $v$ indices replaced by $\underline{L}, L$ indices.
\end{lemma}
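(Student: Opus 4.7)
The plan is a direct computation. Write $X = f\,\partial_v - g_1\,\partial_u$ with $f \coloneqq (-r\partial_v r)^{-1}$ and $g_1 \coloneqq (-r\partial_u r)^{-1}$, so that the coordinate commutators expand as
\[
[\partial_u, X] = (\partial_u f)\,\partial_v - (\partial_u g_1)\,\partial_u, \qquad [\partial_v, X] = (\partial_v f)\,\partial_v - (\partial_v g_1)\,\partial_u.
\]
The task thus reduces to bounding the four scalar functions $\partial_u f$, $\partial_v f$, $\partial_u g_1$, $\partial_v g_1$. Each is of the form $\partial_\bullet f = \partial_\bullet(-r\partial_v r)/(r\partial_v r)^2$, and similarly for $g_1$, so it suffices to bound the denominators from below and the numerators from above.

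For the denominators, the lower bounds in \eqref{eq:cond_pre_r} (in case (A)) and \eqref{eq:cond_pre_r_} (in case (B)), which come from Lemma~\ref{lem:setup_esfss} (ESFSS) or the a~priori assumption \eqref{eq:cond_r} (EMSFSS), give $(r\partial_u r)^2, (r\partial_v r)^2 \geq R_0^2$. For the numerators, Proposition~\ref{prop:upperbounds_higher_geometry} applies directly: estimates \eqref{eq:upperbounds_rr_der}--\eqref{eq:upperbounds_rr_der2} furnish $|\partial_\bullet(-r\partial_\bullet r)| \leq D\,r^{-1}$ in the ESFSS case, while \eqref{eq:upperbounds_rr_der_em}--\eqref{eq:upperbounds_rr_der2_em} furnish $|\partial_\bullet(-r\partial_\bullet r)| \leq D\,r^{-2+\alpha}$ in the EMSFSS case. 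Combining these bounds produces precisely the estimates \eqref{eq:x_commutator_1} for the $(u,v)$-components of $[\partial_u,X]$ and $[\partial_v,X]$.

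For the remaining commutator $[L, X]$, we use $[L, X] = -[L, \underline{L}]$ and compute
\[
[L, \underline{L}] = [f\,\partial_v,\, g_1\,\partial_u] = f(\partial_v g_1)\,\partial_u - g_1(\partial_u f)\,\partial_v.
\]
Since $f, g_1$ are bounded above by $R_0^{-1}$ and $\partial_v g_1, \partial_u f$ obey the same bounds as above, the desired estimates for $[L, X]^u$ and $[L, X]^v$ follow at once.

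To transfer these bounds to the $(\underline{L}, L)$ frame, note that the change-of-basis factors are $(-r\partial_u r)$ and $(-r\partial_v r)$, each bounded above and below by constants depending only on data and $R_0$ (using both halves of \eqref{eq:cond_pre_r}/\eqref{eq:cond_pre_r_}). Hence the component estimates transfer verbatim, completing the proof. There is no substantive obstacle; the lemma is purely routine, and its utility lies in its later application to commuting the wave equation \eqref{eq:wave_phi} with $X$ in Section~\ref{scalarfield.xphi}.
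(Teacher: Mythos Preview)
Your proof is correct and follows essentially the same approach as the paper: both compute the commutator components explicitly as $\partial_\bullet f = -\partial_\bullet(-r\partial_v r)/(r\partial_v r)^2$ (and analogously for $g_1$), bound the numerators via Proposition~\ref{prop:upperbounds_higher_geometry} and the denominators via the lower bounds on $-r\partial_u r$, $-r\partial_v r$, and then note that the change of basis to $(\underline{L},L)$ only involves these same bounded factors.
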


\begin{proof}
    This will follow from the higher derivative estimates of Section \ref{upperbound.higher}. Using (\ref{eq:scalarfield_x}), we have
    \begin{align*}
        [\partial_u, X] 
        &= \partial_u \left( \frac{1}{- r \partial_v r} \right) \partial_v - \partial_u \left( \frac{1}{- r \partial_u r} \right) \partial_u, \\[1em]
        &= - \frac{ \partial_u ( - r \partial_v r)}{(- r \partial_v r )^2} \partial_v + \frac{ \partial_u ( - r \partial_u r)}{(- r \partial_u r)^2} \partial_u.
    \end{align*}

    Hence we may read off $[\partial_u, X]^v$ and $[\partial_v, X]^u$, and bound these using Proposition \ref{prop:upperbounds_higher_geometry} and the lower bounds for $- r \partial_u r$ and $- r \partial_v r$. The commutators $[\partial_v, X]$ and $[L, X]$ are dealt with similarly.

    For the final statement of the lemma, note that changing basis from $\partial_u$, $\partial_v$ to $\underline{L}$, $L$ simply involves multiplying and dividing by $- r \partial_u r$ and $- r \partial_v r$. Since these are upper and lower bounded (see Section \ref{setup.sing} and (\ref{eq:cond_r})), the estimate (\ref{eq:x_commutator_1}) still holds when computing components of the various commutator vector fields with respect to the $L$, $\underline{L}$ basis.
\end{proof}

We also need to consider commutation with the second order wave operator $\partial_u \partial_v$:
\begin{lemma} \label{lem:scalarfield_x_commutator_2}
    We write the second order operator $[\partial_u \partial_v, X]$ in the form
    \begin{equation} \label{eq:xuv}
        [\partial_u \partial_v, X] = \mathcal{X}_{uu} \partial_u \partial_u + \mathcal{X}_{uv} \partial_u \partial_v + \mathcal{X}_{vv} \partial_v \partial_v + \mathcal{X}_{u} \partial_u + \mathcal{X}_{v} \partial_v,
    \end{equation}
    where $\mathcal{X}_{uu}, \mathcal{X}_{uv}, \mathcal{X}_{vv}, \mathcal{X}_u, \mathcal{X}_v$ are each scalar functions of $(u, v) \in \mathcal{D}$. Then we have the following bounds:
    \begin{equation} \label{eq:x_commutator_2}
        |\mathcal{X}_{uu}| + |\mathcal{X}_{uv}| + |\mathcal{X}_{vv}| \leq
        \begin{cases}
            D \, r(u, v)^{-1} & \text{ for ESFSS}, \\
            D \, r(u, v)^{-2 + \alpha} & \text{ for EMSFSS, given (\ref{eq:cond_r}), (\ref{eq:cond_phi}),}
        \end{cases}
    \end{equation}
    \begin{equation} \label{eq:x_commutator_3}
        |\mathcal{X}_{u}| + |\mathcal{X}_{v}| \leq
        \begin{cases}
            D \, r(u, v)^{-3} & \text{ for ESFSS}, \\
            D \, r(u, v)^{-4 + \alpha} & \text{ for EMSFSS, given (\ref{eq:cond_r}), (\ref{eq:cond_phi}).}
        \end{cases}
    \end{equation}
\end{lemma}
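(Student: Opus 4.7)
The plan is to carry out the commutator computation directly by writing $X = a\partial_v + b\partial_u$ with $a = (-r\partial_v r)^{-1}$ and $b = -(-r\partial_u r)^{-1}$, and then applying the Leibniz rule. A short calculation expanding $\partial_u \partial_v (a f_v + b f_u) - (a\partial_v + b \partial_u) f_{uv}$ shows that the mixed third-order terms $a f_{uvv}$ and $b f_{uuv}$ cancel, leaving
\begin{equation*}
    [\partial_u \partial_v, X] = (\partial_v b) \partial_u^2 + (\partial_v a + \partial_u b) \partial_u \partial_v + (\partial_u a) \partial_v^2 + (\partial_u \partial_v b) \partial_u + (\partial_u \partial_v a) \partial_v,
\end{equation*}
so $\mathcal{X}_{uu} = \partial_v b$, $\mathcal{X}_{uv} = \partial_v a + \partial_u b$, $\mathcal{X}_{vv} = \partial_u a$, $\mathcal{X}_u = \partial_u \partial_v b$, $\mathcal{X}_v = \partial_u \partial_v a$.

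The second-order coefficients are first derivatives of $1/(-r\partial_v r)$ and $1/(-r\partial_u r)$. Using $\partial(1/y) = -\partial y / y^2$ together with the a priori lower bound $-r\partial_u r, -r\partial_v r \geq R_0$ from \eqref{eq:esfss_cond} (resp.\ \eqref{eq:cond_r}), the bound \eqref{eq:x_commutator_2} reduces immediately to the second-derivative estimates for $r^2$ in \eqref{eq:upperbounds_rr_der}, \eqref{eq:upperbounds_rr_der2} (ESFSS) or \eqref{eq:upperbounds_rr_der_em}, \eqref{eq:upperbounds_rr_der2_em} (EMSFSS) from Proposition~\ref{prop:upperbounds_higher_geometry}.

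The first-order coefficients $\mathcal{X}_u, \mathcal{X}_v$ require one more derivative. For $\mathcal{X}_v = \partial_u \partial_v a$, the quotient rule gives
\begin{equation*}
    \partial_u \partial_v a = - \frac{\partial_u \partial_v(-r\partial_v r)}{(-r\partial_v r)^2} + \frac{2\,\partial_u(-r\partial_v r)\,\partial_v(-r\partial_v r)}{(-r\partial_v r)^3}.
\end{equation*}
The product term on the right is controlled directly by \eqref{eq:upperbounds_rr_der}--\eqref{eq:upperbounds_rr_der2_em} and is of lower order. For the first term, I would use the wave equation \eqref{eq:wave_r_u} to rewrite $\partial_u(-r\partial_v r) = \tfrac{\Omega^2}{4}(1 - Q^2/r^2)$, so
\begin{equation*}
    \partial_u \partial_v(-r\partial_v r) = \frac{\partial_v \Omega^2}{4}\left(1-\frac{Q^2}{r^2}\right) + \frac{\Omega^2 \, Q^2 \, \partial_v r}{2 r^3}.
\end{equation*}
For ESFSS ($Q \equiv 0$) the second term vanishes and the first is bounded by $|\partial_v\Omega^2|\lesssim r^{-3}$ from \eqref{eq:upperbounds_lapse_der}, yielding $|\mathcal{X}_v|\lesssim r^{-3}$. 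The estimate for $\mathcal{X}_u = \partial_u \partial_v b$ is symmetric, using \eqref{eq:wave_r_v} instead.

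The main obstacle is the EMSFSS case, where a naive estimate $|\partial_v \Omega^2| \cdot |1 - Q^2/r^2| \lesssim r^{\alpha - 4} \cdot r^{-2}$ is two powers too weak. The resolution is to avoid splitting the product: since $\Omega^2(1-Q^2/r^2) = 4\,\partial_u(-r\partial_v r)$, one already controls this combination by $r^{\alpha-2}$, and combining with the intermediate bound $|\partial_v \log\Omega^2| \lesssim r^{-2}$ (obtained en route to \eqref{eq:upperbounds_lapse_der_em} in the proof of Proposition~\ref{prop:upperbounds_higher_geometry}) gives
\begin{equation*}
    \left|\partial_v \Omega^2 \cdot \left(1 - \frac{Q^2}{r^2}\right)\right| = \left|\Omega^2\!\left(1-\frac{Q^2}{r^2}\right)\right| \cdot |\partial_v \log \Omega^2| \lesssim r^{\alpha - 2}\cdot r^{-2} = r^{\alpha - 4}.
\end{equation*}
The remaining term $\Omega^2 Q^2 \partial_v r /r^3$ is bounded by $r^\alpha \cdot r^{-1} \cdot r^{-3} = r^{\alpha - 4}$ directly from \eqref{eq:upperbound_lapse_q} and $|\partial_v r| \lesssim r^{-1}$, completing the bound $|\partial_u \partial_v (-r\partial_v r)| \lesssim r^{\alpha-4}$ and hence \eqref{eq:x_commutator_3}. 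The subtle point, which is the only non-routine step, is recognising that one must track the cancellation in $\Omega^2 (1 - Q^2/r^2)$ coming from the wave equation rather than estimating the two factors independently.
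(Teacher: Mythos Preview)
Your proof is correct and follows essentially the same route as the paper: both compute the coefficients explicitly via $[\partial_u\partial_v,X] = \partial_u\circ[\partial_v,X] + [\partial_u,X]\circ\partial_v$ (your direct Leibniz expansion yields the identical formulae $\mathcal{X}_{uu}=\partial_v b$, $\mathcal{X}_{vv}=\partial_u a$, $\mathcal{X}_u=\partial_u\partial_v b$, $\mathcal{X}_v=\partial_u\partial_v a$), then reduce the second-order coefficients to Proposition~\ref{prop:upperbounds_higher_geometry} and handle the first-order ones by rewriting $\partial_u(-r\partial_v r)$ and $\partial_v(-r\partial_u r)$ via the wave equations \eqref{eq:wave_r_u}--\eqref{eq:wave_r_v}. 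Your treatment of the EMSFSS case is in fact more careful than the paper's: the paper simply cites ``Lemma~\ref{lem:upperbounds_lapse} and Proposition~\ref{prop:upperbounds_higher_geometry}'', but the \emph{stated} bound $|\partial_v\Omega^2|\lesssim r^{\alpha-4}$ in \eqref{eq:upperbounds_lapse_der_em} is, as you observe, too weak by two powers when paired with $|1-Q^2/r^2|\lesssim r^{-2}$; your device of writing $\partial_v\Omega^2\cdot(1-Q^2/r^2) = \Omega^2(1-Q^2/r^2)\cdot\partial_v\log\Omega^2$ and invoking the intermediate bound $|\partial_v\log\Omega^2|\lesssim r^{-2}$ from the proof of Proposition~\ref{prop:upperbounds_higher_geometry} (equivalently, the sharper estimate $|\partial_v\Omega^2|\lesssim r^{\alpha-2}$ that that proof actually establishes) is exactly what is needed to close the argument.
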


\begin{proof}
    Once again, we may write explicit formulae for each of $\mathcal{X}_{uu}, \mathcal{X}_{uv}, \mathcal{X}_{vv}, \mathcal{X}_u, \mathcal{X}_v$, then appeal to (\ref{prop:upperbounds_higher_geometry}). Using $[\partial_u \partial_v, X] = \partial_u \circ [\partial_v, X] + [\partial_u, X] \circ \partial_v$, we find that:
    \begin{gather*}
        \mathcal{X}_{uu} = [\partial_v, X]^{u} = \partial_v \left( \frac{-1}{- r \partial_u r} \right), \\[1em]
        \mathcal{X}_{uv} = [\partial_v, X]^{v} + [\partial_u, X]^u = \partial_v \left(\frac{1}{- r \partial_v r} \right) + \partial_u \left( \frac{-1}{- r \partial_u r} \right), \\[1em]
        \mathcal{X}_{vv} = [\partial_u, X]^{v} = \partial_u \left( \frac{1}{- r \partial_v r} \right), \\[1em]
        \mathcal{X}_u = \partial_u ( [\partial_v, X]^{u} ) = \partial_u \partial_v \left( \frac{-1}{-r \partial_u r} \right), \\[1em]
        \mathcal{X}_v = \partial_u ( [\partial_v, X]^{v} ) = \partial_u \partial_v \left( \frac{1}{-r \partial_v r} \right).
    \end{gather*}

    In particular, (\ref{eq:x_commutator_2}) follows immediately from (\ref{eq:x_commutator_1}), while for (\ref{eq:x_commutator_3}), we first use the equations (\ref{eq:wave_r_u}) and (\ref{eq:wave_r_v}). For instance, we rewrite $\mathcal{X}_u$ using (\ref{eq:wave_r_v}) as
    \begin{equation*}
        \mathcal{X}_u = \partial_u \left( \frac{1}{(- r \partial_u r)^2} \cdot \frac{\Omega^2}{4} \left( 1 - \frac{Q^2}{r^2} \right) \right).
    \end{equation*}
    The $\partial_u$ derivative could hit any of $( - r \partial_u r)^{-1}$, $\Omega^2$ or $r^{-1}$. However, we see from Lemma \ref{lem:upperbounds_lapse} and Proposition \ref{prop:upperbounds_higher_geometry} that in any of these cases, the result remains bounded by the right hand side of (\ref{eq:x_commutator_3}). The estimate for $\mathcal{X}_v$ is analogous, concluding the proof of Lemma \ref{lem:scalarfield_x_commutator_2}.
\end{proof}

\subsection{Estimates on \texorpdfstring{$X \phi$}{X𝞍}} \label{scalarfield.xphi}

In this section we prove Proposition \ref{prop:scalarfield_xphi}, producing upper bounds on $X \phi$ and its derivatives. One achieves this by commuting (\ref{eq:wave_phi}) with the vector field $X$, then appealing to Proposition \ref{prop:upperbounds_wave}. For convenience, we record (\ref{eq:wave_phi}) again here:
\begin{equation} \label{eq:wave_phi_copy}
    \partial_u \partial_v \phi = - \frac{\partial_u r \cdot \partial_v \phi}{r} - \frac{\partial_v r \cdot \partial_u \phi}{r}.
\end{equation}

\begin{proposition} \label{prop:scalarfield_xphi}
    For the ESFSS system or the EMSFSS system supplemented with the additional assumptions (\ref{eq:cond_r}), (\ref{eq:cond_phi}), we have the following upper bounds for $X \phi$ and its weighted derivatives:%, where $D > 0$ is a quantity depending continuously on the data:
    \begin{equation} \label{eq:scalarfield_xphi}
        |X \phi (u, v)| + | r^2 L X \phi (u, v) | + | r^2 \underline{L} X \phi (u, v) | \leq
        \begin{cases}
            D \, r^{-1}(u, v) & \text{ for ESFSS},\\
            D \, r^{-2 + \alpha}(u, v) & \text{ for EMSFSS, given (\ref{eq:cond_r}), (\ref{eq:cond_phi})}.
        \end{cases}
    \end{equation}
\end{proposition}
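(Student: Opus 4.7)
The plan is to commute the wave equation \eqref{eq:wave_phi_copy} with the spatial vector field $X$ to obtain an inhomogeneous wave equation of the form \eqref{eq:wave_psi} for $\psi = X\phi$, and then apply Proposition \ref{prop:upperbounds_wave}. The point is that although $L\phi$ and $\underline{L}\phi$ are individually only $O(r^{-2})$, the wave equation produces a cancellation in their difference $X\phi = L\phi - \underline{L}\phi$; this cancellation is made systematic by feeding the commuted equation through the Gr\"onwall-type argument already established in Proposition \ref{prop:upperbounds_wave}.

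Concretely, writing $\partial_u\partial_v(X\phi) = X(\partial_u\partial_v \phi) + [\partial_u\partial_v, X]\phi$, substituting \eqref{eq:wave_phi_copy} for $\partial_u\partial_v \phi$, and using $Xr = 0$ together with the identities $X\partial_u\phi = \partial_u(X\phi) - [\partial_u, X]\phi$ and $X\partial_v\phi = \partial_v(X\phi) - [\partial_v, X]\phi$, one obtains
\begin{equation*}
    \partial_u\partial_v(X\phi) = -\frac{\partial_u r}{r}\partial_v(X\phi) - \frac{\partial_v r}{r}\partial_u(X\phi) + F,
\end{equation*}
where the inhomogeneity collects
\begin{align*}
    F = & -\frac{X(\partial_u r)}{r}\partial_v\phi - \frac{X(\partial_v r)}{r}\partial_u\phi + \frac{\partial_u r}{r}[\partial_v, X]\phi + \frac{\partial_v r}{r}[\partial_u, X]\phi \\
    & + \mathcal{X}_{uu}\partial_u^2\phi + \mathcal{X}_{uv}\partial_u\partial_v\phi + \mathcal{X}_{vv}\partial_v^2\phi + \mathcal{X}_u\partial_u\phi + \mathcal{X}_v\partial_v\phi,
\end{align*}
with $\mathcal{X}_{\bullet}$ as in Lemma \ref{lem:scalarfield_x_commutator_2}.

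Next I would verify the hypothesis \eqref{eq:wave_psi_inhomog} by estimating each summand of $F$. A useful observation is that since $Xr = 0$, one has $X(\partial_u r) = X(-r\partial_u r)/(-r)$, so the improved transport bounds of Proposition \ref{prop:upperbounds_higher_geometry} on $\partial_u(-r\partial_u r)$ and $\partial_v(-r\partial_u r)$ can be applied directly. Combining these with Propositions \ref{prop:upperbounds_phi} and \ref{prop:upperbounds_higher_phi} for the derivatives of $\phi$, and with Lemmas \ref{lem:scalarfield_x_commutator_1} and \ref{lem:scalarfield_x_commutator_2} for the commutator coefficients, a term-by-term bookkeeping yields
\begin{equation*}
    \left| \frac{F}{\partial_u r \, \partial_v r} \right|(u, v) \lesssim
    \begin{cases}
        r^{-3} & \text{for ESFSS}, \\
        r^{-4 + \alpha} & \text{for EMSFSS under \eqref{eq:cond_r}, \eqref{eq:cond_phi}},
    \end{cases}
\end{equation*}
which verifies \eqref{eq:wave_psi_inhomog} with $\gamma = -3$ and $\gamma = -4 + \alpha$ respectively. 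In either case $\gamma < -2$, and the initial data term $P_0$ from Proposition \ref{prop:upperbounds_wave} is finite since $C_0 \cup \underline{C}_0$ is compact with $r$ bounded away from zero.

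Applying Proposition \ref{prop:upperbounds_wave} with $\psi = X\phi$ and the above values of $\gamma$ then yields the claimed estimates on $r^2 L X\phi$ and $r^2 \underline{L} X\phi$ via \eqref{eq:upperbounds_wave_l}, while the pointwise bound on $X\phi$ itself follows from \eqref{eq:upperbounds_wave_qty}, noting that the $r^{\gamma+2}$ contribution dominates the logarithmic terms. The main obstacle is the patient term-by-term verification in the EMSFSS case: one must check that every single contribution to $F$ genuinely picks up the full extra factor of $r^\alpha$ relative to the scalar-field case. This ultimately traces back to the improved lapse bound of Lemma \ref{lem:upperbounds_lapse}(B), propagated through Propositions \ref{prop:upperbounds_higher_geometry}, \ref{prop:upperbounds_higher_phi} and Lemmas \ref{lem:scalarfield_x_commutator_1}, \ref{lem:scalarfield_x_commutator_2}, and requires careful use of the hypotheses \eqref{eq:cond_r} and \eqref{eq:cond_phi} at each step.
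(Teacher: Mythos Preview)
Your proposal is correct and follows essentially the same approach as the paper: commute the wave equation with $X$, estimate the resulting inhomogeneity $F$ term-by-term using Lemmas~\ref{lem:scalarfield_x_commutator_1}, \ref{lem:scalarfield_x_commutator_2} and Propositions~\ref{prop:upperbounds_phi}, \ref{prop:upperbounds_higher_geometry}, \ref{prop:upperbounds_higher_phi}, and then feed the result into Proposition~\ref{prop:upperbounds_wave} with $\gamma = -3$ (ESFSS) or $\gamma = -4+\alpha$ (EMSFSS). The only cosmetic difference is that the paper writes the terms $-\frac{X(\partial_u r)}{r}\partial_v\phi$, $-\frac{X(\partial_v r)}{r}\partial_u\phi$ as $-\frac{[\partial_u,X]r}{r}\partial_v\phi$, $-\frac{[\partial_v,X]r}{r}\partial_u\phi$ and absorbs them into a schematic $\mathcal{X}_{**}\cdot \frac{\partial_* r}{r}\cdot \partial_*\phi$, rather than invoking your observation $X(\partial_u r) = X(-r\partial_u r)/(-r)$; both routes yield the same bound.
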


\begin{proof}
    In light of Proposition \ref{prop:upperbounds_wave}, we seek an equation of the form (\ref{eq:wave_psi}), with $\psi = X \phi$. We apply $X$ to equation (\ref{eq:wave_phi_copy}); since $X r = 0$ we find
    \begin{equation} \label{eq:wave_xphi}
        \partial_u \partial_v X \phi = - \frac{\partial_u r \cdot \partial_v X \phi}{r} - \frac{\partial_v r \cdot \partial_u X \phi}{r} + F_X,
    \end{equation}
    where the inhomogeneous term $F_X$ is given exactly by \blue{
    \begin{equation} \label{eq:wave_xphi_inhomog}
        F_X = [\partial_u \partial_v, X] \phi + \frac{\partial_u r}{r} \cdot [\partial_v, X] \phi + \frac{\partial_v r}{r} \cdot [\partial_u, X] \phi + \frac{[\partial_u, X] r}{r} \cdot \partial_v \phi  + \frac{[\partial_v, X] r}{r} \cdot \partial_u \phi.
\end{equation}}
\indent It simply remains to estimate this inhomogeneous term $F_X$ using the results of Section \ref{scalarfield.x}. 
    We introduce some schematic notation; we write $\mathcal{X}_{**}$ to represent some element of $\{ [\partial_u, X]^u, [\partial_u, X]^v, [\partial_v, X]^u, [\partial_v, X]^v\}$ or any linear combination thereof. Note that $\mathcal{X}_{uu}, \mathcal{X}_{uv}, \mathcal{X}_{vv}$ are included in $\mathcal{X}_{**}$, see the proof of Lemma \ref{lem:scalarfield_x_commutator_2}. Likewise, denote $\mathcal{X}_{*}$ to represent one of $\mathcal{X}_u$ or $\mathcal{X}_v$ as in Lemma \ref{lem:scalarfield_x_commutator_2}. Finally, denote $\partial_*$ to be one of the two null coordinate derivatives $\partial_u, \partial_v$.

    Using this notation and Lemma \ref{lem:scalarfield_x_commutator_2}, $F_X$ can be written in the schematic form:
    \begin{equation} \label{eq:wave_xphi_inhomog_2}
        F_X = \mathcal{X}_* \cdot \partial_* \phi + \mathcal{X}_{**} \cdot \partial_* \partial_* \phi + \mathcal{X}_{**} \cdot \frac{\partial_* r}{r} \cdot \partial_* \phi.
    \end{equation}
    Now, from Lemmas \ref{lem:scalarfield_x_commutator_1} and \ref{lem:scalarfield_x_commutator_2}, we may deduce that
    \begin{equation*}
        |r^2 \mathcal{X}_*| + |\mathcal{X}_{**}| \lesssim 
        \begin{cases}
            r^{-1}(u, v) & \text{ for ESFSS},\\
            r^{-2 + \alpha}(u, v) & \text{ for EMSFSS, given (\ref{eq:cond_r}), (\ref{eq:cond_phi})}.
        \end{cases}
    \end{equation*}

    We also need to control $\partial_* r$, $\partial_* \phi$ and $\partial_* \partial_* \phi$. The upper bounds for $ - r \partial_* r$ give that $ - \partial_* r \lesssim r^{-1}$, while derivatives of $\phi$ are controlled using Proposition~\ref{prop:upperbounds_phi} for $\partial_* \phi$ and Proposition~\ref{prop:upperbounds_higher_phi} for $\partial_* \partial_* \phi$. These yield
    \begin{equation*}
        |\partial_* \phi| \lesssim r^{-2}, \qquad |\partial_* \partial_* \phi| \lesssim r^{-4}.
    \end{equation*}
    Applying all of these bounds to (\ref{eq:wave_xphi_inhomog_2}), we find that
    \begin{equation*}
        |F_X| \lesssim 
        \begin{cases} 
            r^{-5}(u, v) & \text{ for ESFSS},\\
            r^{-6 + \alpha}(u, v) & \text{ for EMSFSS, given (\ref{eq:cond_r}), (\ref{eq:cond_phi})}.
        \end{cases}
    \end{equation*}

    Using again lower bounds for $- r \partial_* r$, we therefore have that
    \begin{equation} \label{eq:fxuv}
        \left| \frac{F_X}{\partial_u r \, \partial_v r} \right|
        \leq F_{X,0} \cdot r^{\gamma}(u, v).
    \end{equation}
    Here $F_{X, 0}$ is some constant depending only on data, while $\gamma = - 3$ and $\gamma = - 4 + \alpha$ in the context of the ESFSS and the EMSFSS system respectively. We are now in a position to apply Proposition \ref{prop:upperbounds_wave} to (\ref{eq:wave_xphi}).

    Since we have $\gamma < -2$ in both cases, the final term on the right hand sides of (\ref{eq:upperbounds_wave_l}) and (\ref{eq:upperbounds_wave_qty}) arising from the inhomogeneity dominate any terms arising from the data for $\psi = X \phi$ at $C_0 \cup \underline{C}_0$. %, and we do not need to add any logarithmic corrections. 
    Therefore the estimates (\ref{eq:upperbounds_wave_l}) and (\ref{eq:upperbounds_wave_qty}) applied to $\psi = X \phi$ yield (\ref{eq:scalarfield_xphi}), as required.
\end{proof}

\subsection{Asymptotics for the scalar field at the singularity} \label{scalarfield.asymptotics}

Armed with Proposition \ref{prop:scalarfield_xphi}, %which tells us that $X \phi$ as well as its appropriately weighted derivatives will be integrable towards $r = 0$, 
we now prove the asymptotics for $\phi$ given in Theorems~\ref{thm:esfss} and \ref{thm:emsfss}, including the appropriate H\"older continuity for the quantities $\Psi$ and $\Xi$.

\begin{proposition} \label{prop:scalarfield_asymptotics}
    Let $(r, \Omega^2, \phi, Q)$ be a strongly singular solution to the EMSFSS system with an $r=0$ spacelike singularity as in Section \ref{setup.sing}, where we assume (\ref{eq:cond_r}), (\ref{eq:cond_phi}) to hold if $Q \neq 0$. Define the functions $\Psi: \mathcal{D} \to \R$ and $\Xi: \mathcal{D} \to \R$ to be:
    \begin{gather} \label{eq:Psi}
        \Psi(u, v) \coloneqq r^2 \cdot \frac{( L + \underline{L} ) \phi (u, v)}{2},
        \\[1em] \label{eq:Xi}
        \Xi(u, v) \coloneqq \phi(u, v) - \Psi (u, v) \cdot \log \left( \frac{r_0}{r(u, v)} \right).
    \end{gather}
    Then $\Psi$ is $C^{0, \beta}$ in $\mathcal{D}$, with H\"older exponent $\beta = 1/2$ for the ESFSS system and $\beta = \alpha / 2$ for the EMSFSS system with (\ref{eq:cond_r}), (\ref{eq:cond_phi}). $\Xi$, on the other hand, is $C^{0, \beta, \log}$ (see \eqref{eq:logholder} for a definition of log-H\"older continuity).
    In fact, $\| \Psi \|_{C^{0, \beta}}$ and $\| \Xi \|_{C^{0, \beta, \log}}$ are bounded by a constant $D$ depending on the data:
    \begin{equation} \label{eq:psixi}
        \| \Psi \|_{C^{0, \beta}} + \| \Xi \|_{C^{0, \beta, \log}} \leq D.
    \end{equation}
\end{proposition}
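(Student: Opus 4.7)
The plan is to establish (i) $\sup_{\mathcal{D}}|\Psi| + \sup_{\mathcal{D}}|\Xi|\leq D$, (ii) the $C^{0,\beta}$ seminorm of $\Psi$ is bounded, and (iii) the $C^{0,\beta,\log}$ seminorm of $\Xi$ is bounded, with $\beta = 1/2$ in the ESFSS case and $\beta = \alpha/2$ in the EMSFSS case. The central analytic input is Proposition~\ref{prop:scalarfield_xphi}, whose bounds on $X\phi$, $LX\phi$ and $\underline{L}X\phi$ become integrable after changing variables from $u$ or $v$ to the area-radius $r$, using that the lower bounds $-r\partial_u r,-r\partial_v r\geq R_0$ give $du\leq (r/R_0)\,dr$.

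First I would derive transport equations for the weighted first derivatives. Combining \eqref{eq:wave_phi} with the definition of $X$ yields
\[
\partial_u(r^2\partial_v\phi) = r\partial_u r\cdot\partial_v\phi - r\partial_v r\cdot\partial_u\phi = -r^2\partial_u r\,\partial_v r\cdot X\phi,
\]
and symmetrically $\partial_v(r^2\partial_u\phi) = +r^2\partial_u r\,\partial_v r\cdot X\phi$. Equations \eqref{eq:wave_r_u}--\eqref{eq:wave_r_v}, combined with Lemma~\ref{lem:upperbounds_lapse}, yield the analogous $u$- and $v$-derivative bounds for $-r\partial_v r$ and $-r\partial_u r$ respectively. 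For any function $F$ satisfying $|\partial_u F|\leq Dr^{-1}$ (ESFSS) or $|\partial_u F|\leq Dr^{-2+\alpha}$ (EMSFSS), integration at fixed $v$ with $du\leq(r/R_0)\,dr$ gives $|F(u_2,v)-F(u_1,v)|\lesssim|r(u_2,v)-r(u_1,v)|$ or $\lesssim|r(u_2,v)-r(u_1,v)|^{\alpha}$ respectively. Since $-r\partial_u r\leq R_1$ makes $r^2$ Lipschitz, $r$ itself is $1/2$-H\"older, so altogether $r^2\partial_v\phi$ and $-r\partial_v r$ lie in $C^{0,\beta}$ along constant-$v$ slices, and symmetrically along constant-$u$ slices; the quotient $r^2L\phi = (r^2\partial_v\phi)/(-r\partial_v r)$, with denominator bounded below, is therefore $C^{0,\beta}$ in the $u$-direction, and likewise $r^2\underline{L}\phi$ in the $v$-direction.

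To upgrade to $C^{0,\beta}$ regularity of $\Psi$ in both directions, I would use the identity $r^2L\phi - r^2\underline{L}\phi = r^2X\phi$. By Proposition~\ref{prop:scalarfield_xphi} one has $r^2X\phi = O(r)$ (ESFSS) or $O(r^{\alpha})$ (EMSFSS), and the direct computation $\partial_u(r^2X\phi) = 2r\partial_u r\cdot X\phi - r^3\partial_u r\cdot\underline{L}X\phi$ (plus its $v$-counterpart) shows that $r^2X\phi$ is in fact $C^{0,\beta}$ in both directions. Then $r^2L\phi = r^2\underline{L}\phi + r^2X\phi$ inherits $C^{0,\beta}$ regularity in $v$, and vice versa, so $\Psi=(r^2L\phi+r^2\underline{L}\phi)/2 \in C^{0,\beta}$; the sup bound on $\Psi$ is immediate from Proposition~\ref{prop:upperbounds_phi}.

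For $\Xi$, the decisive observation is that $r^2\underline{L}\phi = \Psi - r^2X\phi/2$ yields
\[
\partial_u\phi = -r\partial_u r\cdot\underline{L}\phi = -\frac{\partial_u r}{r}\Psi + \frac{r\partial_u r}{2}X\phi,
\]
so differentiating $\Xi$ and using $\partial_u\log(r_0/r) = -\partial_u r/r$ produces the cancellation
\[
\partial_u\Xi = \partial_u\phi + \frac{\partial_u r}{r}\Psi - \log\!\frac{r_0}{r}\cdot\partial_u\Psi = -\log\!\frac{r_0}{r}\cdot\partial_u\Psi + O(r^{-1})
\]
(with $O(r^{-2+\alpha})$ in the EMSFSS case). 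Writing $\partial_u\Psi = \partial_u(r^2L\phi) - \partial_u(r^2X\phi)/2$ and applying the bounds above gives $|\partial_u\Psi|\lesssim r^{-1}$ or $r^{-2+\alpha}$, hence $|\partial_u\Xi|\lesssim r^{-1}\log r^{-1}$ or $r^{-2+\alpha}\log r^{-1}$. Integrating in $u$ and performing a dichotomy on whether $r(u_2,v)\geq|u_2-u_1|^{1/2}$ (so $\log r^{-1}\lesssim\log|u_2-u_1|^{-1}$) or $r(u_2,v)<|u_2-u_1|^{1/2}$ (so the integration range is contained in $(0,C|u_2-u_1|^{1/2}]$, and $\int_0^\delta\log r^{-1}\,dr\lesssim\delta\log\delta^{-1}$ produces the log factor directly) converts this to the desired log-H\"older bound, with the symmetric $v$-argument giving the full estimate. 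Boundedness $\sup|\Xi|\leq D$ follows by integrating $\partial\Xi$ from $(u_0,v_0)$. The main obstacle is precisely this dichotomy, which is needed to convert the pointwise $\log r^{-1}$ factor in $\partial_u\Xi$ into the $\log|u_2-u_1|^{-1}$ factor required by the $C^{0,\beta,\log}$ norm definition~\eqref{eq:logholder}.
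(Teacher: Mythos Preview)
Your proof is correct and follows essentially the same strategy as the paper: both use Proposition~\ref{prop:scalarfield_xphi} to obtain bounds $|L\Psi|+|\underline{L}\Psi|\lesssim r^{-2+2\beta}$ (equivalently $|\partial_u\Psi|+|\partial_v\Psi|\lesssim r^{-1+2\beta}$) and then integrate, with an extra $\log r^{-1}$ factor for $\Xi$. The paper streamlines the bookkeeping by computing $L\Psi$ and $\underline{L}\Psi$ directly from transport equations for $r^2L\phi$, $r^2\underline{L}\phi$, $r^2X\phi$, rather than passing through the quotient $r^2L\phi=(r^2\partial_v\phi)/(-r\partial_v r)$ and then transferring regularity between null directions via $r^2X\phi$; conversely, your dichotomy argument for the log-H\"older bound on $\Xi$ is more explicit than the paper, which merely says the argument for $\Psi$ carries over ``with appropriate modifications.''
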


\blue{
\begin{remark}
    To explain the definitions \eqref{eq:Psi}--\eqref{eq:Xi}, recall from Theorem~\ref{thm:esfss}(\ref{item:esfss_thm_ii}) and Theorem~\ref{thm:emsfss}(\ref{item:emsfss_thm_ii}) that we wish to write $\phi = \Psi \cdot \log( \frac{r_0}{r} ) + \Xi$. This immediately justifies \eqref{eq:Xi}, while to justify \eqref{eq:Psi}, we recall that $L$ and $\underline{L}$ are, in some sense, derivatives of the form $- r^{-1} \frac{d}{dr}$ adapted to null hypersurfaces. As a result, $r^2 \cdot \frac{L + \underline{L}}{2}$ can be viewed as a $- r \frac{d}{dr}$ derivative, and thus identifies the coefficient $\Psi$ in front of $\log(\frac{r_0}{r})$.
\end{remark}
}

\begin{proof}
    We study the derivatives of the quantities $\Psi$ and $\Xi$. Using the wave equation (\ref{eq:wave_phi_copy}), we produce some null transport equations for $r^2 L \phi$ and $r^2 \underline{L} \phi$. %see the discussion at the start of Section \ref{scalarfield}. 
    Using (\ref{eq:wave_phi_u}) as well as (\ref{eq:wave_r_u}), one finds
    \begin{equation} \label{eq:wave_rlphi_u}
        \underline{L} (r^2 L \phi) = - X \phi + \frac{\Omega^2}{4 r^2 \partial_u r \partial_v r} \left( \frac{Q^2}{r^2} - 1 \right) \cdot r^2 L \phi.
    \end{equation}
    One finds an analogous equation for $r^2 \underline{L} \phi$:
    \begin{equation} \label{eq:wave_rlphi_v}
        L (r^2 \underline{L} \phi) = X \phi + \frac{\Omega^2}{4 r^2 \partial_u r \partial_v r} \left( \frac{Q^2}{r^2} - 1 \right) \cdot r^2 \underline{L} \phi.
    \end{equation}

    As $\Psi$ contains both $L$ and $\underline{L}$ derivatives, it is useful to also include the following null transport equations for $r^2 X \phi$. Using that $L r^2 = \underline{L} r^2 = -2$, we have:
    \begin{gather} \label{eq:wave_xphi_u}
        \underline{L}(r^2 X \phi) = r^2 \underline{L} X \phi - 2 X \phi,
        \\[1em] \label{eq:wave_xphi_v}
        L(r^2 X \phi) = r^2 L X \phi - 2 X \phi.
    \end{gather}

    By Proposition \ref{prop:scalarfield_xphi} and Lemma \ref{lem:upperbounds_lapse}, all the terms on the right hand sides of (\ref{eq:wave_rlphi_u}), (\ref{eq:wave_rlphi_v}), (\ref{eq:wave_xphi_u}), (\ref{eq:wave_xphi_v}) are $O(r^{-1})$ when $Q = 0$ and $O(r^{-2 + \alpha})$ when $Q \neq 0$. This implies that these equations can be integrated up to $r = 0$. These equations, along with the definition (\ref{eq:Psi}) for $\Psi$, yield
    \begin{gather} \label{eq:wave_psi_u}
        \underline{L} \Psi = - \frac{1}{2} r^2 \underline{L} X \phi + \frac{\Omega^2}{4 r^2 \partial_u r \, \partial_v r} \left( \frac{Q^2}{r^2} - 1 \right) \cdot r^2 \underline{L} \phi.
        \\[1em] \label{eq:wave_psi_v}
        L \Psi = \frac{1}{2} r^2 L X \phi + \frac{\Omega^2}{4 r^2 \partial_u r \, \partial_v r} \left( \frac{Q^2}{r^2} - 1 \right) \cdot r^2 L \phi.
    \end{gather}

    Applying Proposition \ref{prop:scalarfield_xphi}, there exists a constant $D$, depending on the data, $R_0$ and $\alpha$, such that
    \begin{equation} \label{eq:llpsi}
        |\underline{L} \Psi| + |L \Psi| \leq D \, r(u, v)^{-2 + 2 \beta}.
    \end{equation}
    Here $\beta$ is as defined in the statement of Proposition \ref{prop:scalarfield_asymptotics}. We now use this to prove the desired $C^{0, \beta}$ regularity: we first control the H\"older seminorm in the $u$-direction:
    \begin{align}
        |\Psi(\tilde{u}_1, v) - \Psi(\tilde{u}_0, v)|
        &= \left| \int^{\tilde{u}_1}_{\tilde{u}_0} \partial_u \Psi (\tilde{u}, v) \, d \tilde{u} \right|, \nonumber \\[1em]
        &\leq \int^{\tilde{u}_1}_{\tilde{u}_0} | r \partial_u r \cdot \underline{L} \Psi (\tilde{u}, v) | \, d \tilde{u}, \nonumber \\[1em]
        &\leq D \int^{r(\tilde{u}_0, v)}_{r(\tilde{u}_1, v)} \tilde{r}^{-1 + 2 \beta} \, d \tilde{r} = \frac{D}{2 \beta} \cdot ( r(\tilde{u}_0, v)^{2 \beta} - r(\tilde{u}_1, v)^{2 \beta}). \nonumber
    \end{align}
    We next use the inequality $x^{\beta} - y^{\beta} \leq (x - y)^{\beta}$ for $x > y > 0$ to get
    \begin{equation} \label{eq:psi_u_final}
        |\Psi(\tilde{u}_1, v) - \Psi(\tilde{u}_0, v)|
        \leq \frac{D}{2 \beta} \cdot |r^2(\tilde{u}_0, v) - r^2(\tilde{u}_1, v)|^{\beta}.
    \end{equation}

    Since the H\"older seminorm is defined with respect to $u$ and not $r$, we need to again use the upper bounded for $- r \partial_u r$; we have $|r^2(\tilde{u}_0, v) - r^2(\tilde{u}_1, v)| \leq |\tilde{u}_1 - \tilde{u}_0| \cdot \sup_{\tilde{u}_0 \leq \tilde{u} \leq \tilde{u}_1} ( - 2 r \partial_u r (\tilde{u}, v))$. Since $-r \partial_u r \leq R_1$ is uniformly bounded by Lemma~\ref{lem:setup_esfss}, we deduce from (\ref{eq:psi_u_final}) that
    \begin{equation}
        \sup_{ \substack{
                [\tilde{u}_0, \tilde{u}_1] \times \{v\} \\ \subset \mathcal{D}
        }}\frac{|\Psi(\tilde{u}_1, v) - \Psi(\tilde{u}_0, v)|}{|\tilde{u}_1 -\tilde{u}_0|^{\beta}} \leq D.
    \end{equation}
    The $\beta$-H\"older seminorm in the $v$-direction is similar, while the boundedness of $|\Psi|$ follows immediately from the seminorm bounds and the fact that $\mathcal{D}$ is a bounded domain. This completes the estimates for $\Psi$.

    We now move onto the quantity $\Xi$ defined in (\ref{eq:Xi}), describing the $O(1)$ term in the asymptotic expansion of $\phi$. From (\ref{eq:Xi}), (\ref{eq:Psi}), and $L \log r = \underline{L} \log r = - \frac{1}{r^2}$, one finds
    \begin{gather} \label{eq:wave_xi_u}
        \underline{L} \Xi = \underline{L} \phi - \frac{\Psi}{r^2} - \underline{L} \Psi \cdot \log \left( \frac{r_0}{r} \right) = - \frac{1}{2} X \phi - \underline{L} \Psi \cdot \log \left( \frac{r_0}{r} \right),
        \\[1em] \label{eq:wave_xi_v}
        L \Xi = L \phi - \frac{\Psi}{r^2} - L \Psi \cdot \log \left( \frac{r_0}{r} \right) = + \frac{1}{2} X \phi - L \Psi \cdot \log \left( \frac{r_0}{r} \right).
    \end{gather}

    From Proposition \ref{prop:scalarfield_xphi} that $|X \phi| \lesssim r^{-2 + 2 \beta}$, while the terms involving $L \Psi$ and $\underline{L} \Psi$ on the right hand side are bounded using (\ref{eq:llpsi}). Unfortunately, the presence of the $\log(r_0 / r)$ term on the right hand side produces some loss on the exponent, and we therefore only have
    \begin{equation} \label{eq:llxi}
        |L \Xi| + |\underline{L} \Xi| \leq \tilde{D} \, r^{-2 + 2\beta} \log r^{-1}.% \leq D \, r^{-2 + \beta'},
    \end{equation}
    Given (\ref{eq:llxi}), the remainder of the proof follows the same strategy as for $\Psi$, with appropriate modifications due to the $\log r^{-1}$ term. This concludes the proof of the proposition.
\end{proof}

\begin{remark}
    The important steps of this proof are (\ref{eq:llpsi}) and (\ref{eq:llxi}). Once these estimates are found, there are other ways to characterise the regularity of $\Psi$ and $\Xi$. For instance, one can instead find uniform Sobolev regularity on null hypersurfaces of the form:
    \begin{equation*}
        \Psi \in L^{\infty}_u W^{1, p_{\beta}^-}_v \cap L^{\infty}_v W^{1, p_{\beta}^-}_u,
        \quad \Xi \in L^{\infty}_u W^{1, p_{\beta}^-}_v \cap L^{\infty}_v W^{1, p_{\beta}^-}_u,
        \quad p_{\beta} = \frac{1}{1 - \beta}.
    \end{equation*}
    %Here the Sobolev space $W^{1, p^-}$ means the union $\cup_{p' < p} W^{1, p}$. 
    %We note also that by Morrey's inequality, this Sobolev regularity on null hypersurfaces also implies the $\beta'$-H\"older regularity of $\Psi$ and $\Xi$ for any $\beta' < \beta$.
%
%    Alternatively, one can find the following spacetime Sobolev regularity:
%    \begin{equation*}
%        \Psi \in W^{1, 2 p_{\beta}^-} (\mathcal{D}), \quad
%        \Xi \in W^{1, 2 p_{\beta} ^-} (\mathcal{D}).
%    \end{equation*}
%    We remark that by Morrey's inequality, either this spacetime Sobolev regularity or the aforementioned regularity on null hypersurfaces would imply that $\Psi$ and $\Xi$ are $\beta'$-H\"older continuous for any $\beta' < \beta$.
\end{remark}

%\subsection{Remarks on regularity} \label{scalarfield.regularity}

%auto-ignore
\section{Proofs of the main Theorems} \label{bklasymp}

In this section, we conclude the proofs of Theorems~\ref{thm:esfss} and \ref{thm:emsfss}, as well as the proof of Corollary~\ref{cor:bkl}.

\subsection{Asymptotics for \texorpdfstring{$\Omega^2$}{Ω2}} \label{bklasymp.geometry}

We firstly provide the precise asymptotics for the null lapse $\Omega^2(u, v)$, and related geometric quantities.% allowing us to conclude the proofs of Theorems~\ref{thm:esfss} and \ref{thm:emsfss} in the next section.

\begin{proposition} \label{prop:bklasymp_lapse}
    Consider a strongly singular solution to the EMSFSS system as in Section \ref{setup.sing}. Recalling the function $\Psi: \mathcal{D} \to \R$ from \eqref{eq:Psi} in Proposition \ref{prop:scalarfield_asymptotics}, there exists $D > 0$ depending on data, $R_0$ and $\alpha$, such that
    \begin{equation} \label{eq:bklasymp_lapse_upper}
        \left| \log \left( \Omega^2 \cdot \left( \frac{r_0}{r} \right)^{- \Psi^2 + 1} \right) \right| \leq D.
    \end{equation}

    In fact we find the following precise asymptotics for the \textit{gauge-invariant} quantity $g(\nabla r, \nabla r) = 4 \Omega^{-2} \partial_u r \, \partial_v r$. There exists a positive function $\mathfrak{M}: \mathcal{D} \to \R$, $C^{0, \beta, \log}$ up to the $\{ r = 0 \}$ boundary, with
    \begin{equation} \label{eq:mfrak}
        4 \Omega^{-2} \partial_u r \, \partial_v r \eqqcolon \mathfrak{M} \cdot \left( \frac{r}{r_0} \right)^{-(\Psi^2 + 1)},
    \end{equation}
    Here the H\"older exponent $\beta$ is as in Proposition \ref{prop:scalarfield_asymptotics}, and one finds the following bounds on $\mathfrak{M}$:
    \begin{equation} \label{eq:mfrak_est}
        \| \log \mathfrak{M} \|_{C^{0,\beta,\log}} + \| \mathfrak{M} \|_{C^{0,\beta,\log}} \leq D.
    \end{equation}
\end{proposition}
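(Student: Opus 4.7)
The plan is to reduce the analysis of $\mathfrak{M}$ to that of two auxiliary quantities
$$\mathfrak{N}_v \coloneqq \log\!\left(\frac{\Omega^2}{-\partial_v r}\cdot r^{-\Psi^2}\right), \qquad \mathfrak{N}_u \coloneqq \log\!\left(\frac{\Omega^2}{-\partial_u r}\cdot r^{-\Psi^2}\right),$$
each adapted to one of the logarithmic Raychaudhuri equations \eqref{eq:raych_u_log}--\eqref{eq:raych_v_log}. A direct computation yields the algebraic identity
$$2\log\mathfrak{M} + \mathfrak{N}_u + \mathfrak{N}_v = \log 16 + \log\bigl((-r\partial_u r)(-r\partial_v r)\bigr) - 2(\Psi^2+1)\log r_0,$$
so once each term on the right-hand side is shown to be $C^{0,\beta,\log}$ on $\bar{\mathcal{D}}$, the claim on $\mathfrak{M}$ follows.

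The main step is to establish the regularity of $\mathfrak{N}_v$; that of $\mathfrak{N}_u$ is symmetric. Starting from \eqref{eq:raych_v_log} and subtracting the expected leading-order behaviour $\Psi^2\,\partial_v\log r$ gives
$$\partial_v\mathfrak{N}_v = \frac{\partial_v r}{r}\bigl[(r^2 L\phi)^2 - \Psi^2\bigr] - \log r\cdot \partial_v(\Psi^2).$$
Using $r^2 L\phi = \Psi + \tfrac12 r^2 X\phi$ together with Proposition \ref{prop:scalarfield_xphi} (for $X\phi$) and the estimate $|L\Psi|\lesssim r^{-2+2\beta}$ extracted from the proof of Proposition \ref{prop:scalarfield_asymptotics}, I would obtain $|L\mathfrak{N}_v|\lesssim r^{-2+2\beta}\log r^{-1}$. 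Changing variables from $dv$ to $dr$ brings in a factor of order $r$, so the relevant integrand becomes $r^{-1+2\beta}\log r^{-1}$, which is integrable on $[0, r_0]$. This yields the $L^\infty$ bound on $\mathfrak{N}_v$ and its $C^{0,\beta,\log}$ seminorm along curves of constant $u$, together with a continuous limit on $\mathcal{S}$. The transversal regularity (in the $u$ direction) is then recovered from the symmetric analysis of $\mathfrak{N}_u$ combined with the trivial relation $\mathfrak{N}_u - \mathfrak{N}_v = \log\!\bigl((-r\partial_v r)/(-r\partial_u r)\bigr)$, provided $-r\partial_u r$ and $-r\partial_v r$ are themselves H\"older continuous up to $\mathcal{S}$.

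The latter H\"older regularity follows by integrating \eqref{eq:wave_r_u} in $u$ using the bound $\Omega^2\lesssim r^{\Psi^2-1}$ extracted from the $L^\infty$ control on $\mathfrak{N}_v$ (together with the sharper $\Omega^2 \lesssim r^{-1}$ from Lemma \ref{lem:upperbounds_lapse} in the uncharged case). The integrand is $O(r^{\Psi^2-1})$ when $Q\equiv 0$ and $O(r^{\Psi^2-3})$ when $Q\not\equiv 0$; after the change $du\sim r\, dr$ these become $r^{\Psi^2}$ and $r^{\Psi^2-2}$, and integrability near $r=0$ holds unconditionally in the first case but requires precisely $\Psi^2 > 1$ in the second -- matching the subcriticality assumption \eqref{eq:cond_phi}. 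Quantifying the oscillation yields $\beta = 1/2$ for ESFSS and $\beta = \alpha/2$ for EMSFSS. Substituting everything into the algebraic identity then yields \eqref{eq:mfrak_est}, and \eqref{eq:bklasymp_lapse_upper} follows by direct rearrangement. The main obstacle I anticipate is the circular dependence between the regularity of $\mathfrak{N}_u, \mathfrak{N}_v$ on one hand and that of $-r\partial_u r, -r\partial_v r$ on the other; my strategy resolves this via a bootstrap, first extracting the one-sided $L^\infty$ bound on $\mathfrak{N}_v$ (which uses only integration of \eqref{eq:raych_v_log} along $L$-integral curves and is insensitive to the H\"older regularity of $-r\partial_u r$), then using this to recover the H\"older regularity of $-r\partial_u r, -r\partial_v r$ from \eqref{eq:wave_r_u}, and only then upgrading to full $C^{0,\beta,\log}$ regularity of $\mathfrak{M}$.
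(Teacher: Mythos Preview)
Your proposal is correct and follows the same core strategy as the paper: rewrite the logarithmic Raychaudhuri equation, subtract the leading order $\Psi^2\,\partial\log r$, and bound the remainder via the factorization $(r^2 L\phi)^2 - \Psi^2 = (r^2 L\phi + \Psi)\cdot\tfrac12 r^2 X\phi$, using Proposition~\ref{prop:scalarfield_xphi} for $X\phi$ and the estimate $|L\Psi|,|\underline{L}\Psi|\lesssim r^{-2+2\beta}$ from the proof of Proposition~\ref{prop:scalarfield_asymptotics}.

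The paper's organization is slightly more direct than yours. Rather than treat $\mathfrak{N}_u,\mathfrak{N}_v$ separately and reassemble them through your algebraic identity, it combines the Raychaudhuri computation with the wave equation \eqref{eq:wave_r_u}, written as $\underline{L}\log(-r\partial_v r)=\frac{\Omega^2}{4\partial_u r\partial_v r}(1-Q^2/r^2)$, to obtain closed expressions for $\underline{L}\log\mathfrak{M}$ and $L\log\mathfrak{M}$ in one step (equations \eqref{eq:mfrak_u}--\eqref{eq:mfrak_v}). Both null directions are then handled symmetrically, and the $C^{0,\beta,\log}$ regularity follows exactly as for $\Xi$.

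Your anticipated ``circular dependence'' is not actually present, so the bootstrap is unnecessary. The H\"older regularity of $-r\partial_u r$ and $-r\partial_v r$ is already available from Proposition~\ref{prop:upperbounds_higher_geometry}, whose proof uses only the crude bound on $\Omega^2$ from Lemma~\ref{lem:upperbounds_lapse} (namely $\Omega^2\lesssim r^{-1}$ for ESFSS, $\Omega^2\lesssim r^{\alpha}$ for EMSFSS), not the sharper $\Omega^2\lesssim r^{\Psi^2-1}$. Incidentally, the phrase ``sharper $\Omega^2\lesssim r^{-1}$'' is backwards: for $\Psi^2>0$ this is the \emph{weaker} of the two bounds, though it already suffices for the integrability you need.
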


\begin{proof}
    We use the Raychaudhuri equation (\ref{eq:raych_u}), which we record again for convenience:
    \begin{equation*}
        \partial_u ( - \Omega^{-2} \partial_u r ) = \Omega^{-2} r |\partial_u \phi|^2,
    \end{equation*}
    We rewrite this equation as a null transport equation for the following logarithmic quantity involving $\Omega^2$, \blue{where we introduce a harmless factor of $4$ in order to be consistent with the left hand side of \eqref{eq:mfrak}}:
    \begin{equation*}
        \partial_u \log \left( \frac{\Omega^2}{- 4 \partial_u r} \right) = \frac{r}{\partial_u r} \cdot |\partial_u \phi|^2.
    \end{equation*}
    
    Converting to the gauge-invariant derivative $\underline{L}$, one finds:
    \begin{equation} \label{eq:raych_u_l}
        \underline{L} \log \left( \frac{\Omega^2}{- 4 \partial_u r} \right) = - \frac{1}{r^2} \cdot | r^2 \underline{L} \phi|^2.
    \end{equation}
    Of course, there is also the analogous equation:
    \begin{equation} \label{eq:raych_v_l}
        L \log \left( \frac{\Omega^2}{- 4 \partial_v r} \right) = - \frac{1}{r^2} \cdot | r^2 L \phi|^2.
    \end{equation}

    By Proposition \ref{prop:upperbounds_phi}, the right hand sides of (\ref{eq:raych_u_l}) and (\ref{eq:raych_v_l}) are $O(r^{-2})$, and thus borderline non-integrable as $r \to 0$. We solve this issue by subtracting from (\ref{eq:raych_u_l}) and (\ref{eq:raych_v_l}) the leading order term, which is exactly $\Psi^2 \cdot r^{-2}$. Since $L \log r = \underline{L} \log r = - \frac{1}{r^2}$, we therefore find:
    \begin{equation*}
        \underline{L} \log \left( \frac{\Omega^2}{- 4 \partial_u r} \right) - \underline{L} \left( \Psi^2 \log \left( \frac{r}{r_0} \right) \right) 
        = - \frac{1}{r^2} \cdot ( |r^2 \underline{L} \phi|^2 - \Psi^2 ) - 2 \Psi (\underline{L} \Psi) \log \left( \frac{r}{r_0} \right)
    \end{equation*}

    Taking the difference of two squares on the right hand side, and combining the terms on the left, we get:
    \begin{equation} \label{eq:lapse_u_pre}
        \underline{L} \log \left( \frac{\Omega^2}{- 4 \partial_u r} \cdot \left( \frac{r}{r_0} \right)^{- \Psi^2} \right) 
        = \frac{1}{2} (X \phi) (r^2 \underline{L} \phi + \Psi) - 2 \Psi (\underline{L} \Psi) \log \left( \frac{r}{r_0} \right).
    \end{equation}
    We now follow a similar method to the proof of Proposition \ref{prop:scalarfield_asymptotics}. By (\ref{eq:scalarfield_xphi}) and (\ref{eq:llpsi}), the right hand side of (\ref{eq:lapse_u_pre}) is bounded by $r^{-2 + 2 \beta} \log r^{-1}$:
    \begin{equation}
        \left| \underline{L} \log \left( \frac{\Omega^2}{- 4 \partial_u r} \cdot \left( \frac{r}{r_0} \right)^{- \Psi^2} \right) \right|
        \leq D \cdot r^{-2 + 2 \beta} \log \left( \frac{r_0}{r} \right),
    \end{equation}

    Hence integrating in the $\underline{L}$ direction, where as usual this involves multiplying by $r$ then integrating in $r$, we observe that the right-hand side is integrable towards $r = 0$, and 
    \begin{equation*}
        \left| \log \left( \frac{\Omega^2}{- 4 \partial_u r} \cdot \left( \frac{r}{r_0} \right)^{- \Psi^2} \right) \right| \lesssim 1.
    \end{equation*}
    We finally exponentiate this expression, and use that $- r \partial_u r \lesssim 1$ to get the sharp upper bound (\ref{eq:bklasymp_lapse_upper}) for $\Omega^2$ as a polynomial power of $r$. 

    To find the sharp asymptotics for $\mathfrak{M}$ as defined in (\ref{eq:mfrak}), the additional ingredient is to use (\ref{eq:wave_r_u}) to get:
    \begin{equation} \label{eq:rvr_u}
        \underline{L} \log ( - r \partial_v r ) = \frac{\Omega^2}{4 \partial_u r \partial_v r} \cdot \left( 1 - \frac{Q^2}{r^2} \right).
    \end{equation}
    Combining (\ref{eq:rvr_u}) with (\ref{eq:lapse_u_pre}), we therefore find that
    \begin{equation} \label{eq:mfrak_u}
        \underline{L} \log \mathfrak{M} = \frac{\Omega^2}{4 \partial_u r \partial_v r} \cdot \left( 1 - \frac{Q^2}{r^2} \right) - \frac{1}{2} (X \phi)( r^2 \underline{L} \phi + \Psi) + 2 \Psi (\underline{L} \Psi) \log \left( \frac{r}{r_0} \right).
    \end{equation}
    Similarly, in the $v$-direction we have
    \begin{equation} \label{eq:mfrak_v}
        L \log \mathfrak{M} = \frac{\Omega^2}{4 \partial_u r \partial_v r} \cdot \left( 1 - \frac{Q^2}{r^2} \right) + \frac{1}{2} (X \phi)( r^2 L \phi + \Psi) + 2 \Psi (L \Psi) \log \left( \frac{r}{r_0} \right).
    \end{equation}

    Now we may proceed in the same manner we used to \blue{derive the} H\"older continuity for $\Psi$ and $\Xi$ in Proposition \ref{prop:scalarfield_asymptotics}. Using Proposition \ref{prop:scalarfield_xphi}, Proposition \ref{prop:scalarfield_asymptotics}, and Lemma \ref{lem:upperbounds_lapse}, one has
    \begin{equation} \label{eq:mfrakll}
        |L \log \mathfrak{M}| + |\underline{L} \log \mathfrak{M}| \leq D \cdot r^{-2 + 2 \beta} \log \left( \frac{r_0}{r} \right).
    \end{equation}
    The remainder of the proof then follows exactly as in the proof of Proposition \ref{prop:scalarfield_asymptotics}.% -- note that deducing the bounds for $\| \mathfrak{M} \|_{C^{\beta'}}$ given those for $\| \log \mathfrak{M} \|_{C^{\beta'}}$ is straightforward.
\end{proof}

\begin{remark}
    Taking the difference of (\ref{eq:mfrak_v}) and (\ref{eq:mfrak_u}) yields the equation:
    \begin{equation} \label{eq:mfrakx}
        X \log \mathfrak{M} = 2 (X \phi) \Psi + 2 \Psi (X \Psi) \log \left( \frac{r}{r_0} \right) = 2 \Psi \cdot X \Xi.
    \end{equation}
    This can be viewed as a momentum constraint relating the quantities $ \log \mathfrak{M}$, $\Psi$ and $\Xi$ on any constant-$r$ hypersurfaces $\Sigma_{\tilde{r}}$. 

    Since $\mathfrak{M}$, $\Psi$ and $\Xi$ extend continuously to the $\{ r = 0 \}$ singularity, one would also like to view (\ref{eq:mfrakx}) as an asymptotic momentum constraint at the singularity, see \eqref{eq:asymp_momentum}.  
    However, in order for this to be true, we need $\log \mathfrak{M}$ and $\Xi$ to be regular enough to make sense of the spatial derivative $\tilde{\nabla}$ along the singularity, which is unfortunately not achieved by Theorems~\ref{thm:esfss} and \ref{thm:emsfss}.%This will indeed be true, as long as $\log \mathfrak{M}$ and $\Xi$ are regular enough for one to make sense of the derivatives $X \log \mathfrak{M}$ and $X \Xi$ at the singularity. As discussed in Section \ref{scalarfield.regularity}, we cannot achieve this with the methods of this article, unless we are in a regime with $\Psi$ large enough to upgrade the H\"older regularity of $\Psi$ and $\Xi$ to Lipschitz regularity along constant-$r$ hypersurfaces.

    In order to prove an identity of the form \eqref{eq:asymp_momentum}, it seems we need to at least be able to upgrade the H\"older regularity of $\Psi_{\infty}$, $\Xi_{\infty}$ and $\mathfrak{M}_{\infty}$ to Lipschitz regularity.
\end{remark}

\subsection{Proof of Theorems \ref{thm:esfss} and \ref{thm:emsfss}} \label{bklasymp.proof}

We finally combine the results of Section \ref{upperbounds}, Section \ref{scalarfield} and Proposition~\ref{prop:bklasymp_lapse} above to complete the proofs of Theorems~\ref{thm:esfss} and \ref{thm:emsfss}. We shall prove both theorems simultaneously using the now familiar notation
\begin{equation} \label{eq:beta}
    \beta = \begin{cases} 1/2 & \text{ if } Q = 0, \\ \alpha / 2 & \text{ if } Q \neq 0 \text{ and \eqref{eq:cond_r}, \eqref{eq:cond_phi} hold.} \end{cases}
\end{equation}
With brevity in mind, when we refer to equations in Theorems~\ref{thm:esfss} and \ref{thm:emsfss} we generally only mention those of the latter, e.g.~equation (\ref{eq:emsfss_thm_tip_r_u}) will refer to (\ref{eq:esfss_thm_tip_r_u}) also.

\begin{proof}[Proof of (\ref{item:esfss_thm_i})]
    The estimate (\ref{eq:cond_pre_r}) holds in the ESFSS case by Lemma~\ref{lem:setup_esfss}, while the analogous (\ref{eq:cond_pre_r_}) in the EMSFSS model follows, assuming (\ref{eq:cond_r}), by the same lemma. To prove that $r^2(u, v)$ is $C^{1, \beta}$, we use Proposition~\ref{prop:upperbounds_higher_geometry}, specifically estimates (\ref{eq:upperbounds_rr_der}), (\ref{eq:upperbounds_rr_der2}), (\ref{eq:upperbounds_rr_der_em}) and (\ref{eq:upperbounds_rr_der2_em}).

    These estimates, combined with (\ref{eq:cond_pre_r}), give that
    \begin{equation} \label{eq:rurll}
        | L ( -r \partial_u r) | + | \underline{L} ( - r \partial_u r) | + | L ( - r \partial_v r)| + | \underline{L} ( - r \partial_v r )| \lesssim \frac{1}{r^{-2 + 2\beta}}.
    \end{equation}
    From here, a familiar argument (e.g.\ from the proof of Proposition~\ref{prop:scalarfield_asymptotics}) yields that the expressions $- r \partial_u r(u, v)$ and $- r \partial_v r (u, v)$ are $C^{0, \beta}$ in $\bar{\mathcal{D}} = \mathcal{D} \cup \mathcal{S}$. These expressions are exact multiples of the first derivatives of $r^2(u, v)$, hence $r^2$ is $C^{1, \beta}$ in $\mathcal{D}$ with respect to the double-null coordinates $(u, v)$. 

    Furthermore, by (\ref{eq:cond_pre_r}) the gradient of $r^2(u, v)$ is non-vanishing in $\bar{\mathcal{D}}$, so the level sets of $r^2$, including the limiting level set $\mathcal{S} = \{ r(u, v) = 0 \} $, are $C^{1, \beta}$ curves. It is then clear that one may parameterize using the $C^{1, \beta}$ function $v_*(u)$ as described in the theorem.
    \renewcommand{\qedsymbol}{}
\end{proof}

\begin{proof}[Proof of (\ref{item:esfss_thm_ii})]
    This follows immediately from Proposition~\ref{prop:scalarfield_asymptotics}. \blue{We make the additional comment that in the context of Theorem~\ref{thm:emsfss}, where \eqref{eq:QTS} holds in $\mathcal{D}$, Proposition~\ref{prop:scalarfield_asymptotics} together with \eqref{eq:QTS} implies that $r^2 L \phi$ and $r^2 \underline{L} \phi$ have the same sign, and therefore, that $\Psi$ in \eqref{eq:Psi} obeys $\Psi^2 \geq 1 + \alpha$.}
    \renewcommand{\qedsymbol}{}
\end{proof}

\begin{proof}[Proof of (\ref{item:esfss_thm_iii})]
    This is exactly equation (\ref{eq:mfrak}) of Proposition~\ref{prop:bklasymp_lapse}, and the following estimate (\ref{eq:mfrak_est}).
    \renewcommand{\qedsymbol}{}
\end{proof}

\begin{proof}[Proof of (\ref{item:emsfss_thm_iv})]
    \renewcommand{\qedsymbol}{}
    We pick now a point $p \in \mathcal{S}$, and consider its causal past $J^-(p)$. We shall prove the various estimates of (\ref{item:emsfss_thm_iv}) by integrating ``backwards'' from $p \in \mathcal{S}$, given that we already know that the limiting values $C_u(p) > 0$, $C_v(p) > 0$, $\Psi_{\infty}(p)$, $\Xi_{\infty}(p)$ and $\mathfrak{M}_{\infty}(p) > 0$ exist.

    We illustrate this by giving a complete proof of (\ref{eq:emsfss_thm_tip_r_u}) in Theorem~\ref{thm:emsfss}. Let $p = (u_p, v_*(u_p))$, and suppose $(u, v) \in J^-(p)$. Then one has:
    \begin{align*}
        {|} - {r} \partial_u r (u, v) - C_u(p) | 
        &\leq | - {r} \partial_u r (u, v) + r \partial_u r (u_p, v) | + | - r \partial_u r (u_p, v) + r \partial_u r (u_p, v_*(u_p)) |, \\[0.5em]
        &\leq \int_{u}^{u_p} |\partial_u ( - r \partial_u r )(\tilde{u}, v)| \, d\tilde{u} + \int_{v}^{v_*(u_p)} |\partial_v ( -r \partial_u r ) (u_p, \tilde{v})| \, d \tilde{v}, \\[0.5em]
        &= \int_{r(u_p, v)}^{r(u, v)} | r \underline{L} ( - r \partial_u r) | \, dr + \int_{0}^{r(u_p, v)} | r L ( - r \partial_u r) | \, dr \lesssim r^{2 \beta}(u, v).
    \end{align*}
    Note that the final inequality followed by using (\ref{eq:rurll}) above, and evaluating the integral $\int_0^r r^{- 1 + 2 \beta} \, dr = \frac{1}{2\beta} r^{2 \beta}$. This proves (\ref{eq:emsfss_thm_tip_r_u}), and (\ref{eq:emsfss_thm_tip_r_v}) is similar.

    For the scalar field estimates, note that a similar procedure with $\Psi$ or $\Xi$ in place of $- r \partial_u r$, gives
    \begin{gather} \label{eq:thm_tip_psi}
        |\Psi(u, v) - \Psi_{\infty}(p)| \lesssim r^{2 \beta},
        \\[0.5em] \label{eq:thm_tip_xi}
        |\Xi(u, v) - \Xi_{\infty}(p)| \lesssim r^{2 \beta} \log r^{-1}.
    \end{gather}
    Here we used (\ref{eq:llpsi}) and (\ref{eq:llxi}) to estimate the appropriate null derivatives. The definition of $\Xi$ in (\ref{eq:Xi}) then yields (\ref{eq:emsfss_thm_tip_phi}).

    To translate the estimates on gauge-invariant quantities such as $\Psi$ to estimates on the coordinate derivatives $\partial_u \phi$ and $\partial_v \phi$, we use the definition of $\Psi$ in (\ref{eq:Psi}), as well as the definition of the vector fields $L$, $\underline{L}$ and $X$. For instance, we write
    \begin{equation*}
        r^2 \partial_u \phi = ( - r \partial_u r ) \cdot \left( \Psi - \tfrac{1}{2} r^2 X \phi \right).
    \end{equation*}
    Using (\ref{eq:emsfss_thm_tip_r_u}) and (\ref{eq:thm_tip_psi}) above, as well as (\ref{eq:scalarfield_xphi}) to estimate the remaining term $r^2 X \phi$, one straightforwardly deduces (\ref{eq:emsfss_thm_tip_phi_u}). (\ref{eq:emsfss_thm_tip_phi_v}) follows analogously.

    Finally, (\ref{eq:emsfss_thm_tip_lapse}) follows by integrating (\ref{eq:mfrakll}) backwards from $p$, exactly as before, to find that
    \begin{equation*}
        |\log \mathfrak{M}(u, v) - \log \mathfrak{M}_{\infty}(p)| \lesssim r^{2 \beta} \log r^{-1}.
    \end{equation*}
    Hence by the definition (\ref{eq:mfrak}) of $\mathfrak{M}$, we find
    \begin{equation*}
        \left| \frac{\Omega^2}{4 \partial_u r \partial_v r} (u, v) - \mathfrak{M}_{\infty}^{-1} \cdot \left( \frac{r}{r_0} \right)^{\Psi_{\infty}(p)^2 + 1} \right| \lesssim r^{\Psi_{\infty}(p)^2 + 1 + 2 \beta} \log r^{-1}.
    \end{equation*}
    (\ref{eq:emsfss_thm_tip_lapse}) then follows upon multiplying by $4 \partial_u r \partial_v r$, and using (\ref{eq:emsfss_thm_tip_r_u}), (\ref{eq:emsfss_thm_tip_r_v}). \blue{We comment that, since $\Psi_{\infty}^2 \geq 1 + \alpha$ when $Q \neq 0$ but \eqref{eq:SKE} applies, the estimate \eqref{eq:emsfss_thm_tip_lapse} is an enhancement of the upper bound estimate \eqref{eq:upperbound_lapse_q} found in Lemma~\ref{lem:upperbounds_lapse}, inside the light cone $J^-(p)$.}
\end{proof}

\begin{proof}[Proof of (\ref{item:esfss_thm_v})]
    The estimate (\ref{eq:emsfss_thm_tip_hawkingmass}) for the Hawking mass $m(u, v)$ is immediate from (\ref{eq:emsfss_thm_tip_lapse}), (\ref{eq:emsfss_thm_tip_r_u}), (\ref{eq:emsfss_thm_tip_r_v}), in light of the following expression for $m(u, v)$:
    \begin{equation*}
        m(u, v) = \frac{r}{2} \cdot ( 1 + 4 \Omega^{-2} \partial_u r \partial_v r) = \frac{r}{2} + \frac{2}{r \Omega^2} (- r  \partial_u r) (-r \partial_v r).
    \end{equation*}

    To estimate the Kretschmann scalar, we list all the non-vanishing components of the Riemann tensor, whose components are evaluated with respect to the following null frame:
    \begin{equation*}
        L = \frac{1}{- r \partial_v r} \partial_v, \quad
        \underline{L} = \frac{1}{- r \partial_u r} \partial_u, \quad
        e_1 = r^{-1} \partial_{\theta}, \quad e_2 = (r \sin \theta)^{-1} \partial_{\varphi}.
    \end{equation*}
    For the EMSFSS system, one computes that, modulo the symmetries of the Riemann tensor: $\mbox{Riem}(V_1, V_2, V_3, V_4) = \mbox{Riem}(V_3, V_4, V_1, V_2) = -\mbox{Riem}(V_2, V_1, V_3, V_4)$, the only non-vanishing components are (where $A, B \in \{1, 2\}$):
    \begin{gather*}
        \mbox{Riem}(L, e_A, L, e_B) = \frac{1}{r^4} |r^2 L \phi|^2 \,  \delta_{AB}, \\[0.5em]
        \mbox{Riem}(\underline{L}, e_A, \underline{L}, e_B) = \frac{1}{r^4} |r^2 \underline{L} \phi|^2 \, \delta_{AB}, \\[0.5em]
        \mbox{Riem}(L, e_A, \underline{L}, e_B) = \frac{1}{r^4} \left[ 1 + \frac{\Omega^2}{4 \partial_u r \partial_v r} \left( 1 - \frac{Q^2}{r^2} \right) \right] \delta_{AB}, \\[0.5em]
        \mbox{Riem}(e_1, e_2, e_1, e_2) = \frac{1}{r^2} \left( 1 + 4 \Omega^{-2} \partial_u r \partial_v r \right), \\[0.5em]
        \mbox{Riem}(L, \underline{L}, L, \underline{L}) = - \frac{\Omega^2}{4 \partial_u r \partial_v r} \cdot \frac{1}{r^6} \left[ 1 - (r^2 L \phi) ( r^2 \underline{L} \phi ) + \frac{\Omega^2}{4 \partial_u r \partial_v r} \left( 1 - \frac{2 Q^2}{r^2} \right) \right].
    \end{gather*}

    Using these, alongside the estimates of (\ref{item:esfss_thm_iv}), one can deduce the estimate (\ref{eq:emsfss_thm_tip_kretschmann}), describing the leading order polynomial blow-up of the Kretschmann scalar.

    Finally, (\ref{eq:emsfss_thm_tip_phi_blow}) can be deduced using (\ref{eq:emsfss_thm_tip_phi_u}), (\ref{eq:emsfss_thm_tip_phi_v}) and (\ref{eq:emsfss_thm_tip_lapse}), while (\ref{eq:emsfss_thm_tip_f}), which is an exact equality, can be computed using the form (\ref{eq:em_doublenull}) of the electromagnetic $2$-form $F_{\mu \nu}$. This completes the proof of Theorem~\ref{thm:esfss} and Theorem~\ref{thm:emsfss}.
\end{proof}

\subsection{Several higher order estimates} \label{bklasymp.higher}

We move now to Corollary~\ref{cor:bkl}. In order to make precise statements about the foliation using the time function $\tau$, defined in \eqref{eq:tau}, such as statements regarding the second fundamental form of the embedding $S_{\tau} \hookrightarrow \mathcal{M}$, we must first understand the second order derivatives of $\tau$. \blue{As before, to allow the results of Section~\ref{bklasymp.higher} and Section~\ref{bklasymp.kasner} to apply to both the ESFSS system and the EMSFSS system, we again use $\beta$ from \eqref{eq:beta}.}

As the quantity $\Psi$ appears in \eqref{eq:tau}, this means that we must now understand third-order derivatives of $\phi$. Hence we now assume the initial data to be smooth (in fact $C^3$ is sufficient), and use the following:

\begin{proposition} \label{prop:phi3}
    Consider a strongly singular solution as described in Theorems~\ref{thm:esfss} and \ref{thm:emsfss}. Assume moreover that $(r^2, \Omega^2, \phi)$ are smooth in $\mathcal{D}$. Then there exists $D$, depending only on data, such that
    \begin{equation} \label{eq:phi3}
        |LLX\phi| + |L \underline{L} X \phi| + |\underline{L} L X \phi| + |\underline{L} \underline{L} X \phi| \leq D \, r^{-6 + \blue{\beta}}.
    \end{equation}
    Furthermore, the quantities $\Psi$ and $\mathfrak{M}$ from Theorems~\ref{thm:esfss} and \ref{thm:emsfss} obey:
    \begin{equation} \label{eq:Psi3}
        |LL\Psi| + |L \underline{L} \Psi| + |\underline{L} L \Psi| + |\underline{L} \underline{L} \Psi| \leq D \, r^{-4 + \blue{\beta}},
    \end{equation}
    \begin{equation} \label{eq:M3}
        |LL\log\mathfrak{M}| + |L \underline{L} \log \mathfrak{M}| + |\underline{L} L \log \mathfrak{M}| + |\underline{L} \underline{L} \log \mathfrak{M}| \leq D \, r^{-4 + \blue{\beta}} \log r^{-1}.
    \end{equation}
\end{proposition}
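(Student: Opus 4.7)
The plan is to proceed in three stages using the framework of Section~\ref{scalarfield.xphi}: (i) extend the higher-derivative bounds of Propositions~\ref{prop:upperbounds_higher_geometry} and~\ref{prop:upperbounds_higher_phi} by one order; (ii) use these to control $\partial_*^2(X\phi)$ and establish (\ref{eq:phi3}); and (iii) differentiate the transport equations of Sections~\ref{scalarfield.asymptotics} and~\ref{bklasymp.geometry} to deduce (\ref{eq:Psi3}) and (\ref{eq:M3}).

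For stage (i), iterating the proof of Proposition~\ref{prop:upperbounds_higher_geometry} (commuting (\ref{eq:wave_omega}) and (\ref{eq:wave_r_u})--(\ref{eq:wave_r_v}) with a second coordinate derivative and integrating, exactly as in the derivation of (\ref{eq:upperbounds_lapse_der})--(\ref{eq:upperbounds_r_der})) yields polynomial bounds on all third derivatives of $r^2$ and second derivatives of $\Omega^2$, each additional derivative costing a factor of $r^{-2}$ in the power count. In parallel, setting $\psi = \partial_u^2 \phi$ and commuting the wave equation (\ref{eq:wave_phi}) for $\phi$ directly with $\partial_u$ \emph{twice}, we obtain an inhomogeneous wave equation of the form (\ref{eq:wave_psi}) for $\psi$. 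Crucially, all $\partial_u^3\phi$ contributions absorb into the principal part, so the inhomogeneity $F$ involves only first and second derivatives of $\phi$ (controlled by Propositions~\ref{prop:upperbounds_phi} and~\ref{prop:upperbounds_higher_phi}) together with derivatives of geometric quantities up to the order just bounded. Applying Proposition~\ref{prop:upperbounds_wave} then gives bounds on all third derivatives of $\phi$.

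For stage (ii), we expand $\partial_*^2(X\phi)$ using the definition (\ref{eq:scalarfield_x}) as a sum of products of the form $\partial_*^k((-r\partial_* r)^{-1}) \cdot \partial_*^{2-k}\partial_*\phi$ for $k = 0, 1, 2$; each factor is now controlled by stage (i) and the uniform bound $-r \partial_* r \gtrsim 1$, yielding (\ref{eq:phi3}) after translating between coordinate and null-frame derivatives. For stage (iii), applying an additional $L$ or $\underline{L}$ derivative to the transport equations (\ref{eq:wave_psi_u})--(\ref{eq:wave_psi_v}) and (\ref{eq:mfrak_u})--(\ref{eq:mfrak_v}) and distributing via Leibniz produces right-hand sides consisting of (a) terms of the form $\partial_*(r^2\partial_* X\phi)$, bounded using (\ref{eq:phi3}); (b) derivatives of the geometric factor $\Omega^2/(4 r^2\partial_u r\,\partial_v r) \cdot (Q^2/r^2 - 1)$, controlled via Lemma~\ref{lem:upperbounds_lapse} together with stage (i); and (c) derivatives of $r^2\partial_*\phi$, controlled by Propositions~\ref{prop:upperbounds_phi} and~\ref{prop:upperbounds_higher_phi}. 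Collecting these yields (\ref{eq:Psi3}) and (\ref{eq:M3}); in (\ref{eq:M3}) the additional $\log r^{-1}$ factor arises from the explicit $\log(r/r_0)$ already present on the right-hand sides of (\ref{eq:mfrak_u})--(\ref{eq:mfrak_v}).

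The main obstacle is the proliferation of commutator terms. The critical observation that makes the scheme work cleanly is the one already exploited in stage (i): commuting the wave equation (\ref{eq:wave_phi}) for $\phi$ itself with $\partial_u^2$ (rather than differentiating the already-derived wave equation (\ref{eq:wave_xphi}) for $X\phi$) causes every $\partial_u^3 \phi$ contribution to absorb into the principal part of an inhomogeneous wave equation of the form (\ref{eq:wave_psi}), so that the inhomogeneity fed into Proposition~\ref{prop:upperbounds_wave} contains only derivatives of $\phi$ that have been controlled at a previous stage, closing the argument without circularity.
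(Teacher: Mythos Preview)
Your stages (i) and (iii) are essentially what the paper does, but stage (ii) has a genuine gap that loses exactly the $r^{\alpha}$ improvement the proposition asserts.

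When you expand $\partial_*^2(X\phi)$ by the product rule, the $k=0$ term is $(-r\partial_* r)^{-1}\cdot \partial_*^2\partial_*\phi$, with the coefficient $O(1)$ and the third derivative of $\phi$ only $O(r^{-6})$ from stage~(i). There is no $\alpha$ gain in $|\partial_*^3\phi|$: Proposition~\ref{prop:upperbounds_wave} applied to $\psi=\partial_u^2\phi$ has an inhomogeneity built out of $\partial_*\phi,\partial_*^2\phi$ and geometric factors with no extra $r^{\alpha}$ to spare, so you get exactly $r^{-6}$. Consequently your expansion yields $|\partial_*^2(X\phi)|\lesssim r^{-6}$, not $r^{-6+\alpha}$. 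The improvement in Proposition~\ref{prop:scalarfield_xphi} came from the cancellation in $X=L-\underline{L}$, and once you split $X\phi$ back into $\frac{\partial_v\phi}{-r\partial_v r}-\frac{\partial_u\phi}{-r\partial_u r}$ and bound each piece separately, that cancellation is destroyed at every subsequent order.

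The paper instead keeps $X\phi$ intact. It rewrites (\ref{eq:wave_xphi}) as a transport equation for $\underline{L}LX\phi$ in terms of $LX\phi$, $\underline{L}X\phi$, and $F_X$, each of which already carries the $r^{\alpha}$ gain from Proposition~\ref{prop:scalarfield_xphi} and (\ref{eq:fxuv}); this gives the mixed second derivatives directly. For $LLX\phi$ it commutes that equation once with $L$, obtaining $\underline{L}(LLX\phi)=K_1\cdot LLX\phi+K_2$ with $K_1=r^{-2}+O(r^{-2+\alpha})$ and $|K_2|\lesssim r^{-8+\alpha}$, and closes by a Gr\"onwall integration in the $\underline{L}$ direction. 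Estimating $K_2$ is where your stage~(i) ingredients (third derivatives of $r^2$ and $\phi$) actually enter, via $L[F_X]$. So the right fix is not to abandon your stage~(i), but to feed it into the commuted wave equation for $X\phi$ rather than into a product-rule expansion that forgets the cancellation.
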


\begin{proof}
    We start with \eqref{eq:phi3}. We first rewrite \eqref{eq:wave_xphi} in terms of $L$ and $\underline{L}$; one gets
    \begin{equation} \label{eq:wave_llxphi}
        \underline{L} L X \phi = \frac{1}{r^2} L X \phi + \frac{1}{r^2} \underline{L} X \phi - \frac{\Omega^2}{4 r^2 \partial_ur \partial_v r} \cdot \left( 1 - \frac{Q^2}{r^2} \right) L X \phi + \frac{F_X}{(- r \partial_u r)(- r \partial_v r)}.
    \end{equation}
    
    We use this directly to get the $|\underline{L} L \phi|$ estimate in \eqref{eq:phi3}. We estimate the first three terms on the right hand side of \eqref{eq:wave_llxphi} using \eqref{eq:llpsi}, as well as \eqref{eq:mfrak} to deal with the $\Omega^2$ term. The $F_X$ term is then estimated using \eqref{eq:fxuv}. Combining all these allows us to conclude that $|\underline{L} L X \phi| \lesssim r^{-6 + \blue{\beta}}$, and $|L \underline{L} X \phi| \lesssim r^{-6 + \blue{\beta}}$ follows similarly.

    To get the $LLX \phi$ estimate, we commute the equation \eqref{eq:wave_llxphi} with the vector field $L$. We find:
    \begin{equation} \label{eq:transport_llxphi}
        \underline{L} L L X \phi = K_1 (u, v) \cdot L L X \phi + K_2(u, v),
    \end{equation}
    where $K_1, K_2: \mathcal{D} \to \R$ are given by the following expressions:
    \begin{equation} \label{eq:transport_llxphi_k1}
        K_1 = \frac{1}{r^2} + [\underline{L}, L]^L - \frac{\Omega^2}{4 r^2 \partial_ur \partial_v r} \cdot \left( 1 - \frac{Q^2}{r^2} \right).
    \end{equation}%
    \begin{multline} \label{eq:transport_llxphi_k2}
        K_2 = [\underline{L}, L]^{\underline{L}} \cdot \underline{L} L X \phi + \frac{2}{r^4} L X \phi + \frac{2}{r^4} \underline{L} X \phi +\frac{1}{r^2} L \underline{L} X \phi \\[0.5em]
        - L \left[ \frac{\Omega^2}{4 r^2 \partial_u r \partial_v r} \cdot \left( 1 - \frac{Q^2}{r^2} \right) \right] \cdot L X \phi
        + L \left[ \frac{F_X}{(- r \partial_u r) ( -r \partial_v r) }\right].
    \end{multline}

    By Lemma~\ref{lem:scalarfield_x_commutator_1} and Proposition~\ref{prop:bklasymp_lapse}, we have that $K_1 = \frac{1}{r^2} + O(r^{-2 + \blue{\beta}})$. Furthermore, combining all the results of Sections~\ref{scalarfield} and \ref{bklasymp}, we can also deduce that $|K_2| \lesssim r^{-8 + \blue{\beta}}$. We mention here that to estimate the final term appearing in \eqref{eq:transport_llxphi_k2}, we go back to the expression \eqref{eq:wave_xphi_inhomog_2} for $F_X$. Taking an $L$-derivative of $F_X$ therefore entails taking a further derivative of the commutator terms appearing in \eqref{eq:wave_xphi_inhomog_2}. But we have explicit expressions for the commutator terms in Lemmas~\ref{lem:scalarfield_x_commutator_1} and \ref{lem:scalarfield_x_commutator_2}, and one crucial ingredient will be to estimate the third-order derivatives $\partial_*^3 (r^2)$, where $\partial_*$ is either of $\partial_u$ or $\partial_v$.

    This would require pushing the results of Section~\ref{upperbound.higher} to one derivative further, which is straightforward, and we refer the reader to Proposition 7.4 and Proposition 7.5 of \cite{AnZhang}. The strategy in that article applies also to the $Q \neq 0$ case supplemented with \eqref{eq:cond_r}, \eqref{eq:cond_phi}, and the eventual estimate is that $\partial_*^3 (r^2) \lesssim r^{-4 + \blue{\beta}}$. The remainder of the argument deducing the estimate $|K_2| \lesssim r^{-8 + \blue{\beta}}$ is left to the reader. The upshot is that \eqref{eq:transport_llxphi} can now be written as:
    \begin{equation*}
        \underline{L} L L X \phi = \left( \frac{1}{r^2} + O(r^{-2 + \blue{\beta}}) \right) \cdot L L X \phi + O(r^{-8 + \blue{\beta}}).
    \end{equation*}
    We then integrate this expression using a familiar Gr\"onwall argument to deduce $|LLX \phi| \lesssim r^{-6 + \blue{\beta}}$. The estimate for $|\underline{L} \underline{L} X \phi|$ is similar, completing our proof of \eqref{eq:phi3}.

    To go from \eqref{eq:phi3} to \eqref{eq:Psi3}, we recall for instance \eqref{eq:wave_psi_v}, and apply $L$, to get
    \begin{equation*}
        LL \Psi = \frac{1}{2} r^2 LLX\phi - LX \phi + L \left[ \frac{\Omega^2}{4 r^2 \partial_u r \partial_v r} \cdot \left( \frac{Q^2}{r^2} - 1 \right) \cdot r^2 L \phi \right].
    \end{equation*}
    This, combined with \eqref{eq:phi3} and the estimates of Theorems~\ref{thm:esfss} and \ref{thm:emsfss}, yields $|LL\Psi| \lesssim r^{-4 + \blue{\beta}}$. The remaining estimates of \eqref{eq:Psi3} follow similarly. The final estimate \eqref{eq:M3} then follows upon applying the vector fields $L$ and $\underline{L}$ to \eqref{eq:mfrak_u} and \eqref{eq:mfrak_v}. We leave the details to the reader.
\end{proof}

\begin{remark}
    It is apparent from the proof of Proposition~\ref{prop:phi3} that similar estimates hold for higher derivatives, so long as the data has the required regularity to begin with. A general rule of thumb is that taking an $L$ or an $\underline{L}$ derivative (or equivalently $\partial_u$ or $\partial_v$) costs a factor of $r^{-2}$. Note, however, that we do not claim anything about improving such derivative estimates when differentiating in spatial directions.
\end{remark}

We now apply Proposition~\ref{prop:scalarfield_asymptotics}, Proposition \ref{prop:bklasymp_lapse}, and the above Proposition~\ref{prop:phi3} to analyze the first and second derivatives of the function $\tau$, defined in \eqref{eq:tau}. Recall for convenience that
\begin{equation*}
    \tau \coloneqq \frac{2 r_0}{\Psi^2 + 3} \cdot \mathfrak{M}^{-1/2} \cdot \left( \frac{r}{r_0} \right)^{\frac{\Psi^2 + 3}{2}}.
\end{equation*}

\begin{corollary} \label{cor:tau}
    The first and second derivatives of $\tau$ obey the following estimates: for $\tilde{L}_1, \tilde{L}_2 \in \{ L , \underline{L} \}$ one has
    \begin{equation} \label{eq:taul}
        \left | \tilde{L}_1 \tau + \mathfrak{M}^{-1/2} r_0^{-1} \cdot \left( \frac{r}{r_0} \right)^{\frac{\Psi^2- 1}{2}} \right | \leq D \, r^{\frac{\Psi^2 - 1}{2}} \cdot r^{\blue{\beta}} \log r^{-1},
    \end{equation}
    \begin{equation} \label{eq:taull}
        \left | \tilde{L}_2 (  \log |\tilde{L}_1 \tau |) + \frac{\Psi^2 - 1}{2} \cdot \frac{1}{r^2} \right| \leq D \, r^{-2 + \blue{\beta}} \log r^{-1}.
    \end{equation}
\end{corollary}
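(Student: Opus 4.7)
The idea is to work systematically with $\log \tau$ rather than $\tau$ itself, since the definition \eqref{eq:tau} is multiplicative. First I would expand
\begin{equation*}
    \log \tau = \log\!\left(\frac{2 r_0}{\Psi^2+3}\right) - \tfrac{1}{2} \log \mathfrak{M} + \tfrac{\Psi^2+3}{2}\log(r/r_0),
\end{equation*}
and apply $\tilde{L}_1 \in \{L, \underline{L}\}$. Since $L r = \underline{L} r = -1/r$, the only ``singular'' contribution comes from the explicit $\log r$ factor and equals $-\tfrac{\Psi^2+3}{2r^2}$. Collecting the remainder as
\begin{equation*}
    A \;:=\; -\frac{2 \Psi\, \tilde{L}_1 \Psi}{\Psi^2+3} - \tfrac{1}{2} \tilde{L}_1 \log \mathfrak{M} + \Psi (\tilde{L}_1 \Psi)\log(r/r_0),
\end{equation*}
the estimates \eqref{eq:llpsi} and \eqref{eq:mfrakll} give $|A| \lesssim r^{-2+\alpha}\log r^{-1}$, hence $\tilde{L}_1 \log \tau = -\tfrac{\Psi^2+3}{2r^2} + A$.

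For \eqref{eq:taul}, I would then multiply by $\tau$: the principal term becomes
\begin{equation*}
    \tau \cdot \frac{\Psi^2+3}{2r^2} \;=\; \frac{r_0}{r^2}\,\mathfrak{M}^{-1/2} (r/r_0)^{(\Psi^2+3)/2} \;=\; \mathfrak{M}^{-1/2}\, r_0^{-1} (r/r_0)^{(\Psi^2-1)/2},
\end{equation*}
exactly matching the asserted leading term, while $|\tau A| \lesssim r^{(\Psi^2+3)/2} \cdot r^{-2+\alpha}\log r^{-1} = r^{(\Psi^2-1)/2 + \alpha}\log r^{-1}$ is the desired error. This handles \eqref{eq:taul} without using Proposition~\ref{prop:phi3}.

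For \eqref{eq:taull}, interpreting $|\log \tilde{L}_1 \tau|$ as $\log |\tilde{L}_1 \tau|$, I would use the identity $\log|\tilde{L}_1 \tau| = \log \tau + \log |\tilde{L}_1 \log \tau|$ (valid since $\tau > 0$ and, near $\mathcal{S}$, $\tilde{L}_1 \log \tau < 0$ by the previous step). Taking $\tilde{L}_2$ of the first summand produces $-\tfrac{\Psi^2+3}{2r^2} + O(r^{-2+\alpha}\log r^{-1})$ from the Step~1 computation. For the second summand, I would apply $\tilde{L}_2$ once more to $\tilde{L}_1\log \tau = -\tfrac{\Psi^2+3}{2r^2}+A$: differentiating $r^{-2}$ gives the new principal term $-\tfrac{\Psi^2+3}{r^4}$, while $\tilde{L}_2 A$ is $O(r^{-4+\alpha}\log r^{-1})$ thanks to the second-derivative bounds \eqref{eq:Psi3}--\eqref{eq:M3} of Proposition~\ref{prop:phi3}. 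Dividing $\tilde{L}_2 \tilde{L}_1 \log \tau$ by $\tilde{L}_1 \log \tau$ yields
\begin{equation*}
    \tilde{L}_2 \log|\tilde{L}_1 \log \tau| = \frac{-\tfrac{\Psi^2+3}{r^4} + O(r^{-4+\alpha}\log r^{-1})}{-\tfrac{\Psi^2+3}{2r^2} + O(r^{-2+\alpha}\log r^{-1})} = \frac{2}{r^2} + O(r^{-2+\alpha}\log r^{-1}),
\end{equation*}
and summing the two pieces gives $-\tfrac{\Psi^2+3}{2r^2} + \tfrac{2}{r^2} = -\tfrac{\Psi^2-1}{2r^2}$ up to the stated error.

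The only real obstacle is the bookkeeping required to verify $\tilde{L}_2 A = O(r^{-4+\alpha}\log r^{-1})$: each of the six terms generated by differentiating $A$ must be cross-referenced against either a first-order estimate from Section~\ref{scalarfield} (for $\tilde{L}_j\Psi$, $\tilde{L}_j \log \mathfrak{M}$) or a second-order estimate from Proposition~\ref{prop:phi3} (for $\tilde{L}_2\tilde{L}_1\Psi$, $\tilde{L}_2\tilde{L}_1 \log\mathfrak{M}$), together with the trivial $\tilde{L}_2\log(r/r_0) = -r^{-2}$. No new analytic ingredient is needed beyond what has already been established.
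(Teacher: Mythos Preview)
Your proposal is correct and essentially follows the paper's own proof. The paper computes $\tilde{L}_1\tau$ directly (rather than via $\log\tau$), writing it as $-\mathfrak{M}^{-1/2}r_0^{-1}(r/r_0)^{(\Psi^2-1)/2}\bigl[1+\tfrac{1}{\Psi^2+3}(r/r_0)^2\mathfrak{E}_{\tilde{L}_1}\bigr]$ with an error $\mathfrak{E}_{\tilde{L}_1}$ that is a constant multiple of your $A$, and then obtains \eqref{eq:taull} by taking $\log$ of this formula and applying $\tilde{L}_2$ in one step; your decomposition $\log|\tilde{L}_1\tau|=\log\tau+\log|\tilde{L}_1\log\tau|$ recovers the same leading term $-\tfrac{\Psi^2-1}{2r^2}$ as the sum $-\tfrac{\Psi^2+3}{2r^2}+\tfrac{2}{r^2}$ rather than directly, but the analytic ingredients (the first-order bounds \eqref{eq:llpsi}, \eqref{eq:mfrakll} and the second-order bounds of Proposition~\ref{prop:phi3}) are identical.
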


\begin{proof}
    We compute $\tilde{L}_1 \tau$ explicitly. One finds
    \begin{equation} \label{eq:taul_explicit}
        \tilde{L}_1 \tau = - \mathfrak{M}^{-1/2} r_0^{-1} \cdot \left( \frac{r}{r_0} \right)^{\frac{\Psi^2 - 1}{2}} \left[ 1 + \frac{{1}}{\Psi^2 + 3} \cdot \left( \frac{r}{r_0} \right)^2 \cdot \mathfrak{E}_{\tilde{L}_1} \right],
    \end{equation}
    where the error term $\mathfrak{E}_{\tilde{L}_1}$ is given by:
    \begin{equation} \label{eq:taul_e}
        \mathfrak{E}_{\tilde{L}_1} = \frac{4 \Psi}{\Psi^2 + 3} \cdot \tilde{L}_1 \Psi + 2 \Psi \cdot \tilde{L}_1 \Psi \cdot \log( \frac{r_0}{r} ) + \tilde{L}_1 \log \mathfrak{M}.
    \end{equation}

    By \eqref{eq:wave_psi_u}, \eqref{eq:wave_psi_v}, \eqref{eq:mfrak_u} and \eqref{eq:mfrak_v}, we find that $|\mathfrak{E}_{\tilde{L}_1}| \lesssim r^{-2 + \blue{\beta}} \log r^{-1}$. Inserting this into \eqref{eq:taul_explicit}, it is straightforward to deduce \eqref{eq:taul}. Furthermore, from Proposition~\ref{prop:phi3} and \eqref{eq:taul_e}, we find also $|\tilde{L}_2 \mathfrak{E}_{\tilde{L}_1}| \lesssim r^{-4 + \blue{\beta}}$. \blue{By taking} a logarithm of \eqref{eq:taul_explicit}, and then applying $\tilde{L}_2$, \blue{upon using $\tilde{L}_2 r = - \frac{1}{r}$ we get}
    \blue{
    \begin{align*}
        \tilde{L}_2(\log |\tilde{L}_1 \tau|) = - \frac{\Psi^2 - 1}{2} \cdot \frac{1}{r^2} + \mathfrak{E}_{\tilde{L}_1 \tilde{L}_2},
    \end{align*}
    where $|\mathfrak{E}_{\tilde{L}_1 \tilde{L}_2}| \leq C \cdot(| \tilde{L}_2 \log \mathfrak{M} | + |r^2 \tilde{L}_2 \Psi| \cdot |\mathfrak{E}_{\tilde{L}_1}| + r^2 |\tilde{L}_2 \mathfrak{E}_{\tilde{L}_1}|)$. Combining all previously mentioned estimates, \eqref{eq:taul_e} follows.
    }
\end{proof}

\subsection{Proof of Corollary~\ref{cor:bkl}} \label{bklasymp.kasner}

We conclude this section with the proof of Corollary~\ref{cor:bkl}, indicating how our strongly singular spacetimes fit into the BKL picture. To show Part \ref{item:bkl_1}, we first write
\begin{equation} \label{eq:nablatau}
    \nabla \tau = \frac{1}{g(L, \underline{L})} \left[ L \tau \cdot \underline{L} + \underline{L} \tau \cdot L \right].
\end{equation}

Therefore, using \eqref{eq:llbar} to determine $g(L,\underline{L})$,  we expand $g( \nabla \tau, \nabla \tau )$ to deduce
\begin{equation*}
    g( \nabla \tau, \nabla \tau ) = \frac{ 2 (L \tau) (\underline{L} \tau ) }{ g(L, \underline{L})} = - 4 \Omega^{-2} r^2 \partial_u r \partial_v r ( L \tau) (\underline{L} \tau) = r^2 \cdot \mathfrak{M} \left( \frac{r_0}{r} \right)^{\Psi^2 + 1} \cdot (L \tau) (\underline{L} \tau).
\end{equation*}
Then \eqref{eq:propertime} follows immediately from Corollary~\ref{cor:tau}, particularly \eqref{eq:taul}. So $\tau$ is asymptotically normalized.

We now write down the frame $e_0, e_1, e_2, e_3$ from Part~\ref{item:bkl_3} in terms of $L$, $\underline{L}$ and $\tau$:
\begin{equation*}
    e_0 = - \gamma \left( \frac{L}{L \tau} + \frac{\underline{L}}{\underline{L} \tau} \right),
    \quad e_1 = \gamma \left( \frac{L}{L \tau} - \frac{\underline{L}}{\underline{L} \tau} \right),
    \quad e_2 = r^{-1} \partial_{\theta}, \quad e_3 = r^{-1} \sin^{-1} \theta \, \partial_{\blue{\varphi}},
\end{equation*}
where $\gamma$ is chosen such that $e_0$ and $e_1$ are unit vectors. That is, $2 \gamma^2 g(L, \underline{L} )= - (L \tau) (\underline{L} \tau)$. In order to evaluate the second fundamental form, we must evaluate connection coefficients with respect to this frame. Standard observations allow us to write
\begin{equation*}
    \nabla_{\frac{L}{L\tau}} \left(\frac{L}{L \tau}\right) = \lambda_{LL} \frac{L}{L \tau}, \qquad
    \nabla_{\frac{\underline{L}}{\underline{L}\tau}} \left(\frac{L}{L \tau}\right) = \lambda_{\underline{L}L} \frac{L}{L \tau},
\end{equation*}
\begin{equation*}
    \nabla_{\frac{L}{L\tau}} \left(\frac{\underline{L}}{\underline{L} \tau}\right) = \lambda_{L\underline{L}} \frac{\underline{L}}{\underline{L} \tau}, \qquad
    \nabla_{\frac{\underline{L}}{\underline{L}\tau}} \left(\frac{\underline{L}}{\underline{L} \tau}\right) = \lambda_{\underline{L}\underline{L}} \frac{\underline{L}}{\underline{L} \tau}.
\end{equation*}

We wish to understand the leading order behavior of these coefficients $\lambda_{LL}, \lambda_{\underline{L}L}, \lambda_{L\underline{L}}, \lambda_{\underline{LL}}$. In fact an explicit computation using the EMSFSS system \eqref{eq:raych_u}--\eqref{eq:wave_phi} will yield, for $\tilde{L}_1, \tilde{L}_2 \in \{ L , \underline{L} \}$:
\begin{equation*}
    \lambda_{\tilde{L}_1 \tilde{L}_2} = \frac{1}{\tilde{L}_1 \tau} \cdot \begin{cases}
        \left[ - \tilde{L}_1 \log |\tilde{L}_2 \tau | + \displaystyle{\frac{1}{r^2}} ( 1 - |r^2 \tilde{L}_1 \phi|^2 ) \right] & \text{ if } \tilde{L}_1 = \tilde{L}_2, \\[1em]
        \left[ - \tilde{L}_1 \log |\tilde{L}_2 \tau | \right] & \text{ if } \tilde{L}_1 \neq \tilde{L}_2.
    \end{cases}
\end{equation*}

Applying Corollary~\ref{cor:tau}, particularly the estimate \eqref{eq:taull}, and Theorems~\ref{thm:esfss} and \ref{thm:emsfss}, one finds
\begin{equation} \label{eq:lambdal}
    \lambda_{\tilde{L}_1 \tilde{L}_2} = \frac{1}{\tilde{L}_1 \tau} \cdot \begin{cases} 
        \left[- \displaystyle{\frac{1}{r^2} \frac{\Psi^2 - 1}{2}} + O(r^{-2 + \blue{\beta}} \log r^{-1})\right] & \text{ if } \tilde{L}_1 = \tilde{L}_2,\\[1em]
        \left[ + \displaystyle{\frac{1}{r^2} \frac{\Psi^2 - 1}{2}} + O(r^{-2 + \blue{\beta}} \log r^{-1}) \right] & \text{ if } \tilde{L}_1 \neq \tilde{L}_2.
    \end{cases}
\end{equation}

We now use \eqref{eq:lambdal} to find the connection coefficients of $(\mathcal{M}, g)$ with respect to the frame $e_0, e_1, e_2, e_3$. We begin with the components of the second fundamental form, $k_{ij} = g(\nabla_{e_i} e_0, e_j)$. It follows from spherical symmetry that $k_{ij}$ is diagonal with respect to our choice of frame. To evaluate $k_{11}$, we use
\begin{align*}
    \blue{k_{11}} = g(\nabla_{e_1} e_0, e_1) 
    &= - \gamma^3 g \left( \nabla_{\frac{L}{L \tau} - \frac{\underline{L}}{\underline{L}\tau}} \left( \frac{L}{L \tau} + \frac{\underline{L}}{\underline{L}\tau} \right), \frac{L}{L \tau} - \frac{\underline{L}}{\underline{L} \tau} \right) \\[0.5em]
    &= - \gamma^3 \cdot ( - \lambda_{LL} + \lambda_{L \underline{L}} + \lambda_{\underline{L} L} - \lambda_{\underline{LL}} ) \cdot g \left( \frac{L}{L \tau}, \frac{\underline{L}}{\underline{L} \tau} \right) \\[0.5em]
    &= \frac{\gamma}{2} \cdot ( - \lambda_{LL} + \lambda_{L \underline{L}} + \lambda_{\underline{L} L} - \lambda_{\underline{LL}} ).
\end{align*}

Combining this with \eqref{eq:lambdal}, the estimate \eqref{eq:taull}, and Theorems~\ref{thm:esfss} and \ref{thm:esfss}, we find
\begin{align}
    \blue{k_{11}} = g(\nabla_{e_1} e_0, e_1) &= - (-2g(L, \underline{L}))^{-1/2} \cdot \left( \frac{\Psi^2 - 1}{r^2} + O(r^{-2 + \blue{\beta}} \log r^{-1} )\right) \nonumber \\[0.5em]
    &= - \mathfrak{M}^{1/2} \cdot \left( \frac{r}{r_0} \right)^{- \frac{\Psi^2 + 3}{2}} r_0^{-1} \cdot \left( \frac{\Psi^2 - 1}{2} + O(r^{\blue{\beta}} \log r^{-1} ) \right). \label{eq:kasner1}
\end{align}
For the $e_2$ and $e_3$ components of the second fundamental form, one also explicitly computes
\begin{equation} \label{eq:kasner2}
    \blue{k_{22}} = g(\nabla_{e_2} e_0, e_2) = \frac{\gamma}{r^2} \left( \frac{1}{L \tau} + \frac{1}{\underline{L} \tau } \right) = - \mathfrak{M}^{1/2} \cdot \left( \frac{r}{r_0} \right)^{- \frac{\Psi^2 + 3}{2}} r_0^{-1} \cdot \left( 1 + O(r^{\blue{\beta}}{ \log r^{-1}}) \right),
\end{equation}
\begin{equation} \label{eq:kasner3}
    \blue{k_{33}} = g(\nabla_{e_3} e_0, e_3) = g(\nabla_{e_2} e_0, e_2) = - \mathfrak{M}^{1/2} \cdot \left( \frac{r}{r_0} \right)^{- \frac{\Psi^2 + 3}{2}} r_0^{-1} \cdot \left( 1 + O(r^{\blue{\beta}}{ \log r^{-1}}) \right).
\end{equation}

From the expressions \eqref{eq:kasner1}, \eqref{eq:kasner2} and \eqref{eq:kasner3}, \blue{we may compute the mean curvature $\tr k = k_{11} + k_{22} + k_{33}$ of our foliation to be
\[ 
    \tr k = - \mathfrak{M}^{1/2} \cdot \left( \frac{r}{r_0} \right)^{- \frac{\Psi^2 + 3}{2}} r_0^{-1} \cdot \left( \frac{\Psi^2 + 3}{2} + O(r^{\blue{\beta}}{ \log r^{-1}}) \right),
\]
which immediately yields Part~\ref{item:bkl_2}. With respect to the frame $e_1, e_2, e_3$, it is straightforward to check that all other components of $k_{ij}$ vanish, and it is therefore clear that $e_1$, $e_2$, $e_3$ are eigenvectors of $\mathcal{K}_i^{\phantom{i}j}$, with eigenvalues
\[
    \frac{k_{11}}{\tr k} = \frac{\Psi^2 - 1}{\Psi^2 + 3} + O(r^{\beta} \log r^{-1}), \quad \frac{k_{22}}{\tr k} = \frac{2}{\Psi^2 + 3} + O(r^{\beta} \log r^{-1}), \quad \frac{k_{33}}{\tr k} = \frac{2}{\Psi^2 + 3} + O(r^{\beta} \log r^{-1}).
\]
This yields Part~\ref{item:bkl_4} of the corollary.}
For Part~\ref{item:bkl_3}, note that a similar computation to \eqref{eq:kasner1} yields
\begin{equation}
    g(\nabla_{e_0} e_1, e_0) = \frac{\gamma}{2}\cdot (\lambda_{LL} + \lambda_{\underline{L}L} - \lambda_{L \underline{L}} + \lambda_{\underline{LL}} ).
\end{equation}
However, now substituting \eqref{eq:lambdal} will lead to a cancellation at the leading order $O(r^{-(\Psi^2 + 3)/2})$, and we therefore find that $g(\nabla_{e_0}e_1,  e_0) = \tau^{-1} \cdot  O(r^{\blue{\beta}} \log r^{-1})$. Furthermore, we easily compute $g(\nabla_{e_0} e_2, e_0) = g(\nabla_{e_0}e_3, e_0) = 0$. Hence Part~\ref{item:bkl_3} also follows. 

Finally, we study the scalar field. Rewriting the expression \eqref{eq:emsfss_thm_phi} in terms of $\tau$, one gets
\begin{equation} \label{eq:kasnerp}
    \phi = \frac{2 \Psi}{\Psi^2 + 3} \left( \log \tau^{-1} - \frac{1}{2} \log \mathfrak{M} + \log \frac{2 r_0}{\Psi^2 + 3} \right) + \Xi.
\end{equation}
Therefore the first expression in \eqref{eq:pphi} is immediate, while the second is straightforward after combining \eqref{eq:kasnerp} with Proposition~\ref{prop:scalarfield_asymptotics}, Proposition~\ref{prop:bklasymp_lapse} and Proposition~\ref{prop:phi3} as per usual. We leave the details to the reader. This concludes our proof of Corollary~\ref{cor:bkl}.\qed

%\input{stability1.tex}

%\input{stability2.tex}

%\input{stability3.tex}

%auto-ignore
\section{Several important examples} \label{examples}

In this final section, we mention several explicit examples of spherically symmetric spacetimes possessing an $r=0$ spacelike singularity, mostly taken from \cite{BuonannoDamourVeneziano}. Each of these are solutions to the EMSFSS system (i.e.~\eqref{eq:einstein} coupled to \eqref{eq:scalar_wave} and \eqref{eq:maxwell}, in fact these examples are such that $F_{\mu\nu} \equiv 0$ aside from those of Section~\ref{examples.just}). We then consider each of these spacetimes in the context of Theorem~\ref{thm:esfss}.

To the best of our knowledge, every such explicit solution is such that the quantity $\Psi_{\infty}(p)$ arising in Theorem~\ref{thm:esfss} will be constant along $\mathcal{S}$. However, as illustrated in Section~\ref{examples.just}, we show every value of $\Psi_{\infty} \in \R$ can be achieved. Further, in Sections~\ref{examples.flrw} and \ref{examples.scaleinvariant}, we show that the remaining asymptotic quantities $\Xi_{\infty}(p)$ and $\mathcal{M}_{\infty}(p)$ need not be constant, though in all cases the asymptotic momentum constraint \eqref{eq:asymp_momentum} is satisfied.

\subsection{Spatially homogeneous cosmologies} \label{examples.just}

A spherically symmetric spacetime that is also spatially homogeneous\footnote{The metric \eqref{eq:static} has the additional Killing field $\mathbf{K} = \frac{\partial}{\partial x}$ -- note that in this setting any such $\mathbf{K}$ is also hypersurface orthogonal.}, known in the literature as a Kantowski-Sachs cosmology \cite{KantowskiSachs}, has a metric that we write in the following form:
\begin{equation} \label{eq:static}
    g = - h(s) ds^2 + h(s)^{-1} dx^2 + r^2(s) d \sigma_{\mathbb{S}^2}.
\end{equation}
The associated massless scalar field $\phi$ and the Maxwell field $F$ are given by:
\begin{equation} \label{eq:staticmatter}
    \phi = \phi(s), \qquad F = \frac{Q(s)}{r^2(s)} ds \wedge dx.
\end{equation}

There are explicit families of solutions to the system \eqref{eq:einstein}, \eqref{eq:scalar_wave}, \eqref{eq:maxwell} of the above form, known in the $F_{\mu\nu} \equiv 0$ case as the Just spacetimes \cite{Just, Just2, Just3}. When $Q=0$, the functions $r^2(s)$, $h(s)$ and $\phi(s)$ are given by:
\begin{equation} \label{eq:just}
    r^2(s) = s ( 1 - s ) h (s), \quad h(s) = \left( \frac{s}{1-s} \right)^{b}, \quad \phi(s) = \pm \frac{\sqrt{1 - b^2}}{2} \log \left( \frac{1-s}{s} \right).
\end{equation}
Here $b \in [-1, 1]$ is a freely chosen parameter, while $s$ lies in the open interval $s \in (0, 1)$. Note that the particular choice $b =  1$ is exactly the Schwarzschild interior metric \eqref{eq:schwarzschild}, with $2 M = 1$.

If, on the other hand, there is a non-trivial Maxwell field, then $Q(s) \equiv Q \neq 0$ is constant, and the remaining functions $r^2(s)$, $h(s)$ and $\phi(s)$ are given by
\begin{equation} \label{eq:justem}
    r^2(s) = s(1-s)h(s), \quad h(s) = \frac{Q^2}{b^2} \left[ \left( \frac{s}{1-s} \right)^{b/2} + \left( \frac{s}{1-s} \right)^{-b/2} \right]^2, \quad \phi(s) = \pm \frac{\sqrt{1-b^2}}{2} \log \left( \frac{1-s}{s} \right).
\end{equation}
As $h(s)$ is undefined at $b = 0$, we now constrain $b$ to lie in $(0, 1]$. It turns out that $b = 1$ now corresponds to a Reissner-Nordstr\"om interior spacetime.

We now write these spacetimes in a double null coordinate gauge \eqref{eq:metric}. Choosing $u$ and $v$ such that $2 du = - h(s) ds + dx$, $2 dv = - h(s) ds - dx$, one writes
\begin{equation*}
    r^2(u, v) = r^2(s), \quad \Omega^2(u, v) = \frac{4}{ h(s)}, \quad \text{where }s=s(u, v)\text{ is defined by } \int_0^{s(u, v)}h(\tilde{s}) \, d \tilde{s} = -u - v.
\end{equation*}
Note that $s$ is a past-directed timelike coordinate, and the boundary at $s = 0$ is now located at the line $u + v = 0$ in the $(u, v)$-plane. Since $r(u, v) = 0$ here, we interpret this as the spacelike singularity $\mathcal{S}$ in Theorems~\ref{thm:esfss} and \ref{thm:emsfss}. These theorems will be applicable only in the trapped region where $\frac{d r}{ds} > 0$, which includes an $s$-neighborhood of $0$.

By spatial homogeneity, the functions $\Psi(u, v)$, $\Xi(u, v)$ and $\mathfrak{M}(u, v)$ arising in these Theorems are now solely functions of $s$. We compute $\Psi(s)$ to be:
\begin{equation} \label{eq:justpsi}
    \Psi(s) = \begin{cases} 
        \displaystyle{\frac{\sqrt{1 - b^2}}{1 - 2s - b}}, & \text{ in the } Q = 0 \text{ case \eqref{eq:just},}\\[1.5em]
        \displaystyle{\frac{\sqrt{1 - b^2}}{1 - 2s + b \tanh \left( \frac{b}{2} \log \left( \frac{s}{1 - s} \right) \right)}}, & \text{ in the } Q \neq 0 \text{ case \eqref{eq:justem}.}
    \end{cases}
\end{equation}

In both cases, $\Psi_{\infty}$ is given by $\lim_{s\to0}\Psi(s) = \sqrt{ \frac{1 + b}{1 - b} }$. Therefore, as $b$ varies across $[-1, 1]$, $\Psi_{\infty}$ is allowed to take all real values in the $Q = 0$ case. However, in the $Q \neq 0$ case we must obey $\Psi_{\infty} > 1$, as only $b \in (0, 1]$ is admissible. This is not surprising, as $\Psi_{\infty} > 1$ is equivalent to the subcriticality condition \eqref{eq:SKE} holding in a neighborhood of $\mathcal{S} = \{ r = 0 \}$.

Alternatively, we can view the metrics corresponding to \eqref{eq:just} and \eqref{eq:justem} as being singular at $s = 1$ instead, with the caveat that the ``singularity'' will disappear when $b = \pm 1$. The $s = 1$ singularity is a past boundary rather than a future boundary. We compute $\Psi(s)$ to be \eqref{eq:justpsi} as before, but evaluating \eqref{eq:justpsi} as $s \to 1$ we have a different value for $\Psi_{\infty}$ depending on whether $Q = 0$ or $Q \neq 0$. Penrose diagrams for these spacetimes are given in Figure~\ref{fig:just}.

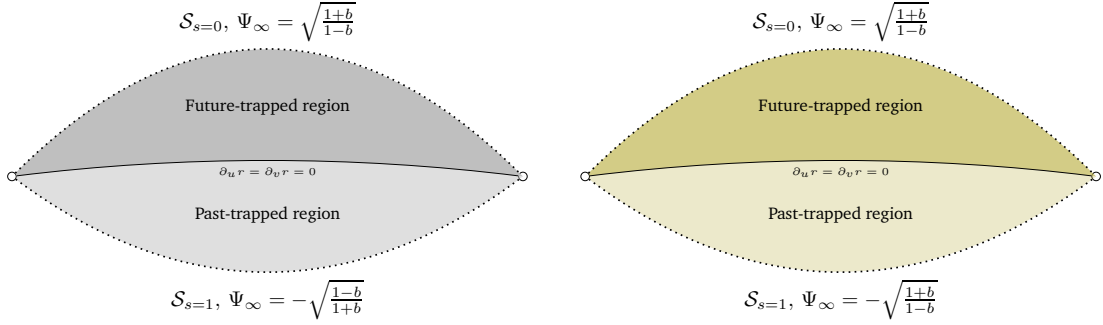
\begin{figure}[ht]
    \centering
    \begin{minipage}{.4\textwidth}\scalebox{0.8}{
    \begin{tikzpicture}[scale=0.7]
        \node (r) at (6, 0) [circle, draw, inner sep=0.5mm] {};
        \node (l) at (-6, 0) [circle, draw, inner sep=0.5mm] {};

        \path[fill=lightgray] (l) .. controls (-2, 4) and (2, 4) .. (6, 0)
            .. controls (2, 0.5) and (-2, 0.5) .. (l);
        \path[fill=lightgray, opacity=0.5] (l) .. controls (-2, -3) and (2, -3) .. (6, 0)
            .. controls (2, 0.5) and (-2, 0.5) .. (l);

        \node at (0, 1.65) {\footnotesize Future-trapped region};
        \node at (0, -0.9) {\footnotesize Past-trapped region};

        \draw [dotted, thick] (l) .. controls (-2, 4) and (2, 4) .. (r)
            node [midway, above] {$\mathcal{S}_{s = 0}, \,\Psi_{\infty} = \sqrt{ \frac{1 + b}{1 - b} }$};
        \draw [dotted, thick] (l) .. controls (-2, -3) and (2, -3) .. (r)
            node [midway, below] {$\mathcal{S}_{s = 1}, \,\Psi_{\infty} = - \sqrt{ \frac{1 - b}{1 + b} }$};
        \draw (l) .. controls (-2, 0.5) and (2, 0.5) .. (r)
            node [midway, below] {\tiny $\partial_u r = \partial_v r = 0$};
    \end{tikzpicture}}\end{minipage} \hspace{20pt}
    \begin{minipage}{.4\textwidth}\scalebox{0.8}{
    \begin{tikzpicture}[scale=0.7]
        \node (r) at (6, 0) [circle, draw, inner sep=0.5mm] {};
        \node (l) at (-6, 0) [circle, draw, inner sep=0.5mm] {};

        \path[fill=yellow!70!black, opacity=0.7] (l) .. controls (-2, 4) and (2, 4) .. (6, 0)
            .. controls (2, 0.5) and (-2, 0.5) .. (l);
        \path[fill=yellow!70!black, opacity=0.3] (l) .. controls (-2, -3) and (2, -3) .. (6, 0)
            .. controls (2, 0.5) and (-2, 0.5) .. (l);

        \node at (0, 1.65) {\footnotesize Future-trapped region};
        \node at (0, -0.9) {\footnotesize Past-trapped region};

        \draw [dotted, thick] (l) .. controls (-2, 4) and (2, 4) .. (r)
            node [midway, above] {$\mathcal{S}_{s = 0}, \,\Psi_{\infty} = \sqrt{ \frac{1 + b}{1 - b} }$};
        \draw [dotted, thick] (l) .. controls (-2, -3) and (2, -3) .. (r)
            node [midway, below] {$\mathcal{S}_{s = 1}, \,\Psi_{\infty} = - \sqrt{ \frac{1 + b}{1 - b} }$};
        \draw (l) .. controls (-2, 0.5) and (2, 0.5) .. (r)
            node [midway, below] {\tiny $\partial_u r = \partial_v r = 0$};
    \end{tikzpicture}}\end{minipage}
    \captionsetup{justification = centering}
    \caption{Penrose diagrams representing the Kantowski-Sachs cosmologies \eqref{eq:static}. Ignoring the borderline cases $b = \pm 1$, the spacetimes are bounded both in the future and in the past by a spacelike singularity. When $Q = 0$ (left), the value of $\Psi_{\infty}$ at either singularity takes all real values as $b$ varies in $[-1, 1]$. When $Q \neq 0$ (right), we must have $b \in (0, 1]$, and hence the value of $\Psi_{\infty}$ at both boundaries obeys $|\Psi_{\infty}|^2 > 1$. }
    \label{fig:just}
\end{figure}

As mentioned in the introduction, the existence of the $Q \neq 0$ example \eqref{eq:justem} is significant not only in showing that the non-emptiness of the set of spacetimes to which Theorem~\ref{thm:emsfss} applies, but also, via a straightforward gluing of some compact part of the right diagram in Figure~\ref{fig:just} to a two-ended asymptotically flat exterior, shows that there exist spacetimes with the right Penrose diagram in Figure~\ref{fig:twoended}, and where Theorem~\ref{thm:emsfss} applies to at least part of the singular boundary $\mathcal{S}$.

\subsection{The FLRW spacetimes} \label{examples.flrw}

The most famous ``cosmological'' solutions to Einstein's equation \eqref{eq:einstein} are the spatially homogeneous and isotropic \textit{Friedmann-Lema\^{i}tre-Robertson-Walker} (FLRW) spacetimes. These metrics are of the form:
\begin{equation} \label{eq:flrw}
    g_{FLRW} = - dt^2 + a^2(t) \, d \sigma_{\Sigma_0},
\end{equation}
where $\Sigma_0$ is a Riemannian manifold of constant curvature (here we take $\Sigma_0 = \mathbb{R}^3$ or $\Sigma_0 = \mathbb{H}^3$), and $d \sigma_{\Sigma_0}$ is the standard metric on $\Sigma_0$. The proper time variable $t$ takes values in $(0, + \infty)$, and $a(t) > 0$ is such that $\lim_{t \downarrow 0} a(t) = 0$.

We study FLRW metrics solving the Einstein-scalar field system (with $F_{\mu\nu} = 0$). Focusing on $\Sigma_0 = \mathbb{R}^3$ or $\mathbb{H}^3$, we write down the following explicit forms of the metric and scalar field in \textit{conformal coordinates}:
\begin{itemize}
    \item If $\Sigma_0 = \R^3$, then we have the following for $T > 0$:
        \begin{equation} \label{eq:flrwe}
            g_{FLRW_E} = T ( - dT^2 + dX^2 + dY^2 + dZ^2), \qquad \phi = \pm \frac{\sqrt{3}}{2} \log T.
        \end{equation}
    \item If $\Sigma_0 = \mathbb{H}^3$, then we denote $R^2 = X^2 + Y^2 + Z^2$, and the metric and scalar field can be written in the following form, valid for $T^2 - R^2 > 1$:
        \begin{equation} \label{eq:flrwh}
            g_{FLRW_H} = \left( 1 - \frac{1}{(T^2 - R^2)^2} \right) (-dT^2 + dX^2 + dY^2 + dZ^2), \quad \phi = \pm \frac{\sqrt{3}}{2} \log \left( \frac{T^2 - R^2 - 1}{T^2 - R^2 + 1} \right).
        \end{equation}
\end{itemize}

In both cases, we express the metric in double-null coordinates using $u = \frac{- T - R}{2}, v = \frac{- T + R}{2}$. Then the lapse $\Omega^2$, the area-radius $r$ and the scalar field $\phi$ can be written as follows.
\begin{itemize}
    \item
        In the Euclidean case $\Sigma_0 = \R^3$, we have for $(u, v) \in \mathcal{Q} = \{ (u, v): v \geq u,\, u + v < 0 \}$:
        \begin{equation} \label{eq:flrwe_dn}
            \Omega^2(u, v) = 4 (- u - v), \quad r^2(u, v) = (- u - v)(v- u)^2, \quad \phi(u, v) = \pm \frac{\sqrt{3}}{2} \log(- u - v).
        \end{equation}
    \item
        In the hyperbolic case $\Sigma_0 = \mathbb{H}^3$, we have for $(u, v) \in \mathcal{Q} = \{ (u, v): v \geq u, \, 4uv > 1, \, u < 0 \}$:
        \begin{equation} \label{eq:flrwh_dn}
            \Omega^2(u, v) = 4 \left(1 - \frac{1}{16u^2v^2}\right), \quad r^2(u, v) = \left(1 - \frac{1}{16u^2v^2}\right)(v- u)^2, \quad \phi(u, v) = \pm \frac{\sqrt{3}}{2} \log(\frac{4uv  - 1}{4uv + 1}).
        \end{equation}
\end{itemize}

It is straightforward to check that \eqref{eq:flrwe_dn} and \eqref{eq:flrwh_dn} solve the system \eqref{eq:raych_u}, \eqref{eq:raych_v}, \eqref{eq:wave_r}, \eqref{eq:wave_omega}, \eqref{eq:wave_phi} with $Q \equiv 0$. In both cases there are two parts of the boundary $\partial \mathcal{Q}$ where $r = 0$: the regular center $\Gamma$ corresponding to the fixed points of the $SO(3)$ isometry group, and the singular boundary $\mathcal{S}$ located at $u + v = 0$ and at $uv = 4$ in \eqref{eq:flrwe_dn} and \eqref{eq:flrwh_dn} respectively.

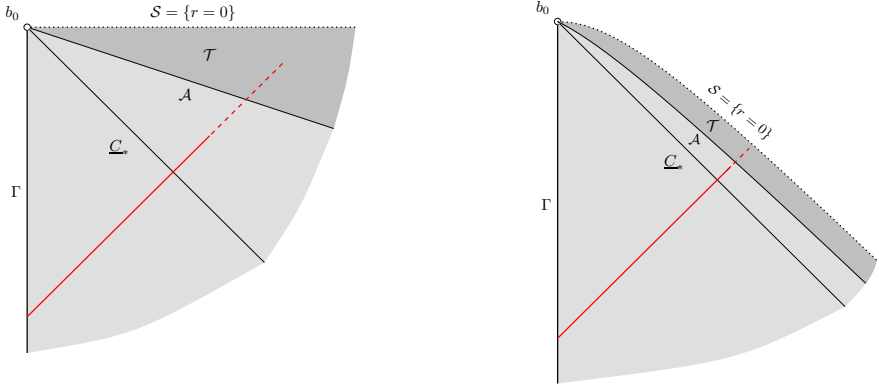
\begin{figure}[ht]
    \centering
    \begin{minipage}{.4\textwidth} \scalebox{0.6}{
    \begin{tikzpicture}[scale=0.8]
        \node (b0) at (0, 0) [circle, draw, inner sep=0.5mm] {};

        \path [fill = lightgray] (b0) -- (9, 0)
            .. controls (8.8, -1.4) .. (8.4, -2.8) -- (b0);
        \path [fill = lightgray, opacity=0.5] (b0) -- (8.4, -2.8)
            .. controls (7.5, -5) .. (6.5, -6.5) 
            .. controls (3, -8.5) .. (0, -9) -- (b0);

        \node [above left=-0.2mm of b0] {$b_0$};
        \node at (5, -0.75) {$\mathcal{T}$};

        \draw [thick] (b0) -- (0, -9) node [midway, left] {$\Gamma$};
        \draw [dotted, thick] (b0) -- (9, 0) node [midway, above] {$\mathcal{S} = \{ r = 0 \}$};
        \draw (b0) -- (6.5, -6.5) node [pos=0.45, below left] {$\underline{C}_*$};
        \draw (b0) -- (8.4, -2.8) node [pos=0.55, below left] {$\mathcal{A}$};
        \draw [color = red, thick] (0, -8) -- (5, -3);
        \draw [color = red, thick, dashed] (7, -1) -- (5, -3);
\end{tikzpicture}}
\end{minipage} \hspace{4pt}
\begin{minipage}{.4\textwidth} \scalebox{0.6}{
    \begin{tikzpicture}
        \node (b0) at (0, -2) [circle, draw, inner sep=0.5mm] {};

        \path [fill = lightgray] (b0)
            -- plot[domain = 0:7] ({\x}, {- sqrt(4 + \x^2)}) -- (7, -7.2801)
            .. controls (6.94, -7.5) .. (6.7626, -7.7946)
            -- plot[domain = 0.516:1] ({\x^-3 - \x}, {- \x - \x^-3});
        \path [fill = lightgray, opacity=0.5] (b0)
            -- plot[domain = 1:0.516] ({\x^-3 - \x}, {- \x - \x^-3})
            .. controls (6.55, -8.05) .. (6.3, -8.3)
            .. controls (4, -9.5) .. (0, -10); 
        %    .. controls (3, -8.5) .. (0, -9) -- (b0);

        \node [above left=-0.2mm of b0] {$b_0$};
        \node at (0, -11) {};
        \node at (3.4, -4.3) {$\mathcal{T}$};

        \draw [thick] (b0) -- (0, -10) node [midway, left] {$\Gamma$};
        \draw [dotted, thick] plot[domain = 0:7] ({\x}, {- sqrt(4 + \x * \x)})
            node [midway, above right] {};
        \node at (4, -4) [rotate=-40] {\small $\mathcal{S} = \{ r = 0 \}$};
        \node at (3, -4.6) {\small $\mathcal{A}$};
        \draw plot[domain = 1:0.516] ({\x^-3 - \x}, {- \x - \x^-3});
        \draw (b0) -- (6.3, -8.3) node [pos=0.45, below left=-0.5mm] {$\underline{C}_*$};
        \draw [color = red, thick] (0, -9) -- (3.8, -5.2);
        \draw [color = red, thick, dashed] (3.7, -5.3) -- (4.2, -4.8);
\end{tikzpicture}}
\end{minipage}
\captionsetup{justification = centering}
\caption{(Non-compactified) Penrose diagrams representing the FLRW spacetimes \eqref{eq:flrwe} (left) and \eqref{eq:flrwh} (right). Both spacetimes feature a spacelike singularity $\mathcal{S}$ at which $\Psi_{\infty} = \mp \sqrt{3}$. The singular boundary $\mathcal{S}$ is preceded by a trapped region $\mathcal{T}$ and an apparent horizon $\mathcal{A}$. In both cases, we can consider initial data given on a regular null cone emanating from $\Gamma$ which coincides exactly with the induced FLRW data on the thick red line above. Extending this initial cone to be asymptotically flat, the maximal development of this data will have a singular boundary, a portion of which exactly coincides with that of FLRW.}
\label{fig:flrw}
\end{figure}

In both cases, $\partial_u r < 0$ everywhere, and the singular boundary $\mathcal{S}$ is preceded by a trapped region $\mathcal{T} = \{(u, v): \partial_v r  < 0\}$, depicted in Figure~\ref{fig:flrw}. Therefore, we may apply Theorem~\ref{thm:esfss} to subregions which are strongly singular, and associate quantities $\Psi$, $\Xi$ and $\mathfrak{M}$ to all $(u, v) \in \mathcal{T} \cup \mathcal{S}$. In particular, we compute the quantities $r^2 L \phi$ and $r^2 \underline{L} \phi$ appearing in \eqref{eq:Psi} to be
\begin{equation*}
    r^2 L \phi = \mp \sqrt{3} \left[ 1 + \frac{2(u + v)}{-3u - v} \right], \quad
    r^2 \underline{L} \phi = \mp \sqrt{3} \left[ 1 - \frac{2(u+ v)}{u + 3v} \right] \qquad \text{for \eqref{eq:flrwe_dn}},
\end{equation*}
\begin{equation*}
    r^2 L \phi = \mp \sqrt{3} \left[ 1 + \frac{(4uv - 1)(4v^2+1)}{1 - 16uv^3} \right], \quad
    r^2 \underline{L} \phi = \mp \sqrt{3} \left[ 1 - \frac{(4uv - 1)(4u^2 + 1)}{16u^3v - 1} \right] \qquad \text{for \eqref{eq:flrwh_dn}}.
\end{equation*}

In both cases, it is apparent that we have $\Psi_{\infty} = \mp \sqrt{3}$. Using the correspondence \eqref{eq:Psikasner}, the Kasner exponents at $\mathcal{S}$ are therefore always $p_1 = p_2 = p_3 = \frac{1}{3}$. This is in accordance with the isotropy assumption i.e.~the assumption that all spatial directions should be equivalent. Furthermore, these exponents agree with the fact that $a(t) \sim t^{1/3}$ in \eqref{eq:flrw} for both of these spacetimes.

We compute also the values of $\Xi_{\infty}$ and $\mathfrak{M}_{\infty}$ for these spacetimes. Setting $r_0 = 1$ in the expressions \eqref{eq:esfss_thm_phi} and \eqref{eq:esfss_thm_lapse}, one finds:
\begin{equation*}
    \Xi_{\infty} = \mp \sqrt{3} \log( v - u ) = \mp \sqrt{3} \log R, \quad \mathfrak{M}_{\infty} = \frac{1}{4} (v - u)^6 = \frac{R^6}{4}, \quad \text{ for \eqref{eq:flrwe_dn}},
\end{equation*}
\begin{equation*}
    \Xi_{\infty} = \mp \sqrt{3} \log( \frac{v-2}{2} ) = \mp \sqrt{3} \log \frac{R}{2}, \quad \mathfrak{M}_{\infty} = 4 (v - u)^6 = 4 R^6, \quad \text{ for \eqref{eq:flrwh_dn}}.
\end{equation*}
Note, in particular, that these obey the asymptotic momentum constraint \eqref{eq:asymp_momentum}, with $\tilde{\nabla} = \frac{d}{dR}$.

Our analysis is valid for the whole spacelike singularity $\mathcal{S}$, excluding the first singularity at the center, $b_0 = \Gamma \cap \mathcal{S}$. As noted in Section~\ref{intro.related}, the point $b_0$ is excluded from our analysis as it is not preceded by trapped $2$-spheres. We note, however, that in the specific context of the FLRW spacetimes, upon reverting to the metric \eqref{eq:flrw} the point $b_0$ need not be distinguished from other points on the spacelike singularity, and still has Kasner exponents $p_1 = p_2 = p_3 = 1/3$.

Finally, one notes that the FLRW metrics \eqref{eq:flrwe} and \eqref{eq:flrwh} may be considered within the context of gravitational collapse. To see this, observe that if we consider a null cone emanating from $b \in \Gamma$, continuing it beyond $J^-(b_0)$ but not into the trapped region $\mathcal{T}$, and instead extending the cone to be asymptotically flat (see the red lines in Figure~\ref{fig:flrw}), the Penrose diagram resulting from the maximal future development of such data would be that of Figure~\ref{fig:christodoulou_collapse}. 

Moreover, the metric in a neighborhood of $b_0$ would be exactly FLRW, and have subcritical Kasner exponents of $p_1 = p_2 = p_3 = 1/3$. This \blue{construction} is particularly interesting in light of the fact that FLRW singularities are known to be stable outside of symmetry in some settings \cite{RodnianskiSpeck1, RodnianskiSpeck2, SpeckS3, FajmanUrban}.

\subsection{Scale-invariant collapsing spacetimes} \label{examples.scaleinvariant}

Our final explicit example of a spherically symmetric spacetime possessing a spacelike singularity are the scale-invariant solutions \cite{Wesson, Christodoulou_BV}. These are spherically symmetric solutions to \eqref{eq:einstein}, \eqref{eq:scalar_wave} possessing a conformal Killing vector field $\mathbf{S}$, known as the \textit{scaling vector field}, which obeys $\mathcal{L}_{\mathbf{S}} g = 2 g$.

In a suitable double-null gauge, where $\mathbf{S}$ can be written as $\mathbf{S} = u \frac{\partial}{\partial u} + v \frac{\partial}{\partial v}$, the metric has the following form, with $\Omega^2 \equiv 1$:
\begin{equation*}
    g = - du dv + r^2(u, v) d \sigma_{\mathbb{S}^2},
\end{equation*}
where the area-radius $r(u, v)$ and scalar field $\phi(u, v)$ are as follows: for $v < 0$,% the solution coincides exactly with Minkowski space, namely
\begin{equation*}
    r = \tfrac{1}{2}(v - u), \quad \phi \equiv 0,
\end{equation*}
while, for $v \geq 0$, one has for some $p \in \R$:
\begin{equation} \label{eq:scaleinvariant}
    r^2 = \tfrac{1}{4} [(1 - p) v - u ] [(1 + p) v - u ], \quad \phi = - \frac{1}{2} \log \left[ \frac{(1-p)v - u}{(1+p)v - u} \right].
\end{equation}

These solutions feature a loss of regularity across the ingoing null hypersurface $\{v = 0\}$; along any outgoing null cone $C_0 = \{ (u, v): v \geq u \}$ with $u < 0$, the quantity $\partial_v \phi$ will have a jump discontinuity at $v = 0$. Nonetheless, these solutions remain relevant in light of the fact that the Einstein-scalar field equations in spherical symmetry remain well-posed for $\partial_v \phi$ merely of bounded variation \cite{Christodoulou_BV}. In any case, for this article we are only concerned with the interior region $v > 0$, where all quantities are smooth.

This interior spacetime, whose area-radius $r$ and scalar field $\phi$ are given by \eqref{eq:scaleinvariant}, is singular if the parameter $p$ is such that $|p| > 1$. Assuming without loss of generality that $p > 1$, the singular boundary lies at $\mathcal{S} = \{ (u, v): u + (p-1) v = 0, v > 0 \}$. This boundary $\mathcal{S}$ is preceded by a trapped region $\mathcal{T}$, bounded to the past by an apparent horizon at $\mathcal{A} = \{ (u, v): u + (p^2 - 1) v = 0, v > 0 \}$.

Applying Theorem~\ref{thm:esfss}, we compute the quantity $\Psi(u, v)$ for $(u, v) \in \mathcal{T}$ to be:
\begin{equation} \label{eq:si_phi}
    \Psi(u, v) = \frac{1}{2} \left( \frac{pu}{(1-p^2)v - u} + \frac{pv}{v - u} \right).
\end{equation}
By considering \eqref{eq:si_phi} in the limit $u + (p-1)v \uparrow 0$, one finds that $\Psi_{\infty}(u, v) = 1$ for all $(u, v) \in \mathcal{S}$. Setting $r_0 = 1$ in \eqref{eq:esfss_thm_phi} and \eqref{eq:esfss_thm_lapse}, we also compute $\Xi_{\infty}$ and $\mathfrak{M}_{\infty}$ to be:% (with $r_0 = 1$ in \eqref{eq:esfss_thm_phi} and \eqref{eq:esfss_thm_lapse}):
\begin{equation*}
    \Xi_{\infty} = \log \left[\frac{(1+p)v - u}{2}\right] = \log pv, \quad \mathfrak{M}_{\infty} = \tfrac{1}{4} (v-u)[(p^2-1)v + u ] = \tfrac{1}{4} p^2(p-1) v^2.
\end{equation*}
Here we used that $(p-1)v + u = 0$ on $\mathcal{S}$. In particular, it is straightforward to check that the asymptotic momentum constraint \eqref{eq:asymp_momentum} holds with spatial derivative given by $\tilde{\nabla} = \frac{\partial}{\partial v}$.

One should mention that the Penrose diagram corresponding to these spacetimes coincides with the left picture in Figure~\ref{fig:flrw}, though there are key differences in that the region to the past of $\underline{C}_*$ is now exactly Minkowski and that the value of $\Psi_{\infty}$ on $\mathcal{S}$ is now identically $1$ rather than $\sqrt{3}$.

This concludes our discussion of the scale-invariant spacetimes. Note that one could also apply this analysis to the $k$-self-similar spacetimes\footnote{Note that the purpose of \cite{Christodoulou_examples} was to obtain self-similar spacetimes containing a naked singularity. Nonetheless, there are also $k$-self-similar spacetimes in \cite{Christodoulou_examples} which contain a trapped region and spacelike singularity $\mathcal{S}$ in the interior, and we can apply Theorem~\ref{thm:esfss} to these spacetimes.} constructed by Christodoulou in \cite{Christodoulou_examples}. However, since these spacetimes are not fully explicit, and instead constructed via orbits of a $2$-dimensional dynamical system, we choose not to discuss them any further here. We instead refer the reader to \cite{Cicortas} for a detailed discussion in the specific case $k^2 = \frac{1}{3}$, and we note that it is expected that the methods of \cite{Cicortas} will generalize to all relevant values of $k$.

\bibliography{bibliography_master.bib}
\bibliographystyle{abbrvnat_mod}

\end{document}